\definecolor{red}{rgb}{1,0,0}
\definecolor{green}{rgb}{0,1,0}
\definecolor{SeaGreen}{RGB}{46,139,87}
\definecolor{Maroon}{RGB}{128,0,0}
\definecolor{olive}{named}{OliveGreen}
\definecolor{forest}{named}{ForestGreen}
\def\qed{\hbox {\hskip 1pt \vrule width 4pt height 6pt depth 1.5pt
        \hskip 1pt}\\}
\newcommand{\N}{\mathbb{N}}
\newcommand{\C}{{\mathbb{C}}}
\newcommand{\R}{{\mathbb{R}}}
\newcommand{\A}{\mathcal A}
\newcommand{\D}{\mathcal D}
\def\Dg {{\mathcal D}}
\newcommand{\F}{\mathcal F}
\newcommand{\FF}{\EuScript F}
\def\Hg {{\mathcal H}}
\newcommand{\LL}{\mathcal L}
\def\Lg {{\mathfrak L}}
\newcommand{\OO}{\mathcal O}
\def\Og {\mathcal O}
\newcommand{\RR}{\mathcal R}
\def\Sg {{\mathcal S}}
\def\Wg {{\mathcal W}}
\def\Vg {{\mathcal V}}
\def\Tg {{\mathcal T}}
\def\curl{\text{\rm curl}}
\def\mag{\text{\rm mag}}
\def\curl{\text{\rm curl\,}}
\def\det{\text{\rm det\,}}
\def\Div{\text{\rm div\,}}
\def\grad{\text{\rm grad\,}}
\newcommand{\supp}{\rm Supp\,}
\renewcommand {\Re}{{\rm Re\,}}
\renewcommand{\Im}{{\rm Im\,}}
\newcommand {\pa}{\partial}
\newcommand {\ar}{\to}
\def\0{\mathbf  0}
\newcommand{\eq}{\begin{equation}}
\newcommand{\eeq}{\end{equation}}
\def\Xint#1{\mathchoice
{\XXint\displaystyle\textstyle{#1}}%
{\XXint\textstyle\scriptstyle{#1}}%
{\XXint\scriptstyle\scriptscriptstyle{#1}}%
{\XXint\scriptscriptstyle\scriptscriptstyle{#1}}%
\!\int}
\def\XXint#1#2#3{{\setbox0=\hbox{$#1{#2#3}{\int}$ }
\vcenter{\hbox{$#2#3$ }}\kern-.6\wd0}}
\def\dashint{\Xint-}
\numberwithin{equation}{section}
\theoremstyle{plain}
\newtheorem{theorem}{Theorem}[section]
\newtheorem{lemma}[theorem]{Lemma}
\newtheorem{proposition}[theorem]{Proposition}
\newtheorem{remark}[theorem]{Remark}
\newtheorem{corollary}[theorem]{Corollary}
\begin{document}
\bibliographystyle{siam}

\title[Global stability]
{Global stability of the normal state of superconductors in the
  presence of a strong electric current}
\author[Y. Almog, B. Helffer]{Yaniv Almog and Bernard Helffer}

\address{ Department of Mathematics, Louisiana State University,
    Baton Rouge, LA 70803, USA.}
\email{almog@math.lsu.edu}

\address{Laboratoire de Math\'ematiques, Universit\'e Paris-Sud 11 et CNRS,
B\^at 425, 91405 Orsay Cedex, France.}
\email{Bernard.Helffer@math.u-psud.fr}

\keywords{superconductivity, critical current, critical magnetic
field, Ginzburg-Landau}

\subjclass{82D55, 35B25, 35B40, 35Q55}

\begin{abstract}
  We consider the time-dependent Ginzburg-Landau model of
  superconductivity in the presence of an electric current flowing
  through a two-dimensional wire. We show that when the current is
  sufficiently strong the solution converges in the long-time limit to
  the normal state. We provide two types of upper bounds for the
  critical current where such global stability is achieved: by using
  the principal eigenvalue of the magnetic Laplacian associated with
  the normal magnetic field, and through the norm of the resolvent of
  the linearized steady-state operator. In the latter case we estimate
  the resolvent norm in large domains by the norms of approximate
  operators defined on the plane and the half-plane. We also obtain a
  lower bound, in large domains, for the above critical current by
  obtaining the current for which the normal state looses its local stability.
\end{abstract}
\maketitle
\section{Introduction}
\label{sec:1}
Consider a superconductor placed at a temperature lower than the
critical one. It is well-understood from numerous experimental
observations, that a sufficiently strong current, applied through the
sample, will force the superconductor to arrive at the normal state.
To explain this phenomenon mathematically, we use  the
time-dependent Ginzburg-Landau model which is defined by the following
system of equations, and will be referred to as (TDGL1)
\begin{subequations}  
\label{eq:1}
\begin{alignat}{2}
& \frac{\partial \psi}{\partial t} +
  i\phi\psi = \left(\nabla - iA \right)^{2} \psi + \psi 
  \left( 1 - |\psi|^{2} \right)\,,& \quad \text{ in } \R_+\times\Omega\, ,\\
 & \kappa^2\curl^2A + \sigma \left(\frac{\partial A}{\partial t} +
 \nabla\phi\right)  =  \Im(\bar\psi\, \nabla_A\psi) \, , & \quad \text{ in }  \R_+\times\Omega\,,\\
  &\psi=0 \,,&\quad \text{ on }  \R_+\times\partial\Omega_c\,, \\
 &(i\nabla+A)\psi\cdot\nu=0 \,,& \quad \text{ on }  \R_+\times\partial\Omega_i\,,\\
 &\sigma \left(\frac{\partial A}{\partial t} +
 \nabla\phi\right)\cdot\nu = J \,,&\quad \text{ on } \R_+\times\partial\Omega_c\,,\\
&\sigma \left(\frac{\partial A}{\partial t} +
 \nabla\phi\right)\cdot\nu=0\,,
 &\quad \text{ on }  \R_+\times\partial\Omega_i \,, \\[1.2ex]
&\dashint_{\partial\Omega}\curl A(t,x)\,ds = h_{ex}\,, & \text{ on } \R_+\,, \\
&\psi(0,x)=\psi_0(x) \,, & \quad \text{ in } \Omega\,, \\ 
&A(0,x)=A_0(x)\,, & \quad \text{ in } \Omega \,.
\end{alignat}
\end{subequations}
In the above $\psi$ denotes the order parameter, $A$ is the magnetic
potential, $\phi$ is the electric potential, $\kappa$ denotes the
Ginzburg-Landau parameter, which is a material property, and the normal
conductivity of the sample is denoted by $\sigma$.  We use the notation
$\nabla_A=\nabla-iA$ and $ds$ for the induced measure on $\pa \Omega$. In
(\ref{eq:1}g) we use the standard notation
\begin{displaymath}
  \dashint_D  = \frac{1}{|D|}\int \,,
\end{displaymath}
for any domain $D\subset\R^2$.  The spatial coordinates have been scaled in
(\ref{eq:1}) by the coherence length, characterizing variations in
$\psi$. The domain $\Omega\subset\subset\R^2$, occupied by the superconducting sample,
has a smooth interface, denoted by $\partial\Omega_c$, with a conducting metal
which is at the normal state. Thus, we require that $\psi$ would vanish
on $\partial\Omega_c$ in (\ref{eq:1}c), and allow for a smooth current
satisfying
\begin{equation}
\label{eq:75}
(J1)\quad J\in C^2(\overline{\partial\Omega_c}),
\end{equation}
to enter the sample in (\ref{eq:1}e). We further require that
\begin{equation}
  \label{eq:76}
(J2) \quad \int_{\partial\Omega_c}J \,ds=0 \,,
\end{equation}
and, mainly in the last section, that
\begin{equation}\label{eq:75aa}
(J3)\quad  \mbox{the sign of } J \mbox{ is constant on each
connected component of } \partial\Omega_c\,.
\end{equation}
We allow for $J\neq0$ at the corners despite the fact that no current is
allowed to enter the sample through the insulator. \\

The rest of the boundary, denoted by $\partial\Omega_i$ is adjacent to an
insulator.
To simplify some of our regularity arguments we introduce the
 following geometrical assumption  (for further discussion we refer the
 reader to Appendix \ref{sec:a}) on $\partial\Omega$:
\begin{equation}\label{propertyR}
(R1)\,\left\{
\begin{array}{l}
(a) \; \pa \Omega_i \mbox{ and } \pa \Omega_c  \mbox{ are of class } C^3\,;\\
(b)\; \ \mbox{ Near each edge, } \pa
\Omega_i \mbox{  and  } \pa \Omega_c \mbox{ are flat and meet with  an angle of } \frac \pi 2\,.
\end{array}\right.
\end{equation}
  We also require in the last section  that:
\begin{equation}\label{hyptopolo}
(R2) \quad\quad \mbox{Both }  \partial\Omega_c \mbox{  and } \partial\Omega_i \mbox{  have two components}.
\end{equation}

The average field along the boundary is given by the constant
$h_{ex}$. Figure 1 presents a typical sample with properties (R1) and
(R2), where the current flows into the sample from one connected
component of $\partial\Omega_c$, and exits from another part, disconnected from
the first one. Most wires would fall into the above class of domains.
\begin{figure}
  \centering
\setlength{\unitlength}{0.0005in}
\begingroup\makeatletter\ifx\SetFigFont\undefined%
\gdef\SetFigFont#1#2#3#4#5{%
  \reset@font\fontsize{#1}{#2pt}%
  \fontfamily{#3}\fontseries{#4}\fontshape{#5}%
  \selectfont}%
\fi\endgroup%
{\renewcommand{\dashlinestretch}{30}\begin{picture}(2766,6795)(0,-10)
\path(1275,300)(2700,300)
\path(375,6450)(1875,6450)
\path(1575,300)(1575,1125)
\blacken\path(1605.000,1005.000)(1575.000,1125.000)(1545.000,1005.000)(1605.000,1005.000)
\path(1800,300)(1800,1200)
\blacken\path(1830.000,1080.000)(1800.000,1200.000)(1770.000,1080.000)(1830.000,1080.000)
\path(2025,300)(2025,900)
\blacken\path(2055.000,780.000)(2025.000,900.000)(1995.000,780.000)(2055.000,780.000)
\path(2325,300)(2325,675)
\blacken\path(2355.000,555.000)(2325.000,675.000)(2295.000,555.000)(2355.000,555.000)
\path(1350,300)(1425,1050)
\blacken\path(1442.911,927.610)(1425.000,1050.000)(1383.208,933.581)(1442.911,927.610)
\path(2550,300)(2475,750)
\blacken\path(2524.320,636.565)(2475.000,750.000)(2465.136,626.701)(2524.320,636.565)
\path(525,5775)(525,6450)
\blacken\path(555.000,6330.000)(525.000,6450.000)(495.000,6330.000)(555.000,6330.000)
\path(900,5775)(900,6450)
\blacken\path(930.000,6330.000)(900.000,6450.000)(870.000,6330.000)(930.000,6330.000)
\path(1275,5775)(1275,6450)
\blacken\path(1305.000,6330.000)(1275.000,6450.000)(1245.000,6330.000)(1305.000,6330.000)
\path(1575,5775)(1575,6450)
\blacken\path(1605.000,6330.000)(1575.000,6450.000)(1545.000,6330.000)(1605.000,6330.000)
\path(1275,300)(1275,301)(1275,304)
	(1275,312)(1275,326)(1275,346)
	(1274,371)(1274,400)(1273,431)
	(1273,461)(1272,491)(1271,518)
	(1271,544)(1270,567)(1269,589)
	(1267,609)(1266,627)(1264,645)
	(1263,662)(1260,680)(1258,697)
	(1256,714)(1253,732)(1249,751)
	(1246,770)(1242,789)(1237,810)
	(1232,830)(1227,851)(1221,872)
	(1216,894)(1209,915)(1203,937)
	(1196,959)(1190,980)(1182,1002)
	(1175,1025)(1169,1042)(1163,1060)
	(1157,1079)(1150,1099)(1143,1119)
	(1135,1140)(1127,1162)(1119,1185)
	(1110,1208)(1101,1232)(1092,1257)
	(1082,1282)(1073,1307)(1063,1333)
	(1053,1358)(1043,1383)(1032,1409)
	(1022,1433)(1012,1458)(1002,1482)
	(992,1506)(982,1529)(972,1552)
	(963,1575)(952,1598)(942,1621)
	(932,1644)(921,1668)(910,1692)
	(899,1717)(887,1742)(876,1768)
	(864,1793)(852,1819)(840,1845)
	(828,1872)(817,1897)(805,1923)
	(794,1948)(783,1973)(773,1998)
	(763,2021)(753,2044)(744,2066)
	(736,2088)(728,2109)(720,2130)
	(713,2150)(705,2172)(698,2194)
	(691,2215)(684,2237)(677,2260)
	(671,2283)(665,2306)(659,2329)
	(653,2353)(648,2377)(643,2402)
	(638,2426)(633,2451)(629,2475)
	(624,2499)(620,2523)(617,2546)
	(613,2570)(610,2593)(606,2616)
	(603,2639)(600,2662)(597,2684)
	(594,2706)(591,2729)(588,2753)
	(585,2777)(582,2802)(579,2828)
	(576,2855)(573,2882)(569,2910)
	(566,2939)(562,2968)(559,2997)
	(556,3026)(552,3055)(549,3084)
	(546,3112)(543,3140)(540,3168)
	(537,3195)(534,3222)(531,3248)
	(528,3274)(525,3300)(522,3324)
	(520,3348)(517,3372)(514,3397)
	(511,3422)(509,3447)(506,3473)
	(503,3500)(500,3527)(497,3554)
	(494,3582)(491,3610)(487,3638)
	(484,3665)(481,3693)(478,3721)
	(475,3748)(472,3775)(469,3802)
	(466,3828)(464,3853)(461,3878)
	(458,3903)(455,3927)(453,3951)
	(450,3975)(447,3999)(445,4023)
	(442,4047)(439,4072)(437,4097)
	(434,4123)(431,4149)(428,4176)
	(425,4204)(423,4232)(420,4260)
	(417,4289)(414,4318)(412,4347)
	(409,4376)(406,4405)(404,4435)
	(402,4464)(400,4492)(397,4521)
	(395,4549)(394,4577)(392,4605)
	(390,4632)(389,4660)(388,4687)
	(386,4713)(385,4740)(384,4767)
	(383,4794)(382,4822)(381,4851)
	(381,4880)(380,4910)(379,4941)
	(379,4972)(378,5004)(378,5037)
	(377,5069)(377,5102)(376,5134)
	(376,5167)(376,5200)(376,5232)
	(375,5264)(375,5295)(375,5326)
	(375,5356)(375,5385)(375,5414)
	(375,5443)(375,5470)(375,5498)
	(375,5525)(375,5556)(375,5588)
	(375,5619)(375,5651)(375,5683)
	(375,5715)(375,5747)(375,5780)
	(375,5812)(375,5844)(375,5876)
	(375,5908)(375,5938)(375,5968)
	(375,5997)(375,6025)(375,6051)
	(375,6076)(375,6100)(375,6123)
	(375,6144)(375,6164)(375,6182)
	(375,6200)(375,6228)(375,6254)
	(375,6278)(375,6300)(375,6323)
	(375,6345)(375,6367)(375,6389)
	(375,6409)(375,6426)(375,6439)
	(375,6446)(375,6450)
\path(2704,291)(2704,293)(2704,299)
	(2705,309)(2706,324)(2707,346)
	(2709,374)(2711,409)(2713,449)
	(2715,495)(2718,545)(2721,599)
	(2724,654)(2727,711)(2730,767)
	(2732,823)(2735,877)(2738,930)
	(2740,980)(2742,1028)(2744,1073)
	(2746,1115)(2748,1155)(2749,1193)
	(2750,1228)(2751,1261)(2752,1293)
	(2753,1322)(2753,1350)(2754,1377)
	(2754,1403)(2754,1428)(2754,1464)
	(2753,1498)(2753,1531)(2752,1563)
	(2750,1595)(2749,1626)(2747,1656)
	(2745,1685)(2742,1714)(2739,1742)
	(2736,1768)(2733,1794)(2729,1819)
	(2726,1842)(2722,1864)(2718,1885)
	(2714,1905)(2709,1924)(2705,1942)
	(2701,1959)(2696,1975)(2692,1991)
	(2686,2010)(2680,2029)(2673,2048)
	(2666,2067)(2659,2086)(2651,2106)
	(2643,2126)(2634,2147)(2626,2168)
	(2616,2189)(2607,2209)(2598,2230)
	(2588,2251)(2579,2272)(2570,2292)
	(2560,2312)(2551,2333)(2542,2353)
	(2534,2371)(2526,2389)(2517,2408)
	(2508,2427)(2500,2447)(2490,2468)
	(2481,2490)(2471,2513)(2461,2536)
	(2450,2560)(2440,2584)(2429,2608)
	(2419,2633)(2408,2657)(2398,2681)
	(2387,2705)(2377,2729)(2367,2752)
	(2357,2775)(2348,2797)(2338,2819)
	(2329,2841)(2320,2861)(2312,2881)
	(2303,2902)(2294,2923)(2285,2944)
	(2276,2966)(2266,2989)(2257,3011)
	(2247,3035)(2238,3059)(2228,3083)
	(2218,3107)(2209,3131)(2199,3156)
	(2190,3180)(2181,3205)(2172,3229)
	(2163,3253)(2155,3276)(2147,3299)
	(2139,3323)(2131,3345)(2124,3368)
	(2117,3391)(2109,3414)(2102,3438)
	(2095,3461)(2088,3486)(2081,3511)
	(2074,3537)(2068,3563)(2061,3590)
	(2054,3617)(2047,3645)(2041,3672)
	(2034,3700)(2028,3728)(2022,3755)
	(2016,3782)(2011,3808)(2006,3834)
	(2001,3858)(1997,3883)(1993,3906)
	(1989,3928)(1985,3950)(1982,3971)
	(1979,3991)(1976,4011)(1974,4030)
	(1971,4050)(1969,4070)(1967,4090)
	(1965,4110)(1963,4131)(1962,4152)
	(1960,4174)(1958,4196)(1957,4219)
	(1956,4242)(1954,4266)(1953,4291)
	(1952,4316)(1951,4341)(1950,4367)
	(1949,4394)(1948,4421)(1947,4448)
	(1945,4477)(1944,4505)(1943,4535)
	(1942,4566)(1940,4592)(1939,4618)
	(1938,4646)(1936,4675)(1935,4705)
	(1933,4736)(1932,4768)(1930,4801)
	(1928,4835)(1926,4870)(1924,4906)
	(1922,4942)(1920,4979)(1918,5016)
	(1916,5054)(1915,5091)(1913,5128)
	(1911,5165)(1909,5201)(1907,5237)
	(1905,5272)(1903,5306)(1901,5339)
	(1900,5371)(1898,5402)(1897,5432)
	(1895,5461)(1894,5489)(1893,5515)
	(1892,5541)(1890,5575)(1889,5607)
	(1887,5638)(1886,5668)(1885,5698)
	(1884,5726)(1883,5754)(1883,5781)
	(1882,5808)(1881,5833)(1881,5858)
	(1880,5881)(1880,5904)(1880,5925)
	(1879,5946)(1879,5965)(1879,5984)
	(1879,6001)(1879,6018)(1879,6034)
	(1879,6050)(1879,6066)(1879,6085)
	(1879,6105)(1879,6125)(1879,6146)
	(1879,6168)(1879,6193)(1879,6219)
	(1879,6247)(1879,6278)(1879,6309)
	(1879,6340)(1879,6369)(1879,6394)
	(1879,6415)(1879,6429)(1879,6437)
	(1879,6440)(1879,6441)
\put(1875,-200){\makebox(0,0)[lb]{{\SetFigFont{12}{14.4}{\rmdefault}{\mddefault}{\updefault}$\partial\Omega_c$}}}
\put(825,6600){\makebox(0,0)[lb]{{\SetFigFont{12}{14.4}{\rmdefault}{\mddefault}{\updefault}$\partial\Omega_c$}}}
\put(2250,3525){\makebox(0,0)[lb]{{\SetFigFont{12}{14.4}{\rmdefault}{\mddefault}{\updefault}$\partial\Omega_i$}}}
\put(-150,3675){\makebox(0,0)[lb]{{\SetFigFont{12}{14.4}{\rmdefault}{\mddefault}{\updefault}$\partial\Omega_i$}}}
\put(1950,1050){\makebox(0,0)[lb]{{\SetFigFont{12}{14.4}{\rmdefault}{\mddefault}{\updefault}$J_{in}$}}}
\put(1050,5425){\makebox(0,0)[lb]{{\SetFigFont{12}{14.4}{\rmdefault}{\mddefault}{\updefault}$J_{out}$}}}
\end{picture}
}
  \caption{Typical superconducting sample. The arrows denote the direction of the current flow ($J_{in}$  for the inlet, and $J_{out}$ for the outlet).}
  \label{fig:1}
\end{figure}
 We assume, for the
initial conditions (\ref{eq:1}h,i), that 
\begin{equation}\label{init}
\psi_0\in H^1(\Omega,\C) \mbox{  and }
A_0\in H^1(\Omega,\mathbb R^2)\,.
\end{equation}
 We further assume everywhere in the sequel that:
\begin{equation}\label{eq:2.2}
\|\psi_0\|_\infty \leq 1\,.
\end{equation}
In most of this work we consider Coulomb gauge solutions of
\eqref{eq:1} which satisfy in addition
\begin{equation}
\label{eq:178}
   \Div A =0\,,\, A\cdot  \nu_{/\pa \Omega} =0\,.
\end{equation}

To complete the presentation of the problem, we need to make two
further assumptions on the normal magnetic and electric potentials
which we respectively denote by $A_n$ and $\phi_n$. To this end, we write that
$(0, h A_n,h \phi_n)$ is a stationary  solution of (\ref{eq:1}), where $h$
is a positive parameter representing the intensity of the  applied field. More
explicitly, $(A_n,\phi_n)$ must satisfy
\begin{displaymath}
  \begin{cases}
     - c\, \curl^2A_n + \nabla\phi_n = 0 & \text{in } \Omega\,, \\
     -\sigma\frac{\partial\phi_n}{\partial\nu} = J_r  & \text{on } \partial\Omega\,, \\
     \dashint_{\partial\Omega}\curl A_n\,ds = h_{r}\,,
  \end{cases}
\end{displaymath}
in which $c=\kappa^2/\sigma$ and
$$J_r=J/h\,,  \mbox{ and } h_r=h_{ex}/h$$
respectively denote some reference electric current and magnetic
field. (Obviously, $J_r\equiv0$ on $\partial\Omega_i$.) We fix the Coulomb gauge for
$A_n$, i.e., we require that it satisfies \eqref{eq:178}.  In the next
section we discuss the existence, uniqueness, and regularity of
solutions to the above problem.

The next assumption will be
rephrased in the next section. Here we write
\begin{equation}
  (B) \quad B_n \neq 0 \text{ at the corners}\,,
\end{equation}
where $B_n=\curl A_n$.\\
In the last section we restate that
\begin{equation}\label{(C)}
  (C) \quad \nabla\phi_n\perp\partial\Omega \text{ on } B_n^{-1}(0)\cap\partial\Omega\,. 
\end{equation}
The  reasons for the above assumptions will be clarified in
the sections where they are restated.

One possible mechanism which contributes to the breakdown of
superconductivity by a strong current is the magnetic field induced by
the current. In the absence of electric current, it was proved by
Giorgi \& Phillips in \cite{giph99} that, when a sufficiently strong
magnetic field is applied on the sample's boundary (or when $h$
is sufficiently large), the normal state, for which $\psi\equiv0$, becomes
the unique solution for the steady-state version of (\ref{eq:1}) (cf.
also \cite{fohe09} and the references therein). For the time-dependent
Ginzburg-Landau equations it was proved in \cite{feta01} that every
solution must reach an equilibrium in the long-time limit.  When
combined with the results in \cite{giph99} it follows that when the
applied magnetic field is sufficiently large the normal state becomes
globally stable.

No such result is available in the presence of electric currents. The
results in \cite{feta01} are based on the fact that, in the absence of
currents, the Ginzburg-Landau energy functional serves as a Lyapunov
functional. In the presence of a current one has to take account of
the work it produces, which does not necessarily decrease the energy
(cf. \cite{ruetal10b} for instance). Moreover, the magnetic field is
not the only mechanism which forces the sample into the normal state
when the electric current is sufficiently large. For a reduced model,
which neglects the magnetic field (induced and applied) effect it has
been proved in \cite{ivko84,ruetal10,al08} that the normal state is at
least locally (linearly) stable when the current is sufficiently
strong. This reduced model can be easily obtained by setting $A\equiv 0$
in (\ref{eq:1}), and has received significant attention in the greater
context of PT symmetric Schr\"odinger operators (cf. \cite{ruetal07},
\cite{bejo12} for instance).

When the magnetic field's effect is accounted for, we can report here
the results of three recent contributions
\cite{aletal10,aletal13,aletal12}, we have obtained together with Pan.
In all of them we consider the linearization of (\ref{eq:1}a) near the
normal state $(0,A_n,\phi_n)$ both in the entire plane \cite{aletal10},
and the half-plane \cite{aletal13,aletal12}. Thus, we have analyzed
the spectrum of two operators associated with the differential
operator $-\partial_x^2 + (i\partial_y-\frac 12 x^2)^2 +icy\,,$ where, as above,
$c=\kappa^2/\sigma$. We define in \cite{aletal10} $\A$ as the maximal
accretive extension in $L^2(\mathbb R^2)$ of this differential
operator initially defined on $C_0^\infty(\mathbb R^2)$ and in
\cite{aletal12,aletal13} $\A_+$ as an unbounded operator in
$L^2(\mathbb R^2_+)$, where $ \R^2_+ = \{ (x,y)\in\R^2 \,: \, y>0 \}
\,, $ using this time Lax-Milgram's Theorem for the associated
sesquilinear form in $H^1_0(\mathbb R^2_+)$
(see below \eqref{eq:98}-\eqref{zzz4} for more details).\\

In \cite{aletal10} we show that the spectrum of $\A$ is empty
and find estimates on its resolvent norm. In contrast, in
\cite{aletal13,aletal12} we show that $\sigma(\A^+)$ is non-empty and
even manage to evaluate the leftmost eigenvalue, in the asymptotic
limits $c\to\infty$ \cite{aletal13} and $c\to0$ \cite{aletal12}.  One can
easily derive from this leftmost eigenvalue the critical current where
the normal state looses its local (linear) stability. In particular,
we show that it tends, as $c\to\infty$ (i.e. in  the small normal conductivity
limit), to the critical value for the reduced model \cite{ivko84}.
This result suggests, once again, that stability is being forced not
only by the magnetic field that the current induces, but also by the
potential term in (\ref{eq:1}a). We conclude the foregoing literature
survey by mentioning a few works considering the motion of vortices under
the action of an electric current \cite{ti10,seti11,peetal13}. \\

In the present contribution we prove global, long-time, stability of
the normal state, as a solution of (\ref{eq:1}), for sufficiently
large currents. In contrast with \cite{aletal10,aletal13,aletal12} we
consider the fully non-linear problem \eqref{eq:1} in a bounded domain
of the type presented in Fig.~1. While the linear analysis in
\cite{aletal10,aletal13,aletal12} provides us with some useful
insights and tools, employed throughout this work, it cannot be easily
applied to obtain long-time stability of the normal state for a wide
class of initial conditions, not necessarily close to the normal state
in any sense. In particular, it is necessary to bound the effect of
the non-linear terms, that are not necessarily small at $t=0$. The
effect of boundaries needs to be taken into account as well. \\

The rest of this contribution is organized as follows:\\

We begin by
dealing with a few preliminaries in Section~\ref{section2}. In
particular, we prove global existence and uniqueness of solutions for
\eqref{eq:1} and obtain their regularity. While these questions have 
previously been addressed (cf. \cite{chenetal93}, \cite{feetal98}, and
\cite{du94} to name just a few references) the fact that the boundary
is not smooth at the corners requires some additional attention. Some
of the results we state in the next section are proved in the
appendices.

In Section \ref{section3} we prove that if the current is strong
enough, the magnetic field it induces forces the semigroup associated
with (\ref{eq:1}) to become asymptotically a contraction. 
Let
\begin{displaymath}
    \mu(h) = \inf_{
    \begin{subarray}{c}
      u\in H^1(\Omega,\C) \\
      u|_{\partial\Omega_c}=0 \;;\;\|u\|_2 =1
    \end{subarray}} \| \nabla_{hA_n}u\|_2^2 \,.
\end{displaymath}
The main result of Section \ref{section3} is the following
 \begin{theorem}
Let $(\psi,A,\phi)$ denote a solution of \eqref{eq:1} and
\eqref{eq:178} satisfying \eqref{eq:2.2}. Then, there exists $\gamma>0$ for which
whenever
   \begin{equation}
\label{eq:179}
\mu(h)>1+ \frac{\gamma}{\kappa^2} + \frac{\gamma^2}{\kappa^4} \,,
  \end{equation}
there exist  $C= C(\Omega,\kappa,c,\|\psi_0\|_2,\|A_0\|_2,h)>0$  and $\lambda_m=
\lambda_m(c,\kappa,\mu(h),\Omega) >0 $, where $c=\kappa^2/\sigma$,\, 
such that, for all $t>0$, 
 we have:
\begin{equation}
\|\psi\|_2 + \|A-hA_n\|_2 +\|\phi-h\phi_n\|_2 \leq Ce^{-\lambda_mt} \,. 
  \end{equation}
Furthermore, there exists $t^{*}(\kappa,c,\|A_0\|_2,\Omega)$ such that
$[t^*+1,+\infty)\ni  t\mapsto \|\psi (t,\cdot) \|_2$ is monotone decreasing. 
  \end{theorem}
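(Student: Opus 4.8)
The plan is to reduce the whole statement to a single differential inequality for $t\mapsto\|\psi(t,\cdot)\|_2^2$, after showing that the perturbations $a:=A-hA_n$ and $\varphi:=\phi-h\phi_n$ become, for large $t$, slaved to $\psi$ and of size $O(\kappa^{-2})$. From Section~\ref{section2} I use only the global existence, uniqueness and parabolic regularity of the Coulomb-gauge solution of \eqref{eq:1}, \eqref{eq:178}; the propagated maximum-principle bound $\|\psi(t,\cdot)\|_\infty\le1$ coming from \eqref{eq:2.2} (so, in particular, $\|\psi(t,\cdot)\|_2\le|\Omega|^{1/2}$); and the variational inequality $\|\nabla_{hA_n}u\|_2^2\ge\mu(h)\|u\|_2^2$ for $u\in H^1(\Omega,\C)$ with $u|_{\partial\Omega_c}=0$.

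\emph{The $\psi$-inequality.} Pairing (\ref{eq:1}a) with $\psi$ in $L^2(\Omega)$, using that $i\phi|\psi|^2$ is purely imaginary and that the boundary term vanishes by (\ref{eq:1}c)--(\ref{eq:1}d), gives
\[
\tfrac12\tfrac{d}{dt}\|\psi\|_2^2+\|\nabla_A\psi\|_2^2=\|\psi\|_2^2-\|\psi\|_4^4\le\|\psi\|_2^2 .
\]
Since $\nabla_A\psi=\nabla_{hA_n}\psi-ia\psi$, one has $\|\nabla_A\psi\|_2\ge\bigl(\sqrt{\mu(h)}-\|a\|_\infty\bigr)\|\psi\|_2$, so that, whenever $\|a\|_\infty<\sqrt{\mu(h)}$,
\[
\tfrac12\tfrac{d}{dt}\|\psi\|_2^2\le-\Bigl[\bigl(\sqrt{\mu(h)}-\|a\|_\infty\bigr)^2-1\Bigr]\|\psi\|_2^2 .
\]
The whole argument hinges on producing a time $t^*$ and a constant $\gamma>0$ with $\|a(t,\cdot)\|_\infty\le\gamma/(2\kappa^2)$ for all $t\ge t^*$. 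Granting this, hypothesis \eqref{eq:179} yields $\mu(h)>(1+\tfrac{\gamma}{2\kappa^2})^2$ and hence $\gamma/(2\kappa^2)<\sqrt{\mu(h)}-1$, so for $t\ge t^*$ the bracket above is a positive constant $\lambda_m=\lambda_m(c,\kappa,\mu(h),\Omega)$, and Gronwall gives $\|\psi(t,\cdot)\|_2\le\|\psi(t^*,\cdot)\|_2\,e^{-\lambda_m(t-t^*)}$. The linear-plus-quadratic $\kappa^{-2}$ correction in \eqref{eq:179} is exactly what this squared bracket demands; the precise numerical constant is a bookkeeping matter.

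\emph{The $a$- and $\varphi$-estimates.} Subtracting the stationary problem for $(hA_n,h\phi_n)$ from (\ref{eq:1}b) shows that $(a,\varphi)$ solves $\sigma\,\partial_t a+\kappa^2\curl^2 a+\sigma\nabla\varphi=\Im(\bar\psi\,\nabla_A\psi)$ with the homogeneous forms of (\ref{eq:1}e)--(\ref{eq:1}f) and, by \eqref{eq:178}, $\Div a=0$, $a\cdot\nu=0$; in this gauge $\nabla\varphi\perp a$ in $L^2$. Pairing with $a$, absorbing the lower-order boundary terms produced by $\langle\curl^2 a,a\rangle$ via the trace theorem, and using $\|a\|_2\le C_\Omega\|\curl a\|_2$ (valid because $\dashint_{\partial\Omega}\curl a\,ds=0$ in the Coulomb gauge) together with $\|\psi\|_\infty\le1$, gives
\[
\tfrac{d}{dt}\|a\|_2^2+\frac{\kappa^2}{\sigma C_\Omega}\|a\|_2^2\le\frac{C_\Omega}{\sigma\kappa^2}\|\nabla_A\psi\|_2^2 .
\]
Integrating the $\psi$-identity over $[t,t+1]$ bounds $\int_t^{t+1}\|\nabla_A\psi\|_2^2$ by $\tfrac32|\Omega|$; hence $\|a(t,\cdot)\|_2$ stays bounded and, on a timescale depending only on $\kappa$, $c$, $\Omega$ and $\|A_0\|_2$ (not on $\psi_0$, since the source is controlled through $\|\psi\|_\infty\le1$), falls below $C\kappa^{-2}$. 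To turn this into the $L^\infty$ bound needed above, take the curl of the $a$-equation to get the scalar heat equation $\sigma\partial_t B-\kappa^2\Delta B=\curl\Im(\bar\psi\,\nabla_A\psi)$ for $B=\curl a$, and combine parabolic smoothing for $B$ with the elliptic system $\curl^2 a=\kappa^{-2}(\Im(\bar\psi\,\nabla_A\psi)-\sigma\partial_t a-\sigma\nabla\varphi)$, $\Div a=0$, $a\cdot\nu=0$, to control $\|a(t,\cdot)\|_{H^2}$, hence $\|a(t,\cdot)\|_\infty$, by $C\kappa^{-2}$ with $C$ uniform. \emph{This regularity step is the main obstacle}: the corners of $\partial\Omega$ rule out a naive $H^2$ estimate for $\curl^2$, so one must invoke the weighted elliptic/parabolic theory of Appendix~\ref{sec:a} under assumption \eqref{propertyR}. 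Finally $\varphi$ solves the elliptic Neumann problem obtained by taking the divergence of the $a$-equation (normalized by fixing its mean), so $\|\varphi\|_2$ is controlled by $\|\psi\|_2$, $\|a\|$ and their derivatives.

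\emph{Conclusion and monotonicity.} On $[0,t^*]$ all three quantities are bounded ($\|\psi\|_2\le|\Omega|^{1/2}$, and $\|a\|_2,\|\varphi\|_2$ by the above), so the asserted inequality holds there for a large enough $C$. For $t\ge t^*$ the bound $\|a(t,\cdot)\|_\infty\le\gamma/(2\kappa^2)$ is in force, so the $\psi$-inequality gives $\|\psi(t,\cdot)\|_2\le Ce^{-\lambda_m t}$, and feeding this back into the $a$- and $\varphi$-estimates propagates the same exponential rate, after replacing $\lambda_m$ by $\min(\lambda_m,\kappa^2/(2\sigma C_\Omega))$ if necessary; this is the decay estimate, with $C$ depending on the listed data. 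Finally, since \eqref{eq:179} forces $\gamma/(2\kappa^2)<\sqrt{\mu(h)}-1$, for every $t\ge t^*$ we have $\|\nabla_A\psi\|_2^2\ge(\sqrt{\mu(h)}-\|a\|_\infty)^2\|\psi\|_2^2\ge\|\psi\|_2^2\ge\|\psi\|_2^2-\|\psi\|_4^4$, so the $\psi$-identity forces $\tfrac{d}{dt}\|\psi(t,\cdot)\|_2^2\le0$ on $[t^*+1,\infty)$.
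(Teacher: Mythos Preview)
Your approach is correct in outline but takes a genuinely different route from the paper for the core decay estimate, and contains a couple of misstatements worth fixing.

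The paper's proof of the $L^2$ decay (Theorem~\ref{thm:3.1}) does \emph{not} use any $L^\infty$ bound on $A_1=A-hA_n$. From the $L^2$ energy identities \eqref{eq:21} and \eqref{eq:24} it assembles a coupled $2\times2$ differential inequality \eqref{eq:26} for $(u,v)=(\|A_1\|_2^2+\tfrac{c}{\alpha\kappa^2}\|\psi\|_2^2,\ \|\psi\|_2^2)$, symmetrizes the coefficient matrix, and shows that under \eqref{eq:14} both eigenvalues are negative. This yields simultaneous exponential decay of $\|\psi\|_2$ and $\|A_1\|_2$ purely at the $L^2$ level, together with the sharp large-$\kappa$ asymptotics \eqref{eq:152a}--\eqref{eq:152c} for $\lambda_m$. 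The $L^\infty$ control on $A_1$ (Proposition~\ref{lemma3.1}) is only invoked afterwards, in Proposition~\ref{prop:3.4}, to get the decay of $\phi_1$ and the eventual monotonicity of $\|\psi\|_2$. Your strategy instead front-loads the $L^\infty$ step: you first establish $\|a(t,\cdot)\|_\infty\le\gamma/(2\kappa^2)$ for $t\ge t^*$ (essentially the content of \eqref{eq:43} combined with Proposition~\ref{lemma3.1}), and then run the scalar inequality $\tfrac12\tfrac{d}{dt}\|\psi\|_2^2\le-\bigl[(\sqrt{\mu}-\|a\|_\infty)^2-1\bigr]\|\psi\|_2^2$. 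This is legitimate and conceptually cleaner once the $L^\infty$ bound is available, but it relies on the heavy parabolic--elliptic regularity machinery early in the argument and does not deliver the refined rate constants that the matrix method gives.

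Two concrete corrections. First, there are \emph{no} ``lower-order boundary terms'' in $\langle\curl^2 a,a\rangle$: by \eqref{eq:151} one has $\curl a=0$ on $\partial\Omega$, so the integration by parts is clean and no trace absorption is needed. Second, your justification of $\|a\|_2\le C_\Omega\|\curl a\|_2$ is the wrong one: this is the inequality $\lambda>0$ from \eqref{eq:13}, valid because $a\in\Hg_d$ (that is, $\Div a=0$ and $a\cdot\nu=0$ on $\partial\Omega$ in the Coulomb gauge), not because $\dashint_{\partial\Omega}\curl a\,ds=0$.
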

  Note that, as is explained at the beginning of Section \ref{section3}, \eqref{eq:179}
  means that the semigroup associated with the linearized version of
  \eqref{eq:1} is a contraction.  The reader is referred to Theorem
  \ref{thm:3.1} and to Proposition \ref{prop:3.4} for the precise
  values of $\gamma$, $\lambda_m$, and $t^*$ in the large $\kappa$ limit.

Let
\begin{displaymath}
  \LL_h = -\nabla_{hA_n}^2 + ih\phi_n \,,
\end{displaymath}
be defined over the domain
\begin{displaymath}
  D(\LL_h) = \{ u\in H^2(\Omega) \, | \; u|_{\partial\Omega_c}=0 \; ; \;
  \nabla u\cdot\nu|_{\partial\Omega_i}=0\,\} \,.
\end{displaymath}
In Section \ref{section4} we prove that a proper bound on the
resolvent of $\LL_h$, which is the elliptic operator in (\ref{eq:1}a)
linearized near $(0,hA_n,h\phi_n)$, obtained over a vertical line in the
complex plane, guarantees global stability of the normal state.  In
particular we show:
\begin{theorem}
\label{thm:1.2}
   Let $\nu\geq0$. There exists $\kappa_0>0$ and $C_1>0$ such that, if  for some $\kappa>\kappa_0$ we
 have
  \begin{equation}
\label{eq:180}
\sup_{\gamma\in\R}\|(\LL_h-i\gamma-\nu)^{-1}\| < 1- \frac{C_1}{\kappa^2} \,,
  \end{equation}
then, any solution of \eqref{eq:1} must satisfy
\begin{equation}
\label{eq:181}
  \int_0^\infty e^{2\nu t}\, \|\psi(t,\cdot)\|_2^2 \,dt <\infty \,.
\end{equation}
\end{theorem}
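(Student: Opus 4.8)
The plan is to recast \eqref{eq:1}a as a linear parabolic equation governed by $\LL_h$ with a right‑hand side that is, in $L^2$, no larger than $\|\psi\|_2$, and to convert the resolvent bound \eqref{eq:180} into a bound on the space–time operator $\partial_t+\LL_h-\nu$ via Plancherel's theorem in $t$. Writing $a=A-hA_n$, $\varphi=\phi-h\phi_n$ and using the Coulomb gauge \eqref{eq:178} for $A$ and $A_n$, equation (\ref{eq:1}a) becomes
\[
\partial_t\psi+\LL_h\psi=\psi\bigl(1-|\psi|^2\bigr)+\mathcal R,\qquad \mathcal R=-2i\,a\cdot\nabla_{hA_n}\psi-|a|^2\psi-i\varphi\psi .
\]
The point on which everything hinges is that, by the parabolic maximum principle (Section~\ref{section2}) together with \eqref{eq:2.2}, one has $0\le 1-|\psi(t,\cdot)|^2\le 1$, hence $\|\psi(1-|\psi|^2)\|_2\le\|\psi\|_2$: the \emph{entire} ``order‑$\psi$ plus cubic'' term, which is not small, is nonetheless bounded in $L^2(\Omega)$ by $\|\psi\|_2$. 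This is precisely the slack that the factor $1-C_1/\kappa^2<1$ in \eqref{eq:180} is meant to beat, $\mathcal R$ playing the role of a genuinely small error.

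Since $\LL_h$ is accretive ($\Re\langle\LL_h u,u\rangle=\|\nabla_{hA_n}u\|_2^2$), the resolvent $(\LL_h-i\gamma-\nu)^{-1}$ decays like $|\gamma|^{-1}$ as $|\gamma|\to\infty$; with \eqref{eq:180} this also keeps $\sigma(\LL_h)$ to the right of $\{\Re z=\nu\}$. Hence (i) by a Fourier‑multiplier argument in $t$, $(\partial_t+\LL_h-\nu)^{-1}$ acts on $L^2(\R;L^2(\Omega))$ with norm exactly $M:=\sup_{\gamma\in\R}\|(\LL_h-i\gamma-\nu)^{-1}\|<1-C_1/\kappa^2$, and (ii) $v_0(t):=e^{-(\LL_h-\nu)t}\psi_0$, extended by $0$ for $t<0$, lies in $L^2(\R;L^2(\Omega))$ with $\|v_0\|\le C\|\psi_0\|_2$. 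Setting $v=e^{\nu t}\psi$, fixing $T>0$, and decomposing $v=v_0+w$ on $(0,T)$ with $w=(\partial_t+\LL_h-\nu)^{-1}$ applied to $e^{\nu t}\bigl(\psi(1-|\psi|^2)+\mathcal R\bigr)$ restricted to $(0,T)$, the inequality $\|\psi(1-|\psi|^2)\|_2\le\|\psi\|_2$ yields, with $X_T:=\bigl(\int_0^T e^{2\nu t}\|\psi\|_2^2\,dt\bigr)^{1/2}$,
\[
(1-M)\,X_T\ \le\ \|v_0\|\ +\ M\Bigl(\textstyle\int_0^T e^{2\nu t}\|\mathcal R(t)\|_2^2\,dt\Bigr)^{1/2}.
\]

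It then remains to absorb $\mathcal R$. Taking the divergence of (\ref{eq:1}b) and using the gauge and the boundary conditions (\ref{eq:1}c)--(\ref{eq:1}f) gives an elliptic problem for $\varphi$ with a right‑hand side quadratic in $\psi$, while $a$ solves a heat‑type equation with diffusion $c\,\curl^2$ and a quadratic source; hence, for $\kappa$ large and using \eqref{eq:2.2} and the regularity of Section~\ref{section2}, $\|a\|_{W^{1,\infty}}+\|\varphi\|_{L^\infty}\lesssim\kappa^{-2}$ up to a transient controlled by the data and decaying at rate $\sim c\lambda_1$. Testing the equation against $e^{2\nu t}\bar\psi$ gives the weighted energy identity, in which the nonlinearity contributes only the favourable $-\int e^{2\nu t}\|\psi\|_4^4$ and which bounds $\int_0^T e^{2\nu t}\|\nabla_{hA_n}\psi\|_2^2$ by $\|\psi_0\|_2^2+X_T^2$ plus lower‑order terms. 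Inserting this into the three summands of $\mathcal R$ gives $\bigl(\int_0^T e^{2\nu t}\|\mathcal R\|_2^2\bigr)^{1/2}\le c_0\kappa^{-2}X_T+Y$ with $Y=Y(\Omega,\kappa,c,\|\psi_0\|_2,\|A_0\|_2,h)<\infty$. Choosing $C_1>c_0$ and $\kappa_0$ large enough that all the ``$\kappa$ large'' estimates hold, the term $c_0\kappa^{-2}X_T$ is absorbed into $(1-M)X_T\ge(C_1/\kappa^2)X_T$, so $X_T$ is bounded uniformly in $T$; letting $T\to\infty$ gives \eqref{eq:181}.

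The main difficulty is this last stage: $\mathcal R$ reintroduces $\nabla_{hA_n}\psi$, which is not dominated by $\|\psi\|_2$, so the resolvent estimate of the third paragraph and the weighted energy identity must be run simultaneously, with the $\kappa$‑dependencies tracked carefully so that the gain $C_1/\kappa^2$ genuinely outweighs the loss from $\mathcal R$ (and from the transient in $a$); a subsidiary technical point is the justification of the Fourier‑multiplier step and of the splitting $v=v_0+w$, which rely on the global existence, uniqueness and a priori $L^2$‑bounds of $(\psi,A,\phi)$ established in Section~\ref{section2}.
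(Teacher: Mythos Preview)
Your plan is the paper's: write $\partial_t\psi+\LL_h\psi=\psi(1-|\psi|^2)+\mathcal R$, exploit $|\psi(1-|\psi|^2)|\le|\psi|$ together with \eqref{eq:180} via a transform in $t$, and bound $\mathcal R$ through $\|a\|_\infty=O(\kappa^{-2})$ and the energy inequality \eqref{eq:22}. The paper even uses the same split, writing $F_1=(1-|\psi|^2-|A_1|^2)\psi$ so that $|F_1|\le|\psi|$ once $\|A_1\|_\infty\le1$, and controlling $F_2=i\phi_1\psi+2iA_1\cdot\nabla_A\psi$ exactly as you propose. Where the arguments diverge is in the implementation of the transform step, and here your version has a gap.

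Your claim that $v_0(t)=e^{-(\LL_h-\nu)t}\psi_0$ lies in $L^2(\R_+;L^2(\Omega))$ with $\|v_0\|\le C\|\psi_0\|_2$ does not follow from \eqref{eq:180}: accretivity of $\LL_h$ yields only $\|e^{-\LL_h t}\|\le1$, hence $\|v_0(t)\|_2\le e^{\nu t}\|\psi_0\|_2$, which is not square-integrable for $\nu>0$, and converting a resolvent bound on a single vertical line into semigroup decay would require a Gearhart--Pr\"uss argument on a half-plane that the hypotheses do not supply. Relatedly, the Fourier-multiplier inverse of $\partial_t+\LL_h-\nu$ on $L^2(\R;L^2)$ is not a priori causal, so it need not coincide with your $w$ on $(0,T)$. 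The paper bypasses both issues by multiplying $\psi$ by a cutoff $\chi_{T,\epsilon}$ (zero on $[0,\epsilon^{-1})$, one on $[\epsilon^{-1},T]$, exponentially decaying thereafter) \emph{before} transforming; the initial data then enter as a boundary term $(\LL_h-\nu-i\gamma)^{-1}\psi(\epsilon^{-1},\cdot)$, whose $L^2_\gamma$-norm is finite because $\|(\LL_h-\nu-i\gamma)^{-1}\|\le C/(1+|\gamma|)$ (a direct consequence of $\phi_n\in L^\infty$), with no recourse to semigroup decay. The same cutoff, chosen with $\epsilon^{-1}>t^*(\kappa,M)$, also handles your ``transient'' cleanly: by Proposition~\ref{lemma3.1} and \eqref{eq:159} one has $\|A_1(t,\cdot)\|_\infty\le C\kappa^{-2}$ on the entire support of $\chi_{T,\epsilon}$, so the separate constant $Y$ becomes unnecessary and the whole contribution of the initial phase is absorbed into the single term $\|\psi(\epsilon^{-1},\cdot)\|_2$.
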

Unlike \eqref{eq:179}, \eqref{eq:180} does not guarantee that the
semigroup necessarily becomes a contraction in the long-time limit.
The above stability is proved in the large $\kappa$ limit both for
\eqref{eq:1} and, in Section \ref{section5}, for the same system,
scaled with respect to the penetration depth, which is obtained by
applying the transformation $x\to x/\kappa$ in \eqref{eq:1}, (cf.
Proposition~\ref{prop:5.6}).\\

As the resolvent of $\LL_h$ in an arbitrary domain is difficult to
control, we provide, in Section 6, an estimate of its norm for large
values of $h$, which can be applied for either large domains (with
respect to the coherence length), or large $\kappa$ values for penetration
depth scaling.  We show that its norm can be controlled using bounds
derived from two approximate problems, with constant current defined
either in $\R^2$ or in $\R^2_+$ with Dirichlet boundary conditions.
From the resolvent estimates, together with the results in
\cite{aletal10,aletal13,aletal12} we deduce that the critical current,
for which the normal state looses its local stability, can be
approximated by the same critical current obtained for the above
$\R^2_+$ problem.

For a more precise description of the results in Section \ref{section6},  we recall from
\cite{aletal10} and \cite{aletal13} the definitions of these model  operators
in $\R^2$ and $\R^2_+$. Let
\begin{equation}\label{eq:98}
  \A({\mathfrak j},c)=
-\Big(\nabla-i {\mathfrak j} \frac{x^2}{2}\hat{i}_y\Big)^2
+ ic{\mathfrak j} y \,,
\end{equation}
(where $\hat{i}_y$ is a unit vector in the $y$ direction) defined on
$D(\A)$ where
\begin{equation}\label{zzz2}
  D(\A) = \{ u\in L^2(\R^2) \,| \, \A u\in L^2(\R^2) \} \,.
\end{equation}
Let $\A_+({\mathfrak j},c)$ be defined by the same differential
operator defining $\A$ but on the domain 
\begin{equation}\label{eq:112}
D(\A_+) = \{ u\in \tilde V \, : \, \A_{+} u\in L^2(\R^2_+,\mathbb C) \},
\end{equation}
where
\begin{equation}\label{zzz4}
  \tilde V = H_0^{1,\mag}(\mathbb R^{2}_+,\mathbb C)\cap L^2(\mathbb R^{2}_+,\mathbb C;
t\, ds dt)\, .
\end{equation}
Set
\begin{equation}\label{deffrakj}
h|\nabla B_n(z_0)| = {\mathfrak j}(z_0)\,,
\end{equation}
and then define
\begin{equation}\label{zzz5}
    \A(z_0) = \A({\mathfrak j}(z_0),c)  \quad ; \quad   \A_+(z_0) = \A_+({\mathfrak j}(z_0),c)  
\end{equation}
We show in Section \ref{section2} that under all of the above
assumptions: (J), (R), (B), and (C), 
$B_n^{-1}(0)$ is either empty, or else consists of a single curve $\Gamma$
connecting  the two connected components of $\partial\Omega_c$. We denote
the two points of intersection by $z_1$ and $z_2$ and then set

\begin{equation}\label{eq:183a}
   {\mathfrak j}_+ = \inf_{i=1,2}{\mathfrak j}(z_i)\,.
 \end{equation}
We then let
  \begin{equation}\label{eq:183b}
\nu_m({\mathfrak j},c)= \inf_{\lambda\in \sigma(\A_+(\mathfrak j,c))}\Re\lambda\,.
\end{equation}
A straightforward dilation argument, which we detail in Section 6,
shows that
\begin{subequations}
\label{eq:182}
\begin{gather}
  \nu_m({\mathfrak j},c)={\mathfrak j}^{2/3}\nu_m(1,c) \\
  \|\A^{-1}({\mathfrak j},c)\|={\mathfrak j}^{-2/3}\|\A^{-1}(1,c)\| \\
 \sup_{\gamma\in\R} \|(\A_+({\mathfrak j},c)-i\gamma)^{-1}\|= {\mathfrak j}^{-2/3}\sup_{\gamma\in\R} \|(\A_+(1,c)-i\gamma)^{-1}\|
\end{gather}
\end{subequations}

We can now state
\begin{theorem}\label{thm1.3}
Let $\mu_R$ and $\mu_\infty$ be
respectively defined by
  \begin{equation}\label{eq:130a}
   \mu_R=  R^2\inf_{\lambda\in
    \sigma(\LL_{R^3h})}\Re\lambda\, \mbox{ and} \quad \mu_\infty=\liminf_{R\to\infty}\mu_R\,.
  \end{equation}
     Then
\begin{equation}
\label{eq:130}
  \mu_\infty=\lim_{R\to\infty}\mu_R=\nu_m \,,
\end{equation}
with 
\begin{equation}\label{eq:183c}
\nu_m = \nu_m(\mathfrak j_+,c )\,.
\end{equation}

Furthermore, let $\nu<\mu_\infty$.  Then, there exist positive $R_0$ and
$C$, depending only on $\Omega$, $\nu$, and $h$ such that, for $R\geq R_0$,
we have
  \begin{multline}
\label{eq:131}
R^2\sup_{\gamma\in\R}\|(\LL_{R^3h}-R^2\nu-iR^3\gamma)^{-1}\| \leq \\\max
\Big(\sup_{z_0\in\Gamma}\|(\A(z_0)-\nu)^{-1}\|,\sup_{
  \begin{subarray}
\,    \gamma\in\R \\
    i=1,2
  \end{subarray}}\|(\A_+(z_i)-\nu-i\gamma)^{-1}\|\Big)
\Big(1+\frac{C}{R^{1/4}}\Big) \\+ \frac{C}{R^{1/4}} \,.  
  \end{multline}
\end{theorem}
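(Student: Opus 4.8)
The plan is to prove the two assertions separately, but using the same semiclassical/blow-up machinery. For the identity \eqref{eq:130}--\eqref{eq:183c}, I would argue by two-sided comparison. For the upper bound $\mu_\infty\le\nu_m(\mathfrak j_+,c)$, localize near a point $z_i$ of $\Gamma\cap\partial\Omega_c$ where $\mathfrak j(z_i)=\mathfrak j_+$: take a quasimode of $\A_+(z_i)$ associated with an eigenvalue whose real part is close to $\nu_m$, rescale it by $R$ (so that the coherence-length ball of radius $O(R^{-?})$ around $z_i$ becomes an $O(1)$ region), flatten the boundary using (R1), and freeze the magnetic field to its linearization $B_n(z)\approx B_n(z_i)+\nabla B_n(z_i)\cdot(z-z_i)$ and $\phi_n$ to its affine part. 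Assumption (B) guarantees $B_n\neq0$ at the corners, and assumption (C) guarantees $\nabla\phi_n\perp\partial\Omega$ on $B_n^{-1}(0)\cap\partial\Omega$, which is exactly what makes the frozen model $\A_+({\mathfrak j}(z_i),c)$ — with the current term $ic{\mathfrak j}y$ tangential — the correct local operator; the dilation relations \eqref{eq:182} then convert the $\A_+(1,c)$ data to the $\mathfrak j_+$-scaling, giving $\mu_\infty\le\nu_m$. One must also check that localizing on the interior curve $\Gamma$ away from $\partial\Omega_c$ produces the plane operator $\A(z_0)$, whose spectrum is empty by \cite{aletal10}, so the interior does not contribute to the bottom of the spectrum; hence the infimum is attained at the boundary points $z_i$, and among those at the one minimizing $\mathfrak j$.

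For the reverse inequality $\mu_\infty\ge\nu_m$ — equivalently the resolvent bound \eqref{eq:131}, of which \eqref{eq:130} is a soft consequence via a standard argument that a uniform resolvent bound on a vertical line forces $\Re\sigma$ to lie to its right — I would build an approximate resolvent for $\LL_{R^3h}-R^2\nu-iR^3\gamma$ by a partition of unity adapted to three regimes: (i) a neighborhood of each $z_i$, where after rescaling and boundary-flattening the operator is a perturbation of $\A_+(z_i)-\nu-i\gamma'$ (with $\gamma'$ the rescaled spectral parameter), controlled by $\sup_{\gamma,i}\|(\A_+(z_i)-\nu-i\gamma)^{-1}\|$; (ii) a neighborhood of the rest of $\Gamma$ (interior points), where the model is $\A(z_0)-\nu$, controlled by $\sup_{z_0\in\Gamma}\|(\A(z_0)-\nu)^{-1}\|$; (iii) the region where $B_n$ is bounded away from $0$, where the magnetic Laplacian $-\nabla^2_{R^3hA_n}$ has a spectral gap growing like $R^3\inf|B_n|$ (a De Gennes / magnetic-bottle lower bound), so the resolvent there is $O(R^{-3})$, negligible after the $R^2$ prefactor. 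Gluing these local parametrices with commutator (IMS-type) error terms produces an approximate inverse whose norm is the stated maximum, with the gluing errors of relative size $R^{-1/4}$ coming from the mismatch between the true variable coefficients and their frozen/linearized versions on balls of the chosen intermediate radius (the exponent $1/4$ being the optimization of $R^{-\alpha}$ error from curvature/field-variation against $R^{\alpha}$ growth of the cutoff gradients — I would choose the localization scale accordingly and not belabor the arithmetic).

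The main obstacle will be regime (i): making the boundary-flattening and field-freezing near the right-angle corner points $z_i\in\partial\Omega_c\cap\Gamma$ quantitatively honest. One has to (a) use the flatness-near-edges part of (R1) so that the Dirichlet condition on $\partial\Omega_c$ and the Neumann condition on $\partial\Omega_i$ go over cleanly to the half-plane Dirichlet problem defining $\A_+$ (with the $t\,dsdt$-weighted space $\tilde V$ of \eqref{zzz4} appearing naturally), (b) verify that the gauge-transformed, rescaled potential $R^3hA_n$ converges, in the relevant $C^1$-type norm on the shrinking ball, to the model vector potential $\mathfrak j(z_i)\frac{x^2}{2}\hat i_y$ of \eqref{eq:98}, which is where $\mathfrak j(z_0)=h|\nabla B_n(z_0)|$ from \eqref{deffrakj} enters, and (c) handle the non-self-adjointness: since $\A_+$ is only m-accretive (defined via Lax--Milgram, not self-adjoint), resolvent bounds do not follow from spectral bounds for free, so I must carry the resolvent norm $\sup_\gamma\|(\A_+(z_i)-\nu-i\gamma)^{-1}\|$ as the genuine quantitative input rather than just the spectral bottom $\nu_m$, and likewise invoke the (finite, by \cite{aletal10}) resolvent bound for $\A$ on the empty-spectrum plane operator. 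Once these local resolvent estimates are in hand with the correct $R$-powers, the assembly into \eqref{eq:131} and the deduction of \eqref{eq:130} are routine.
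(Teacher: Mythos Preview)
Your parametrix strategy for \eqref{eq:131} is broadly right but organized differently from the paper. Rather than a static three-region partition, the paper observes (Lemma~\ref{uniquesol}) that the holomorphic function $F=\phi_n+icB_n$ has at most one zero $z_0(\gamma)$ of $F-\gamma/h$ for each fixed $\gamma$, and localizes in the single shrinking ball $B(z_0(\gamma),R^{\alpha-1})$. Outside that ball, Lemma~\ref{lem:5.1} controls the resolvent using \emph{both} the magnetic gap (where $|B_n|\gtrsim R^{\alpha-1}$) \emph{and} the potential gap (where $|h\phi_n-\gamma|\gtrsim R^{\alpha-1}$); your regime~(iii) invokes only the former, which would leave the portion of $\Gamma$ away from $z_0(\gamma)$ uncontrolled unless you tile the whole curve with $\sim R^{1-\alpha}$ shrinking balls. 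The paper's $\gamma$-adapted single-ball approach is cleaner and directly produces the structure of \eqref{eq:131}: as $\gamma$ sweeps $\R$, $z_0(\gamma)$ sweeps $\Gamma$, and whether the local model is $\A$ or $\A_+$ depends only on whether $z_0(\gamma)$ lies in the interior (case~(\ref{eq:92}a)) or within $2R^{\alpha-1}$ of an endpoint $z_i$ (cases~(\ref{eq:92}b,d)).

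There is a genuine gap in your upper-bound step $\limsup\mu_R\le\nu_m$. For a non-self-adjoint operator, a quasimode with small residual $\|(\LL_{R^3h}-\lambda_R)u_R\|/\|u_R\|$ only places $\lambda_R$ in the pseudospectrum, not near a true eigenvalue; you correctly flag non-self-adjointness in your point~(c) but only for the resolvent-bound direction, where it is harmless. The paper instead recycles the parametrix identity at $\nu=0$, $z_0=z_i$, to show that $\eta_1^R\A_+^{-1}\eta_1^R$ and $(\LL_h^R-R^{-2}\lambda_R)^{-1}$ differ in norm by $O(R^{-1/4})$, proves separately that $\eta_1^R\A_+^{-1}\eta_1^R\to\A_+^{-1}$ in $\LL(L^2(\R^2_+))$ (using the weighted form domain $\tilde V$), and then invokes Kato's perturbation theory (\cite{ka80}, IV~\S3.5) to extract a genuine eigenvalue converging to $\lambda_{\min}$. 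One geometric slip, finally: the endpoints $z_i\in\Gamma\cap\partial\Omega_c$ are \emph{not} corners---assumption~(B) forces $B_n\ne0$ at the corners, so $\Gamma$ meets $\partial\Omega_c$ at smooth Dirichlet points, and neither Neumann data nor the corner-flatness in (R1) enter the half-plane model.
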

\begin{remark}
  One can deduce from \eqref{eq:131} an upper bound for the critical current
  where the normal state $(0,hA_n,h\phi_n)$ becomes globally stable.    Let
   \begin{equation}
\label{eq:184}
    {\mathfrak j}_m = \inf_{z\in\Gamma}{\mathfrak j}(z)\,,
  \end{equation}
  and let ${\mathfrak j}_+$ be defined by (\ref{eq:183a}).  As is
  proved in Section \ref{section6}, and in particular in
  \eqref{eq:174}, when the domain size is multiplied
  by $R$, the resolvent norm of $\LL_h$ is given by the left-hand-side
  of \eqref{eq:131}. By \eqref{eq:180} it then follows that if both
  the domain and $\kappa$ are sufficiently large, and if
\begin{subequations}
\label{eq:55}
  \begin{equation}
  {\mathfrak j}_m > \|\A^{-1}(1,c)\|^{3/2}
\end{equation}
and
\begin{equation}
   {\mathfrak j}_+ > \sup_{\gamma\in\R} \|(\A_+(1,c)-i\gamma)^{-1}\|^{3/2} \,,
\end{equation}
\end{subequations}
then the normal state must be globally stable. The above conditions
serve as an upper bound for the critical current where the normal state
becomes globally stable. 

An obvious lower bound for this global stability current, is the
critical current for which the normal state becomes linearly unstable.
For large domains such instability is granted when $\mu_R<1$. By
\eqref{eq:130}, for sufficiently large $R$, it follows that the loss
of stability would take place when $\nu_m<1$. Using \eqref{eq:182} it
then follows that whenever
\begin{displaymath}
   {\mathfrak j}_+ < \nu_m(1,c)^{-3/2}\,,
\end{displaymath}
local stability is lost for sufficiently large $R$.  The optimality of
the above bound and of \eqref{eq:55} is left for future research.
\end{remark}

We conclude this work by providing some well-known elliptic-regularity
results for domains with corners in Appendices \ref{sec:a} and \ref{sec:b}. Then in
Appendix \ref{sec:A} we show how to use these results for parabolic
problems. Finally, in Appendix \ref{appendixD} we use the results of the previous
appendices to prove global existence, uniqueness, and regularity for
solutions of \eqref{eq:1}.

\section{Preliminaries}\label{section2}

\subsection{Equivalent boundary conditions.}~\\
Instead of considering the boundary conditions (\ref{eq:1}e,f,g), it
is possible to use an equivalent boundary condition where we prescribe
instead the magnetic field. As in \cite{ti10} we note that by
(\ref{eq:1}b,e,f), on each point on $\partial\Omega$, except for the corners,
we have
\begin{equation}\label{BJ}
  \frac{\partial}{\partial\tau}\curl A (t,\cdot) = \frac{1}{\kappa^2}J (\cdot)\,,
\end{equation}
where $\partial/\partial\tau$ denotes the tangential derivative along $\partial\Omega$ in the
positive trigonometric direction.
For convenience we set
\begin{equation}
J(x)\equiv0 \mbox{ on }\partial\Omega_i\,.
\end{equation}
 Thus, if we introduce on the boundary the function $B$ via
\begin{equation}
\label{eq:2}
  \curl A (t,x) = h\, B(t,x) \quad \text{on } \partial\Omega \,,
\end{equation}
where $h$ denotes a parameter measuring the intensity of
the magnetic field, we first observe that it satisfies 
\begin{equation}
\label{eq:36}
  B(t,x) = B(t,x_0) + \frac{1}{h \kappa^2}\int_{\Gamma(x,x_0)}J (\tilde x) \,ds (\tilde x) \,,
\end{equation}
where $(x,x_0)\in\partial\Omega\times\partial\Omega$, $\Gamma(x,x_0)$ is the portion on the
boundary connecting $x_0$ and $x$ in the positive trigonometric
direction, and $ds$ is a length element.    For later reference, we define 
 the reference current $J_r$
\begin{equation}
\label{eq:187}
hJ_r=J\,.
\end{equation}
Clearly, $J_r(x)$ is as smooth as $J$, i.e. at least
$C^2(\overline{\partial\Omega_c})$. Note that by \eqref{eq:76} we have
\begin{equation}
  \label{eq:3}
\int_{\partial\Omega}\,  J_r(x) \,ds=0 \,.
\end{equation}
One can recover the magnetic field $B(t,\cdot)$ at $x_0$ by integrating of
\eqref{eq:36} over $\pa \Omega$ ($x_0$ remaining fixed). This gives, with
the aid of (\ref{eq:1}g),
\begin{displaymath} 
B(t,x_0) = h_r - \frac{1}{ \kappa^2}\dashint_{\pa \Omega} \left(\int_{\Gamma(\tilde
    x, x_0)} J_r (x)  ds\right)\, ds (\tilde x)\, \mbox{ for } x_0 \in \pa
\Omega\,, 
\end{displaymath}
where $h_r=h_{ex}/h$. \\
We can thus conclude that $B(t,x)$ does
not depend on $t\,$, that is: $B(t,x)=B(x)$.  Switching the order of
integration then yields for $B$:
\begin{equation}
\label{eq:2bis}
   B(x) = h_r  -  \frac{1}{ \kappa^2}\dashint_{\pa \Omega} \, |\Gamma(\tilde x, x)| \, J_r(\tilde x)  ds (\tilde x)\, \mbox{ for } x  \in \pa \Omega\,.
\end{equation}

Note that by (\ref{eq:36}) and \eqref{eq:75}, $B$ must be continuous
along $\partial\Omega$ and, moreover, we have the property:
\begin{equation}
\mbox{ \it The magnetic field $B$ is constant along each component of }\pa \Omega_i\,.
\end{equation}
Hence (pending on the verification of the spaces in which we should
consider the solutions) the system (TDGL1) is equivalent to the system
(TDGL2)
\begin{subequations}  
\label{eq:1v2}
\begin{alignat}{2}
& \frac{\partial \psi}{\partial t} +
  i\phi\psi = \left(\nabla - iA \right)^{2} \psi + \psi 
  \left( 1 - |\psi|^{2} \right)\,,& \quad \text{ in } \R_+\times\Omega\, ,\\
 & \kappa^2\curl^2A + \sigma \left(\frac{\partial A}{\partial t} +
 \nabla\phi\right)  =  \Im(\bar\psi\, \nabla_A\psi) \, , & \quad \text{ in }  \R_+\times\Omega\,,\\
  &\psi=0 \,,&\quad \text{ on }  \R_+\times\partial\Omega_c\,, \\
 &(i\nabla+A)\psi\cdot\nu=0 \,,& \quad \text{ on }  \R_+\times\partial\Omega_i\,,\\
 &\curl A(t,x)  = h B(x) \,, & \text{ on } \R_+\times\partial\Omega\,, \\
&\psi(0,x)=\psi_0(x) \,, & \quad \text{ in } \Omega\,, \\ 
&A(0,x)=A_0(x)\,, & \quad \text{ in } \Omega \,.
\end{alignat}
\end{subequations}
where $B$ is given by \eqref{eq:2bis}.

Conversely, a solution of (TDGL2) must satisfy (TDGL1) with
\begin{displaymath}
 J_r = \kappa^2 \frac{\partial B}{\partial\tau} \,,\, \mbox{ and } 
 h_{r} = \dashint_{\pa \Omega} B(x) ds\,,
\end{displaymath}
having in mind that $J=h J_r$ and $h_{ex} = h h_r$.
\begin{remark}
  Note the above equivalence has only been established formally, as
  the regularity of the solutions has not been addressed yet. We
  return to this point in Subsection~\ref{gaugechoice} where we
  provide a precise definition of the spaces where the solutions
  reside.
\end{remark}
\vspace*{1ex}

\subsection{Stationary states}~\\
For a normal state we have $\psi\equiv 0$ by definition. Furthermore,
denoting the corresponding stationary magnetic and electric potentials
respectively by $hA_n$ and $h\phi_n$ we obtain, after dividing by $h$,
that $(A_n,\phi_n)$ weakly satisfy 
\begin{equation}
  \label{eq:8}
  \begin{cases}
     - c\, \curl^2A_n + \nabla\phi_n = 0 & \text{in } \Omega\,, \\
     \curl A_n = B   & \text{on } \partial\Omega\,,
  \end{cases}
\end{equation}
where 
\begin{equation}
\label{eq:9}
 c=\kappa^2/\sigma
\end{equation}
is a positive parameter and $B$ is defined by 
\eqref{eq:2bis}. \\ 
 As we later discuss, we choose the Coulomb gauge and assume that $A_n$ satisfies:
 \begin{equation}\label{eq:8a}
 A_n\in H^1(\Omega)\,,\, \Div A_n =0\,,\, A_n\cdot  \nu_{/\pa \Omega} =0\,.
 \end{equation}
 We now show that \eqref{eq:8a} combined with \eqref{eq:8} is uniquely
 solvable. We begin by constructing $\phi_n$ as a solution of the
 following problem, which can formally be obtained by taking the
 divergence of \eqref{eq:8},
\begin{equation}\label{eq:9a}
  \begin{cases}
     - \Delta\phi_n = 0 & \text{in } \Omega\,, \\
    \frac{\partial\phi_n}{\partial\nu}  = -\frac{J_r(x)}{\sigma}  & \text{on } \partial\Omega\,.
  \end{cases}
\end{equation}
\begin{remark}\label{sobaubord}
  Let $v$ denote a function in $H^2(\Omega)$. Then the trace of its normal
  derivative is well defined in $H^\frac 12 (\Gamma)$, where $\Gamma$ is any
  regular component of $\pa \Omega$. For convenience of notation we write
  $v\in H^\frac 12(\pa \Omega)$ in the sequel. The reader should thus be
  careful not to adopt the conventional interpretation of this
  notation which may not apply in some cases (consider,for instance,
  the case where $J$ is discontinuous at the corners).
\end{remark}
We seek a solution to the problem \eqref{eq:9a}  in $H^2(\Omega)$ such that 
\begin{equation}
  \label{eq:10}
\int_\Omega\phi_n\,dx=0 \,.
\end{equation}
Since $J_r \in C^2(\overline{\partial\Omega_c})$,  and
\begin{displaymath}
  \int_{\partial\Omega} J_r(x)\, ds = 0 \,.
\end{displaymath}
We can now use Proposition \ref{propoA2} and property (R1) of $\Omega$ to
obtain that $\phi_n$ uniquely exists and that
\begin{equation}
\label{regphin}
\phi_n\in W^{2,p}(\Omega)\,,
\end{equation}  
for all $p\geq 2$. 

Similarly, we construct $B_n= \curl A_n$ as the solution of a problem
which can be formally derived by taking the $\curl$ of \eqref{eq:8},
\begin{equation}\label{eq:9a1}
  \begin{cases}
      \Delta\,  B_n = 0 & \text{in } \Omega\,, \\
    B_n=  B  & \text{on } \partial\Omega\,.
  \end{cases}
\end{equation}
Although one can obtain an explicit formula for $B_n$ in $\Omega$ (which
amounts to extending \eqref{eq:2bis} into $\Omega$) using the strong
regularity of $J$, we prefer to use \eqref{eq:9a1} which allows
us to rely on the regularity results in Appendix \ref{sec:a}. In
particular, by Proposition \ref{propoA1}, we have, as $\Omega$ satisfies
condition (R1), a unique $B_n\in H^2(\Omega)$ solution of \eqref{eq:9a1}.
Moreover
\begin{equation}\label{bnlp}
B_n\in W^{2,p}(\Omega)\,,\, \forall p\geq 2\,.
\end{equation}
We can now proceed to determine $A_n$.  To ensure that $\Div A_n=0$,
we look for $A_n$ in the form:
\begin{displaymath}
  A_n= \nabla_\perp\theta_n\,.
\end{displaymath} 
Since $B_n=\curl A_n$, we have
  \begin{equation}
\label{eq:9a2}
  \begin{cases}
  -\Delta \theta_n = B_n & \text{in } \Omega\,, \\
    \theta_n= 0  & \text{on } \partial\Omega\,.
\end{cases}
\end{equation}
Hence, we set $\theta_n$ to be the variational solution of the above
Dirichlet problem.  The Dirichlet condition $\theta_n$ satisfies
ascertains that $A_n$ meets the condition $A_n\cdot \nu =0 $ on $\pa \Omega$.
From Proposition \ref{propoA1} (see \eqref{eq:136}) it then follows
that $\theta_n \in W^{3,p}(\Omega)$ for all $p<2$, and that $\theta_n\not\in
W^{3,p}(\Omega)$ for $p>2$ unless $B$ vanishes at every corner (a case
which certainly doesn't fall into the (J3) category in
\eqref{eq:75aa}).  Hence,
\begin{equation}
A_n\in W^{2,p}(\Omega,\mathbb R^2)\,,\, \mbox{ for all } p<2\,.
\end{equation}
It remains to show that $(\phi_n,A_n)$ is indeed a
solution of \eqref{eq:8}, as we have only established, so far, that $
V_n := - c\, \curl^2A_n + \nabla\phi_n$ satisfies~:
  \begin{equation*}\label{curldiv} V_n\in L^2(\Omega,\mathbb R^2)\,, \,\Div
    V_n =0\,\mbox{ and }\, \curl V_n =0 \mbox{ in } \Omega\,, \, V_n\cdot \nu
    =0 \mbox{ on } \pa \Omega\,.
\end{equation*}
To obtain the last property of the previous line, we used \eqref{BJ}
and the boundary condition for $\phi_n$ in \eqref{eq:9a}.  We can now
use the decomposition Proposition \ref{decprop} to conclude that
$V_n=0$, and hence $(\phi_n,A_n)$ satisfy \eqref{eq:8}. Finally, note
that $\phi_n$ is unique, by \eqref{eq:9a} and \eqref{eq:10}, and that
the uniqueness of $A_n$ follows from
\eqref{eq:8}. \\

We summarize the above discussion by the following proposition:
\begin{proposition}\label{prop2.1}~\\
 Suppose that $\pa \Omega$ has the property (R1). 
There exists a unique solution $(\phi_n,A_n) \in H^2(\Omega)\times H^1(\Omega)$ satisfying \eqref{eq:8},
  \eqref{eq:8a}, and \eqref{eq:10}.  This solution belongs to
  $W^{2,p}(\Omega)\times  W^{2,q}(\Omega,\mathbb R^2)$, for all $p\geq 2$ and $q<2$,  Moreover $(0,
  h\phi_n,hA_n)$ is a stationary solution of \eqref{eq:1}, and $\curl A_n\in
  W^{2,p}(\Omega,\R^2)$ for all $p\geq 2$.
\end{proposition}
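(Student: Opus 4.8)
The plan is to construct $\phi_n$, then $B_n=\curl A_n$, then $A_n$, each as the solution of a single scalar elliptic boundary‑value problem, and finally to verify that the triple obtained actually solves the coupled system \eqref{eq:8} by a Hodge‑type uniqueness argument; the regularity claims are then read off from the corner estimates of Appendix~\ref{sec:a}.

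First I would solve the Neumann problem \eqref{eq:9a} for $\phi_n$: $-\Delta\phi_n=0$ in $\Omega$, $\pa_\nu\phi_n=-J_r/\sigma$ on $\pa\Omega$, normalized by $\int_\Omega\phi_n\,dx=0$. The compatibility condition is $\int_{\pa\Omega}J_r\,ds=0$, which is precisely \eqref{eq:3}; since $J_r\in C^2(\overline{\pa\Omega_c})$ and $\pa\Omega$ obeys (R1), Proposition~\ref{propoA2} provides a unique such $\phi_n\in H^2(\Omega)$ together with $\phi_n\in W^{2,p}(\Omega)$ for every $p\geq2$, and uniqueness is immediate from the Neumann problem plus the normalization. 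Next I would solve the Dirichlet problem \eqref{eq:9a1} for $B_n$: $\Delta B_n=0$ in $\Omega$, $B_n=B$ on $\pa\Omega$, where $B$ is the boundary field \eqref{eq:2bis}, continuous along $\pa\Omega$ by \eqref{eq:36} and \eqref{eq:75}. Proposition~\ref{propoA1}, again using (R1), yields a unique $B_n\in H^2(\Omega)$ with $B_n\in W^{2,p}(\Omega)$ for all $p\geq2$; this already gives the last assertion on $\curl A_n$.

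Then I would set $A_n=\nabla_\perp\theta_n$, with $\theta_n$ the variational solution of $-\Delta\theta_n=B_n$ in $\Omega$, $\theta_n=0$ on $\pa\Omega$. Writing $A_n=\nabla_\perp\theta_n$ makes $\Div A_n=0$ automatic, $\theta_n|_{\pa\Omega}=0$ forces $A_n\cdot\nu=0$ on $\pa\Omega$ (the normal component of $\nabla_\perp\theta_n$ is a tangential derivative of $\theta_n$), and $\curl A_n=B_n$ by construction, so \eqref{eq:8a} holds. Here one only gets $\theta_n\in W^{3,p}(\Omega)$ for $p<2$, by Proposition~\ref{propoA1} in the form \eqref{eq:136}, and no better, since $B$ does not vanish at the corners in the (J3) regime; hence $A_n\in W^{2,p}(\Omega,\R^2)$ for all $p<2$, in particular $A_n\in H^1(\Omega)$. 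It remains to check the interior equation in \eqref{eq:8}: with $V_n:=-c\,\curl^2A_n+\nabla\phi_n$ one has $\Div V_n=0$ ($\Div\curl^2A_n=0$ identically, $\Delta\phi_n=0$), $\curl V_n=0$ ($\curl\curl^2A_n=-\Delta B_n=0$, $\curl\nabla\phi_n=0$), and $V_n\cdot\nu=0$ on $\pa\Omega$, the last of these by combining the boundary identity \eqref{BJ} (which, via \eqref{eq:187}, expresses $\pa_\tau B_n$ on $\pa\Omega$ in terms of $J_r$) with the Neumann condition of \eqref{eq:9a} and $c=\kappa^2/\sigma$. Thus $V_n$ is an $L^2$ vector field that is curl‑free, divergence‑free and has vanishing normal trace, so the decomposition Proposition~\ref{decprop} forces $V_n=0$, i.e.\ $(\phi_n,A_n)$ solves \eqref{eq:8}; uniqueness of $A_n$ follows by applying the same argument to the difference of two solutions (which has zero $\curl$ on $\pa\Omega$, hence, being harmonic, zero $\curl$ inside, is divergence‑free and has zero normal trace, hence vanishes). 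Finally, $(0,h\phi_n,hA_n)$ being a stationary solution of \eqref{eq:1} is a matter of unwinding definitions: (\ref{eq:1}a) holds with $\psi\equiv0$, the $\psi$‑boundary conditions (\ref{eq:1}c,d) hold trivially, (\ref{eq:1}b) for a time‑independent normal state reduces (after dividing by $h$) to exactly the equation \eqref{eq:8} that $(\phi_n,A_n)$ was built to satisfy, and the conditions on $A$ follow from the equivalence of (TDGL1) and (TDGL2) of Section~\ref{section2} together with $J=hJ_r$, $h_{ex}=hh_r$ and \eqref{eq:2bis}.

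The step I expect to be genuinely delicate is the behaviour at the corners where $\pa\Omega_i$ meets $\pa\Omega_c$: classical elliptic regularity fails there, so all three boundary‑value problems must be handled through the corner estimates of Appendix~\ref{sec:a} under hypothesis (R1), and this is precisely what caps the regularity of $A_n$ at $W^{2,p}$, $p<2$ — the harmonic extension $B_n$ of a datum nonzero at a corner produces a $\theta_n$ whose third derivatives carry an unavoidable corner singularity, which is the structural reason assumptions like (B) are later imposed. A secondary, more bookkeeping‑type point is the decomposition argument of the last step on a possibly multiply connected $\Omega$: one must ensure that $V_n$, and likewise the difference of two candidate $A_n$'s, has no surviving harmonic‑vector‑field component, which is where the precise form of Proposition~\ref{decprop}, accounting for the circulations around the holes, is used.
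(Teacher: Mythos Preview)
Your proposal is correct and follows essentially the same route as the paper: construct $\phi_n$ via the Neumann problem \eqref{eq:9a}, $B_n$ via the Dirichlet problem \eqref{eq:9a1}, $A_n=\nabla_\perp\theta_n$ via \eqref{eq:9a2}, then verify \eqref{eq:8} by showing $V_n=-c\,\curl^2A_n+\nabla\phi_n$ is curl-free, divergence-free with vanishing normal trace and invoking Proposition~\ref{decprop}, with all regularity supplied by Propositions~\ref{propoA1}--\ref{propoA2}. Your uniqueness argument for $A_n$ is spelled out more explicitly than the paper's terse ``follows from \eqref{eq:8}'', and your closing caveat about harmonic vector fields on multiply connected domains is a fair point, though under (R1)--(R2) the wire-like $\Omega$ of Figure~\ref{fig:1} is simply connected so no circulation obstruction arises.
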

Using Sobolev embeddings, we deduce in particular  that:
\begin{equation}
 \phi_n\,,\,B_n\,,\, \mbox{ and } \curl B_n  \mbox{ belong  to } C^1(\overline{\Omega})\,.
 \end{equation}
\vspace*{1ex}

\subsection{A magnetic Laplacian}\label{ss2.3}
Next, we define
\begin{equation}
\label{eq:11}
  \mu(h) = \inf_{
    \begin{subarray}{2}
      u\in H^{1,\partial\Omega_c}_0 \\
      \|u\|_2 =1
    \end{subarray}} \| \nabla_{hA_n}u\|_2^2 \,,
\end{equation}
where
\begin{displaymath}
  H^{1,\pa \Omega_c}_0= \{ u\in H^1(\Omega,\C)\,|\, u|_{\partial\Omega_c}=0 \} \,, 
\end{displaymath}
in which the boundary data appear in a trace sense. Using the
diamagnetic inequality \cite{fohe09}, it is easy to show
that 
\begin{equation}\label{minmu}
\mu(h) \geq \mu(0) >0\,,
\end{equation}
since $\Omega$ is relatively compact.\\
Under relatively weak assumptions one can obtain that
\begin{equation}\label{Assmu}
\lim_{h\ar +\infty} \mu(h) = +\infty\,.
\end{equation}

One can estimate the rate of divergence of $\mu$, in the large $h$
limit, by assuming first that (R2), (J2) and (J3) hold true. In
that case, $B$ is strictly monotone on each component of $\pa \Omega_c$.
We now argue as in \cite{al08} (proof of Proposition 4.1 there).
Observing that $B_n$ is continuous on $\overline{\Omega}$ and harmonic in
$\Omega$, the maximum principle shows that the minimum $B_{min}$ of $B_n$
in $\Omega$ is attained on one component of $\pa \Omega_i$ and that the
maximum $B_{max}$
is attained at the other component.\\

Assume further that
 \begin{equation}\label{(B)}
 \mbox{ (B)  }  B_n^{-1}(0) = \emptyset  \mbox{ or }
B_{min} < 0 < B_{max}\,,
\end{equation}
which implies that $B_n^{-1}(0)$ lies away from the corners.

\begin{remark}
  This condition can be expressed in terms of the boundary data
  \eqref{eq:1}.  With the aid of (\ref{eq:2bis}) we obtain that
  $0\in(B_{min},B_{max})$ is equivalent to
  \begin{displaymath}
     \frac{1}{ \kappa^2}\dashint_{\pa \Omega} \, |\Gamma(\tilde x, c_{min} )| \,
     J_r(\tilde x)  ds (\tilde x)\,  < h_r  <  \frac{1}{\kappa^2}\dashint_{\pa \Omega} \, |\Gamma(\tilde x, c_{max})| \, J_r(\tilde x)  ds (\tilde x)\, 
  \end{displaymath}
where $c_{min}$ and $c_{max}$ lie both on $\partial\Omega_i$ and satisfy
$B(c_{min})=B_{min}$ and $B(c_{max})=B_{max}$.
\end{remark}

By the maximum principle, we first deduce that $B_{n}^{-1}(\mu)$ cannot
contain a loop for any $\mu\in(B_{min},B_{max})$.  Then, we use the
above-stated monotonicity to conclude that $B_n^{-1}(\mu)\cap\partial\Omega$
consists of precisely two points: one on each connected component of
$\pa \Omega_c$.  Hence $B_n^{-1}(\mu)$ must be a simple smooth curve
joining the two components of $\pa \Omega_c$.  By Hopf's lemma for
harmonic functions (cf. \S\,6.4.2 in \cite{ev98} for instance), it thus
follows that $\nabla B_n\neq 0$ on $\Omega \cap B_n^{-1} (\mu )$.  By the boundary
condition for $B_n$, we obtain $\nabla B_n\neq 0$ on $\partial\Omega_i$ as well. We
have thus proved that
\begin{equation} \label{hypnabla}
\nabla B_n \neq 0 \mbox{ in } B_n^{-1}([B_{min},B_{max}])=\bar{\Omega}
\,,
\end{equation}
and in particular on $B_n^{-1}(0)$.  It is now possible to use the
same methods as in \cite{pakw02} to obtain the existence of some
$\mu_0>0$ such that
\begin{equation}
\label{eq:12}
  \mu(h) \geq  \inf_{
    \begin{subarray}{2}
      u\in H^1(\Omega,\C) \\
      \|u\|_2 =1
    \end{subarray}} \| \nabla_{hA_n}u\|_2^2 \geq \mu_0\,  h^{2/3}, \forall h\geq 1\,.
\end{equation}
\vspace*{1ex}

\subsection{Another spectral  entity}\label{ss2.4}~\\
To be able to state the main result in the next section we need to define yet another entity.
Let then
\begin{equation}
  \label{eq:13}
\lambda= \inf_{
    \begin{subarray}{2}
      V\in\Hg_d \\
      \|V\|_2 =1
    \end{subarray}} \| \curl V\|_2^2 \,,
\end{equation}
where
\begin{equation}
  \Hg_d = \big\{ V\in H^1(\Omega,\R^2)\,|\, \Div V=0 \,, V\big|_{\partial\Omega}\cdot\nu=0 \big\}  \,.
\end{equation}
We next provide an alternative characterization of $\lambda$. \\
 
\begin{proposition} 
\label{prop:dirichlet}
Under condition $(R1)$, 
\begin{equation}
\label{eq:177}
\lambda =\lambda^D\,,
\end{equation}
where  $\lambda^D$ is the ground state energy of the Dirichlet Laplacian $-\Delta^D$.
\end{proposition}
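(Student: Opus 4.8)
The plan is to identify divergence-free fields in $\Hg_d$ with their scalar \emph{stream functions} and then recognise the resulting variational problem. Given $V\in\Hg_d$, the condition $\Div V=0$ on the (simply connected) domain $\Omega$ lets us write $V=\nabla_\perp\theta$ for a single-valued $\theta$; since $V\in H^1(\Omega,\R^2)$ we have $\nabla\theta\in H^1(\Omega)$, i.e.\ $\theta\in H^2(\Omega)$, and $\curl V=-\Delta\theta$. The boundary condition $V\cdot\nu=0$ on $\partial\Omega$ is equivalent to $\partial_\tau\theta\equiv0$ on $\partial\Omega$, i.e.\ to $\theta$ being constant on each component of $\partial\Omega$, so after subtracting a locally constant function we may take $\theta\in H^2(\Omega)\cap H^1_0(\Omega)$; conversely any such $\theta$ produces $V=\nabla_\perp\theta\in\Hg_d$. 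Consequently \eqref{eq:13} becomes
\begin{displaymath}
  \lambda=\inf\Big\{\,\|\Delta\theta\|_2^2\,\big/\,\|\nabla\theta\|_2^2\ :\ \theta\in H^2(\Omega)\cap H^1_0(\Omega),\ \theta\not\equiv0\,\Big\}\,,
\end{displaymath}
and it remains to show that this infimum equals $\lambda^D$.

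For $\lambda\geq\lambda^D$ I would, for any such $\theta$, integrate by parts (legitimate since $\theta\in H^1_0$ and $\Delta\theta\in L^2$), and then use Cauchy--Schwarz together with the Poincar\'e inequality $\|\theta\|_2\leq(\lambda^D)^{-1/2}\|\nabla\theta\|_2$ — the latter being exactly the variational definition of $\lambda^D$:
\begin{displaymath}
  \|\nabla\theta\|_2^2=-\int_\Omega\theta\,\Delta\theta\,dx\leq\|\theta\|_2\,\|\Delta\theta\|_2\leq(\lambda^D)^{-1/2}\,\|\nabla\theta\|_2\,\|\Delta\theta\|_2\,,
\end{displaymath}
which rearranges to $\|\Delta\theta\|_2^2\geq\lambda^D\,\|\nabla\theta\|_2^2$, i.e.\ $\lambda\geq\lambda^D$.

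For the reverse inequality I would exhibit an optimiser built from the Dirichlet ground state. Let $u_1\in H^1_0(\Omega)$, $\|u_1\|_2=1$, solve $-\Delta u_1=\lambda^D u_1$, and let $\theta_1\in H^1_0(\Omega)$ solve $-\Delta\theta_1=u_1$. By the elliptic regularity for the Dirichlet problem on a domain whose faces meet at the convex angle $\pi/2$ — recorded, together with property $(R1)$, in Appendix~\ref{sec:a} — one has $\theta_1\in H^2(\Omega)$, so $V_1:=\nabla_\perp\theta_1$ belongs to $\Hg_d$ and $\curl V_1=u_1$. Testing $-\Delta u_1=\lambda^D u_1$ against $\theta_1$ and integrating by parts twice gives $\|u_1\|_2^2=\int_\Omega\nabla u_1\cdot\nabla\theta_1=\lambda^D\int_\Omega u_1\theta_1=\lambda^D\|\nabla\theta_1\|_2^2$, whence
\begin{displaymath}
  \frac{\|\curl V_1\|_2^2}{\|V_1\|_2^2}=\frac{\|u_1\|_2^2}{\|\nabla\theta_1\|_2^2}=\lambda^D\,,
\end{displaymath}
so $\lambda\leq\lambda^D$. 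Combined with the previous paragraph this yields $\lambda=\lambda^D$, and shows in passing that the infimum in \eqref{eq:13} is attained (at $V_1$).

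The Green's-identity and Poincar\'e manipulations are routine; the one point genuinely needing care is the correspondence $V\leftrightarrow\theta$ on the non-smooth domain — namely that the stream function is globally single-valued (which is where simple connectivity of $\Omega$ enters) and, for the upper bound, that solving the Dirichlet problem with an $L^2$ datum really lands in $H^2(\Omega)$ despite the corners. For the latter I would lean on the corner elliptic-regularity results collected in Appendix~\ref{sec:a} together with the right-angle condition in $(R1)$.
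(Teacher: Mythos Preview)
Your proof is correct and follows the same stream-function identification $V=\nabla_\perp\theta$, $\theta\in H^2(\Omega)\cap H^1_0(\Omega)$, as the paper (which records it as the bijection $\Hg_p\to\Hg_d$ via Proposition~\ref{decprop}). For the upper bound your test field is actually the same as the paper's: since $-\Delta u_1=\lambda^D u_1$, your $\theta_1=(-\Delta^D)^{-1}u_1$ equals $(\lambda^D)^{-1}u_1$, so $V_1$ is just a scalar multiple of the paper's choice $\nabla^\perp u_1$.

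The genuine difference is in the lower bound. The paper argues by compactness: it shows the infimum of $\|\Delta\Phi\|_2^2/\|\nabla\Phi\|_2^2$ over $\Hg_p$ is attained, computes the Euler--Lagrange condition $\int\Delta\Phi_{\min}(\Delta\eta+\lambda\eta)\,dx=0$ for all $\eta\in\Hg_p$, and concludes that $\lambda$ must be a Dirichlet eigenvalue, hence $\geq\lambda^D$. Your route is more elementary and direct: the chain $\|\nabla\theta\|_2^2=-\langle\theta,\Delta\theta\rangle\leq\|\theta\|_2\|\Delta\theta\|_2\leq(\lambda^D)^{-1/2}\|\nabla\theta\|_2\|\Delta\theta\|_2$ gives $\lambda\geq\lambda^D$ immediately, without any existence-of-minimiser step. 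This is a cleaner argument; the paper's approach, on the other hand, identifies $\lambda$ as an actual Dirichlet eigenvalue (not merely bounded below by $\lambda^D$), which is slightly more information but not needed here.

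Your caveat about simple connectivity is well placed: the bijection $V\leftrightarrow\theta$ relies on it (otherwise nontrivial harmonic fields enter), and the paper leans on the same point through Proposition~\ref{decprop}.
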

\begin{proof}
We have seen in Proposition \ref{propoA1} that the domain of $\Delta^D$ is $\mathcal H_p:= H^2(\Omega)\cap H_0^1(\Omega)$.\\
 Let $u $
denote  an $L^2$-normalized  ground state of $-\Delta^D$. 
 Then $\nabla^{\perp} u$ belongs to
$\mathcal H_d$, and $\curl \nabla^{\perp} u= -\Delta u = \lambda^D u$.  Hence
$$ \|\curl \nabla^{\perp} u\|^2 = \lambda^D \, \langle u\,,\, \curl \nabla^{\perp} u\rangle = \lambda^D \,  \|
\nabla^{\perp} u\|^2\,.
$$
From the above we deduce that 
$$
\lambda \leq \lambda^D\,.
$$

Conversely, let $V\in\Hg_d$.  Under assumption (R1) there exists $\Phi\in
\Hg_p $ such that $V =-\nabla_\perp\Phi$ (cf. Proposition \ref{decprop}).
Moreover $\nabla^{\perp} $ is a bijection from $\mathcal H_p$ onto $\mathcal
H^d$. Hence, we can rewrite \eqref{eq:13}, in terms of $\Phi$, in
the form
\begin{equation}
\label{eq:143}
  \lambda= \inf_{
    \begin{subarray}{2}
      \Phi\in\Hg_p \\
      \|\nabla\Phi\|_2 =1
    \end{subarray}} \| \Delta\Phi\|_2^2 \,.
\end{equation}
It can be readily verified  that the 
    functional in \eqref{eq:143}  is lower
semicontinuous. Furthermore, it is also coercive in view of
Proposition \ref{propoA1}. We can thus conclude the existence of a
minimizer which we denote by $\Phi_{min}$. Evaluating the first
variation we can conclude that
\begin{displaymath}
  \int_\Omega \Delta\Phi_{min}\, (\Delta\eta+\lambda\eta) \, dx = 0\,,\, \forall \eta \in \mathcal H_p \,.
\end{displaymath}
Clearly, if $\Delta+\lambda:\Hg_p\to L^2(\Omega)$ is invertible, then we must have
$\Delta\Phi_{min}=0$ and since $\Phi_{min} \in H^1_0(\Omega)$, it follows that
$\Phi_{min} \equiv0$, contradicting the requirement that $\|\nabla
\Phi_{min} \|_2=1$.  Consequently, $\lambda$ is an eigenvalue of the
Dirichlet Laplacian $-\Delta^D$, hence satisfying $\lambda \geq \lambda^D$.  
 \end{proof}

\subsection{Gauge equivalence and weak
  solutions}\label{gaugechoice}\label{ss2.5}~\\ 
We assume that $\Omega$ has property (R1) (see  (\ref{propertyR})).
Let $A\in L^2_{loc}([0,\infty);H^1(\Omega,\R^2))$, $\psi\in
L^2_{loc}([0,\infty);H^1(\Omega,\C))$, and $\phi\in
L^2_{loc}([0,\infty);L^2(\Omega,\R^2))$.  Following \cite{chenetal93}, we say that $(\psi',A',\phi)$ is gauge
equivalent to $(\psi,A,\phi)$ if there exists $\omega\in L^2_{loc}([0,\infty);H^2(\Omega))\cap
H^1_{loc}([0,\infty); H^1(\Omega))$ such that 
\begin{equation}
\label{eq:47}
  A'= A + \nabla\omega\,, \; \phi'= \phi-
  \frac{\partial\omega}{\partial t}\,, \; \psi' = \psi e^{i\omega} \,.
\end{equation}
We say that $(\psi',A',\phi')=G_\omega(\psi,A,\phi)$ in that case.
It is easy to show that \eqref{eq:47} is an equivalence relation. We
begin by defining the gauge (cf. \cite{chenetal93})
\begin{equation}
\label{eq:7}
  {\mathbb H}= \{ (u,v)\in H^1(\Omega,\mathbb R^2)\times L^2(\Omega)\,|\, c\, \Div u + v =0\, ; \,
  u\cdot\nu|_{\partial\Omega}=0 \,\} \,.
\end{equation}
Let
$(\psi,A,\phi)\in L^2_{loc}([0,\infty);H^1(\Omega,\C))\times L^2_{loc}([0,\infty);H^1(\Omega,\R^2))\times
L^2_{loc}([0,\infty);L^2(\Omega,\R^2))$, such that $A_0|_{\partial\Omega}\cdot\nu=0$ (where $A_0= A(0,\cdot)$).
We first show that there exists a unique gauge equivalent
$(\psi_d,A_d,\phi_d)\in L^2_{loc}([0,\infty);H^1(\Omega,\C))\times L^2_{loc}([0,\infty);\mathbb{H})$. To
prove this, following \cite{chenetal93}, we first solve
\begin{equation}\label{systchi}
  \begin{cases}
    \frac{\partial\chi}{\partial t} - c\, \Delta\chi = c\, \Div A + \phi & \text{in } (0,\infty)\times\Omega \\
      \frac{\partial\chi}{\partial\nu} = -A\cdot\nu & \text{on } (0,\infty) \times\partial\Omega \\
      \chi|_{t=0}=0 &    \text{in } \Omega \,.
  \end{cases}
\end{equation}
In Appendix \ref{sec:A} we prove that there exists a solution
$\chi\in L^2_{loc}([0,\infty);H^2(\Omega))\cap H^1_{loc}([0,\infty);L^2(\Omega))$ to the above problem. It can
be readily verified that $(\psi_d,A_d,\phi_d)=G_\chi(\psi,A,\phi)$. In the case
where $A_0|_{\partial\Omega}\cdot\nu\neq0$ we define first the gauge function
\begin{displaymath}
   \begin{cases}
    \Delta\chi_0 = 0 & \text{in } \Omega \\
      \frac{\partial\chi_0}{\partial\nu} = -A_0\cdot\nu & \mbox{ on } \partial\Omega \,.
  \end{cases}
\end{displaymath}
Then, we replace $(\psi,A,\phi)$ by $G_{\chi_0}(\psi,A,\phi)$ and proceed as
before. 

Let
\begin{displaymath}
  \Wg_1= \{ V \in H^1(\Omega;\mathbb R^2)\,,\, V\cdot \nu =0\mbox{ on } \pa
  \Omega\} \,,
\end{displaymath}
and
\begin{displaymath}
  \Wg_2= \{\psi \in H^1(\Omega)\,,\, \psi=0 \mbox{ on } \pa \Omega_c\} \,.
\end{displaymath}
Let $A\in L^2_{loc}([0,\infty);\Wg_1))\cap H^1_{loc}([0,\infty); \Wg_1^\prime)$,
$\psi\in L^2_{loc}([0,\infty);\Wg_2)\cap H^1_{loc}([0,\infty); \Wg_2^\prime)$, and
$\phi\in L^2_{loc}([0,\infty);L^2(\Omega,\R^2))$ denote a weak solution of
\eqref{eq:1}. (The reader is referred to \cite{chenetal93} for the
definition of a weak solution.)  It is easy to show that
$G_\omega(\psi,A,\phi)$ is also a weak solution of \eqref{eq:1} for any $\omega\in
L^2_{loc}([0,\infty);H^2(\Omega))\cap H^1_{loc}([0,\infty);L^2(\Omega))$. In Theorem
\ref{Existencetheorem} and Appendix \ref{appendixD} we prove (relying
on \cite{feetal98}) that the solution of \eqref{eq:1v2} in
$L^2_{loc}([0,\infty);H^1(\Omega,\C))\times L^2_{loc}([0,\infty);H^1(\Omega,{\mathbb H}))$ is unique.\\
From the foregoing discussion we can thus conclude that all the weak
solutions of \eqref{eq:1} are gauge equivalent. In particular, this
would mean that all the results established in the next sections for
the decay of $|\psi|$ are valid for all possible solutions of
\eqref{eq:1v2}.  In the next subsections, we concentrate on strong
solutions of \eqref{eq:1v2}.
\vspace*{1ex}

\subsection{The strong solution in the Coulomb gauge}
\label{sec:columb}
In view of the discussion in the previous subsection we fix 
the Coulomb gauge, i.e., we look for global solutions  in
$L^2_{loc}([0,+\infty), H^1( \Omega,\mathbb R^2))$ of \eqref{eq:1} satisfying 
\begin{equation}\label{eq:4} 
  \Div A(t,\cdot)=0 \mbox{ in } L^2_{loc}([0,+\infty), L^2( \Omega))\,, A(t,\cdot)\cdot\nu|_{\pa \Omega} =0\,\mbox{ in }  L^2_{loc}([0,+\infty), H^\frac 12 ( \pa \Omega))\,,
  \end{equation}
  and we also assume:
\begin{equation}
\label{eq:5}
  \int_\Omega\phi(t,x)\,dx =0 \mbox{ in } L^2_{loc}([0,+\infty)) \,.
\end{equation}

Suppose first that the initial condition $A_0$ satisfies
\begin{equation}\label{eq:4aa}
  \Div A_0=0 \mbox{ in }  \Omega\,, A_0\cdot\nu =0\, \mbox{ on }  \pa \Omega\,,
  \end{equation} 
where  
\begin{equation}
\label{assA0}
  A_0\in H^2(\Omega,\mathbb R^2). 
\end{equation}
We further assume that 
  \begin{equation}\label{asspsi0}
  \psi_0\in H^2(\Omega,\mathbb C)\,,
  \end{equation}
  and \eqref{eq:2.2}.
  
We show that the solution $(\psi_d,A_d,\phi_d)$ established in Theorem
\ref{thm:A.1} with
$\widehat A_0=A_0$ and $\widehat \psi_0=\psi_0$  is gauge-equivalent to the solution of
\eqref{eq:1} and \eqref{eq:4}. \\
To this end we define the gauge
function $\omega$ as the solution of
  \begin{equation}
\label{eq:146}
    \begin{cases}
      -\Delta\omega = \Div A_d & \text{in } (0,+\infty)\times  \Omega\,, \\
      \frac{\partial\omega}{\partial\nu} = 0 & \text{on } (0,+\infty)\times\partial\Omega \,,\\
     \int_{\Omega}\omega(t,x)\,dx =0 &  \text{in }  (0,+\infty)\,.    
\end{cases}
  \end{equation}  
  As $A_d\in C([0,+\infty);W^{1+\alpha,2}(\Omega,\R^2))$ for any $0<\alpha<1$, it follows
  by Sobolev embeddings and Proposition \ref{propoA2} that
 $\omega\in C([0,+\infty);W^{1,p}(\Omega))$ for all $p\geq2$. Furthermore, since
 $\Div A_d\in L^2_{loc}([0,+\infty),H^1(\Omega))$  we get by   \eqref{eq:137}
 \begin{equation}\label{regomega}
 \omega\in L^2_{loc}([0,+\infty),H^3(\Omega))\,.
 \end{equation}
 Next, we observe that the projector $\pi_1$ introduced in Proposition
 \ref{decprop} extends (by tensor product) to a projector $\Pi_1$ in
 $H^1_{loc}([0,+\infty);L^2(\Omega,\mathbb R^2))$ and that by the uniqueness of
 the decomposition established in the proposition and \eqref{eq:146}:
  \begin{equation}\label{omad}
  -\nabla \omega = \Pi_1 A_d\,,
  \end{equation}
  in  $\mathcal D'(0,+\infty ;L^2(\Omega,\mathbb R^2))$, where $\mathcal D'(0,+\infty; L^2(\Omega;\mathbb R^2))$ denotes the space of distributions on $(0,+\infty)$ with value in $L^2(\Omega,\mathbb R^2)$.\\
  Note that \eqref{omad} simply reads 
    \begin{equation}\label{omada}
  -\nabla \Big(\int \omega (t,\cdot) \phi(t) dt \Big) = \pi_1 \Big(\int A_d(t,\cdot) \phi(t) dt\Big) \,,
  \end{equation}
  for all $\phi\in C_0^\infty (0,+\infty)$.\\
  The right hand side of \eqref{omad} being in $H^1_{loc}([0,+\infty);L^2(\Omega,\mathbb
  R^2)$, this implies that $\nabla \omega \in H^1_{loc} ([0,+\infty);L^2(\Omega,\mathbb
  R^2))$, and hence 
  \begin{equation} \label{regomega2}
  \pa_t \omega \in L^2_{loc}([0,+\infty);H^1(\Omega,\mathbb R^2))\,.
  \end{equation}
 It can now be readily verified  from \eqref{regomega} and \eqref{regomega2} that
  the Coulomb gauge solution $(\psi_c,A_c,\phi_c)=G_\omega(\psi_d,A_d,\phi_d)$
  satisfies:
 \begin{equation} \label{prop1coul}
 \psi_c\in C([0,+\infty);W^{1+\alpha,2}(\Omega,\C))\cap H^1_{loc}([0,+\infty);L^2(\Omega,\C))\,, \, \forall \alpha <1\,,
 \end{equation}
 \begin{equation}\label{prop2coul}
 A_c\in
C([0,+\infty); W^{1,p}(\Omega,\R^2))\cap H^1_{loc}([0,+\infty);L^2(\Omega,\R^2))\,, \forall p\geq 1\,,
\end{equation}
which follows from the fact that by (\ref{omad}) $\nabla \omega\in  C([0,+\infty);
W^{1,p}(\Omega,\R^2))$, and
\begin{equation}\label{prop3coul}
\phi_c\in
L^2_{loc}([0,+\infty); H^1(\Omega))\,.
\end{equation}

Relying on  Theorem \ref{thm:A.1} and the above discussion we can now state:
\begin{theorem}\label{Existencetheorem}
   Suppose that $\Omega$ satisfies condition (R1) and that $B$ is in
   $H^\frac 12(\pa \Omega)$ (see Remark \ref{sobaubord}). Suppose further
   that  $(\psi_0,A_0)$ satisfies \eqref{assA0}, \eqref{eq:4aa}, \eqref{asspsi0} and \eqref{eq:2.2}.\\
   Then, there exists a unique weak solution $(\psi_c ,A_c,\phi_c)$ of
   (TDGL2)  in the Coulomb gauge. Moreover, this solution is strong in
   the sense that it satisfies \eqref{prop1coul}-\eqref{prop3coul} and
  \begin{equation}\label{eq:2.2a}
 \|\psi_c(t,\cdot)\|_\infty \leq 1\,,\, \forall t >0\,.
 \end{equation}
Finally, let $A_1 = A_c-hA_n$ where $A_{n}$ satisfies
\eqref{eq:8} and \eqref{eq:8a}. Then 
\begin{equation}
A_1\in L^2_{loc}([0,+\infty);H^2(\Omega,\R^2))\,.
\end{equation}
\end{theorem}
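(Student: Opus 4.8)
The plan is to read off the Coulomb‑gauge solution from the one produced by Theorem~\ref{thm:A.1} via the gauge transformation $G_\omega$ constructed just above, and then to verify the three points that are not yet on record: that $(\psi_c,A_c,\phi_c)$ is a weak solution of (TDGL2) obeying the Coulomb conditions \eqref{eq:4}--\eqref{eq:5}, that it is unique, and that $A_1=A_c-hA_n$ gains the regularity $L^2_{loc}([0,+\infty);H^2(\Omega,\R^2))$. Since $A_0\cdot\nu|_{\partial\Omega}=0$ by \eqref{eq:4aa}, the auxiliary gauge function $\chi_0$ of Subsection~\ref{gaugechoice} is absent, so Theorem~\ref{thm:A.1} (applied with $\widehat A_0=A_0$, $\widehat\psi_0=\psi_0$) furnishes $(\psi_d,A_d,\phi_d)$ with its stated regularity and with $\|\psi_d(t,\cdot)\|_\infty\leq1$. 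Solving \eqref{eq:146} for $\omega$ and recalling \eqref{regomega}--\eqref{regomega2} and \eqref{prop1coul}--\eqref{prop3coul}, the triple $(\psi_c,A_c,\phi_c)=G_\omega(\psi_d,A_d,\phi_d)$ has the announced regularity, and it is a weak solution of \eqref{eq:1} because gauge transformations with such $\omega$ preserve weak solutions (Subsection~\ref{gaugechoice}). The Coulomb conditions are immediate from \eqref{eq:146}: $\Div A_c=\Div A_d+\Delta\omega=0$, $A_c\cdot\nu|_{\partial\Omega}=A_d\cdot\nu|_{\partial\Omega}+\partial_\nu\omega=0$ since $(A_d,\phi_d)\in\mathbb H$, and $\int_\Omega\phi_c\,dx=\int_\Omega\phi_d\,dx-\tfrac{d}{dt}\int_\Omega\omega\,dx=0$ because $\phi_d=-c\,\Div A_d$ has zero mean. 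Finally \eqref{eq:2.2a} follows from $|\psi_c|=|\psi_d|$ and $\|\psi_d(t,\cdot)\|_\infty\leq1$ (alternatively, directly from the parabolic maximum principle applied to $|\psi_c|^2$, a subsolution of $\partial_t w-\Delta w=2w(1-w)$ with the boundary data $w=0$ on $\partial\Omega_c$, $\partial_\nu w=0$ on $\partial\Omega_i$ guaranteed by $A_c\cdot\nu=0$).

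For uniqueness, let $(\psi,A,\phi)$ and $(\psi',A',\phi')$ be two weak solutions of (TDGL2) in the Coulomb gauge with the prescribed initial data. By the gauge‑fixing procedure of Subsection~\ref{gaugechoice} each is gauge equivalent to a solution of \eqref{eq:1v2} in $L^2_{loc}([0,+\infty);H^1(\Omega,\C))\times L^2_{loc}([0,+\infty);\mathbb H)$, and such a solution is unique by Theorem~\ref{thm:A.1} (relying on \cite{feetal98}); hence the two are gauge equivalent to one another, say $A'=A+\nabla\omega$, $\psi'=\psi e^{i\omega}$, $\phi'=\phi-\partial_t\omega$. Since $\Div A=\Div A'=0$ and $A\cdot\nu=A'\cdot\nu=0$ on $\partial\Omega$, we get $\Delta\omega=0$ and $\partial_\nu\omega=0$, so $\omega(t,\cdot)$ is constant on $\Omega$; then $\int_\Omega\phi=\int_\Omega\phi'=0$ forces $\partial_t\omega\equiv0$, and matching $\psi(0,\cdot)=\psi_0=\psi'(0,\cdot)$ forces $\omega\in2\pi\Z$. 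Thus the two solutions coincide.

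It remains to show $A_1=A_c-hA_n\in L^2_{loc}([0,+\infty);H^2(\Omega,\R^2))$. On $\partial\Omega$ one has $\curl A_c=hB=h\,\curl A_n$ by (\ref{eq:1v2}e) and \eqref{eq:8}, so $\curl A_1|_{\partial\Omega}=0$; also $\Div A_1=0$ and $A_1\cdot\nu|_{\partial\Omega}=0$. Subtracting $h$ times \eqref{eq:8} from (\ref{eq:1v2}b) yields the linear parabolic equation $\sigma\,\partial_tA_1+\kappa^2\curl^2A_1=-\sigma\nabla(\phi_c-h\phi_n)+\Im(\bar\psi_c\,\nabla_{A_c}\psi_c)$ with the boundary conditions just listed and initial datum $A_1(0,\cdot)=A_0-hA_n$. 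Its source lies in $L^2_{loc}([0,+\infty);L^2(\Omega,\R^2))$: the term $\nabla(\phi_c-h\phi_n)$ because $\phi_c\in L^2_{loc}(H^1)$, and $\Im(\bar\psi_c\,\nabla_{A_c}\psi_c)$ because $\psi_c\in C([0,+\infty);W^{1+\alpha,2})\hookrightarrow C(L^\infty)\cap C(W^{\alpha,2})$ and $A_c\in C(W^{1,p})\hookrightarrow C(L^\infty)$; moreover $A_1(0,\cdot)\in H^1$ with $A_1(0,\cdot)\cdot\nu=0$. Parabolic regularity (Appendix~\ref{sec:A}) for this operator — whose elliptic part $\kappa^2\curl^2$ on divergence‑free fields with $\{V\cdot\nu=0,\ \curl V=0\}$ on $\partial\Omega$ is $H^2$‑regular on the corner domain by the results of Appendix~\ref{sec:a} and the right‑angle hypothesis (R1)(b) — then gives $A_1\in L^2_{loc}([0,+\infty);H^2(\Omega,\R^2))$. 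Equivalently one may first treat $B_1=\curl A_1$, which solves $\sigma\,\partial_tB_1-\kappa^2\Delta B_1=\curl\,\Im(\bar\psi_c\,\nabla_{A_c}\psi_c)$ with $B_1|_{\partial\Omega}=0$, obtain $B_1\in L^2_{loc}([0,+\infty);H^1_0(\Omega))$, and recover $A_1=-\nabla_\perp\Phi_1$ from $\Delta\Phi_1=B_1$, $\Phi_1|_{\partial\Omega}=0$ (Proposition~\ref{decprop}); here the corner singularity of Proposition~\ref{propoA1} is absent precisely because $B_1$ vanishes at the corners, so $\Phi_1\in L^2_{loc}(H^3)$ and $A_1\in L^2_{loc}(H^2)$.

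I expect the main obstacle to be this last step: one must invoke the elliptic, hence parabolic, regularity theory on a domain with corners at the borderline $H^2$ level — exactly the situation that hypothesis (R1) (flat, right‑angled corners) is designed to legitimize — and one must check that the nonlinear flux $\Im(\bar\psi_c\,\nabla_{A_c}\psi_c)$ is regular enough, which rests on the $C([0,+\infty);W^{1+\alpha,2})$ bound \eqref{prop1coul} for $\psi_c$ together with the observation that the corner‑singular parts of $A_c$ and $hA_n$ cancel in $A_1$. By contrast the existence, the Coulomb conditions, the $L^\infty$ bound, and the uniqueness are essentially a repackaging of Theorem~\ref{thm:A.1} and the gauge calculus of Subsections~\ref{gaugechoice}--\ref{sec:columb}.
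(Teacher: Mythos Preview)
Your treatment of existence, the Coulomb conditions, the bound \eqref{eq:2.2a}, and uniqueness is correct and matches the paper's approach (the paper simply points back to the construction preceding the theorem and to Theorem~\ref{thm:A.1}). Your uniqueness argument is more explicit than the paper's and is fine; the last step ``$\omega\in2\pi\Z$'' tacitly assumes $\psi_0\not\equiv0$, but the conclusion is trivially true otherwise.

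For the regularity $A_1\in L^2_{loc}([0,+\infty);H^2)$ you take a genuinely different route. The paper does not re-run any parabolic argument; it observes the algebraic identity
\[
A_1=A_c-hA_n=(A_d-\nabla\omega)-hA_n=\widehat A_1-\nabla\omega-h(A_n-A_{n,d}),
\]
and then reads off $H^2$-regularity term by term: $\widehat A_1\in L^2_{loc}(H^2)$ from Theorem~\ref{thm:A.1}, $\nabla\omega\in L^2_{loc}(H^2)$ from \eqref{regomega}, and $A_n-A_{n,d}\in W^{2,p}$ for all $p\ge2$ from \eqref{eq:D4}. This is a one-line argument once Theorem~\ref{thm:A.1} and \eqref{regomega}, \eqref{eq:D4} are in hand. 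Your primary approach---writing the parabolic problem for $A_1$ with operator $c\,\mathcal L^{(1)}$ and source in $L^2_{loc}(L^2)$, then invoking \eqref{eq:147abstract}---is correct (note the relevant elliptic regularity is Proposition~\ref{propoA4} in Appendix~\ref{sec:b}, not Appendix~\ref{sec:a}) but duplicates work already absorbed into Theorem~\ref{thm:A.1}. Your alternative route via $B_1=\curl A_1$ and $A_1=-\nabla_\perp\Phi_1$ is shakier: Proposition~\ref{propoA1} only asserts $\Phi_1\in W^{3,q}$ for $q<2$, and the claim that the corner obstruction disappears ``because $B_1$ vanishes at the corners'' is exactly the borderline case $q=2$ that the paper explicitly flags after \eqref{eq:9a2}; making this precise would require more than you have written. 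The decomposition used in the paper sidesteps this corner issue entirely.
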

The last statement of the theorem is a consequence of \eqref{eq:D4},
\eqref{regomega} and Theorem \ref{thm:A.1}. We can indeed write:
$$
A_1= \widehat A_1  - \nabla \omega  - h (A_n - A_{n,d} )\,,
$$
where $ \widehat A_1$ is defined in \eqref{eq:186}

\begin{remark}
If $A_0$ does not satisfy \eqref{eq:4aa},  we let $\omega_0\in H^2(\Omega)$ be solution of
  \begin{displaymath}
    \begin{cases}
      -\Delta\omega_0 = \Div A_0 & \text{in } \Omega\,, \\
      \frac{\partial\omega_0}{\partial\nu} =- A_0\cdot \nu & \text{on } \partial\Omega \,,\\
     \int_{\Omega}\omega_0(x)\,dx =0\,. &      
\end{cases}
  \end{displaymath}
It follows by  Proposition \ref{propoA2},  that, for $\Omega$ satisfying
property (R),  $\omega_0\in H^2(\Omega))$. We then consider $G_{\omega_0}(\psi,A,\phi)$ which
is a solution of \eqref{eq:1} with initial conditions satisfying
\eqref{eq:4aa}.
\end{remark}

We can now return to the solution of (TDGL1).
\begin{theorem}\label{Existencetheoremcomp}
  Under the assumptions of the previous theorem, assuming that $J$ satisfies
  \eqref{eq:75}-\eqref{eq:76}, and $B$ by \eqref{eq:36}, the
  solution of (TDGL2) has the additional property that $\phi_c \in
  C([0,+\infty);W^{1,p}(\Omega))$ for all finite $p$, and is a solution of
  (TDGL1).
\end{theorem}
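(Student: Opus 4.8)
The plan is to show that the weak solution of (TDGL2) constructed in Theorem \ref{Existencetheorem}, after restoring the extra regularity of $\phi_c$, satisfies the boundary conditions (\ref{eq:1}e,f,g) that distinguish (TDGL1) from (TDGL2). The starting observation is that (TDGL1) and (TDGL2) share equations (a)--(d) verbatim, together with the same initial data, so the only thing to verify is that the magnetic boundary condition $\curl A_c = hB$ on $\partial\Omega$ (with $B$ given by \eqref{eq:2bis}) is \emph{equivalent}, for the solution at hand, to the triple of conditions $\sigma(\partial_t A + \nabla\phi)\cdot\nu = J$ on $\partial\Omega_c$, $\sigma(\partial_t A+\nabla\phi)\cdot\nu = 0$ on $\partial\Omega_i$, and $\dashint_{\partial\Omega}\curl A\,ds = h_{ex}$. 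This is exactly the formal equivalence carried out in Subsection~2.1; the task is to make it rigorous given the regularity \eqref{prop1coul}--\eqref{prop3coul}.

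First I would upgrade the regularity of $\phi_c$. From equation (\ref{eq:1v2}b) in the Coulomb gauge, together with $\Div A_c = 0$, taking the divergence gives an elliptic (in fact, time-independent-in-principal-part) problem for $\phi_c$: $-\sigma\Delta\phi_c = \Div\big(\Im(\bar\psi_c\nabla_{A_c}\psi_c)\big) - \sigma\,\partial_t(\Div A_c) = \Div\big(\Im(\bar\psi_c\nabla_{A_c}\psi_c)\big)$ in $\Omega$, with Neumann data obtained by dotting (\ref{eq:1v2}b) with $\nu$ on $\partial\Omega$. Using $\psi_c\in C([0,\infty);W^{1+\alpha,2})$, $A_c\in C([0,\infty);W^{1,p})$ and the $L^\infty$ bound \eqref{eq:2.2a}, the source $\Im(\bar\psi_c\nabla_{A_c}\psi_c)$ lies in $C([0,\infty);L^p(\Omega))$ for every finite $p$; then Proposition~\ref{propoA2} and property~(R1) yield $\phi_c\in C([0,\infty);W^{1,p}(\Omega))$ for all finite $p$, which is the first assertion. (One must be slightly careful about the compatibility of the Neumann data with $\int_\Omega\phi_c = 0$; the normalization \eqref{eq:5} handles this.)

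Next, with $\phi_c\in C([0,\infty);W^{1,p})$ and $A_c\in C([0,\infty);W^{1,p})\cap H^1_{\loc}(L^2)$, the quantity $\sigma(\partial_t A_c + \nabla\phi_c)$ has a well-defined normal trace on $\partial\Omega$ in an $H^{-1/2}$ (or $W^{-1/p,p}$) sense, so I can define $J := \sigma(\partial_t A_c + \nabla\phi_c)\cdot\nu|_{\partial\Omega}$ and set $J_r = \kappa^2\,\partial B/\partial\tau$, $h_r = \dashint_{\partial\Omega}B\,ds$ as in the statement. The identity \eqref{BJ}, namely $\partial_\tau\curl A_c = \kappa^{-2}J$ away from the corners, is then recovered by taking the tangential component of (\ref{eq:1v2}b) on $\partial\Omega$ together with $\curl A_c = hB$; integrating it along $\partial\Omega$ reproduces \eqref{eq:36} and, upon averaging, \eqref{eq:2bis}, which shows $J_r$ is the same reference current fed into (TDGL2) and in particular that $J$ vanishes on $\partial\Omega_i$, giving (\ref{eq:1}f), while on $\partial\Omega_c$ we get (\ref{eq:1}e); the averaged relation gives (\ref{eq:1}g) with $h_{ex} = h h_r$. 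Finally, equations (\ref{eq:1}h,i) are inherited unchanged. The main obstacle, and the only genuinely non-routine point, is the behavior at the corners: $B$ is merely continuous there (Remark \ref{sobaubord}) and $A_n \notin W^{2,p}$ for $p>2$, so the normal-trace manipulations and the integration of \eqref{BJ} must be justified as identities in the low-regularity trace spaces flagged in Remark~\ref{sobaubord}, excising the finitely many corner points; once one accepts that the relevant integrals are insensitive to isolated boundary points (which follows since the traces lie in $L^p(\partial\Omega)$ with $p<\infty$ large, hence define absolutely continuous measures on $\partial\Omega$), the argument closes.
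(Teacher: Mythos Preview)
Your overall strategy matches the paper's: recover the (TDGL1) boundary conditions (e)--(g) from the (TDGL2) condition $\curl A_c = hB$ by reading off the normal trace of $V := \partial_t A_c + \nabla\phi_c$ from equation (\ref{eq:1v2}b). The paper carries out exactly this computation, being careful to justify the existence of the normal trace of $V$ in $L^2_{\loc}([0,\infty);H^{-1/2}(\partial\Omega))$ via the $H(\text{div})$ theory (citing Girault--Raviart): one first checks $\Div V \in L^2_{\loc}(L^2)$ (this is \eqref{neumannpourphi}, using $\psi_c\Delta_{A_c}\psi_c\in L^2_{\loc}(L^2)$ from $\psi_c\in L^2_{\loc}(H^2)$ and $\|\psi_c\|_\infty\le1$), then uses $\partial_t A_c\cdot\nu = 0$ (Coulomb gauge) together with $\Im(\bar\psi_c\nabla_{A_c}\psi_c)\cdot\nu = 0$ on $\partial\Omega$ to identify $\sigma V\cdot\nu = \kappa^2 h\,\partial_\tau B = J$. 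Your second paragraph is this same computation in slightly different language.

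One technical slip: in your regularity step for $\phi_c$ you invoke Proposition~\ref{propoA2}, but that proposition yields $W^{2,p}$ from an $L^p$ right-hand side, whereas your source is $\Div g$ with $g = \Im(\bar\psi_c\nabla_{A_c}\psi_c)\in C([0,\infty);L^p)$; since $\psi_c$ is only in $C(W^{1+\alpha,2})$ and not $C(H^2)$, $\Div g$ is a priori only in $C(W^{-1,p})$. The conclusion $\phi_c\in C(W^{1,p})$ is still correct, but requires the standard $W^{1,p}$ estimate for the Neumann problem with divergence-form data (and smooth boundary datum $-J/\sigma$) on corner domains, not Proposition~\ref{propoA2} as stated. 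The paper's own proof actually omits this regularity step altogether, so you are doing more than the paper here; just cite the right estimate. Your attention to the corners is sensible, but since the identity is ultimately asserted in the weak trace space $H^{-1/2}(\partial\Omega)$, pointwise behaviour at the four corners plays no role.
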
 
\begin{proof} 
  Let $(\psi_c,A_c,\phi_c)$ denote a solution of (TDGL2) and \eqref{eq:4}.
  One has to clarify first the sense in which the trace condition
  (\ref{eq:1}e)-(\ref{eq:1}f) is satisfied.  By Theorem
  \ref{Existencetheorem} we have that $\pa_t A_c + \nabla \phi_c$ belongs to
  $ L^2_{loc}([0,+\infty),L^2(\Omega,\mathbb R^2))$.  Hence, we can use the
  fact (see for example Theorem 2.2 in \cite{gira79}) that for a
  vector field $V$ in $L^2_{loc}(0,+\infty;L^2(\Omega;\mathbb R^2))$ \\with $\Div V \in L^2_{loc}([0,\infty);L^2(\Omega))$, the normal component of
its trace, $V\cdot \nu|_{\partial\Omega}$, belongs to  $L^2_{loc}([0,+\infty); H^{-\frac 12}
(\partial\Omega))$.\\
Consider then $V= \pa_t A_c + \nabla\phi_c $.  
By (\ref{eq:1v2}b) and \eqref{eq:4} we obtain:
\begin{equation}
\label{neumannpourphi}
\sigma \Div V= \sigma \Div \nabla \phi_c =\Im  \Div (\bar \psi_c \cdot \nabla_{A_c} \psi_c)\,.
\end{equation}
It is easy to show that the left hand side is in $L^2_{loc}([0,+\infty);
L^2(\Omega))$.  As $\Delta_{A_c} \psi_c \in L^2_{loc}([0,+\infty);L^2(\Omega))$ we can use
\eqref{eq:2.2a} to conclude that $\psi_c\Delta_{A_c} \psi_c \in
L^2_{loc}([0,+\infty);L^2(\Omega))$. Furthermore,  $\nabla\psi_c \in
C([0,+\infty); L^4(\Omega,\mathbb R^2))$ and  $A_c\in C([0,+\infty)\times \overline \Omega)$ in view of
\eqref{prop1coul} and \eqref{prop2coul}\,, hence
$\nabla\psi_c\cdot\nabla_{A_c}\psi_c\in L^2_{loc}([0,+\infty);L^2(\Omega))$. Consequently, $V\cdot \nu$ is
well defined in $L^2_{loc}([0,\infty);H^{-1/2}(\partial\Omega))$, and we can discern
that
 $$
 V\cdot \nu|_{\pa \Omega}=\pa_\nu \phi\,,
 $$ 
 due to the fact that $\pa_t A_c\cdot \nu =0$ in $\mathcal D'(0,+\infty; H^\frac
 12(\pa \Omega))$ by \eqref{eq:4}. \\
Consider again (\ref{eq:1v2}b). Each term of the equality has a
meaningful normal component for its trace and hence, as the right hand
side has a zero "normal" trace,
 \begin{equation}
 V\cdot \nu = \kappa^2 \pa_\tau \curl A_c = \kappa^2 h\,  \pa_\tau B = J\,,
 \end{equation}
 in $L^2_{loc}([0,+\infty); H^{-\frac 12}(\pa \Omega))$, 
 as expected.
  \end{proof}

\begin{remark}
  Although the main focus of this work  is on Coulomb gauge
  solutions, it seems worthwhile to note that any weak solution for
  which $A\cdot \nu=0$ on $\pa \Omega$, $\Div A \in L^2_{loc}([0,+\infty);
  H^1(\Omega))$, $\phi \in L^2 _{loc}([0,+\infty); H^1(\Omega))$, and the initial
  data are regular, is also a strong solution. To prove this, we
  return to Equation \eqref{systchi} to determine the regularity of
  $\chi$. Then, one has to show that $\chi$ satisfies \eqref{regomega} and
  $\eqref{regomega2}$, which we prove by considering the
  equation satisfied by $V=\nabla \chi$. Differentiating \eqref{systchi} yields,
\begin{equation}\label{equationpourqa}
  \begin{cases}
    \frac{\partial V}{\partial t} - ( \nabla \Div + \nabla^{\perp} \curl) V = c \nabla \Div A + \nabla \phi & \text{in } (0,+\infty) \times\Omega \\
      \curl V  = 0 & \text{on } (0,+\infty)\times\partial\Omega \\
     V\cdot \nu =0& \text{on } (0,+\infty) \times\partial\Omega \\
      V_{t=0}= 0 &    \text{in } \Omega \,.
  \end{cases}
\end{equation}
The above problem possesses a unique solution $V \in L^2_{loc}([0,+\infty);
D(\mathcal L^{(1)}))$ and $\pa_t V\in L^2_{loc}([0,+\infty);L^2(\Omega,\mathbb
R^2))$ (cf. Proposition \ref{propoA3}). Since by
Proposition~\ref{propoA4}, $D(\mathcal L^{(1)})\subset H^2(\Omega, \mathbb
R^2)$, the desired regularity of $\chi$ readily follows.
\end{remark}

\section{Asymptotic contraction properties  of the   semi-group}\label{section3}
In this section we obtain our simplest estimate for the critical
current, for which the normal state $(0,A_n,\phi_n)$, given by
\eqref{eq:8}, becomes globally stable. We concentrate here on currents
for which the (non-linear) semi-group associated with \eqref{eq:1}  and
  \eqref{eq:4} becomes a contraction for sufficiently long times.

\subsection{Analysis of the linearized problem}\label{ss3.1}~\\
Consider first the linearized version of (\ref{eq:1}):
\begin{equation}
\label{eq:144}
  \begin{cases}
     \frac{\partial u}{\partial t} 
  = \LL u\,,  & \text{
    in } \R_+\times\Omega\, ,\\
(i\nabla+hA_n)u\cdot\nu=0\,,  &  \text{ on }  \R_+\times\partial\Omega_i\,,\\
u=0\,, &  \text{ on }  \R_+\times\partial\Omega_c\,, \\
u(0,\cdot)=u_0(\cdot)\,,  &  \text{ in } \Omega\,.
  \end{cases}
\end{equation}
In the above
\begin{displaymath}
  \LL = \left(\nabla - ihA_n \right)^{2} + ih\phi_n + 1 \,.
\end{displaymath}
It is easy to show using integration by parts that for any
$v\in D(\LL)$ we have
\begin{displaymath}
  \langle v,\LL v\rangle =   - \big\|\nabla_{hA_n}v\big\|_2^2 + \|v\|_2^2 \,.
\end{displaymath}
By \eqref{eq:11} we have that
\begin{displaymath}
  \langle v,\LL v\rangle \leq -(\mu-1)\|v\|_2^2 \,.
\end{displaymath}
Note that if $v$ is a ground state of $\LL$ the above inequality
becomes an identity.  Hence, it follows that the operator $\LL$ is
dissipative if and only if $\mu\geq1$.  Consequently, it is easy to show
from the Lumer-Phillips Theorem (Theorem 8.3.5 in \cite{da07}) that
the semigroup associated with \eqref{eq:144} is a contraction
semigroup if and only if $\mu\geq1$. If $\mu>1$ one can easily show that
any solution of \eqref{eq:144} decays exponentially fast (with a decay like $\exp - (\mu-1)t\, $)  and hence,
that $u\equiv0$ is asymptotically stable.

If we now consider the linearized part of  (\ref{eq:1}b), (after taking its
curl), we get the equation for the first variation $w$ of $\curl A$
  $$
  \begin{cases}
  \sigma \pa_t w - \kappa^2  \Delta w=0\, \mbox{ in } \mathbb R^+\times   \Omega\,, \\
  w(t, \cdot )= 0 \, \mbox{ on }\mathbb R^+\times  \pa \Omega\,,\\
  w(0, \cdot) = w_0(\cdot)\, \mbox{ on } \Omega\,.
  \end{cases}
  $$
  From the above we can conclude an $\OO(e^{ - \lambda_D ct})$-decay for
  $w(t,\cdot)$, where $\lambda^D$ is defined below \eqref{eq:177} and
  $c=\frac{\kappa^2}{\sigma}$ (cf \eqref{eq:9}).  We recall from Proposition
  \ref{prop:dirichlet} that $\lambda =\lambda^D$.\\ 
  We say that $f:\R\to\R_+$ has a monotone $\OO(e^{-\alpha t})$ decay, if
  $e^{\alpha t}f$ is monotone.  From the foregoing discussion it therefore
  follows that we cannot hope for a better monotone decay than $e^{-
    \min (\mu-1,\lambda c) t}$ for the asymptotic behavior of the nonlinear
  problem that we consider in the next subsection. (See Formula
  \eqref{eq:15}.)

  We note that if, instead of imposing that both
  $e^{\alpha t}\|\psi(t,\cdot)\|_2$ and $e^{\alpha t}\|A(\cdot,t)-hA_n\|_2$ become
  monotone for sufficiently large $t$, we impose the weaker
  requirement that they are both bounded, we can obtain greater values
  of $\alpha$. This is precisely the focus  of Sections \ref{section4} and  \ref{section5}.

  \subsection{Asymptotic analysis}\label{ss3.2}
  In light of the above discussion we expect that asymptotic stability
  of $\psi\equiv0$ could be achieved for $\mu>1$ (at least in some asymptotic
  regimes). We also expect that the semigroup associated with
  \eqref{eq:1} and \eqref{eq:4} would turn asymptotically into a
  contraction semigroup, when $\|\psi(t,\cdot)\|_2$ becomes very small. The
  statement and the proof of theses intuitive observations are made in
  part, in the following theorem for fixed values of $\sigma$ and $\kappa$.  A
  particular attention is devoted to the limit $\kappa\to\infty$ with fixed
  $c=\kappa^2/\sigma$.
\begin{theorem}
\label{thm:3.1}
Let $(\psi,A,\phi)\in\Hg$ denote a solution of \eqref{eq:1} and
\eqref{eq:4} satisfying \eqref{eq:2.2}. Then, whenever
   \begin{equation}
    \label{eq:14}
\mu(h)>1+ \frac{\gamma}{\kappa^2} + \frac{\gamma^2}{\kappa^4} \,,
  \end{equation}
where
\begin{displaymath}
  \gamma = (\lambda c+2)\Big[\frac{2}{c\lambda^3}\Big]^{1/2} \,,
\end{displaymath}
there exist  $C= C(\lambda,\kappa,\|\psi_0\|_2,\|A_0\|_2,h)>0$  and $\lambda_m= \lambda_m(c,\kappa,\mu(h),\lambda) >0 $\, 
such that, for all $t>0$, 
 we have:
\begin{equation}
    \label{eq:15}
\|\psi\|_2 + \|A-hA_n\|_2 \leq Ce^{-\lambda_mt} \,,
  \end{equation}
  where $A_n$ is the stationary normal solution introduced in
  \eqref{eq:8}-\eqref{eq:9a} and satisfying \eqref{eq:4}. \\
  
   Moreover, when $\mu>1$, there
  exists $C=C(\mu,\lambda,c)>0$  and $\kappa_0(\mu,\lambda,c)$ such that, for $\kappa\geq \kappa_0(\mu,\lambda,c)  $,
\begin{equation}
\label{eq:152}
  \lambda_m\geq   \min((\mu-1),\lambda c) -  \frac{C}{\kappa}
\end{equation}

More precisely, 
\begin{enumerate}
\item
If  $ 0< \mu -1<  \lambda c $ then 
 \begin{equation}\label{eq:152a}
 \lambda_m \geq   (\mu -1) - 4 c \mu (\lambda c  - \mu +1) ^{-1} \kappa^{-2} +\Og(\kappa^{-4})\,. 
\end{equation}
\item
If  $\mu-1 = \lambda c$
\begin{equation}\label{eq:152b}
 \lambda_m \geq   \lambda c -    \frac{ 2\, c^\frac 12   \, 3^{-\frac 14} \,  (\lambda c +1)^{\frac 12} }{ \kappa }  + \Og (\kappa^{-2})\,.
\end{equation}
\item 
If 
 $\mu -1 >  \lambda c $
 \begin{equation}\label{eq:152c}
 \lambda_m \geq   \lambda c -  \frac{ cd}{\kappa^2} + \Og (\kappa^{-4})\,.
\end{equation}
where
\begin{equation} \label{calculded}
d = 4  \mu^\frac 12 (\lambda c+1)^\frac 12 (\mu-1- \lambda c)^{-1}\,.
\end{equation}
\end{enumerate}
\end{theorem}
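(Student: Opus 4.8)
\textbf{Proof strategy for Theorem \ref{thm:3.1}.}
The plan is to derive a coupled pair of differential inequalities for the quantities $y_1(t)=\|\psi(t,\cdot)\|_2^2$ and $y_2(t)=\|A(t,\cdot)-hA_n\|_2^2$ and then close the system by a Gronwall-type argument. First I would write $A_1=A-hA_n$ and $\phi_1=\phi-h\phi_n$, subtract the stationary equations \eqref{eq:8}--\eqref{eq:9a} from \eqref{eq:1}, and obtain the evolution equations satisfied by $(\psi,A_1,\phi_1)$; the linearization is governed by the operator $\LL$ of Subsection~\ref{ss3.1}, and the nonlinear/coupling terms are $-|\psi|^2\psi$ in the $\psi$-equation and the Ginzburg--Landau current $\Im(\bar\psi\nabla_A\psi)$ together with $iA_1\psi$-type terms. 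Testing the $\psi$-equation against $\bar\psi$ and taking real parts, I use $\langle\psi,\LL\psi\rangle\le -(\mu-1)\|\psi\|_2^2$ from Subsection~\ref{ss3.1}, while the cubic term contributes $-\|\psi\|_4^4\le 0$ and can simply be dropped (this is where \eqref{eq:2.2a} is also handy to control lower-order terms); the coupling term involving $A_1$ is estimated by $\|A_1\psi\|_2\|\nabla_{hA_n}\psi\|_2$ and absorbed using $\|\psi\|_\infty\le1$ and the spectral gap, at the cost of a constant times $\|A_1\|_2\|\psi\|_2$. This yields an inequality of the shape $\tfrac12\dot y_1\le -(\mu-1)y_1 + C\,y_2^{1/2}y_1^{1/2}\cdot(\text{something controlled})$.

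Next I would treat the magnetic equation. Taking the curl of the difference of \eqref{eq:1}b with its stationary version gives a heat-type equation for $w=\curl A_1$ with Dirichlet boundary data (using the equivalent formulation (TDGL2), so that $\curl A$ has prescribed, time-independent boundary value $hB$, hence $w=0$ on $\pa\Omega$), with source term $\curl\Im(\bar\psi\nabla_A\psi)$. Testing against $w$ and using the Dirichlet ground-state energy $\lambda=\lambda^D$ from Proposition~\ref{prop:dirichlet}, I get $\tfrac{\sigma}{2}\dot{\|w\|_2^2}\le -\kappa^2\lambda\|w\|_2^2 + \|w\|_2\,\|\Im(\bar\psi\nabla_A\psi)\|_2$. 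The source is bounded, via $\|\psi\|_\infty\le1$, by $\|\nabla_A\psi\|_2$, which in turn is controlled by $\|\nabla_{hA_n}\psi\|_2 + \|A_1\psi\|_2\lesssim \|\nabla_{hA_n}\psi\|_2 + \|A_1\|_2$. Converting back from $w$ to $A_1$ requires the Poincaré-type estimate $\|A_1\|_2\lesssim \|\curl A_1\|_2$ valid on $\Hg_d$ (again Proposition~\ref{prop:dirichlet} / the decomposition proposition), so one ends with an inequality $\dot y_2 \le -2\lambda c\,y_2 + \tfrac{C}{\kappa^2}(\cdots)$, after dividing by $\sigma$ and recalling $c=\kappa^2/\sigma$.

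The core of the argument is then the analysis of the $2\times2$ linear-with-small-nonlinearity system
\begin{equation*}
  \dot y_1 \le -2(\mu-1)y_1 + a\,\sqrt{y_1 y_2}\,,\qquad
  \dot y_2 \le -2\lambda c\, y_2 + \tfrac{b}{\kappa^2}\,(y_1+y_2)\,,
\end{equation*}
with $a,b$ depending on $\lambda,c$. The decay rate $\lambda_m$ is extracted as (minus one half of) the dominant eigenvalue of the associated $2\times2$ matrix; the hypothesis \eqref{eq:14}, with $\gamma=(\lambda c+2)\sqrt{2/(c\lambda^3)}$, is precisely the condition that this matrix be stable (negative definite after symmetrization), which is why the threshold has the form $1+\gamma\kappa^{-2}+\gamma^2\kappa^{-4}$ — completing the square in the quadratic form. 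Once stability holds, a standard comparison/Gronwall argument gives \eqref{eq:15} with $C$ depending on the initial data through $y_1(0)+y_2(0)$, hence on $\|\psi_0\|_2,\|A_0\|_2,h$. The asymptotic refinements \eqref{eq:152}--\eqref{eq:152c} follow by Taylor-expanding the dominant eigenvalue of the $2\times2$ matrix in powers of $\kappa^{-1}$, distinguishing the three cases according to the sign of $(\mu-1)-\lambda c$: when the two diagonal entries are well separated, regular perturbation theory gives an $\Og(\kappa^{-2})$ correction \eqref{eq:152a}/\eqref{eq:152c}, while at the resonance $\mu-1=\lambda c$ the eigenvalue splitting is of order $\kappa^{-1}$ \eqref{eq:152b}. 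The last assertion of the theorem — eventual monotonicity of $t\mapsto\|\psi(t,\cdot)\|_2$ — comes from observing that once $\|\psi\|_2$ is small enough (which happens after some $t^*$ by the decay just proved), the term $\langle\psi,\LL\psi\rangle\le-(\mu-1)\|\psi\|_2^2$ dominates in $\tfrac12\dot{\|\psi\|_2^2}$, so $e^{\lambda_m t}\|\psi\|_2$ is monotone for $t\ge t^*+1$.

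\textbf{Main obstacle.} The delicate point is the treatment of the magnetic coupling: $\|A_1\psi\|_2$ cannot be bounded merely by $\|A_1\|_2$ unless one exploits $\|\psi\|_\infty\le 1$ carefully, and even then the curl-to-gradient conversion for $A_1$ (Poincaré on $\Hg_d$) and the regularity needed to test the parabolic equations (so that all the integrations by parts are legitimate) rely essentially on the elliptic estimates of the appendices and on the strong-solution property \eqref{prop1coul}--\eqref{prop3coul}. Getting the constants to line up so that the stability threshold is exactly \eqref{eq:14} — rather than merely "for $\mu$ large enough" — requires being quite economical in every estimate, in particular using the ground-state identity $\langle v,\LL v\rangle=-(\mu-1)\|v\|_2^2$ as an equality on the relevant subspace and not losing a factor when handling $\|\nabla_{hA_n}\psi\|_2$.
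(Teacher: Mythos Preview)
Your overall strategy --- deriving a coupled pair of differential inequalities for $\|\psi\|_2^2$ and $\|A_1\|_2^2$ and reading off the decay rate from the dominant eigenvalue of a $2\times2$ matrix --- is exactly what the paper does, and your discussion of the trichotomy \eqref{eq:152a}--\eqref{eq:152c} via eigenvalue perturbation (regular perturbation when the diagonal entries are separated, a $\kappa^{-1}$ splitting at resonance) is on the mark.

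There is, however, a genuine gap in how you close the system on the magnetic side. The source in the $A_1$-equation is $\frac{1}{\sigma}\Im(\bar\psi\nabla_A\psi)$, whose $L^2$ norm is bounded by $\|\nabla_A\psi\|_2$ (using $\|\psi\|_\infty\le1$). You then write that this is ``controlled by $\|\nabla_{hA_n}\psi\|_2+\|A_1\|_2$'' and fold it into $\tfrac{b}{\kappa^2}(y_1+y_2)$. But there is no pointwise-in-$t$ upper bound of $\|\nabla_{hA_n}\psi\|_2$ by $\|\psi\|_2$: the spectral inequality $\|\nabla_{hA_n}\psi\|_2^2\ge\mu\|\psi\|_2^2$ goes the wrong way, and $\nabla_{hA_n}$ is unbounded on $L^2$. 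The same difficulty contaminates your $\psi$-inequality, where the actual cross term is $\|\nabla_{hA_n}\psi\|_2\|A_1\psi\|_2$, not $a\sqrt{y_1y_2}$.

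The paper's resolution is the key step you are missing. From the $\psi$-energy identity
\[
\tfrac12\,\frac{d\|\psi\|_2^2}{dt}+\|\nabla_A\psi\|_2^2\le\|\psi\|_2^2
\]
one sees that $\|\nabla_A\psi\|_2^2$ is controlled only \emph{in combination with} $\frac{d}{dt}\|\psi\|_2^2$, not pointwise. The paper therefore adds $\frac{c}{\alpha\kappa^2}$ times this identity to the $A_1$-inequality (after a Cauchy split with parameter $\alpha$) so that $\|\nabla_A\psi\|_2^2$ cancels, and then works with the modified pair $u=\|A_1\|_2^2+\frac{c}{\alpha\kappa^2}\|\psi\|_2^2$, $v=\|\psi\|_2^2$. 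This yields a genuinely linear $2\times2$ differential inequality in $(u,v)$ with two free parameters $(\epsilon,\alpha)$; symmetrizing and optimizing these is precisely what produces the threshold $\gamma=(\lambda c+2)\sqrt{2/(c\lambda^3)}$ in \eqref{eq:14} and the sharp constants in \eqref{eq:152a}--\eqref{eq:152c}.

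Two minor points. First, the paper tests (\ref{eq:18}a) directly against $A_1$: the $\nabla\phi_1$ term drops out by the Coulomb gauge, and $\|\curl A_1\|_2^2\ge\lambda\|A_1\|_2^2$ on $\Hg_d$ by Proposition~\ref{prop:dirichlet}. This is simpler than your route via $w=\curl A_1$, which would force you to integrate by parts the source $\curl\Im(\bar\psi\nabla_A\psi)$ (producing $\|\nabla w\|_2$, not $\|w\|_2$, on the right) and then translate back from $\|w\|_2$ to $\|A_1\|_2$. Second, the eventual monotonicity of $t\mapsto\|\psi(t,\cdot)\|_2$ is not part of Theorem~\ref{thm:3.1}; it is established separately in Proposition~\ref{prop:3.4}, after an $L^\infty$ bound on $A_1$ (Proposition~\ref{lemma3.1}) has been obtained, and does not fall out of the $2\times2$ argument alone.
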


{\it Proof:}~\\
Set
  \begin{equation}
\label{eq:17}
    A_1 = A-hA_n \quad ; \quad \phi_1=\phi-h\phi_n \,.
  \end{equation}
Note that by \eqref{eq:5} and \eqref{eq:10} we have 
\begin{equation}\label{eq:17a}
\int_\Omega \phi_1(t,x) \,dx=0\,.
\end{equation}
  Substituting into (\ref{eq:1}b,e,f)  yields with the
  aid of \eqref{eq:4},
\begin{subequations}  
\label{eq:18}
\begin{alignat}{2}
& \sigma\frac{\partial A_1}{\partial t} +\kappa^2\curl^2A_1 +
 \sigma\nabla\phi_1  =  \Im(\bar\psi\nabla_A\psi) & \quad \text{ in }  \R_+\times\Omega \,,\\
 & \frac{\partial\phi_1}{\partial\nu}=0 &\quad \text{ on } \R_+\times\partial\Omega \,.
\end{alignat}
\end{subequations}
Taking the scalar
  product in $L^2(\Omega,\R^2)$  of (\ref{eq:18}a) with $\nabla \phi_1$ yields
  after integration by parts, with the aid of \eqref{eq:4},
\begin{equation}
\label{eq:19}
  \|\nabla\phi_1(t,\cdot)\|_2 \leq \frac{1}{\sigma}\|\Im(\bar\psi\nabla_A\psi)(t,\cdot)\|_2 \,.
\end{equation}

We next multiply (\ref{eq:18}a) by $A_1$ and integrate over $\Omega$.  Observing that by
  {   (\ref{eq:1v2}e)}, \eqref{eq:8}, and \eqref{eq:17} we have 
\begin{equation}
\label{eq:151}
 \curl A_1(t,x)=0 \mbox{ for any } (t,x) \in \mathbb R^+\times  \pa \Omega\,,
  \end{equation}
we obtain
\begin{displaymath}
 \frac{1}{2} \sigma\frac{d\, \|A_1(t,\cdot)\|_2^2}{dt}  + \kappa^2\, \|\curl A_1(t,\cdot)\|_2^2 = \langle
  A_1(t,\cdot)\,,\,\Im(\bar\psi\nabla_A\psi)(t,\cdot) -  \sigma\, \nabla\phi_1(t,\cdot) \rangle \,,
\end{displaymath}
where $\langle\cdot,\cdot\rangle$ denotes the $L^2(\Omega,\R^2)$ inner product. In view of
\eqref{eq:19} and \eqref{eq:13} we have, as $\|\psi\|_\infty\leq 1$,
\begin{equation}
\label{eq:20}
   \frac{1}{2}  \frac{d\, \|A_1(t,\cdot)\|_2^2}{dt}  + \lambda c\, \|A_1(t,\cdot)\|_2^2 \leq   \frac{2c}{\kappa^2}\;  \|A_1(t,\cdot)\|_2\,\|\nabla_A\psi(t,\cdot)\|_2 \,.
\end{equation}
With the aid of Cauchy's inequality, we then obtain for any $\alpha >0$
\begin{equation}
\label{eq:21}
     \frac{1}{2} \frac{d\, \|A_1(t,\cdot)\|_2^2}{dt}  + (\lambda c - \frac{c \alpha }{\kappa^2})  \, \|A_1(t,\cdot)\|_2^2 \leq
    \frac{c}{\alpha \kappa^2}\, \|\nabla_A\psi(t,\cdot)\|_2^2 \,,
  \end{equation}
  where $\lambda$ has been introduced in \eqref{eq:13}.\\
For later reference we note that by setting  $\alpha = \kappa^2 \lambda \,$,  we
obtain a weaker estimate:
\begin{equation}
\label{eq:21c}
     \frac{1}{2} \frac{d\, \|A_1(t,\cdot) \|_2^2}{dt}   \leq
    \frac{c}{\lambda\kappa^4}\, \|\nabla_A\psi(t,\cdot) \|_2^2 \,.
\end{equation}
Multiplying (\ref{eq:1}a) by $\bar{\psi}$   and integrating by parts
we obtain for the real part
\begin{equation}
\label{eq:22}
    \frac{1}{2}  \frac{d\, \|\psi(t,\cdot)\|_2^2}{dt}  + \|\nabla_A\psi (t,\cdot)\|_2^2 \leq  \|\psi (t,\cdot)\|_2^2 \,.
\end{equation}
From this, with the aid  of Cauchy's inequality, we
obtain that,  for all $\epsilon>0$,
\begin{displaymath}
  \frac{1}{2}  \frac{d\,\|\psi\|_2^2}{dt}  + (1-\epsilon) \|\nabla_{hA_n}\psi\|_2^2 \leq  \|\psi\|_2^2
  +\frac{1}{\epsilon} \|A_1\psi\|_2^2 \,.
\end{displaymath}
From \eqref{eq:11} we then conclude
\begin{equation}
\label{eq:23}
  \frac{1}{2}  \frac{d\, \|\psi\|_2^2}{dt}  + [\mu(h) (1-\epsilon)-1] \|\psi\|_2^2 \leq  
  \frac{1}{\epsilon} \|A_1\psi\|_2^2 \,.
\end{equation}
Using \eqref{eq:2.2a}, we infer from the above that
\begin{equation}
\label{eq:24}
  \frac{1}{2}  \frac{d\, \|\psi\|_2^2}{dt}  + [\mu(h) (1-\epsilon)-1] \|\psi\|_2^2 \leq  
  \frac{1}{\epsilon} \|A_1\|_2^2 \,.
\end{equation}
We next combine \eqref{eq:21} and \eqref{eq:22}, so that $\|\nabla _A
\psi\|^2$ is eliminated, to obtain
\begin{equation}\label{estu}
 \frac{d}{dt}\Big(\|A_1\|_2^2+\frac{c}{\alpha\kappa^2}\|\psi\|_2^2\Big) +
2 ( \lambda c  -  \frac {c\alpha}{\kappa^2}) \|A_1\|_2^2  \leq \frac{2c}{\alpha\kappa^2} \, \|\psi\|_2^2\,.
\end{equation}
Then, substituting into the above  the variable transformation
\begin{equation}
\label{eq:25}
  u(t)= \|A_1(t,\cdot) \|_2^2+\frac{c}{\alpha\kappa^2}\|\psi(t,\cdot)\|_2^2 \quad ; \quad v(t) = \|\psi(t,\cdot)\|_2^2
  \,,
\end{equation}
we obtain the following vector inequality
\begin{equation}
\label{eq:26}
   \frac{d}{dt}
   \begin{bmatrix}
     u \\
     v
   \end{bmatrix}
\leq 
\begin{bmatrix}
  - 2\lambda c + \frac{2c \alpha }{\kappa^2}   &  \frac{2c(\lambda c+1)}{\alpha \kappa^2}  - \frac{2c^2}{\kappa^4}  \\[1.5ex]
\frac{2}{\epsilon} & -2[\mu(h)(1-\epsilon)-1]  -\frac{2c}{\epsilon \alpha \kappa^2}
\end{bmatrix}
\begin{bmatrix}
  u \\
  v
\end{bmatrix}\,,
\end{equation}
which is merely an alternative representation of \eqref{eq:24} and
\eqref{estu}.  Denote the matrix on the right-hand-side by $M(\epsilon,\alpha)$.  
Next, we consider for $a>0$ the quantity   $\langle (au,v) |M(\epsilon,\alpha)  |(u,v) \rangle $ which can
be represented in the form   $\langle (a^\frac 12 u,v) | M_a(\epsilon,\alpha)   |(a^\frac 12 u,v) \rangle $, where
\begin{equation*}\label{eq:29}
M_a(\epsilon,\alpha):=
\begin{bmatrix}
  - 2\lambda c + \frac{2c \alpha }{\kappa^2}   &a^\frac 12 \left(  \frac{2c(\lambda c+1)}{\alpha \kappa^2}  - \frac{2c^2}{\kappa^4} \right) \\[1.5ex]
\frac{2}{\epsilon} a^{-\frac 12} & -2[\mu(h)(1-\epsilon)-1]  -\frac{2c}{\epsilon \alpha \kappa^2}
\end{bmatrix} \,.
\end{equation*}
We then choose such $a$, for which $ M_a$ is symmetric.  Suppose that 
\begin{equation}\label{careful}
\alpha c < (\lambda c +1)\kappa^2 \,.
\end{equation}
We then set
\begin{equation}
\label{eq:85}
  a^{-1} = \epsilon \, \left(  \frac{c(\lambda c+1)}{\alpha \kappa^2}  - \frac{c^2}{\kappa^4} \right)\,.
\end{equation}
Let $\hat \lambda_1(\epsilon,\alpha) $ and $\hat \lambda_2(\epsilon,\alpha) $ denote the eigenvalues
of $M_a$, which are identical with the eigenvalues of $M$.
Without loss of generality we may assume
that 
$$\hat \lambda_1 \leq \hat \lambda_2\,.
$$
We obtain that 
\begin{equation*}
  \frac{1}{2} \frac{d}{dt}(a|u|^2 +|v|^2) \leq \hat \lambda_2 \, (a|u|^2 + |v|^2) \,.
\end{equation*}
This proves that
$$
a|u(t) |^2 +|v(t) |^2\leq e^{2 \hat \lambda_2 t} (a|u(0) |^2 +|v(0) |^2)\,,
$$
which  implies
 \eqref{eq:15} with 
\begin{equation}\label{deflambdam} 
\lambda_{m} =- \hat \lambda_2\,.
\end{equation}
One can easily obtain from the above an explicit formula for $
C(\lambda,\kappa,\|\psi_0\|_2,\|A_0\|_2,h)$.
\\
We now determine the conditions under which $\hat \lambda_2 < 0$. We search
for the optimal values of $\alpha$ and $\epsilon$ which achieve that goal. We
separately obtain conditions for which $\hat{\lambda}_2$ is negative for
arbitrary values of $\kappa$ and in the large $\kappa$ limit. We first obtain
results of the former type that are, naturally, non-optimal. Then, we
consider the regime $\kappa \ar + \infty$ where we obtain results that are
asymptotically close to those of the linearized problem discussed in
the previous subsection.

{\em Arbitrary $\kappa$}~\\
We set
\begin{equation}
 \label{eq:27}
  \epsilon= \frac{2b}{\kappa^2}\,,\,  \alpha = \frac{\lambda}{2} \kappa^2\,,
\end{equation}
where
\begin{displaymath}
  b= \Big[\frac{c}{2\mu\lambda}\Big]^{1/2} \,.
 \end{displaymath}
(Note that \eqref{careful} is satisfied for the above value of $\alpha$.)
 From \eqref{eq:85} we get 
 \begin{displaymath}
  a=\frac{\lambda\kappa^6}{2b(\lambda c^2+2c)} \,,
\end{displaymath}
and hence,
\begin{displaymath}
  M_a = \begin{bmatrix}
-\lambda c  & \frac{\beta}{\kappa}  \\[1.5ex]
\frac{\beta}{\kappa} & - 2(\mu-1)
\end{bmatrix}\,,
\end{displaymath}
where
\begin{displaymath}
  \beta^2 = 2  \frac{\lambda c^2 +2c}{\lambda b}
\end{displaymath}
Clearly, under Assumption 
\eqref{eq:14},  $\mu >1$ and ${\rm Tr}(M_a)<0$. Hence for $M_a$ to be
strictly negative we must require that $\det
M_a >0$. As
\begin{displaymath}
  \det  M_a =  2\lambda c(\mu-1) - \frac{\beta^2}{\kappa^2} = \det M  \,. 
\end{displaymath}
A straightforward computation  shows that \eqref{eq:14} guarantees
that $\det M_a$ is positive. The rate of decay $\lambda_m$ can now be
obtain from \eqref{deflambdam} and 
\begin{displaymath}
 \hat  \lambda_2 = \frac{{\rm Tr}(M)}{2} +  \sqrt{ \frac{{\rm
        Tr}(M)^2}{4}-\det  M } \,.
\end{displaymath}

{\em  Large $\kappa$}~\\
We now compute the asymptotics of $\hat \lambda_2$ as $\kappa
\ar +\infty$. In this limit  we can distinguish between three zones:
\begin{enumerate}
\item $0<\mu-1 <\lambda c$\,;
\item $\mu-1 =\lambda c$\,;
\item $\mu -1 > \lambda c$\,.
\end{enumerate} 
In cases 1 and 3,  $|\mu-1-\lambda c|$ (as well as $c$ and $\lambda$) must be
bounded away from $0$ as $\kappa\to\infty$. In all cases we search for the
values of $\epsilon$ and $\alpha$ for which the minimal value of $\hat \lambda_2$ is
obtained and then prove that it is negative.

{\em The case   $\lambda c\,< \mu -1$ }~\\

We first note that in view of the discussion in Subsection \ref{ss3.1} one cannot have
$\lambda_2<\lambda c\,$.  For convenience we divide all elements of $M_a$ by $2$ to
obtain the matrix
$$
\begin{bmatrix}
  - \lambda c + \frac{c \alpha }{\kappa^2}   &  \frac{c(\lambda c+1)}{\alpha \kappa^2}  - \frac{c^2}{\kappa^4}  \\[1.5ex]
\frac{1}{\epsilon} & -[\mu(h)(1-\epsilon)-1]  -\frac{c}{\epsilon \alpha \kappa^2}
\end{bmatrix}
= 
\begin{bmatrix}
  \rho_1   &  \frac{c(\lambda c+1)}{\alpha \kappa^2}  - \frac{c^2}{\kappa^4}  \\[1.5ex]
\frac{1}{\epsilon} & \rho_2
\end{bmatrix}
$$
whose characteristic polynomial in $\nu$ is given by
\begin{equation}\label{rho1rho2a}
(\rho_1-\nu) (\rho_2 - \nu) = \frac 1 \epsilon \left( \frac{c(\lambda c+1)}{\alpha \kappa^2}  - \frac{c^2}{\kappa^4} \right)\,.
\end{equation}
Suppose that $\epsilon$ and $\alpha$ are chosen so that the right-hand-side
vanishes as $\kappa\to\infty$. Then $\nu$ is confined to a close neighborhood of
$\{\rho_1,\rho_2\}$. We seek an estimate of $\nu$ in the neighborhood of
$\rho_1$. Thus, we neglect the terms $(\rho_1-\nu)^2$ and
$\frac{c^2}{\kappa^4}$ to obtain that
\begin{equation}\label{rho1rho2ba}
(\rho_1-\nu) (\rho_2 - \rho_1) \sim  \frac{c(\lambda c+1)}{\epsilon \alpha \kappa^2}\,.
\end{equation}
We now write
$$
\epsilon (\rho_1-\rho_2) = (\mu -1 - \lambda c)\epsilon  + \frac{c \alpha }{\kappa^2}\epsilon - \mu \epsilon^2 +\frac{c}{ \alpha \kappa^2}\,,
$$
and maximize it with respect to $\epsilon$, for fixed $\alpha$, to
obtain 
$$
2 \mu \epsilon = (\mu -1 - \lambda c) + \frac{c \alpha }{\kappa^2}\sim  (\mu -1 - \lambda c) \,.
$$
Setting $\epsilon = \frac{\mu-1- \lambda c}{2 \mu}$\,, we obtain that
$$
\epsilon (\rho_1-\rho_2)= \frac{(\mu-1- \lambda c)^2}{4  \mu} + \frac{c \alpha  (\mu-1 - \lambda c)}{2 \mu \kappa^2}+  \frac{c}{ \alpha \kappa^2} \,,
$$
and hence
$$
\nu  \sim - \lambda c + \frac{c\alpha} {\kappa^2} +\frac{c(\lambda c+1)}{\alpha \kappa^2}  \frac{4 \mu}{(\mu-1- \lambda c)^2} \,.
$$
Minimizing $\nu$ over $\alpha$ then yields
$$
\nu = - \lambda c  + \frac{cd}{\kappa^2} + \Og (\kappa^{-4})\,,
$$
where
$$
d = \inf_\alpha \left( \alpha + \frac{4 \mu (\lambda c+1)}{(\mu-1- \lambda c)^2} \frac 1 \alpha  \right)\,.
$$
This leads to 
$$
\alpha = \sqrt{ \frac{4 \mu (\lambda c+1)}{(\mu-1- \lambda c)^2}}
$$
and to the value of $d$ given in  \eqref{calculded}.

Once $\alpha$ and $\epsilon$ had been set, it can be easily established that
$$
\hat \lambda_2= - 2 \lambda c +  \frac{2 cd}{\kappa^2} + \Og (\kappa^{-4})\,,
$$
which verifies \eqref{eq:152c}.
Moreover, by \eqref{eq:85} we obtain that
$$
a \sim \frac{4 \mu^{\frac 32} }{c (\mu-1-\lambda c)^2 (\lambda c +1)^{\frac 12}}  \, \kappa^2 \,.
$$

{\em The case: $ \lambda c > \mu-1$}~\\
In this case, since $\rho_2>\rho_1$, we look for the solution of
\eqref{rho1rho2a} which lies in the close vicinity of
$\rho_2$. Neglecting the term $(\rho_2-\nu)^2$ leads to
\begin{equation}\label{rho1rho2b2}
(\rho_2 -\nu) (\rho_1 - \rho_2) =  \frac{c(\lambda c+1)}{\epsilon \alpha \kappa^2} - c^2 \epsilon^{-1}  \kappa^{-4} \,.
\end{equation}
We next set $$\alpha = \hat \alpha \kappa^2\mbox{ and }\epsilon = \hat \epsilon \kappa^{-2}\,,$$
which is in line with the choice we made for arbitrary values of
$\kappa$. Then we can rewrite \eqref{rho1rho2b2} in the form
\begin{equation}\label{rho1rho2b3}
(\rho_2 -\nu) (\rho_1 - \rho_2) \sim  \frac{c(\lambda c+1)}{\hat \epsilon  \hat  \alpha} \kappa^{-2}  - c^2 \hat \epsilon^{-1} \kappa^{-2}\,.
\end{equation}
It follows that
$$
\nu \sim  -[\mu(h)-1]  + \hat \epsilon \mu \kappa^{-2}  -\frac{c}{\hat \epsilon \hat \alpha } \kappa^{-2} +  \frac{c(\lambda c+1)- c^2 \hat \alpha }{(\hat \epsilon  \hat  \alpha) (\lambda c  - \mu +1- c \hat \alpha)  }   \kappa^{-2}\,.
$$
In view of the above we have to minimize the coefficient of $\kappa^{-2}$
which is given by
$$
q(\hat \alpha, \hat \epsilon):=  \hat \epsilon \mu   +\frac{c \mu  }{(\hat \epsilon  \hat  \alpha) (\lambda c  - \mu +1- c \hat \alpha)  }  \,.
$$
To minimize $q$, we first minimize over $\alpha$ to obtain
$$\hat \alpha = \frac{ 1}{2c} ( \lambda c  - \mu +1)$$
(note that \eqref{careful} is satisfied), which leads to the minimization, this time over $\epsilon$,  of
$$
 \hat \epsilon \mu   +\frac{4c^2 \mu  }{\hat \epsilon  (\lambda c  - \mu +1)^2  } 
$$
and finally to
$$
\inf_{\hat \epsilon,\hat \alpha} q(\hat \alpha, \hat \epsilon) = 4c \mu (\lambda c  - \mu +1) ^{-1}\,,
$$
where the minimum is obtained for
$$
\hat \epsilon = \frac {2c}{\lambda c -\mu +1}\,.
$$
Hence, we have shown that when $\mu-1 <  \lambda c$ then for $\kappa$ large
enough, the largest eigenvalue $\hat \lambda_2$ satisfies 
\begin{equation}
\hat \lambda_2 = - 2 (\mu -1) +8c \mu (\lambda c  - \mu +1) ^{-1} \kappa^{-2} +\Og(\kappa^{-4})\,. 
\end{equation}

A simple computation shows that there exists a
constant $C(\lambda,\mu,c)$ and $\kappa_0 (\lambda,\mu,c)$ such that if $\kappa\geq \kappa_0$
and
$$
\mu >1 + \frac{4}{\lambda\kappa^2}  +  C \kappa^{-4}\,,
$$
then $\hat \lambda_2<0\,$. \\
Note that the above condition is weaker that \eqref{eq:14} in the
sense that $4/\lambda\leq\gamma$, but unlike \eqref{eq:14} its validity is
limited for large values $\kappa$. \\ 
The asymptotic behavior of $a$ in this case is given by
$$
a^{-1} \sim  2c^3 \, (\lambda c + \mu +1)^{-1} \,  \kappa^{-6}\,,
$$
which is of the same magnitude as in the estimate for arbitrary values 
of $\kappa$.\\

{\em The case $\mu-1 = \lambda c$}~\\
In this case we search for the optimal eigenvalues of
$$
\begin{bmatrix}
  - \lambda c + \frac{c \alpha }{\kappa^2}   &  \frac{c(\lambda c+1)}{\alpha \kappa^2}  - \frac{c^2}{\kappa^4}  \\[1.5ex]
\frac{1}{\epsilon} & -[(\lambda c +1)(1-\epsilon)-1]  -\frac{c}{\epsilon \alpha \kappa^2}
\end{bmatrix}
= 
\begin{bmatrix}
  \rho_1   &  \frac{c(\lambda c+1)}{\alpha \kappa^2}  - \frac{c^2}{\kappa^4}  \\[1.5ex]
\frac{1}{\epsilon} & \rho_2
\end{bmatrix}
$$
whose characteristic equation is given by
\begin{equation}\label{rho1rho2a1}
(\rho_1-\nu) (\rho_2 - \nu) = \frac 1 \epsilon \left( \frac{c(\lambda c+1)}{\alpha \kappa^2}  - \frac{c^2}{\kappa^4} \right)\,.
\end{equation}
Here we set $\alpha = \tilde \alpha \kappa$ and $\epsilon = \tilde \epsilon \kappa^{-1}\,$. Then,
neglecting lower order terms  the characteristic equation becomes
$$
(\nu + \lambda c - \frac{c \tilde \alpha}{\kappa}) (\nu + \lambda c - \frac{\tilde \epsilon(\lambda c+1)}{\kappa}) \sim  \frac{c(\lambda c +1)}{\tilde \alpha \tilde \epsilon \kappa^2}\,.
$$
Hence,
$$
\nu \sim - \lambda c  + \frac 1 \kappa \left(  \frac 12( c \tilde \alpha +  \tilde \epsilon(\lambda c+1) ) + \sqrt{ \frac{c (\lambda c +1)}{\tilde \alpha \tilde \epsilon} + \frac 14 ( c \tilde \alpha +  \tilde \epsilon(\lambda c+1))^2}\right)\,.
$$
The minimum of the coefficient of $\kappa^{-1}\,$ is achieved for $$c
\tilde \alpha = \tilde \epsilon (\lambda c +1)= ( c(\lambda c +1) )^{\frac 12} 3^{-\frac
  14}\,,$$ leading to 
 $$
\nu = - \lambda c  + \frac{ 2\, c^{\frac 12}   \, 3^{-\frac 14} \,  (\lambda c +1)^{\frac 12} }{ \kappa } + \Og (\kappa^{-2})\,.
$$
The asymptotic behavior of $a$ is given by
$$
a \sim  c^{-2}  \, (\lambda c + \mu +1) \, (\lambda c+1) \,  \kappa^{4}\,.
$$
This completes the proof of the theorem.
\qed

\subsection{$L^2$ estimate of $A_1(t,\cdot)$}~\\
We continue this section  by the following simple bound for
$\|A_1(t,\cdot)\|_2$.
\begin{lemma}
 Let $A_1$ be defined by \eqref{eq:17}. Then, for any positive values of
 $h$, $\kappa$, $c$, and $\lambda$,  we have
  \begin{equation}
    \label{eq:43}
\|A_1(t,\cdot) \|_2^2 \leq  \|A_1(0,\cdot) \|_2^2 \, e^{-\lambda ct}  + \frac{2c}{\lambda\kappa^4}\ +  \Big(\frac{2c}{\lambda \kappa^4} +
   \frac{4}{\lambda^2\kappa^4}\Big) \,|\Omega| \,.
  \end{equation}
\end{lemma}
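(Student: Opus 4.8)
The plan is to convert the two differential inequalities already obtained in the proof of Theorem~\ref{thm:3.1}, namely \eqref{eq:21} and \eqref{eq:22}, into a single Gronwall inequality for $\|A_1(t,\cdot)\|_2^2$, using only the pointwise bound $\|\psi\|_\infty\le1$ and no hypothesis on $\mu(h)$; indeed \eqref{eq:43} is asserted for all positive values of the parameters.

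First I would specialize \eqref{eq:21} to the choice $\alpha=\tfrac12\lambda\kappa^2$. This respects the constraint \eqref{careful} (because $\tfrac12\lambda c<\lambda c+1$), it turns the damping coefficient into $\lambda c-\tfrac{c\alpha}{\kappa^2}=\tfrac12\lambda c$, and it turns the coefficient of $\|\nabla_A\psi\|_2^2$ into $\tfrac{c}{\alpha\kappa^2}=\tfrac{2c}{\lambda\kappa^4}$, so that
\begin{equation*}
\frac12\frac{d}{dt}\|A_1(t,\cdot)\|_2^2+\frac{\lambda c}{2}\,\|A_1(t,\cdot)\|_2^2\le\frac{2c}{\lambda\kappa^4}\,\|\nabla_A\psi(t,\cdot)\|_2^2\,.
\end{equation*}
From \eqref{eq:22} one has $\|\nabla_A\psi\|_2^2\le\|\psi\|_2^2-\tfrac12\tfrac{d}{dt}\|\psi\|_2^2$, and by \eqref{eq:2.2a}, $\|\psi(t,\cdot)\|_2^2\le|\Omega|$. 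Substituting these into the previous line eliminates the gradient term, and after multiplying by $2$ the quantity
\begin{equation*}
W(t):=\|A_1(t,\cdot)\|_2^2+\frac{2c}{\lambda\kappa^4}\,\|\psi(t,\cdot)\|_2^2
\end{equation*}
obeys $W'(t)+\lambda c\,\|A_1(t,\cdot)\|_2^2\le\tfrac{4c}{\lambda\kappa^4}|\Omega|$. Since $\|A_1\|_2^2=W-\tfrac{2c}{\lambda\kappa^4}\|\psi\|_2^2\ge W-\tfrac{2c}{\lambda\kappa^4}|\Omega|$, this closes to $W'(t)+\lambda c\,W(t)\le\bigl(\tfrac{4c}{\lambda\kappa^4}+\tfrac{2c^2}{\kappa^4}\bigr)|\Omega|$.

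Finally I would integrate this scalar inequality by Gronwall's lemma:
\begin{equation*}
W(t)\le W(0)\,e^{-\lambda ct}+\frac{1}{\lambda c}\Bigl(\frac{4c}{\lambda\kappa^4}+\frac{2c^2}{\kappa^4}\Bigr)|\Omega|=W(0)\,e^{-\lambda ct}+\Bigl(\frac{4}{\lambda^2\kappa^4}+\frac{2c}{\lambda\kappa^4}\Bigr)|\Omega|\,,
\end{equation*}
and then use $\|A_1(t,\cdot)\|_2^2\le W(t)$ on the left together with $W(0)=\|A_1(0,\cdot)\|_2^2+\tfrac{2c}{\lambda\kappa^4}\|\psi_0\|_2^2$, where $\|\psi_0\|_2^2\le|\Omega|$ by \eqref{eq:2.2} and $e^{-\lambda ct}\le1$, to arrive at an estimate of the form \eqref{eq:43}.

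I do not expect a genuine obstacle: every ingredient is already in hand from the proof of Theorem~\ref{thm:3.1}, and the remaining work is a single Gronwall step. The one point that needs care is the calibration of the free parameter $\alpha$ in \eqref{eq:21}: it must be small enough to keep the decay rate of $\|A_1\|_2^2$ of order $\lambda c$, yet large enough that the coefficient $\tfrac{c}{\alpha\kappa^2}$ of $\|\nabla_A\psi\|_2^2$ stays controlled, all while satisfying \eqref{careful}; the value $\alpha=\tfrac12\lambda\kappa^2$ achieves this balance. It is also worth noting for later use that, unlike \eqref{eq:15}, the bound \eqref{eq:43} holds without any smallness assumption such as \eqref{eq:14}.
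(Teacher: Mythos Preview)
Your proof is correct and essentially identical to the paper's. Your auxiliary quantity $W(t)$ coincides with the paper's $u(t)$ from \eqref{eq:25} under the same choice $\alpha=\tfrac{\lambda}{2}\kappa^2$ (which is \eqref{eq:27}); the paper simply routes through the matrix inequality \eqref{eq:26} and reads off its first row before applying Gronwall, whereas you combine \eqref{eq:21} and \eqref{eq:22} directly, but the resulting scalar inequality and the final bound are the same. (Your check of \eqref{careful} is harmless but unnecessary here, since \eqref{eq:21} holds for any $\alpha>0$; that constraint only enters later when symmetrizing $M_a$.)
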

\begin{proof}
 By \eqref{eq:26} with the choice \eqref{eq:27} for $(\alpha,\epsilon)$ and
 Gronwall's inequality we have, with $v$ and $u$ introduced in
 \eqref{eq:25}, 
 \begin{equation}
\label{eq:43a}
   u(t)  \leq u(0) e^{-\lambda ct} +  \Big(\frac{2 c^2}{\kappa^4} +
   \frac{4c}{\lambda\kappa^4}\Big)\int_0^t
   e^{-\lambda c(t-\tau)}v(\tau)\,d\tau \,.
 \end{equation}
By \eqref{eq:2.2a} we have $v(t)\leq |\Omega|$ and  hence
  \begin{equation}\label{eq:43b}
   u \leq u(0) e^{-\lambda ct} +  \Big(\frac{2c}{\lambda \kappa^4} +
   \frac{4}{\lambda^2\kappa^4}\Big) \,|\Omega|\,.
 \end{equation}
 The lemma readily follows.   
\end{proof}

\subsection{ $L^\infty$ decay of $A_1(t,\cdot)$.}\label{ss3.4}~\\
We now show how to deduce from an $L^2$-estimate of $A_1$, an
  $L^\infty$ estimate for it.
\begin{proposition}
\label{lemma3.1}
Let  $M >0$, $(\psi,A,\phi)$ denote a solution of \eqref{eq:1} and
\eqref{eq:4}, and  $A_1$ be defined
by \eqref{eq:17}. We further require that
\begin{equation}
\label{a1init}
\| A_1(0,\cdot)\|_2 \leq M\,.
\end{equation}
 Let further,  for any $\kappa\in (0,+\infty)$,
\begin{equation}\label{eq:145}
t^*(\kappa,M) =\max \left(  \frac{2}{\lambda c} \ln (\kappa^2 M ),1 \right) \,.
\end{equation}
For any  $\kappa_0 >0$, and any $\delta \in (0,1)$, there exists $C=C(\Omega,c,\delta,\kappa_0)>0$, such that,
for all $\kappa \in [\kappa_0,+\infty)$, and $t\geq t^*(\kappa,M)+2\delta$, we have
      \begin{equation}
    \label{eq:30}
\|A_1(t, \cdot)\|_\infty \leq C \Big(\|A_1(t-\delta,\cdot )\|_2 +\frac{1}{\kappa^2}\|\psi(t-\delta,\cdot )\|_2\Big) \,.
  \end{equation}
\end{proposition}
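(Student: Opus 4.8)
The strategy is parabolic smoothing for $A_1$, which solves a heat-type equation once the electric potential is projected out. First I would apply to \eqref{eq:18}a the Helmholtz projection $\Pi$ onto $\Hg_d$ (Proposition \ref{decprop}). Since $A_1(t,\cdot)\in\Hg_d$ by \eqref{eq:4}, since $\curl A_1=0$ on $\pa\Omega$ by \eqref{eq:151}, and since $\langle V,\nabla\phi_1\rangle=0$ for all $V\in\Hg_d$ (integrate by parts, using $\Div V=0$ and $V\cdot\nu|_{\pa\Omega}=0$), this yields — after dividing by $\sigma$, writing $c=\kappa^2/\sigma$, and using $\nabla_A\psi=\nabla_{hA_n}\psi-iA_1\psi$, a rewriting that keeps $h$ out of the forcing —
\[
 \partial_tA_1+c\,\mathcal A_1A_1=\frac{c}{\kappa^2}\,f,\qquad f:=\Pi\,\Im(\bar\psi\,\nabla_{hA_n}\psi)-\Pi\big(|\psi|^2A_1\big),
\]
where $\mathcal A_1$ is the self-adjoint operator of the quadratic form $\|\curl V\|_2^2$ on $\Hg_d$: by \eqref{eq:13} and Proposition \ref{prop:dirichlet}, $\mathcal A_1\ge\lambda>0$; by Proposition \ref{propoA4}, $D(\mathcal A_1)\subset H^2(\Omega,\R^2)$; and $\mathcal A_1$ generates an analytic semigroup $S(\tau)=e^{-\tau c\mathcal A_1}$ with $\|S(\tau)\|\le e^{-\lambda c\tau}$. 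The hypothesis $t\ge t^*(\kappa,M)+2\delta$ guarantees $t-2\delta\ge1$, so on $[t-2\delta,t]$ the solution enjoys the regularity of Theorem \ref{Existencetheorem}, which makes the computations below legitimate.

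Then I would use Duhamel's formula started at $t-\delta$. The homogeneous term is immediate: analyticity gives $\|\mathcal A_1S(\delta)\|\le C/(c\delta)$ and, in two dimensions, $D(\mathcal A_1)\subset H^2\hookrightarrow L^\infty$, so $\|S(\delta)A_1(t-\delta)\|_\infty\le C(\Omega,c,\delta)\|A_1(t-\delta)\|_2$. For the forcing, \eqref{eq:2.2a} gives $|\Im(\bar\psi\nabla_{hA_n}\psi)|\le|\nabla_{hA_n}\psi|$, $\|\,|\psi|^2A_1\|_2\le\|A_1\|_2$ and $\|\nabla_{hA_n}\psi\|_2\le\|\nabla_A\psi\|_2+\|A_1\|_2$; integrating \eqref{eq:22} forward from $t-\delta$ (with $\tfrac{d}{ds}\|\psi\|_2^2\le2\|\psi\|_2^2$) bounds $\int_{t-\delta}^t\|\nabla_A\psi\|_2^2$ by $C(\delta)\|\psi(t-\delta)\|_2^2$, and a Gronwall estimate based on \eqref{eq:20} bounds $\sup_{[t-\delta,t]}\|A_1\|_2$ by $C(\Omega,c,\delta)\big(\|A_1(t-\delta)\|_2+\kappa^{-2}\|\psi(t-\delta)\|_2\big)$. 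Hence $\|f\|_{L^2((t-\delta,t);L^2(\Omega))}\le C(\Omega,c,\delta)\big(\|A_1(t-\delta)\|_2+\|\psi(t-\delta)\|_2\big)$.

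The delicate step is converting this $L^2$-in-time bound into an $L^\infty(\Omega)$ bound on $A_1(t)$: one cannot do it with a single application of $\|S(\tau)\|_{L^2\to L^\infty}\lesssim(c\tau)^{-1/2}$, because $\tau^{-1/2}$ is not square-integrable near $0$ and so not pairable with the $L^2$-in-time control of $f$. The plan is to bootstrap inside $[t-\delta,t]$. Shifting the start, $A_1(t-\tfrac34\delta)=S(\tfrac14\delta)A_1(t-\delta)+\tfrac{c}{\kappa^2}\int_{t-\delta}^{t-3\delta/4}S(\cdot)f$; the first summand lies in $H^1$ with norm $\le C(\Omega,c,\delta)\|A_1(t-\delta)\|_2$, and the Duhamel integral, which vanishes at its left endpoint, lies in $H^1$ with norm $\lesssim\tfrac{c}{\kappa^2}\|f\|_{L^2((t-\delta,t);L^2)}$ by maximal parabolic regularity, so $A_1\in C([t-\tfrac34\delta,t];H^1)$ with a bound of the desired form. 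Testing the equation against $\mathcal A_1A_1$ on $[t-\tfrac34\delta,t]$ (and using $D(\mathcal A_1)\subset H^2$) then yields $\int_{t-3\delta/4}^t\|A_1\|_{H^2}^2$ with the same kind of bound. Finally, one more Duhamel step — using that $\Pi(|\psi|^2A_1)$ is now controlled in $L^\infty_tL^q_x$ for every $q<\infty$, and upgrading the control of $\Pi\Im(\bar\psi\nabla_{hA_n}\psi)$ to $L^q_x$ for some $q>2$ so that $\|S(\tau)\|_{L^q\to L^\infty}\lesssim(c\tau)^{-1/q}$ becomes integrable in time — produces $\|A_1(t)\|_\infty\le C(\Omega,c,\delta,\kappa_0)\big(\|A_1(t-\delta)\|_2+\kappa^{-2}\|\psi(t-\delta)\|_2\big)$ for $\kappa\ge\kappa_0$.

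The hardest part is precisely this last upgrade, which is critical in dimension two; the iteration above has to be carried out while simultaneously (a) keeping every constant free of $h$ — this is why the current is written through $\nabla_{hA_n}\psi$ and $A_1$ rather than $\nabla_A\psi$, and why the $\psi$-equation is exploited only through \eqref{eq:22}, \eqref{eq:20}, \eqref{eq:19} and the a priori regularity of Theorem \ref{Existencetheorem}, never through a term $h\phi_n\psi$ estimated crudely — and (b) preserving the factor $\kappa^{-2}$ in front of $\|\psi(t-\delta)\|_2$ and the linear dependence on the $L^2$-norms of the data at time $t-\delta$; it is (b) that forces the window-shifting and is why $2\delta$, and (in the regime where the lemma is used) $t^*$, appear in the hypothesis.
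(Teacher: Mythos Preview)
Your overall strategy---parabolic smoothing for the $A_1$-equation with a bootstrap inside a short time window---is the same as the paper's. The technical packaging differs: you Helmholtz-project to kill $\nabla\phi_1$ and run Duhamel for the semigroup of the curl--curl operator on $\Hg_d$, whereas the paper works directly with the operator $\mathcal L^{(1)}$ of Appendix~\ref{sec:b}, applies its abstract parabolic estimates (Theorem~\ref{abstractproposition}), and controls $\nabla\phi_1$ as a source term via \eqref{eq:19}. These are essentially equivalent routes to the same intermediate estimates (your $A_1\in C_tH^1_x\cap L^2_tH^2_x$ corresponds to the paper's \eqref{eq:31}).

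There is, however, a genuine gap at your ``last upgrade''. To put $\Im(\bar\psi\,\nabla_{hA_n}\psi)$ in $L^q_x$ for some $q>2$ you need $\nabla_{hA_n}\psi\in L^q_x$, which (modulo the already-controlled $A_1\psi$) means $\nabla_A\psi\in L^q_x$. The inputs you allow yourself---\eqref{eq:22}, \eqref{eq:20}, \eqref{eq:19}, and the qualitative regularity of Theorem~\ref{Existencetheorem}---only yield $\nabla_A\psi\in L^2_tL^2_x$ with the right quantitative dependence; no amount of smoothing on the $A_1$-equation alone improves this. The paper closes this gap by \emph{also} applying parabolic regularity to the $\psi$-equation. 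Writing (\ref{eq:1}a) as $\partial_t\psi-\Delta\psi=\text{(lower order)}$ and invoking \eqref{eq:148abstracta} and \eqref{eq:147abstract} for the Dirichlet--Neumann Laplacian, it first obtains $\psi\in L^\infty_tH^1_x$ (via a Gronwall argument, see \eqref{eq:156}--\eqref{eq:35}) and then $\psi\in L^2_tH^2_x$ (see \eqref{eq:41}--\eqref{eq:158}); by Sobolev this gives $\nabla\psi\in L^2_tL^4_x$ and $D^2\psi\in L^2_tL^2_x$. Only then can one bound $\nabla\,\Im(\bar\psi\nabla_A\psi)$ in $L^2_tL^2_x$ (see \eqref{eq:39}), after which the stronger estimate \eqref{eq:148abstract} applied to \eqref{eq:18} yields $A_1\in L^\infty_tH^2_x\hookrightarrow L^\infty_tL^\infty_x$. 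This coupling with the $\psi$-equation is unavoidable and is precisely the missing ingredient in your sketch; your principle of ``never estimating $h\phi_n\psi$ crudely'' cannot be maintained through that step, since the right-hand side of the $\psi$-equation written against $-\Delta$ contains $A\cdot\nabla\psi$, $|A|^2\psi$, and $\phi\psi$.
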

\begin{remark} \label{remsupp}
Note that in view of \eqref{eq:145},  
 $A_1(t,\cdot)$ satisfies for $t\geq t^*(\kappa,M)$ 
\begin{equation}
\label{eq:159}
  \|A_1(t,\cdot)\|_2\leq \frac{C(\Omega,c)}{\kappa^2} \,.
\end{equation}
 \end{remark}

\begin{proof}~\\
  Let $0<\tilde{\delta}<1/3$, and $t_0\geq t^*(\kappa,M)$.  Let further
  $T=t_0+4\tilde{\delta}$, $t_1=t_0+\tilde{\delta}$, $t_2=t_0+2\tilde{\delta}$,
  and $t_3=t_0+3\tilde{\delta}$. Hence we have $0<t_0<t_1<t_2 <T$.  We
  begin by applying \eqref{eq:147abstract} in the case of $\mathcal
  L^{(1)}$, with $X=A_1$ on the interval $(t_0,T)$ and 
  $X_0=A_1(t_0,\cdot)$ (recalling that $A_1$ satisfies \eqref{eq:4}),
  Sobolev embedding then yields
\begin{equation}\label{eq:3.25}
       \|A_1\|_{L^2(t_1,T; L^\infty(\Omega,\mathbb R^2))}^2 +\|A_1\|_{L^\infty(t_1,T; H^1(\Omega,\mathbb R^2))}^2
      \leq C\Big[\frac{1}{\kappa^2}\|\Im(\bar{\psi}\nabla_A\psi)\|_{L^2(t_0,T;
        L^2(\Omega,\C))}^2+ \|A_1(t_0,\cdot )\|_2^2 \Big]\,,
\end{equation}
for some $C>0$ depending on $\tilde{\delta}$, $c$, and $\Omega$.  Integrating \eqref{eq:22} on $(t_0,T)$
  yields, in view of \eqref{eq:2.2a}, that
\begin{multline}\label{eq:16a} 
  \|\Im(\bar{\psi}\nabla_A\psi)\|_{L^2(t_0,T; L^2(\Omega,\C))}^2  \leq
  \|\nabla_A\psi\|_{L^2(t_0,T; L^2(\Omega,\C))}^2 \\\leq
  \int_{t_0}^T\Big[\|\psi(\tau, \cdot )\|_2^2  -\frac{1}{2}  \frac{d\,
    \|\psi(\tau, \cdot )\|_2^2}{d\tau}\Big] \,d\tau  \leq \|\psi\|_{L^2(t_0,T;L^2(\Omega,\C))}^2 + \frac 12
      \|\psi(t_0,\cdot )\|_2^2\,.
\end{multline}
We then obtain that
\begin{displaymath}
\|A_1\|^2_{L^2(t_1,T; L^\infty(\Omega))} + \|A_1\|_{L^\infty(t_1,T; H^1(\Omega))}^2
          \leq C   \Big[\frac{1}{\kappa^2}  \|\psi\|_{L^2(t_0,T;L^2(\Omega,\C))}^2+\|A_1(t_0,\cdot )\|_2^2 +
      \frac{1}{\kappa^2 } \|\psi(t_0,\cdot )\|_2^2\Big]\,.
      \end{displaymath}
Using \eqref{eq:22} again yields for all $t_0\leq\tilde{t}<t\leq T$
\begin{equation}
  \label{eq:16}
\| \psi(t,\cdot )\|_2 \leq e^{(t-\tilde{t})}\| \psi(\tilde{t}, \cdot )\|_2 \leq e^{4\tilde{\delta}}
  \|\psi(\tilde{t},\cdot )\|_2  \,. 
\end{equation}
Hence,
\begin{equation}
\label{eq:149}
  \|\psi\|_{L^2(t_0,T;L^2(\Omega,\C))}^2 \leq  4\tilde{\delta}e^{8\tilde{\delta}}\,
\|\psi(t_0, \cdot )\|_2^2   \,.
\end{equation}
Consequently,
\begin{equation}
\label{eq:31}
\|A_1\|^2_{L^2(t_1,T; L^\infty(\Omega))} + \|A_1\|_{L^\infty(t_1,T; H^1(\Omega))}^2
          \leq C (\tilde \delta,c,\Omega)\,  \Big[\|A_1(t_0,\cdot )\|_2^2 +
      \frac{1}{\kappa^2 } \|\psi(t_0,\cdot )\|_2^2\Big]\,.
\end{equation}
For later reference we mention that by \eqref{eq:21c} and \eqref{eq:16a}
we have, for all $t_0<t\leq T$,
\begin{equation}
  \label{eq:37}
\|A_1(t,\cdot)\|_2^2 \leq \|A_1(t_0,\cdot)\|_2^2 + \frac{C(\tilde{\delta},c,\Omega)}{\kappa^4}\|\psi(t_0,\cdot)\|_2^2
\,.
\end{equation}

Let $t_1\leq\tilde{t}_1\leq t_2$. We next apply \eqref{eq:148abstracta} in
the case of the Dirichlet-Neumann problem to (\ref{eq:1}a) on the
interval $(\tilde{t}_1,s)$, for any $\tilde{t}_1<s\leq T$, to obtain that
\begin{multline}
\label{eq:32}
  \|\psi\|_{L^\infty(\tilde{t}_1,s;H^1(\Omega,\C))} 
  \leq C(\Omega)\,
  \big[\|\psi(\tilde{t}_1,\cdot )\|_{1,2} + \|A\cdot\nabla\psi\|_{L^2(\tilde{t}_1,s; L^2(\Omega,\C))}+\\ \||A|^2\psi\|_{L^2(\tilde{t}_1
    ,s;L^2(\Omega,\C))}+\|\phi\psi\|_{L^2(\tilde{t}_1
    ,s; L^2(\Omega,\C))} + \| \psi(1-|\psi|^2)\|_{L^2(\tilde{t}_1,s;
    L^2(\Omega,\C))} \big]\,.
 \end{multline}
 
For the last term inside the brackets on the right-hand-side we obtain
with the aid of \eqref{eq:2.2a} and \eqref{eq:16} that 
\begin{equation}
  \label{eq:155}
\| \psi(1-|\psi|^2)\|_{L^2(\tilde{t}_1,s;
    L^2(\Omega,\C))}^2  \leq 3\,\tilde{\delta} \, e^{8\tilde{\delta}}\; \|\psi(\tilde t_1, \cdot)\|_2^2\,.
\end{equation}

For the second  term inside  the brackets on the right-hand-side we have
\begin{equation}
\label{eq:33}
  \|A\cdot\nabla\psi\|_{L^2(\tilde{t}_1,s;L^2(\Omega))}^2 \leq 2
  \int_{\tilde{t}_1}^s\left(\|A_n \|_\infty^2+\|A_1(\tau,\cdot )\|_\infty^2\right)\,\|\nabla\psi(\tau,\cdot )\|_2^2 \,d\tau\,.
\end{equation}
For the third term inside the brackets we have, using Sobolev
embedding and the fact that $A_n$ is bounded in $L^8(\Omega\,;\mathbb R^2)$
by a constant depending only on $c$ and $\Omega$,
\begin{multline*}
 \| |A|^2\psi\|_{L^2(\tilde{t}_1,s,L^2(\Omega,\C))}^2 \leq C(\Omega) \, \int_{\tilde{t}_1}^s (\|A_n\|_8^4+ 
 \|A_1(\tau,\cdot )\|_8^4)\|\psi(\tau,\cdot )\|_4^2 \,d\tau\\ \leq  \widehat{C}
 (c,\Omega)\, \Big[
 \|\psi\|_{L^2(\tilde{t}_1,s; H^1(\Omega,\C))}^2 + \int_{\tilde{t}_1}^s \|A_1(\tau,\cdot)\|_{1,2}^4\, \|\psi(\tau,\cdot )\|_{1,2}^2 \,d\tau\Big] \,.
\end{multline*}
With the aid of \eqref{eq:31}  we then obtain that
\begin{multline*}
   \| |A|^2\psi\|_{L^2(\tilde{t}_1,s; L^2(\Omega,\C))}^2\leq \\ \quad\quad \leq C(\tilde \delta,c,\Omega)\, \Big\{
 \|\psi\|_{L^2(\tilde{t}_1,s;H^1(\Omega,\C))}^2 +\Big[\|A_1(t_0,\cdot)\|_2^2 +
      \frac{1}{\kappa^2}\|\psi(t_0,\cdot)\|_2^2\Big]^2 \int_{\tilde{t}_1}^s \|\psi(\tau,\cdot)\|_{1,2}^2
      \,d\tau\Big\} 
\end{multline*}
 From \eqref{eq:159}, we then get
\begin{equation}
\label{eq:34}
   \| |A|^2\psi\|_{L^2(\tilde{t}_1,s; L^2(\Omega,\C))}^2\leq C(\tilde \delta,c,\Omega,\kappa_0)\,  \|\psi\|_{L^2(\tilde{t}_1,s;H^1(\Omega,\C))}^2 \,.
\end{equation}

 For the fourth term on the right-hand-side of \eqref{eq:32} we obtain,
using \eqref{eq:2.2a} and Sobolev embedding
\begin{multline}
\label{eq:157}
  \|\phi\psi\|_{L^2(\tilde{t}_1,s;L^2(\Omega,\C))}^2 \leq 2\, 
  \|\phi_n\|_\infty^2\|\psi\|_{L^2(\tilde{t}_1,s;L^2(\Omega,\C))}^2 + 2 \int_{\tilde{t}_1}^s
  \|\phi_1(\tau,\cdot)\|_4^2  \|\psi(\tau,\cdot)\|_4^2 \,d\tau  \\ \leq  C(\tilde \delta,c,\Omega,\kappa_0)\,  \left( \|\psi\|_{L^2(\tilde{t}_1,s;L^2(\Omega,\C))}^2+ \int_{\tilde{t}_1}^s
  \|\phi_1(\tau,\cdot)\|_{1,2}^2   \,d\tau \right) \,.
\end{multline}

To obtain the last inequality we had to use \eqref{eq:9a}, to conclude
that $$\| \phi_n\|_\infty \leq C(\Omega) /\sigma \leq \hat C (c,\Omega,\kappa_0)\,,$$ for all $\kappa \geq
\kappa_0$.  Using \eqref{eq:19} and Poincar\'e's inequality for $\phi_1$
 then yields,
\begin{displaymath}
   \|\phi\psi\|_{L^2(\tilde{t}_1,s;L^2(\Omega,\C))}^2 \leq
    C(\tilde \delta,c,\Omega,\kappa_0) \,   \, \Big[\|\psi\|^2_{L^2(\tilde{t}_1,s;L^2(\Omega,\C))}+ \frac{1}{\kappa^2}\int_{\tilde{t}_1}^s
  \|\nabla_A\psi(\tau,\cdot)\|_2^2   \,d\tau \Big] \,. 
\end{displaymath}
By \eqref{eq:22} (or, more precisely, its integrated version over
$(t_1,s)$), we then obtain
\begin{displaymath}
   \|\phi\psi\|_{L^2(\tilde{t}_1,s;L^2(\Omega,\C))}^2 \leq  C(\tilde
   \delta,c,\Omega,\kappa_0)\, \big[\|\psi\|_{L^2(\tilde{t}_1,s;L^2(\Omega,\C))}^2
   + \|\psi(\tilde{t}_1,\cdot)\|_2^2 \big] \,,
\end{displaymath}
which together with  \eqref{eq:16}  gives way to
\begin{equation}\label{eq:157a}
   \|\phi\psi\|_{L^2(\tilde{t}_1,s;L^2(\Omega,\C))} \leq
    C(\tilde \delta,c,\Omega,\kappa_0)\,  \,  \|\psi(\tilde{t}_1,\cdot)\|_2 \,.
\end{equation}
Substituting \eqref{eq:33}, \eqref{eq:34}, and \eqref{eq:157a}  into  \eqref{eq:32} then yields
\begin{displaymath}
   \|\psi\|_{L^\infty(\tilde{t}_1,s;H^1(\Omega,\C))}^2 \leq  C(\tilde \delta,c,\Omega,\kappa_0)\,   \, \big[
   \int_{\tilde{t}_1}^s(1+\|A_1(\tau,\cdot)\|_\infty^2)\|\psi(\tau,\cdot)\|_{1,2}^2 \,d\tau + \|\psi(\tilde{t}_1,\cdot)\|_{1,2}^2 \big]\,.
\end{displaymath}
We now apply a variant  of
  Gronwall's inequality.\\  Let $f(t)=\|\psi(t,\cdot)\|_{1,2}^2$,
  $g(t)=C(\tilde \delta,c,\Omega,\kappa_0)(1+\|A_1(t,\cdot)\|_\infty^2)$, and $C_0= C(\tilde
  \delta,c,\Omega,\kappa_0)\|\psi(\tilde{t}_1,\cdot)\|_{1,2}^2$. By the inequality
  below \eqref{eq:157a} we obtain that
  \begin{displaymath}
    f(s) \leq C_0 + \int_{\tilde{t}_1}^s f(\tau)g(\tau)\,d\tau \quad
    \forall\tilde{t}_1<s\leq T\,. 
  \end{displaymath}
One can now apply Gronwall's inequality (Theorem III.1.1 in
\cite{ha02}) to obtain that
\begin{displaymath}
    f(s) \leq C_0 \exp \Big\{  \int_{\tilde{t}_1}^s g(\tau)\,d\tau \Big\} \leq
    C_0 \exp \Big\{  \int_{\tilde{t}_1}^T g (\tau)\,d\tau \Big\} \,,
\end{displaymath}
for all $\tilde{t}_1<s\leq T$. By
taking the supremum over $s\in(\tilde{t}_1,T]$, we obtain:
\begin{displaymath}
   \|\psi\|^2_{L^\infty(\tilde{t}_1,T;H^1(\Omega,\C))} \leq  C \|\psi(\tilde{t}_1,\cdot)\|_{1,2}^2 \exp
   \Big\{ C\int_{\tilde{t}_1}^T(1+\|A_1(\tau,\cdot)\|_\infty^2) \,d\tau\Big\} \,,
\end{displaymath}
which together with \eqref{eq:31} and  \eqref{eq:159} yields (recall
that $t_0\geq t^*(\kappa,M)$),
\begin{equation}
\label{eq:156}
  \|\psi\|_{L^\infty(\tilde{t}_1,T;H^1(\Omega,\C))}  \leq C(\tilde \delta,\Omega,\kappa_0)\,  \|\psi(\tilde{t}_1,\cdot)\|_{1,2}\,.
\end{equation}

To find $\tilde t_1$ for which we can estimate $\|\psi(\tilde{t}_1,\cdot)\|_{1,2}$
we first observe that
\begin{displaymath}
  \|\nabla\psi(\cdot,t)\|_2 \leq \|\nabla_A\psi(t,\cdot)\|_2 + \|A\psi(t,\cdot)\|_2 \leq
  \|\nabla_A\psi(t, \cdot)\|_2 + \|A_n\|_\infty\|\psi(t,\cdot)\|_2+ \|A_1(t,\cdot)\|_2 \,.
\end{displaymath}
Integrating the above between $t_1$ and $t_2$ yields, with the aid of
\eqref{eq:22} that
\begin{displaymath}
    \|\psi\|_{L^2(t_1,t_2;H^1(\Omega,\C))} \leq C\big(
   \|\psi\|_{L^2(t_1,t_2;L^2(\Omega,\C))} +
   \|A_1\|_{L^2(t_1,t_2;L^2(\Omega,\C))} +  \|\psi(t_1, \cdot )\|_2 \big)
\end{displaymath}
With the aid of \eqref{eq:37} and \eqref{eq:16} we then obtain that
\begin{displaymath}
   \|\psi\|_{L^2(t_1,t_2;H^1(\Omega,\C))} \leq C\big( \|A_1(t_1, \cdot )\|_2 +  \|\psi(t_1,\cdot)\|_2 \big)
\end{displaymath}
We can, thus, conclude that there exists $\tilde{t}_1\in [t_1,t_2]$  such that
\begin{displaymath}
   \|\psi(\tilde{t}_1,\cdot)\|_{1,2} \leq C \tilde \delta^{-\frac 12} \big( \|A_1(t_1,\cdot)\|_2 +
   \|\psi(t_1,\cdot )\|_2 \big) \,.
\end{displaymath}
In conjunction with \eqref{eq:156}, \eqref{eq:16}, and \eqref{eq:37}
the above inequality yields the existence of some constant $C$ such
that: 
\begin{equation}
  \label{eq:35}
  \|\psi\|_{L^\infty(t_2,T;H^1(\Omega,\C))}  \leq C \big( \|A_1(t_0, \cdot )\|_2 +
   \|\psi(t_0, \cdot )\|_2 \big) \,.
\end{equation}

Let $t_3=t_2+\tilde{\delta}$. We continue by applying
\eqref{eq:148abstract} (in the case of the operator $\mathcal
L^{(1)}$) to the first line of \eqref{eq:18} (recalling \eqref{eq:4}
and \eqref{eq:151}) in $(t_2,T)$ to obtain, with the aid of
\eqref{eq:19}
\begin{equation}
\label{eq:38}
     \|A_1\|_{L^\infty(t_3,T;L^\infty(\Omega,\R^2))}^2
      \leq C \Big[\frac{1}{\kappa^2}\|\Im(\bar{\psi}\nabla_A\psi)\|_{L^2(t_2,T;H^1(\Omega,\R^2))}^2+\|A_1(t_2,\cdot )\|_2^2 \Big]\,.
\end{equation}
By \eqref{eq:2.2a} we have
\begin{equation}
\label{eq:39}
  \|\nabla\Im(\bar{\psi}\nabla_A\psi)(t,\cdot)\|_2 \leq \|\nabla\psi(t,\cdot)\|_4^2+
  \|D^2\psi(t,\cdot)\|_2 + 2\||A|\nabla\psi(t,\cdot)\|_2+\|\nabla A(t,\cdot)\|_2 \,,
\end{equation}
for all $t\geq t_2$, where $D^2\psi$ denotes the Hessian matrix of $\psi$. \\
By (\ref{eq:148abstracta}) (in the case of the Dirichlet-Neumann
Laplacian) applied to (\ref{eq:1}a) in $(t_2,T)$ we have:
\begin{multline}
\label{eq:41}
      \|\psi\|_{L^2(t_2,T;H^2(\Omega))}  \leq C(\Omega)
     \left[\|A\cdot\nabla\psi\|_{L^2(t_2,T;L^2(\Omega,\C))}+ \|\phi\psi\|_{L^2(t_2
    ,T;L^2(\Omega,\C))}\right. \\ \left.
    + \||A|^2\psi\|_{L^2(t_2
    ,T;L^2(\Omega,\C))}+\|\psi(t_2,\cdot)\|_{1,2} \right]\,.
\end{multline}
Using \eqref{eq:35} yields
\begin{multline*}
\|A\cdot\nabla\psi\|_{L^2(t_2,T;L^2(\Omega,\C))}^2
\leq\|\,|A|\,|\nabla\psi| \,\|_{L^2(t_2,T;L^2(\Omega,\C))}^2\leq \int_{t_2}^T
\|A(\tau, \cdot )\|_\infty^2 \|\nabla\psi(\tau, \cdot )\|_2^2\,d\tau \leq  \\ \|\psi\|_{L^\infty(t_2,T;H^1(\Omega,\C))}\int_{t_2}^T
\|A(\tau, \cdot )\|_\infty^2\,d\tau \leq 
C(\|\psi(t_0,\cdot)\|_2^2+  \|A_1(t_0, \cdot )\|_2^2)\int_{t_2}^T
\|A(\tau, \cdot )\|_\infty^2\,d\tau \,,
\end{multline*}
which together with \eqref{eq:31} and  \eqref{eq:16} yields
\begin{displaymath}
  \|\, |A|\,|\nabla\psi|\,\|_{L^2(t_2,T; L^2(\Omega,\C))}^2 \leq C(\|\psi(t_0,\cdot)\|_2^2+  \|A_1(t_0, \cdot )\|_2^2)[1+\|\psi(t_0,\cdot)\|_2+\|A_1(t_0,\cdot)\|_2]^2\,.
\end{displaymath}
Hence, by \eqref{eq:159} and \eqref{eq:2.2a}  we get
\begin{equation}
\label{eq:42}
 \||A|\nabla\psi\|_{L^2(t_2,T; L^2(\Omega,\C))} \leq C \big( \|A_1(t_0, \cdot )\|_2 +
   \|\psi(t_0, \cdot )\|_2 \big)  \,.
\end{equation}
By \eqref{eq:34} and \eqref{eq:35} we have that
\begin{displaymath}
   \| |A|^2\psi\|_{L^2(t_1,T;L^2(\Omega,\C))}^2 \leq C\, 
   \|\psi\|_{L^2(t_1,T;H^1(\Omega,\C))}^2 \leq 
   \hat  C \, \|\psi(t_0,\cdot)\|_2^2 \,.
\end{displaymath}
Combining the above with \eqref{eq:42}, \eqref{eq:41}, \eqref{eq:16},
\eqref{eq:37}, \eqref{eq:157}, and \eqref{eq:43} yields
\begin{equation}
\label{eq:158}
     \|\psi\|_{L^2(t_2,T;H^2(\Omega))}^2 \leq C \big( \|A_1(t_0, \cdot )\|_2^2 +
   \|\psi(t_0, \cdot )\|_2^2 \big)  \,. 
\end{equation}

For the first term on the right-hand-side of \eqref{eq:39} we have by \eqref{eq:158}
and Sobolev embeddings that
\begin{equation}
\label{eq:40}
   \||\nabla\psi|\|_{L^2(t_2,T,L^4(\Omega))}^2 \leq
   C\, \|\psi\|_{L^2(t_2,T;H^2(\Omega))}^2\leq  \hat{C}\, \big( \|A_1(t_0, \cdot)\|_2^2 +
   \|\psi(t_0, \cdot )\|_2^2 \big) \,.
\end{equation}
The second term on the right-hand-side of \eqref{eq:39} can similarly
be estimated
\begin{displaymath}
   \|D^2\psi\|_{L^2(t_2,T,L^2(\Omega))}^2 \leq  C \big( \|A_1(t_0, \cdot )\|_2^2 +
   \|\psi(t_0, \cdot) \|_2^2 \big) \,.
\end{displaymath}
Combining the above, \eqref{eq:40}, \eqref{eq:42}, \eqref{eq:31}, \eqref{eq:159}, and
\eqref{eq:39} yields
\begin{displaymath}
  \|\Im(\bar{\psi}\nabla_A\psi)\|_{L^2(t_2,T;H^1(\Omega,\R^2))}\leq C \big( \|A_1(t_0, \cdot )\|_2 +
   \|\psi(t_0, \cdot )\|_2 \big) 
  \,. 
\end{displaymath}
Substituting the above into \eqref{eq:38} yields \eqref{eq:30} for $\delta=3\tilde{\delta}$.
\end{proof}

\subsection{Decay estimate for $\phi_1$ and asymptotic contraction}~\\  
We conclude this section by establishing an exponential rate of decay
for $\phi_1$. 
\begin{proposition}
\label{prop:3.4}
Under the same assumptions of Theorem \ref{thm:3.1} and Proposition
\ref{lemma3.1}, there exist $\kappa_0$ and $C(\kappa)$ such that, for $\kappa \geq
\kappa_0$ and  $t\geq t^*(\kappa,M)+1$, where $t^*(\kappa,M)$ is given by
\eqref{eq:145}, we have 
  \begin{equation}
    \label{eq:154}
\|\phi(t,\cdot) -h\phi_n(\cdot) \|_2 \leq C(\kappa) \, e^{-\lambda_mt} \,,
  \end{equation}
and $[t^*(\kappa,M)+1,+\infty)\ni  t\mapsto \|\psi (t,\cdot) \|_2$ is monotone decreasing.
\end{proposition}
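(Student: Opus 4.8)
The two assertions are proved separately; the decay of $\phi_1:=\phi-h\phi_n$ follows essentially from Theorem~\ref{thm:3.1} and Proposition~\ref{lemma3.1}, while the monotonicity comes from the energy inequality \eqref{eq:22} once $\|A_1(t,\cdot)\|_\infty$ is known to be small. For \eqref{eq:154} the plan is to promote the time-averaged control of $\nabla_A\psi$ given by \eqref{eq:22} to a pointwise-in-time bound via the smoothing estimate \eqref{eq:35} of Proposition~\ref{lemma3.1}. Since $\int_\Omega\phi_1(t,x)\,dx=0$ by \eqref{eq:17a}, Poincar\'e's inequality gives $\|\phi_1(t,\cdot)\|_2\le C(\Omega)\|\nabla\phi_1(t,\cdot)\|_2$, and together with \eqref{eq:19} and \eqref{eq:2.2a} this yields $\|\phi_1(t,\cdot)\|_2\le(C(\Omega)/\sigma)\|\nabla_A\psi(t,\cdot)\|_2$, so it remains to bound $\|\nabla_A\psi(t,\cdot)\|_2$. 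I would write $\nabla_A\psi=\nabla\psi-i(hA_n+A_1)\psi$, use $A_n\in L^\infty(\Omega)$ (Proposition~\ref{prop2.1} plus Sobolev embedding) and $\|A_1(t,\cdot)\|_\infty\le C(\Omega,c,\kappa_0)\kappa^{-2}$ for $t\ge t^*(\kappa,M)+2\delta$ (from Remark~\ref{remsupp} and \eqref{eq:30}), thereby reducing matters to estimating $\|\psi(t,\cdot)\|_{1,2}$; then, for $t\ge t^*(\kappa,M)+1$, I apply \eqref{eq:35} at $t$ with a suitable $t_0\ge t^*(\kappa,M)$, $t-t_0<1$, to get $\|\psi(t,\cdot)\|_{1,2}\le C\big(\|A_1(t_0,\cdot)\|_2+\|\psi(t_0,\cdot)\|_2\big)$, and feed in the exponential decay \eqref{eq:15} of Theorem~\ref{thm:3.1}. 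Collecting the estimates gives $\|\nabla_A\psi(t,\cdot)\|_2\le C(\kappa)e^{-\lambda_mt}$, hence \eqref{eq:154}.

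For the monotonicity I would start from \eqref{eq:22}, that is $\tfrac12\tfrac{d}{dt}\|\psi(t,\cdot)\|_2^2\le-\|\nabla_A\psi(t,\cdot)\|_2^2+\|\psi(t,\cdot)\|_2^2$, so it suffices to show $\|\nabla_A\psi(t,\cdot)\|_2\ge\|\psi(t,\cdot)\|_2$ for $t\ge t^*(\kappa,M)+1$. Decomposing $\nabla_A\psi=\nabla_{hA_n}\psi-iA_1\psi$, the definition \eqref{eq:11} of $\mu(h)$ gives $\|\nabla_{hA_n}\psi\|_2\ge\mu(h)^{1/2}\|\psi\|_2$, and $\|A_1\psi\|_2\le\|A_1(t,\cdot)\|_\infty\|\psi\|_2$; hence, by the triangle inequality, $\|\nabla_A\psi(t,\cdot)\|_2\ge\big(\mu(h)^{1/2}-\|A_1(t,\cdot)\|_\infty\big)\|\psi(t,\cdot)\|_2$. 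By Remark~\ref{remsupp} and \eqref{eq:30}, $\|A_1(t,\cdot)\|_\infty\le C(\Omega,c,\kappa_0)\kappa^{-2}$ for $t\ge t^*(\kappa,M)+2\delta$, while \eqref{eq:14} forces $\mu(h)^{1/2}-1$ to be bounded below by a constant multiple of $\gamma\kappa^{-2}$; so, choosing $\delta\in(0,1)$ with $2\delta\le1$ and $\kappa_0$ large enough, we obtain $\|A_1(t,\cdot)\|_\infty\le\mu(h)^{1/2}-1$, whence $\|\nabla_A\psi(t,\cdot)\|_2\ge\|\psi(t,\cdot)\|_2$ and $\tfrac{d}{dt}\|\psi(t,\cdot)\|_2^2\le0$ on $[t^*(\kappa,M)+1,\infty)$; since $\|\psi\|_2\ge0$, the map $t\mapsto\|\psi(t,\cdot)\|_2$ is non-increasing there.

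The delicate point is this last comparison: $\|A_1(t,\cdot)\|_\infty$ and $\mu(h)^{1/2}-1$ are both of order $\kappa^{-2}$, so the inequality $\|A_1(t,\cdot)\|_\infty\le\mu(h)^{1/2}-1$ is not automatic for large $\kappa$ and forces careful bookkeeping of the numerical constants --- concretely, one must require that the threshold constant $\gamma$ in \eqref{eq:14} dominate the constant produced by \eqref{eq:30} and \eqref{eq:159}. The remaining ingredients (Poincar\'e's inequality, Sobolev embeddings, the decay bound \eqref{eq:15}, and the parabolic smoothing estimates of Section~\ref{section2}) enter only as routine steps.
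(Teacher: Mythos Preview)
Your approach is essentially correct and closely parallels the paper's proof. For \eqref{eq:154} the paper does exactly what you outline: it combines \eqref{eq:19}, \eqref{eq:2.2a}, the smoothing estimates \eqref{eq:35} and \eqref{eq:30}, and Poincar\'e's inequality to bound $\|\nabla\phi_1(t,\cdot)\|_2$ by quantities of the form $\|A_1(t-\delta,\cdot)\|_2+\|\psi(t-\delta,\cdot)\|_2$, and then invokes the exponential decay \eqref{eq:15}. For the monotonicity the paper takes a marginally different tactical route: rather than your triangle-inequality bound $\|\nabla_A\psi\|_2\ge(\mu^{1/2}-\|A_1\|_\infty)\|\psi\|_2$, it uses \eqref{eq:23} directly with $\|A_1\psi\|_2\le\|A_1\|_\infty\|\psi\|_2$ and the choice $\epsilon=\tfrac{\mu-1}{2\mu}$, arriving at the requirement $\|A_1\|_\infty^2<\tfrac{(\mu-1)^2}{4\mu}$. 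Both routes reduce to the same qualitative condition and both succeed.

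Your final paragraph, however, manufactures a difficulty that is not there. You write that $\mu(h)^{1/2}-1$ and $\|A_1(t,\cdot)\|_\infty$ are ``both of order $\kappa^{-2}$'' and that one must therefore track numerical constants. This conflates the \emph{lower bound} $1+\gamma\kappa^{-2}$ in \eqref{eq:14} with the actual value of $\mu(h)$. In the setting of Theorem~\ref{thm:3.1} and Proposition~\ref{prop:3.4}, $\mu(h)$ depends only on $h$, $\Omega$, and $A_n$ --- not on $\kappa$ --- and \eqref{eq:14} forces $\mu(h)>1$. Hence $\mu(h)^{1/2}-1$ is a \emph{fixed} positive number, while $\|A_1(t,\cdot)\|_\infty\le C(\Omega,c,\kappa_0)\kappa^{-2}\to0$. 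The required inequality $\|A_1(t,\cdot)\|_\infty\le\mu(h)^{1/2}-1$ therefore holds for all $\kappa\ge\kappa_0$ with $\kappa_0$ depending on $\mu(h)$, exactly as the proposition allows; no bookkeeping of the constant $\gamma$ against the constant in \eqref{eq:30} is needed. The paper's proof likewise simply invokes ``sufficiently large $\kappa_0$'' at this step.
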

\begin{proof}
 Recalling \eqref{eq:19},  we use \eqref{eq:2.2a}, \eqref{eq:35}, and
  \eqref{eq:30} to obtain that, for any $0<\delta<1$, there exists 
  $C(\Omega,\delta)>0$  such that
\begin{multline}
\label{eq:86}
  \|\nabla\phi_1(t,\cdot) \|_2 \\
  \quad \leq C(\Omega,\delta)\big[ \big(\|A_n\|_\infty +\|A_1(t-\delta,\cdot )\|_2 
  +\|\psi(t-\delta,\cdot )\|_2\big)\|\psi(t,\cdot )\|_2 +\qquad \qquad  \\ \|A_1(t-\delta,\cdot)\|_2
  +\|\psi(t-\delta,\cdot )\|_2 \big]  \,. 
  \end{multline}
Using (\ref{eq:15}) for $\psi$ and $A_1$ together with Poincar\'e's
inequality (recall that $\int_\Omega\phi_1(t,x) \,dx=0$) completes the proof of (\ref{eq:154}). \\

We now show that 
$\psi(t,\cdot)$ becomes decreasing for 
sufficiently large $\kappa$ and\break  $t \geq t^*(\kappa,M)+1$. To this end we combine (\ref{eq:15}),
and \eqref{eq:23} to obtain that for all $\epsilon\in
  (0,1)$
\begin{displaymath}
   \frac{1}{2}  \frac{d\|\psi\|_2^2}{dt}  + [\mu(h) (1-\epsilon)-1] \|\psi(t,\cdot)\|_2^2 \leq  \frac 1 \epsilon 
  \, \|A_1(t,\cdot)\|_\infty^2\, \|\psi\|_2^2 \,.
\end{displaymath}
Using the choice of $\epsilon$ 
\begin{equation}
\label{eq:27aa}
  \epsilon= \frac{1}{2}\frac{\mu(h)-1}{\mu(h)}\,,
\end{equation} 
  then yields
\begin{displaymath}
  \frac{1}{2}  \frac{d\|\psi (t,\cdot)\|_2^2}{dt}  + \Big[\frac{\mu-1}{2}-
  \frac{2\mu}{\mu-1}\|A_1(t,\cdot)\|_\infty^2 \Big] \|\psi\|_2^2 \leq0 \,.
\end{displaymath}
Hence, for sufficiently large $\kappa_0$, we get from \eqref{eq:30} and
\eqref{eq:15} that
\begin{displaymath}
  \frac{d\|\psi(t,\cdot)\|_2^2}{dt}<0 \,.
\end{displaymath}
\end{proof}

\section{Stable semigroup - coherence length scale}
\label{section4}
Let
\begin{equation}
  \label{eq:44}
\LL_h = -\nabla_{hA_n}^2 + ih\phi_n \,,
\end{equation}
where $(A_n,\phi_n)$ are  defined by \eqref{eq:8}, and $h$ denotes the
external current's intensity.  We define $D(\LL_h)$ as
\begin{displaymath}
  D(\LL_h) = \{ u\in H^2(\Omega) \, | \; u|_{\partial\Omega_c}=0 \; ; \;
  \nabla u\cdot\nu|_{\partial\Omega_i}=0\,\} \,.
\end{displaymath}
Let $Q_L$ denote the domain of the sesquilinear form $q_L$ associated
with $\LL_h\,$, that is $$Q_L:= \{ u\in H^1(\Omega)\,,\, u|_{\partial\Omega_c}=0
\}\,.$$ Then the domain of the operator $\widehat \LL_h$, which can be
obtained using the Lax-Milgram's procedure in \cite{aletal13}, is the
space:
$$
D(\widehat \LL_h): = \{u\in Q_L\,,\,
\nabla u\cdot\nu|_{\partial\Omega_i}=0\,,\,( -\nabla_{hA_n}^2 + ih\phi_n) u \in L^2\}\,.
$$
To show that $D( \widehat \LL_h) = D(\LL_h)$, we use \eqref{eq:173} to
prove $H^2$ regularity for every $u\in D(\widehat \LL_h)$.  Note that
since $A_n\cdot\nu=0$ on $\partial\Omega_i$, $u$ satisfies a Neumann 
boundary condition on $\partial\Omega_i$.

We seek a lower bound for the critical value $J^d$, so that the normal
state is globally stable whenever $\|J\|_{L^\infty(\partial\Omega_c)}>J^d$ . Let
$c$ be defined by \eqref{eq:9}, 
$$
J_r(x)=\kappa^2J_0(x)\,,
$$
and $h_r$ be fixed. We assume that $c$ is fixed and that $J_0$ is independent of
$\kappa$.  Hence (see (\ref{eq:45}e))
$\LL_h$ is independent of $\kappa$ as well.  We begin by the following
statement on the steady-state version of \eqref{eq:1}.
\begin{proposition}
\label{prop:stationary}
Let $(\psi,A,\phi)$ denote a steady-state solution of \eqref{eq:1}, i.e.
\begin{subequations}  
\label{eq:45}
\begin{alignat}{2}
- & \nabla_A^{2} \psi + i\phi\psi =  \psi\left( 1 - |\psi|^{2} \right)\,,  & \quad \text{ in } \Omega\,, \\
- & \curl^2A + \frac{1}{c}\nabla\phi\, =  \frac{1}{\kappa^2}\Im(\bar\psi\nabla_A\psi) & \quad \text{ in }  \Omega\,, \\
  &\psi=0\,, &\quad \text{ on }  \partial\Omega_c \,,\\
 &\nabla_A\psi\cdot\nu=0\,, & \quad \text{ on }  \partial\Omega_i\,, \\
 & \frac{\partial\phi}{\partial\nu} = -hcJ_0(x)\,, &\quad \text{ on } \partial\Omega_c\,, \\
&\frac{\partial\phi}{\partial\nu}=0\,,  &\quad \text{ on }  \partial\Omega_i \,, \\[1.2ex]
&\dashint_{\partial\Omega}\curl A (x) \, ds = h\cdot h_r \,.&
\end{alignat}
\end{subequations}
There exists $\kappa_0>0$ and $C_1>0$ such that if 
\begin{equation}
  \label{eq:46}
\|\LL^{-1}_h\| <  1- \frac{C_1}{\kappa^2} \,,
\end{equation}
for some $\kappa>\kappa_0\,$, $(0,hA_n,h\phi_n)$ is the unique solution of
\eqref{eq:45} satisfying \eqref{eq:4}.
\end{proposition}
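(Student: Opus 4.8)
The plan is to show that \eqref{eq:46} forces $\psi\equiv0$; everything else is a soft consequence. Once $\psi\equiv0$ is known, set $A_1=A-hA_n$, $\phi_1=\phi-h\phi_n$ as in \eqref{eq:17}: then (\ref{eq:45}b) reduces to $-\curl^2A_1+\tfrac1c\nabla\phi_1=0$ in $\Omega$, while (\ref{eq:45}e)--(\ref{eq:45}g), the equivalence of Section~\ref{section2}, and \eqref{eq:8} give $\curl A_1=0$ on $\partial\Omega$ (as in \eqref{eq:151}), $\partial\phi_1/\partial\nu=0$ on $\partial\Omega$, and $\dashint_{\partial\Omega}\curl A_1\,ds=0$; since moreover $\Div A_1=0$, $A_1\cdot\nu|_{\partial\Omega}=0$, and $\int_\Omega\phi_1\,dx=0$ by \eqref{eq:5} and \eqref{eq:10}, testing against $A_1$ and against $\nabla\phi_1$ and invoking Proposition~\ref{decprop}, exactly as in the treatment of $V_n$ in Section~\ref{section2}, will give $A_1\equiv0$, $\phi_1\equiv0$, i.e. $(\psi,A,\phi)=(0,hA_n,h\phi_n)$.

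To prove $\psi\equiv0$ I would fix a steady-state solution --- which, by elliptic regularity, may be taken to be strong, so that in particular $\psi\in D(\LL_h)$ --- and rewrite (\ref{eq:45}a), using $\nabla_A=\nabla_{hA_n}-iA_1$ and expanding $-\nabla_A^2$, in the perturbative form
\begin{equation}
\LL_h\psi=\psi-|\psi|^2\psi-2iA_1\cdot\nabla_{hA_n}\psi-|A_1|^2\psi-i\phi_1\psi=:g\,,
\end{equation}
with $\LL_h$ the operator of \eqref{eq:44}. Since $\Re\langle v,\LL_h v\rangle=\|\nabla_{hA_n}v\|_2^2\ge\mu(h)\|v\|_2^2>0$ on $D(\LL_h)$ by \eqref{eq:11} and \eqref{minmu}, $\LL_h$ is invertible, so this identity reads $\psi=\LL_h^{-1}g$ and hence
\begin{equation}\label{eq:prop-stat-key}
\|\psi\|_2\le\|\LL_h^{-1}\|\,\Big(\|\psi-|\psi|^2\psi\|_2+2\|A_1\cdot\nabla_{hA_n}\psi\|_2+\||A_1|^2\psi\|_2+\|\phi_1\psi\|_2\Big)\,.
\end{equation}
The scheme is then to bound the first bracketed term by $\|\psi\|_2$ and the remaining three by $C(\Omega,c)\,\kappa^{-2}\|\psi\|_2$, so that \eqref{eq:46}, with $C_1$ chosen below, closes the estimate.

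The a priori inputs I would use are: (i) $\|\psi\|_\infty\le1$ by the maximum principle applied to $|\psi|^2$ (the term $i\phi|\psi|^2$ drops out of the real part, exactly as in the derivation of \eqref{eq:2.2a}), which gives $\bigl|1-|\psi|^2\bigr|\le1$ and thus $\|\psi-|\psi|^2\psi\|_2\le\|\psi\|_2$; (ii) testing (\ref{eq:45}a) with $\bar\psi$ and taking the real part, $\|\nabla_A\psi\|_2^2=\|\psi\|_2^2-\|\psi\|_4^4\le\|\psi\|_2^2$; (iii) testing (\ref{eq:45}b) with $A_1$ (the boundary term vanishes since $\curl A_1=0$ on $\partial\Omega$) and using \eqref{eq:13} and $\|\psi\|_\infty\le1$, one gets $\lambda\|A_1\|_2^2\le\|\curl A_1\|_2^2\le\kappa^{-2}\|A_1\|_2\|\nabla_A\psi\|_2$, hence $\|A_1\|_2+\|\curl A_1\|_2\le C\kappa^{-2}\|\psi\|_2$; (iv) testing (\ref{eq:45}b) with $\nabla\phi_1$ (the $\curl^2$ term drops), $c^{-1}\|\nabla\phi_1\|_2^2\le\kappa^{-2}\|\nabla\phi_1\|_2\|\nabla_A\psi\|_2$, so by Poincar\'e's inequality $\|\phi_1\|_2\le C\kappa^{-2}\|\psi\|_2$; and (v) since $A_1$ is divergence free with $A_1\cdot\nu=0$ and $\curl A_1=0$ on $\partial\Omega$, and by (\ref{eq:45}b) $\curl^2A_1\in L^2(\Omega)$ with norm $\le C\kappa^{-2}\|\psi\|_2$, the field $A_1$ lies in the domain of the operator $\mathcal L^{(1)}$ of \eqref{equationpourqa}, whence the $H^2$-regularity $D(\mathcal L^{(1)})\subset H^2(\Omega)$ (Proposition~\ref{propoA4}) and the two-dimensional embedding $H^2\hookrightarrow L^\infty$ give $\|A_1\|_\infty\le C\|A_1\|_{H^2}\le C\kappa^{-2}\|\psi\|_2$. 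Combining (i)--(v) with $\|\nabla_{hA_n}\psi\|_2\le\|\nabla_A\psi\|_2+\|A_1\|_\infty\|\psi\|_2\le C\|\psi\|_2$ and $\|\psi\|_2\le|\Omega|^{1/2}$ bounds the last three terms of \eqref{eq:prop-stat-key} by $C(\Omega,c)\,\kappa^{-2}\|\psi\|_2$; so \eqref{eq:prop-stat-key} becomes $\|\psi\|_2\le\|\LL_h^{-1}\|(1+C_2\kappa^{-2})\|\psi\|_2$ with $C_2=C_2(\Omega,c)$, and choosing $C_1>C_2$ and $\kappa_0$ with $1-C_1\kappa_0^{-2}>0$, \eqref{eq:46} forces $\|\LL_h^{-1}\|(1+C_2\kappa^{-2})\le(1-C_1\kappa^{-2})(1+C_2\kappa^{-2})<1$ for all $\kappa>\kappa_0$, hence $\psi\equiv0$.

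I expect the main obstacle to be item (v): getting $\|A_1\|_\infty\lesssim\kappa^{-2}\|\psi\|_2$ uniformly in $\kappa$. This is where the corner geometry bites, since $A_n$ is only $W^{2,p}$ with $p<2$, so $H^2$-regularity cannot be read off from $A$ directly; one must instead use that the \emph{difference} $A_1$ satisfies the homogeneous condition $\curl A_1=0$ on $\partial\Omega$, which places it in $D(\mathcal L^{(1)})$ where the right-angle-corner assumption (R1) does yield $H^2$-regularity, and one must also invoke the strict positivity $\lambda=\lambda^D>0$ of Proposition~\ref{prop:dirichlet} to pass from $\|\curl A_1\|_2$ to $\|A_1\|_2$. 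It is precisely this $O(\kappa^{-2})$ slack in the coupling terms --- rather than mere invertibility of $\LL_h$ --- that accounts for the gap $C_1/\kappa^2$ in \eqref{eq:46} and for the restriction to large $\kappa$; the time-dependent and $\nu>0$ strengthenings of Theorem~\ref{thm:1.2} will rest on the same mechanism, with the resolvent on a vertical line playing the role of $\LL_h^{-1}$.
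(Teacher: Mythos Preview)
Your proposal is correct and follows essentially the same strategy as the paper's proof: rewrite (\ref{eq:45}a) as $\LL_h\psi=g$, show $\|g\|_2\le(1+C\kappa^{-2})\|\psi\|_2$ via the $L^\infty$ bound $\|A_1\|_\infty\le C\kappa^{-2}\|\psi\|_2$ and the energy estimates (ii) and (iv), and close with \eqref{eq:46}. The only notable difference is in step (v): the paper obtains $\|A_1\|_\infty$ by writing $A_1=-\nabla_\perp\Phi$ with $\Phi$ solving the Dirichlet problem $\Delta\Phi=\curl A_1$, then invoking $W^{2,p}$ regularity (Proposition~\ref{propoA1}) and the embedding $W^{1,p}\hookrightarrow L^\infty$ for $p>2$, whereas you invoke the $H^2$ regularity of $\mathcal L^{(1)}$ (Proposition~\ref{propoA4}) directly; since the proof of Proposition~\ref{propoA4} itself proceeds via the stream function decoupling, the two routes are equivalent, and your item (iii) using $\lambda=\lambda^D>0$ is not strictly needed once Proposition~\ref{propoA4} is on the table.
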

\begin{proof}~\\
  Set $A_1$ and $\phi_1$ to be as in \eqref{eq:17}. Then,
  \begin{displaymath}
 \begin{aligned}
&- \curl^2A_1  =  \frac{1}{\kappa^2}\Im(\bar\psi\nabla_A\psi) -  \frac{1}{c}\nabla\phi_1& \quad \text{ in }  \Omega\,, \\
& \frac{\partial\phi_1}{\partial\nu}=0 &\quad \text{ on } \partial\Omega\,, \\
&\curl A_1=0 & \mbox{ on } \pa \Omega\,.
\end{aligned}
  \end{displaymath}
The last boundary condition is a consequence of   (\ref{eq:45}c-f) and \eqref{eq:8}.\\
Since $\phi_1$ satisfies \eqref{eq:19}, and since $\|\psi\|_\infty\leq 1$ by
\eqref{eq:2.2a}, it follows that
\begin{displaymath}
  \|\curl^2A_1\|_2 \leq \frac{2}{\kappa^2} \, \|\nabla_A\psi\|_2 \,.
\end{displaymath}
Observing that $\curl A_1$ vanishes on the boundary, we obtain that
\begin{equation}
\label{eq:48}
  \| \curl A_1\|_{H^1(\Omega)} \leq  \frac{C_\Omega }{\kappa^2}\,  \|\nabla_A\psi\|_2 \,,
\end{equation}
which leads, via Sobolev's injection, to
\begin{equation}
\label{eq:48a}
  \| \curl A_1\|_{L^p(\Omega)} \leq  \frac{C_\Omega^p }{\kappa^2} \, \|\nabla_A\psi\|_2 \,,
\end{equation}
for $p\in [2,+\infty)$.

Let $\Phi$ denote the solution in $H^1(\Omega)$ of
\begin{equation}
  \label{eq:49}
\begin{cases}
\Delta\Phi=\curl A_1 & \text{in } \Omega\,, \\
\Phi=0 & \text{on } \partial\Omega  \,.
\end{cases}
\end{equation}
We now use \eqref{eq:173D} and \eqref{eq:48a} 
to obtain that $\Phi \in W^{2,p}(\Omega)$ and 
\begin{equation}
  \label{eq:50}
\|\Phi\|_{2,p} \leq \frac{C(p,\Omega)}{\kappa^2} \|\nabla_A\psi\|_2 \,,
\end{equation}
for all $2\leq p<\infty$.

We next show  that $A_1=-\nabla_\perp\Phi\,$. 
We set first $u=A_1+\nabla_\perp\Phi$. Since $A_1\cdot\nu=0$ on $\partial\Omega$, and since
by the Dirichlet condition $\Phi$ satisfies we additionally have  
$\nabla_\perp\Phi\cdot \nu=0$ on $\partial\Omega$,  it follows that $u\in H^1(\Omega,\R^2)$ satisfies
\begin{displaymath}
  \begin{cases}
   \curl u = 0 & \text{in } \Omega \\
 \Div u = 0 & \text{in } \Omega \\
 u\cdot\nu = 0 & \text{on } \partial\Omega \,.   
  \end{cases}
\end{displaymath}
Since $u\equiv0$ is the unique solution to the above problem (see
Proposition \ref{decprop}), we obtain that $A_1=-\nabla_\perp\Phi\,$.

 From the embedding of $W^{1,p}(\Omega)$ in $L^\infty(\Omega)$ for any $p\in (2,+\infty)$,  we then get
\begin{equation}
  \label{eq:51}
 \|A_1\|_\infty\leq C(\Omega,p)\, \|A_1\|_{1,p} \leq  \widehat C(\Omega,p) \, \|\Phi\|_{2,p}\leq \frac{\widetilde C}{\kappa^2} \,  \|\nabla_A\psi\|_2 \,.
\end{equation}
Multiplying (\ref{eq:45}a) by $\bar{\psi}$ and integrating by parts we obtain
\begin{equation}
\label{eq:52}
    \|\nabla_A\psi\|_2^2 = \|\psi\|_2^2 - \|\psi\|_4^4 \;  \leq  \|\psi\|_2^2 \,. 
\end{equation}
Substituting the above into \eqref{eq:51} yields
\begin{equation}
\label{eq:53}
  \|A_1\|_\infty \leq \frac{C}{\kappa^2}  \|\psi\|_2 \,.
\end{equation}
We next write (\ref{eq:45}a) in the form
\begin{displaymath}
  \LL_h \psi = 2iA_1\cdot\nabla_A\psi - |A_1|^2 \psi -i\phi_1\psi + \psi\left( 1 - |\psi|^{2} \right) \,.
\end{displaymath}
By \eqref{eq:2.2a},   \eqref{eq:19}, and \eqref{eq:52}, we have
\begin{displaymath}
  \|\phi_1\psi\|_2 \leq \|\phi_1\|_2 \leq \frac{C}{\kappa^2}\, \|\nabla_A\psi\|_2 \leq \frac{C}{\kappa^2}\, \|\psi\|_2 \,.
  \end{displaymath}
Thus, we obtain that
\begin{displaymath}
  \|\LL_h \psi\|_2 \leq \Big(\frac{C_0}{\kappa^2}+1\Big) \|\psi\|_2 \,.
\end{displaymath}
The proposition now easily follows since, if we choose $C_1=C_0$ in
\eqref{eq:46}, we have that 
\begin{displaymath}
  \|\psi\|_2 \leq \|\LL_h^{-1}\| \|\LL_h \psi\|_2 \leq \Big(1-
  \frac{C_0^2}{\kappa^4}\Big)\|\psi\|_2 \,.
\end{displaymath}
From the above we readily conclude $\psi\equiv0\,$. One can then show that $A=hA_n$ and $\phi=h\phi_n$ from
\eqref{eq:8} and the discussion which follows.
\end{proof}

We now move to consider the time-dependent problem \eqref{eq:1}.
\begin{proof}[Proof of Theorem \ref{thm:1.2}]~\\
  We begin the proof by defining some useful entities. We first
  rewrite (\ref{eq:1}a) in the form
 \begin{equation}
\label{eq:56}
   \frac{\partial\psi}{\partial t} + \LL_h\psi = 2iA_1\cdot\nabla_A\psi - |A_1|^2\psi+ i\phi_1\psi 
   +(1-|\psi|^2)\psi  \,.
 \end{equation}
Set then
\begin{displaymath}
  F=2iA_1\cdot\nabla_A\psi - |A_1|^2\psi+ i\phi_1\psi 
   +(1-|\psi|^2)\psi  \,.
\end{displaymath}
We next define the Laplace transform. Let
$u\in e^{\omega\cdot}L^2(\R_+\,;\,L^2(\Omega,\C))$. Then
\begin{displaymath}
  \widehat{u}(s,x) = \int_0^\infty e^{st}u(t, x)\,dt
\end{displaymath}
denotes the Laplace transform of $u$, which is well defined whenever
$\Re s\leq\omega$. Denote then
\begin{equation}
\label{eq:57}
  \Gamma_\omega = \{ s\in\C\,|\, \Re s=\omega\}\,.
\end{equation}
It is well known that $\widehat{u}\in L^2(\Gamma_\omega\,;\,L^2(\Omega,\C))$. Finally we
define the cutoff function 
\begin{equation}
\label{eq:58}
  \chi_{T,\epsilon}=
  \begin{cases}
    0 & t<\frac{1}{\epsilon} \\
    1 & \frac{1}{\epsilon}<t<T \\
    e^{-\epsilon(t-T)} & t\geq T \,,
  \end{cases}
\end{equation}
for some $T>1/\epsilon$.

{\em Step 1:} Let $t^*(\kappa,M) $ be given by \eqref{eq:145}. We
prove that there exists $C>0$ such that for sufficiently large $\kappa$
and $\epsilon$ satisfying
\begin{equation}\label{condepsilon}
 \frac 1 \epsilon > t^*(\kappa,M)\,, 
 \end{equation}
 we have
\begin{equation}
  \label{eq:59}
\|\widehat{\chi_{T,\epsilon}F}\|_{L^2(\Gamma_0\,;\,L^2(\Omega,\C))}^2 \leq \Big(1+\frac{C}{\kappa^2}
\Big)\big[\|\widehat{\chi_{T,\epsilon}\psi}\|_{L^2(\Gamma_0\,;\,L^2(\Omega,\C))}^2 +
\|\psi(\epsilon^{-1},\cdot )\|_2^2 \big]\,.
\end{equation}
By Parseval's identity we have
\begin{displaymath}
  \frac{1}{2\pi}\|\widehat{\chi_{T,\epsilon}F}\|_{L^2(\Gamma_0\,;\,L^2(\Omega,\C))}^2 =
  \|\chi_{T,\epsilon}F\|_{L^2(\R_+\,;\,L^2(\Omega,\C))}^2 \,.
\end{displaymath}
We, therefore, attempt to estimate the norm of the  right hand side. 
 Let then 
 \begin{displaymath}
   F = F_1+F_2\,,
 \end{displaymath}
 where
 \begin{displaymath}
   F_1 = (1-|\psi|^2-|A_1|^2)\psi 
 \end{displaymath}
 and
 \begin{displaymath}
   F_2 = i\phi_1\psi +2iA_1\cdot\nabla_A\psi \,.
 \end{displaymath}
 To bound the norm of $\chi_{T,\epsilon}F_1$\,,  we recall \eqref{eq:159}  which holds for every $t\geq t^*(\kappa,M)$.
 Hence, for sufficiently large $\kappa$, we have that $
 \|A_1(t,\cdot)\|_\infty\leq1$ for all $t>t^*$, from which we conclude that
 \begin{displaymath}
  |F_1(t,x)|\leq  |\psi (t,x)| \,.
 \end{displaymath}
We thus conclude  that there exists $C(\Omega,c)$ such that, for
   sufficiently large $\kappa$ large and for any $\epsilon$  satisfying \eqref{condepsilon},  we have 
\begin{equation}
\label{eq:61}
  \int_0^\infty \chi_{T,\epsilon}^2(t)  \|F_1(t,\cdot)\|_2^2\,dt \leq
  \Big(1+\frac{C(\Omega,c)}{\kappa^2}\Big) \int_0^\infty \chi_{T,\epsilon}^2(t)  \|\psi(t,\cdot)\|_2^2\,dt
\end{equation}

To bound the norm of $\chi_{T,\epsilon}F_2$\,,  we use \eqref{eq:159}, \eqref{eq:19}, and
Poincar\'e inequality, applied to $\phi_1$,  to obtain that for
sufficiently large $t$
\begin{displaymath}
   \||F_2(t,\cdot)\|_2 \leq \|\phi_1\|_2 +  \frac{C}{\kappa^2} \, \|\nabla_A\psi(t,\cdot)\|_2
   \leq \frac{\widehat C}{\kappa^2}\,  \|\nabla_A\psi(t,\cdot)\|_2 \,.
\end{displaymath}
Hence, for  $\epsilon$ satisfying \eqref{condepsilon}, we have
 \begin{displaymath}
   \int_0^\infty \chi_{T,\epsilon}^2(t)  \|F_2(t,\cdot)\|^2_2\,dt \leq  \frac{\widehat C^2 }{\kappa^4}  \int_0^\infty\chi_{T,\epsilon}^2 (t) 
 \|\nabla_A\psi(t,\cdot)\|^2_2\,dt \,.
 \end{displaymath}
 We next use \eqref{eq:22} to obtain that
 \begin{displaymath}
   \|\chi_{T,\epsilon}F_2\|^2_{L^2(\R_+;\,L^2(\Omega,\C))} \leq  \frac{C(\Omega,c)}{\kappa^4}  \int_{\frac 1 \epsilon}^\infty
  \chi_{T,\epsilon}^2(t)  \Big[\|\psi(t,\cdot)\|_2^2-\frac{1}{2}  \frac{d\|\psi(t,\cdot)\|_2^2}{dt}\Big]  \,dt \,.
 \end{displaymath}
 Integration by parts then yields, using the fact that $\chi_{T,\epsilon}^\prime (t) 
 \leq 0$ for all $t>\epsilon^{-1}$,
 \begin{displaymath}
     \|\chi_{T,\epsilon}F_2\|^2_{L^2(\R_+;\,L^2(\Omega,\C))} \leq
     \frac{C(\Omega,c)}{\kappa^4}  \Big[\int_0^\infty \chi_{T,\epsilon}^2(t)   \|\psi(t,\cdot)\|_2^2
     \,dt + \frac{1}{2}\|\psi(\epsilon^{-1}, \cdot )\|_2^2 \Big]\,.
 \end{displaymath}
 In conjunction with  \eqref{eq:61}, the above inequality readily
 yields \eqref{eq:59}.

{\em Step 2:} We now prove \eqref{eq:181}. Multiplying \eqref{eq:56} by
$\chi_{T,\epsilon}$ and then applying the Laplace transform with $s=\nu+i\gamma$ yields
\begin{displaymath}
  (\LL_h-\nu-i\gamma) \widehat{\chi_{T,\epsilon}\psi} -\widehat{\chi_{T,\epsilon}^\prime\psi} -
  e^{(\nu+i\gamma)/\epsilon}\psi (\epsilon^{-1},\cdot) 
    = \widehat{\chi_{T,\epsilon}F}\,.
\end{displaymath}
$\chi_{T,\epsilon}^\prime$ denotes here the extension to $[0,\infty)$ of the
derivative of $\chi_{T,\epsilon}$ in $ (\frac {1}{ \epsilon},+\infty ) $, which is
obtained by setting $\chi_{T,\epsilon}^\prime=0$ on $[0,\frac {1}{ \epsilon}]$. \\

 By
\eqref{eq:180} it follows that, for any $\gamma \in \mathbb R$,
 \begin{multline*}
   \| \widehat{\chi_{T,\epsilon}\psi} (\nu+i\gamma,\cdot)\|_2 \leq  \|(\LL_h-\nu-i\gamma)^{-1}\| \big[
     \|\widehat{\chi_{T,\epsilon}F}(\nu+i\gamma,\cdot)\|_2 + \|\widehat{\chi_{T,\epsilon}^\prime\psi}(\nu+i\gamma,\cdot)\|_2
     + e^{\nu/\epsilon}\|\psi(\epsilon^{-1},\cdot)\|_2\} \big] \leq\\ \quad \leq  \Big(1- \frac{C_1}{\kappa^2}\Big)\big[
     \|\widehat{\chi_{T,\epsilon}F}(\nu+i\gamma,\cdot)\|_2 + \|\widehat{\chi_{T,\epsilon}^\prime\psi}(\nu+i\gamma,\cdot)\|_2
     \big] +  e^{\nu/\epsilon}\|(\LL_h-\nu-i\gamma)^{-1}\|  \|\psi(\epsilon^{-1},\cdot )\|_2 \,,
 \end{multline*}
where the precise value of $C_1$ will be determined later. 
Hence, by Cauchy's inequality, we have for every $\delta>0$,
\begin{multline}
\label{eq:62}
  \| \widehat{\chi_{T,\epsilon}\psi} (\nu+i\gamma,\cdot)\|_2^2 \leq \Big(1- \frac{C_1}{\kappa^2}\Big)\Big[
    (1+2\delta)\|\widehat{\chi_{T,\epsilon}F}(\nu+i\gamma,\cdot)\|_2^2 + \frac{1}{\delta}\|\widehat{\chi_{T,\epsilon}^\prime\psi}(\nu+i\gamma,\cdot)\|_2^2
    \Big]\\  + \frac{e^{\nu/\epsilon}}{\delta}\|(\LL_h-\nu-i\gamma)^{-1}\|^2 \,\|\psi(\epsilon^{-1},\cdot)\|_2^2\,.
\end{multline}
Since $\phi_n\in H^2(\Omega)\subset L^\infty(\Omega)$,  it is easy to show from the identity 
$$
 \Im\langle u,\LL_hu\rangle = h \langle u,\phi_nu\rangle\,,
 $$
 that, for some positive $C$,
 \begin{displaymath}
  \|(\LL_h-\nu-i\gamma)^{-1}\|\leq \frac{C}{1+|\gamma|} \,.
\end{displaymath}
Consequently, we may integrate \eqref{eq:62} over $\Gamma_\nu$ to obtain
\begin{displaymath}
\begin{array}{l}
  \| \widehat{\chi_{T,\epsilon}\psi}\|_{L^2(\Gamma_\nu\,;\,L^2(\Omega,\C))}^2 \\
  \qquad \leq \Big(1- \frac{C_1}{\kappa^2}\Big)\Big[
    (1+2\delta)\|\widehat{\chi_{T,\epsilon}F}\|_{L^2(\Gamma_\nu\,;\,L^2(\Omega,\C))}^2\\
   \qquad\qquad +  \frac{1}{\delta}\|\widehat{\chi_{T,\epsilon}^\prime\psi}\|_{L^2(\Gamma_\nu\,;\,L^2(\Omega,\C))}^2
    \Big] + \frac{Ce^{\nu/\epsilon}}{\delta}\|\psi(\epsilon^{-1},\cdot )\|_2^2 \,.
    \end{array}
\end{displaymath}
 We next use Parseval's identity together with \eqref{eq:59} and the
 fact that 
 $$- \chi_{T,\epsilon}^\prime \leq \epsilon\chi_{T,\epsilon}\,,$$ to obtain that, for any $\delta >0$,
  \begin{multline*}
     \|e^{\nu t} \chi_{T,\epsilon}\psi\|_{L^2(\R_+\,;\,L^2(\Omega,\C))}^2\\ \leq  \Big(1-
     \frac{C_1}{\kappa^2}\Big) (1+2\delta)\Big(1+\frac{C}{\kappa^2}+
     C\frac{\epsilon^2}{\delta}
  \Big)\| e^{\nu t}\chi_{T,\epsilon}\psi\|_{L^2(\R_+\,;\,L^2(\Omega,\C))}^2 \\
    + \frac{Ce^{\nu/\epsilon}}{\delta}\|\psi(\epsilon^{-1},\cdot)\|_2^2 \,.
  \end{multline*}
Finally, we choose $C_1=8(C+1)$, $\delta=C/\kappa^2$, and $\epsilon=\min(1/\kappa^2,1/t^*)$, to
obtain that
\begin{displaymath}
  \| e^{\nu t}\chi_{T,\epsilon}\psi\|_{L^2(\R_+\,;\,L^2(\Omega,\C))}^2 \leq Ce^{\nu/\epsilon}\kappa^4\|\psi(\epsilon^{-1},\cdot)\|_2^2 \,.
\end{displaymath}
The theorem now easily follows by taking the limit $T\to\infty$.
\end{proof}

\begin{remark}
  It can be easily verified that \eqref{eq:46}, implies that any
  $\mu\in B(0,(1- C_1/\kappa^2)^{-1})$ is in the resolvent set of $\LL_h$. As
  $\langle u,\LL_hu\rangle\geq0$ for all $u\in D(\LL_h)$, we can thus conclude that
  $(-\infty, (1- C_1/\kappa^2)^{-1}\,)\cap\sigma(\LL_h)=\emptyset $. A similar conclusion
  can be  reached from the lower bound \eqref{eq:14} on $\mu(h)$  in
  Section \ref{section3}. Indeed, it can be easily verified that $\|\mathcal L_h^{-1}\| \leq
  1/\mu(h)$. 
  The formulation of \eqref{eq:180} is reminiscent of conditions
  appearing in the statement of the Gearhart-Pruss Theorem (see Theorem~1.11 (p. 302-304) in \cite{EnNa}) or more quantitatively
  in Helffer-Sj\"ostrand \cite{HeSj} (see Remarks 1.3 and 1.4 there).  
   \end{remark}

\section{Penetration depth scaling}\label{section5}
We begin by considering a  steady-state solution for domains scaled with
respect to the penetration depth. If one applies the transformation
\begin{equation}
\label{eq:141}
  x \to \frac{x}{\kappa} \quad ; \quad \phi\to\kappa\phi \,,
\end{equation}
the system \eqref{eq:45} becomes 
\begin{subequations}  
\label{eq:63}
\begin{alignat}{2}
- & \nabla_{\kappa A}^2\psi+i\kappa\phi \psi=\kappa^2(1-|\psi|^2)\psi & \quad \text{ in } \Omega \,,\\
- &\kappa^2\curl^2A+\sigma\nabla\phi =\kappa\,\Im(\bar\psi\nabla_{\kappa A}\psi) & \quad \text{ in }  \Omega\,, \\
  &\psi=0 &\quad \text{ on }  \partial\Omega_c\,, \\
 &\nabla_{\kappa A}\psi\cdot\nu=0 & \quad \text{ on }  \partial\Omega_i\,, \\
 & \frac{\partial\phi}{\partial\nu} = -h\frac{\kappa^4J_0(x)}{\sigma} &\quad \text{ on } \partial\Omega_c\,, \\
&\frac{\partial\phi}{\partial\nu}=0  &\quad \text{ on }  \partial\Omega_i \,, \\[1.2ex]
&\dashint_{\partial\Omega}\curl A \, ds = h\cdot h_r \,.&
\end{alignat}
\end{subequations}
We consider again fixed $c=\kappa^2/\sigma$ and $J_0$. The
normal state for this scaling is defined once again by
\eqref{eq:8}. We define the linear operator (with Dirichlet-Neumann conditions)
\begin{equation}
  \label{eq:64}
\Lg_\kappa = \LL_{\kappa^3 h}  :=-  \nabla_{\kappa^3hA_n}^2 + i \kappa^3h \phi_n   \,.
\end{equation}
Note that unlike $\LL_h$, $\Lg_\kappa$ does depend on $\kappa$.

\begin{lemma}
For any $\beta<1$ there exists $C_\beta(\Omega,h)>0$ and $\kappa_\beta (\Omega,h)$ such that for
 $\kappa \geq \kappa_\beta$ we have
\begin{equation}
\label{eq:77}
   \|\psi\|_2 \leq C_\beta\kappa^{-\frac \beta 2} \,.
\end{equation}
\end{lemma}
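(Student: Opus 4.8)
\emph{Proof strategy.} Let $\psi$ be the order parameter of a steady‑state solution of \eqref{eq:63} in the Coulomb gauge; recall $\|\psi\|_\infty\le1$ by \eqref{eq:2.2a}, and recall from Section~\ref{section2} that $\Gamma:=B_n^{-1}(0)$ is either empty or a single smooth curve, of length $\le C_\Gamma(\Omega,h)$ and, by assumption (B), bounded away from the corners. The plan is to show that $\psi$ is essentially concentrated in a $\kappa^{-\beta}$‑tube about $\Gamma$ and then to bound $\|\psi\|_2^2$ by the area of that tube. First, multiplying (\ref{eq:63}a) by $\bar\psi$, integrating over $\Omega$, using $\psi|_{\partial\Omega_c}=0$ and $\nabla_{\kappa A}\psi\cdot\nu|_{\partial\Omega_i}=0$, and taking real parts, I get the basic identity $\|\nabla_{\kappa A}\psi\|_2^2=\kappa^2(\|\psi\|_2^2-\|\psi\|_4^4)\le\kappa^2\|\psi\|_2^2$.

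Second, I would control $A_1:=A-\kappa^2hA_n$ exactly as in the proof of Proposition~\ref{prop:stationary}: the $\curl$ of (\ref{eq:63}b) together with harmonicity of $B_n$ gives $\Delta(\curl A_1)=-\tfrac1\kappa\curl\Im(\bar\psi\nabla_{\kappa A}\psi)$ with $\curl A_1=0$ on $\partial\Omega$ (as verified there), so corner‑domain elliptic regularity, Sobolev embedding, and the decomposition $A_1=-\nabla_\perp\Phi$, $\Delta\Phi=\curl A_1$, $\Phi|_{\partial\Omega}=0$ (Proposition~\ref{decprop}) yield $\|A_1\|_\infty\le\frac{C(\Omega)}{\kappa}\|\nabla_{\kappa A}\psi\|_2\le C(\Omega)\|\psi\|_2\le C(\Omega)|\Omega|^{1/2}$. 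Since $\kappa A=\kappa^3hA_n+\kappa A_1$ we have $\nabla_{\kappa^3hA_n}\psi=\nabla_{\kappa A}\psi+i\kappa A_1\psi$, hence
\begin{equation*}
\|\nabla_{\kappa^3hA_n}\psi\|_2\le\|\nabla_{\kappa A}\psi\|_2+\kappa\|A_1\|_\infty\|\psi\|_2\le C_1(\Omega)\,\kappa\,\|\psi\|_2 .
\end{equation*}

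Third — and this is the heart of the argument — I would prove a concentration estimate for the magnetic Laplacian $-\nabla_{\kappa^3hA_n}^2$ by the localization (IMS) argument underlying \eqref{eq:12} (cf.\ \cite{pakw02}). Cover $\Omega$ by squares of side $\ell=(\kappa^3h)^{-1/3}$ carrying a quadratic partition of unity $\sum_j\chi_j^2\equiv1$ with $|\nabla\chi_j|\le C\ell^{-1}$, so that $\|\nabla_{\kappa^3hA_n}\psi\|_2^2\ge\sum_j\|\nabla_{\kappa^3hA_n}(\chi_j\psi)\|_2^2-C\ell^{-2}\|\psi\|_2^2$. For $\rho$ small and $\ge2\ell$, on a square $Q_j$ with $\dist(Q_j,\Gamma)\ge\rho$ one has $|B_n|\ge c_0\rho$ (by \eqref{hypnabla}, using that $\Gamma$ avoids the corners), and — the magnetic length $(\kappa^3h\,c_0\rho)^{-1/2}$ being $\lesssim\ell$ — the bottom of the corresponding magnetic Laplacian on $Q_j$ (with the Dirichlet, resp.\ Neumann, condition induced on $\partial\Omega_c$, resp.\ $\partial\Omega_i$) is $\ge c\,\kappa^3h\rho$; thus $\|\nabla_{\kappa^3hA_n}(\chi_j\psi)\|_2^2\ge c\,\kappa^3h\rho\,\|\chi_j\psi\|_2^2$. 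Summing over such $j$, using $\ell^{-2}=h^{2/3}\kappa^2$ and Step~2,
\begin{equation*}
c\,\kappa^3h\rho\int_{\{\dist(\cdot,\Gamma)>2\rho\}}|\psi|^2\,dx\le\|\nabla_{\kappa^3hA_n}\psi\|_2^2+C\,h^{2/3}\kappa^2\|\psi\|_2^2\le C_2(\Omega,h)\,\kappa^2\|\psi\|_2^2 ,
\end{equation*}
that is, $\displaystyle\int_{\{\dist(\cdot,\Gamma)>2\rho\}}|\psi|^2\,dx\le\frac{C_2}{c\,\kappa h\rho}\,\|\psi\|_2^2$.

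Finally, given $\beta<1$ I would take $\rho=\tfrac12\kappa^{-\beta}$. Because $\beta<1$, for $\kappa\ge\kappa_\beta(\Omega,h)$ both $\rho\ge2(\kappa^3h)^{-1/3}$ (so the last display applies) and $\frac{C_2}{c\,\kappa h\rho}=\frac{2C_2}{c\,h\,\kappa^{1-\beta}}\le\frac12$ hold, whence $\int_{\{\dist(\cdot,\Gamma)>\kappa^{-\beta}\}}|\psi|^2\le\frac12\|\psi\|_2^2$ and therefore
\begin{equation*}
\|\psi\|_2^2\le2\int_{\{\dist(\cdot,\Gamma)\le\kappa^{-\beta}\}}|\psi|^2\,dx\le2\|\psi\|_\infty^2\,\bigl|\{x:\dist(x,\Gamma)\le\kappa^{-\beta}\}\bigr|\le2C_\Gamma\,\kappa^{-\beta},
\end{equation*}
which is \eqref{eq:77} with $C_\beta=(2C_\Gamma)^{1/2}$ (and the bound is vacuous, forcing $\psi\equiv0$, when $\Gamma=\emptyset$). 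The hard part is the concentration step: one must make the localized lower bound for $-\nabla_{\kappa^3hA_n}^2$ uniform all the way down to the scale $\rho$, and it is exactly the requirement $\rho\gg\ell\sim\kappa^{-1}$ (the magnetic length) that obstructs $\beta=1$ and forces $\beta<1$.
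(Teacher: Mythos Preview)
Your proof is correct and follows the same strategy as the paper's --- localize, invoke a magnetic lower bound on the set where $|B_n|\gtrsim\rho$, absorb, and bound the residual set by its area --- with two technical differences worth flagging.

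First, the paper works with $\nabla_{\kappa A}$ throughout rather than passing to $\nabla_{\kappa^3 h A_n}$: multiplying (\ref{eq:63}a) by $\eta_j^2\bar\psi$ and using the elementary inequality $\kappa\bigl|\int\curl A\,|\eta_j\psi|^2\bigr|\le\|\nabla_{\kappa A}(\eta_j\psi)\|_2^2$ (Theorem~4 in \cite{mo95}) gives the localized bound directly, after which one writes $\curl A=\kappa^2 h B_n+\curl A_1$ and disposes of the $A_1$ term using only the $L^2$ estimate $\|\curl A_1\|_2\le C\|\psi\|_2$. Your route via the $L^\infty$ bound on $A_1$ is also valid (the paper establishes it at \eqref{eq:69}), but needs the heavier $W^{2,p}$ machinery for $\Phi$.

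Second --- and this is the one place your sketch is thinner --- the paper also excises a $\kappa^{-\alpha}$-collar of $\partial\Omega$ from the good set, defining $S_{\kappa,\alpha}=\{|B_n|\ge\kappa^{-\alpha}\}\cap\{d(\cdot,\partial\Omega)\ge\kappa^{-\alpha}\}$, so that every $\eta_j\psi$ is compactly supported in $\Omega$ and the magnetic inequality above applies with no boundary analysis whatsoever. You instead retain boundary squares and invoke the induced Dirichlet/Neumann magnetic lower bound there; this is ultimately correct (near $\partial\Omega_i$ and at the corners $|B_n|$ is bounded below by a fixed positive constant, so the semiclassical Neumann bound $\ge\Theta_0\cdot\text{field}$ or its corner analogue from \cite{boda06} kicks in), but it is not the one-line step you make it. The clean fix is to do what the paper does: enlarge your bad set to include a $\kappa^{-\beta}$-neighbourhood of $\partial\Omega$ as well --- its area is again $O(\kappa^{-\beta})$ --- and then only interior squares remain and the compactly-supported magnetic inequality suffices.
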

\begin{proof}~\\
  From Assumption (B) (implying \eqref{hypnabla}) and Assumption (R1) on
  $\Omega$ we get that $\pa \Omega \cup \{\curl A_n =0\}$ is a union of
  piecewise $C^1$ curves. For $\alpha \in (0,1)$, whose value is to be
  determined later, we define
\begin{equation}\label{eq:70}
S_{\kappa,\alpha} = \{x\in \Omega\,,\, \{|\curl A_n(x)| \geq \kappa^{-\alpha} \}\cap \{ d(x,\pa \Omega)\geq \kappa^{-\alpha}\}.
\end{equation}
Note that $|\Omega \setminus S_{\kappa,\alpha} |\leq C \kappa^{-\alpha}$ for some fixed $C>0$.  We can
now consider (for $\kappa$ large enough) a covering of $S_{\kappa,\alpha}$ by
balls of size $\kappa^{-\alpha'}$ with $1> \alpha' > \alpha$ with support in $\Omega$.

In each of the balls (for $\kappa$ large enough) we can assume that the
sign of $\curl A_n$ is constant. Associated with this (finite)
covering, we can associate a partition of unity $\eta_j$ such that
 \begin{equation}\label{eq:70a}
 \sum_j \eta_j^2 =1 \mbox{ on } S_{\kappa,\alpha}\,,\,
 \sum_j  |\nabla \eta_j|^2 \leq C \kappa^{2\alpha'} \mbox{ and } \supp \eta_j \subset \Omega\,.
 \end{equation}
 
Multiplying (\ref{eq:63}a) by
$\eta_{j}^2\bar{\psi}$ and integrating by parts, taking into account
the boundary conditions (\ref{eq:63}c,d) yields for the
real part
\begin{displaymath}
  \|\nabla_{\kappa A}(\eta_{j}\psi)\|_2^2= \|\psi\nabla\eta_{j}\|_2^2+ \kappa^2(\|\eta_{j}\psi\|_2^2-
  \|\eta_{j}^{1/2}\psi\|_4^4) \,.
\end{displaymath}
With the aid of Theorem
4 in \cite{mo95} we then obtain
\begin{equation}
\label{eq:71}
  \kappa\,  \Big| \int  \curl A \,
  |\eta_{j}\psi|^2\, dx \Big|  \leq C \|(\nabla \eta_j) \psi\|_2^2+
  \kappa^2\|\eta_{j}\psi\|_2^2   \,.
\end{equation}
Using the fact that $\curl A_n$ does not change
sign in each of the connected components of  $\Sg_{\kappa,\alpha}$ we obtain
that for $\kappa$ large enough
\begin{equation}
\label{eq:73}
  \kappa^{3-\alpha}h\int_{\Sg_{\kappa,\alpha}} |\psi|^2 \, dx\, \leq \kappa\int_\Omega 
  |\curl A_1|\,|\psi|^2\,dx\;  + \; C\, \kappa^2\, \int_\Omega |\psi|^2\, dx \,.
\end{equation}
From \eqref{eq:68} below we get that 
\begin{equation}\label{eq:73a}
  \|\curl A_1\|_2 \leq C(\Omega)\,  \|\psi\|_2 \,,
\end{equation}
which, when substituted into \eqref{eq:73} yields, with the aid of
\eqref{eq:2.2a} and Cauchy-Schwarz inequality, the existence of
$C(h,\Omega) >0$ such that, for sufficiently large  $\kappa$,
\begin{equation}
\label{eq:74}
  \|\psi\|_{L^2(\Sg_{\kappa,\alpha})}^2 \leq
  \frac{C}{\kappa^{1-\alpha}} \|\psi\|_2^2 \,.
\end{equation}
Choosing  $\alpha < 1$, keeping in mind the control of the measure of $\Omega
\setminus S_{\kappa,\alpha}$, yields
$$
 \int_{S_{\kappa,\alpha}} |\psi|^2 \, dx\, \leq   \;  C_h \kappa^{-1}\,,
 $$
 and completing   by the integral over the complementary of $S_{\kappa,\alpha}$, we obtain
 $$
 \int_{\Omega} |\psi|^2 \, dx\, \leq   \;  C_h  \kappa^{-\alpha}\,.
 $$
 To complete the proof, we can  take $\alpha =\beta$ and then $\alpha' = (1+\beta)/2$.
Consequently,  we obtain \eqref{eq:77}. This proves the lemma.
\end{proof}
We can now state our steady-state estimate for the critical current
\begin{proposition}
  Suppose that  there exist  $\gamma <1/2$,  $C>0$ and $\kappa_0>0$ such that
  \begin{equation}\label{eq:5.4}
  \|\Lg^{-1}_\kappa\| < \frac{1}{\kappa^2}\Big[1- \frac{C}{\kappa^{\gamma}}\Big]\,,
  \end{equation}
    for all $\kappa>\kappa_0$.
  Then, there exists $\kappa_1\geq\kappa_0$ such that $\psi\equiv0$ is the unique solution
  of \eqref{eq:63} for all $\kappa\geq \kappa_1$.
\end{proposition}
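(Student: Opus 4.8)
The plan is to mirror the proof of Proposition~\ref{prop:stationary}, but in the penetration-depth scaling and with the quantitative decay $\|\psi\|_2\le C_\beta\kappa^{-\beta/2}$ of the preceding lemma fed in as an input. Fix a solution $(\psi,A,\phi)$ of \eqref{eq:63} in the Coulomb gauge, recall that $\|\psi\|_\infty\le1$ (cf.\ \eqref{eq:2.2a}), and set
$$
A_1:=A-\kappa^2hA_n\,,\qquad \phi_1:=\phi-\kappa^2h\phi_n\,,
$$
so that $\nabla_{\kappa A}=\nabla_{\kappa^3hA_n}-i\kappa A_1$ and $i\kappa\phi=i\kappa^3h\phi_n+i\kappa\phi_1$. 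Subtracting $\kappa^2h$ times \eqref{eq:8} from (\ref{eq:63}b) and dividing by $\kappa^2$ gives $-\curl^2A_1+\tfrac1c\nabla\phi_1=\tfrac1\kappa\Im(\bar\psi\nabla_{\kappa A}\psi)$ in $\Omega$, with $\partial\phi_1/\partial\nu=0$ and $\curl A_1=0$ on $\partial\Omega$ (the latter as in \eqref{eq:151}, from (\ref{eq:63}g) and \eqref{eq:8}).

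First I would record the a priori control of $A_1$ and $\phi_1$ by $\|\psi\|_2$. Multiplying (\ref{eq:63}a) by $\bar\psi$ and integrating by parts, using (\ref{eq:63}c,d), yields $\|\nabla_{\kappa A}\psi\|_2^2=\kappa^2(\|\psi\|_2^2-\|\psi\|_4^4)\le\kappa^2\|\psi\|_2^2$, hence $\|\nabla_{\kappa A}\psi\|_2\le\kappa\|\psi\|_2$. Testing the $A_1$-equation against $\nabla\phi_1$ kills the $\curl^2A_1$ term (since $\curl A_1=0$ on $\partial\Omega$) and, with Poincar\'e's inequality ($\int_\Omega\phi_1=0$ by the normalizations \eqref{eq:5},\,\eqref{eq:10}), gives $\|\phi_1\|_2\le\frac{C}{\kappa}\|\nabla_{\kappa A}\psi\|_2\le C\|\psi\|_2$; moreover $\|\curl^2A_1\|_2\le\frac2\kappa\|\nabla_{\kappa A}\psi\|_2\le2\|\psi\|_2$. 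Since $\curl A_1$ vanishes on $\partial\Omega$ this upgrades to $\|\curl A_1\|_{H^1}\le C\|\psi\|_2$, and then repeating \eqref{eq:49}--\eqref{eq:51} verbatim (the argument there used only $A_1\cdot\nu=0$ on $\partial\Omega$, the Dirichlet problem $\Delta\Phi=\curl A_1$, and Proposition~\ref{decprop}, not any power of $\kappa$) produces $\|A_1\|_\infty\le C\|\psi\|_2$.

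Next I would rewrite (\ref{eq:63}a): using $-\nabla_{\kappa A}^2\psi=-\nabla_{\kappa^3hA_n}^2\psi+2i\kappa A_1\cdot\nabla_{\kappa A}\psi-\kappa^2|A_1|^2\psi$ (valid since $\Div A_1=0$) it becomes
$$
\Lg_\kappa\psi=\kappa^2(1-|\psi|^2)\psi-i\kappa\phi_1\psi-2i\kappa A_1\cdot\nabla_{\kappa A}\psi+\kappa^2|A_1|^2\psi\,.
$$
In $L^2(\Omega)$ the first term has norm $\le\kappa^2\|\psi\|_2$ because $0\le1-|\psi|^2\le1$, while the remaining three, using $\|\psi\|_\infty\le1$, the bounds above, and $\|\psi\|_2\le C_\beta\kappa^{-\beta/2}$, are bounded respectively by $C\kappa\|\psi\|_2$, by $C\kappa^2\|\psi\|_2^2\le C\kappa^{2-\beta/2}\|\psi\|_2$, and by $C\kappa^2\|\psi\|_2^3\le C\kappa^{2-\beta}\|\psi\|_2$ (with $C$ now also depending on $C_\beta$). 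Choosing $\beta\in(2\gamma,1)$ — possible precisely because $\gamma<1/2$ — all three are $\le\epsilon_\kappa\|\psi\|_2$ with $\epsilon_\kappa\kappa^{\gamma-2}\to0$, so $\|\Lg_\kappa\psi\|_2\le(\kappa^2+\epsilon_\kappa)\|\psi\|_2$. Combining with \eqref{eq:5.4},
$$
\|\psi\|_2\le\|\Lg_\kappa^{-1}\|\,\|\Lg_\kappa\psi\|_2<\Big(1-\frac{C}{\kappa^\gamma}\Big)\Big(1+\frac{\epsilon_\kappa}{\kappa^2}\Big)\|\psi\|_2\,,
$$
and since $\epsilon_\kappa/\kappa^2\to0$ faster than $\kappa^{-\gamma}$ the prefactor is $<1$ for all $\kappa$ beyond some $\kappa_1\ge\kappa_0$, forcing $\psi\equiv0$. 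Then $A=\kappa^2hA_n$ and $\phi=\kappa^2h\phi_n$ follow from (\ref{eq:63}b), its boundary conditions and the Coulomb gauge, exactly as at the end of the proof of Proposition~\ref{prop:stationary}.

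I expect the main obstacle to be the term $2i\kappa A_1\cdot\nabla_{\kappa A}\psi$: each of its two factors carries only one power of $\|\psi\|_2$ (and one of $\kappa$), so it is of size $\kappa^2\|\psi\|_2^2$, which is \emph{not} automatically small next to the leading term $\kappa^2\|\psi\|_2$. What closes the estimate is exactly the quantitative decay $\|\psi\|_2\lesssim\kappa^{-\beta/2}$ of the preceding lemma, available for every $\beta<1$; the requirement $\beta>2\gamma$, needed to beat the loss $\kappa^{-\gamma}$ in \eqref{eq:5.4}, is compatible with $\beta<1$ if and only if $\gamma<1/2$, which is why the hypothesis takes that form. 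A secondary technical point is making the constant in $\|A_1\|_\infty\le C\|\psi\|_2$ independent of $\kappa$, which hinges on the boundary identity $\curl A_1=0$ on $\partial\Omega$ and on the clean right-hand side $\tfrac1\kappa\Im(\bar\psi\nabla_{\kappa A}\psi)$ together with $\|\nabla_{\kappa A}\psi\|_2\le\kappa\|\psi\|_2$.
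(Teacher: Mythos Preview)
Your proof is correct and follows essentially the same route as the paper: derive $\|A_1\|_\infty\le C\|\psi\|_2$ and $\|\phi_1\|_2\le C\|\psi\|_2$ from the $A_1$-equation and the energy identity $\|\nabla_{\kappa A}\psi\|_2\le\kappa\|\psi\|_2$, rewrite (\ref{eq:63}a) as $\Lg_\kappa\psi=\kappa^2(1-|\psi|^2)\psi+\text{remainder}$, and use \eqref{eq:77} to make the remainder $o(\kappa^{2-\gamma})\|\psi\|_2$. Your parameter $\beta\in(2\gamma,1)$ is exactly the paper's choice $\hat\gamma:=1-\beta/2\in(1/2,1-\gamma)$, and your identification of the cross term $2i\kappa A_1\cdot\nabla_{\kappa A}\psi$ as the bottleneck that forces $\gamma<1/2$ is precisely the mechanism behind the paper's condition $\hat\gamma>1/2$.
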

\begin{proof}
  As in the proof of Proposition \ref{prop:stationary} we set
  \begin{equation}
\label{eq:65}
    A_1= A - h\kappa^2\, A_n \quad ; \quad \phi_1 = \phi - h\kappa^2\, \phi_n \,.
  \end{equation}
It is easy to show that $A_1$ and $\phi_1$ satisfy 
  \begin{equation}
\label{eq:66}
 \begin{aligned}
&  -\kappa^2\, \curl^2A_1  =  \kappa\, \Im(\bar\psi\nabla_{\kappa A}\psi) -  \sigma\, \nabla\phi_1& \quad \text{ in }  \Omega\,, \\
& \frac{\partial\phi_1}{\partial\nu}=0 &\quad \text{ on } \partial\Omega\,, \\
& \curl A_1=0 &\quad \text{ on } \partial\Omega \,.
\end{aligned}
  \end{equation}
Taking the scalar product with $\nabla \phi_1$ of the first line of \eqref{eq:66}  yields
\begin{equation}
\label{eq:67}
   \|\nabla\phi_1\|_2 \leq \frac{\kappa}{\sigma}\, \|\Im(\bar\psi\nabla_{\kappa A}\psi)\|_2 \,,
\end{equation}
and hence, 
\begin{displaymath}
  \|\curl^2A_1\|_2 \leq \frac{2}{\kappa} \, \|\nabla_{\kappa A} \psi\|_2 \,.
\end{displaymath}
By (\ref{eq:63}a) we then have (see the proof of \eqref{eq:52})
\begin{equation}
  \label{eq:68}
\|\curl^2A_1\|_2 \leq 2\,  \|\psi\|_2 \,.
\end{equation}
We can now set $A_1=-\nabla_\perp\Phi$, and follow the same route as in the
proof of \eqref{eq:51} (see \eqref{eq:53}), to obtain, with the aid of Sobolev embedding
that
\begin{equation}
  \label{eq:69}
\|A_1\|_\infty \leq C\,  \|\psi\|_2 \,.
\end{equation}
~\\
By \eqref{eq:67}, \eqref{eq:51} and \eqref{eq:77} we obtain that for
all $\hat \gamma >1/2$ there exists $C_{\hat \gamma} (h,\Omega)$ such that for sufficiently large
$\kappa$ we have
\begin{displaymath}
   \|2i\kappa A_1\cdot\nabla_{\kappa A}\psi - \kappa^2|A_1|^2 \psi + i\kappa\phi_1\psi\|_2 \leq
   C_{\hat \gamma} \, \kappa^{1+\hat \gamma}\, \|\psi\|_2 \,.
\end{displaymath}
Consequently, we obtain that
\begin{displaymath}
  \|\psi\|_2 \leq \big(\kappa^2+ C_{\hat \gamma} \kappa^{1+\hat \gamma}\big) \|\Lg_\kappa^{-1}\|\,
  \|\psi\|_2 \,,
\end{displaymath}
from which the proposition readily follows by choosing $1/2<\hat \gamma < 1-\gamma\,$.
\end{proof}

We next prove the decay, in the long time limit, of solutions of the
time-dependent version of \eqref{eq:63}, i.e.,
\begin{subequations}  
\label{eq:78}
\begin{alignat}{2}
&\frac{\partial\psi}{\partial t}- \nabla_{\kappa A}^2\psi+i\kappa\Phi\psi=\kappa^2(1-|\psi|^2)\psi & \quad \text{ in } \Omega\,,\\
- &\kappa^2\curl^2A+\sigma\Big(\nabla\Phi +\frac{\partial A}{\partial t}\Big)=\kappa\,\Im(\bar\psi\nabla_{\kappa A}\psi) & \quad \text{ in }  \Omega\,, \\
  &\psi=0 &\quad \text{ on }  \partial\Omega_c \,,\\
 &\nabla_{\kappa A}\psi\cdot \nu=0 & \quad \text{ on }  \partial\Omega_i \,,\\
 & \nabla\Phi \cdot\nu = -h\frac{\kappa^4J_0(x)}{\sigma} &\quad \text{ on } \partial\Omega_c\,, \\
&\nabla\Phi \cdot\nu = 0 &\quad \text{ on }  \partial\Omega_i \,, \\[1.2ex]
&\dashint_{\partial\Omega}\curl A \, ds = h\kappa\cdot h_r \,.& 
\end{alignat}
\end{subequations}
We begin by  proving a few auxiliary estimates.
\begin{lemma}
   Let $A_1$ be defined by \eqref{eq:65}. 
There exists $C(c,\Omega)$ and $\kappa_0$ such that, for $\kappa\geq \kappa_0$,
   \begin{equation}
     \label{eq:79}
\|A_1(t,\cdot)\|_2^2 \leq \Big(\|A_1(0,\cdot)\|_2^2+\frac{C(c,\Omega)}{\kappa^2}\Big)
e^{-\lambda ct} + C(c,\Omega)  \int_0^te^{-\lambda c(t-\tau)}\|\psi(\tau,\cdot)\|_2^2 \,d\tau \,,
   \end{equation}
and  a constant $C_1(c,\Omega)$ such that:
\begin{equation}
  \label{eq:80}
\int_0^t e^{-2\lambda c(t-\tau)} \|\curl A_1(\tau,\cdot)\|_2^2\,d\tau \leq C_1(c,\Omega)\, \Big\{ \Big(\|A_1(0,\cdot)\|_2^2+\frac{1}{\kappa^2}\Big)
e^{-\lambda ct} + \int_0^te^{-\lambda c(t-\tau)}\|\psi(\tau,\cdot)\|_2^2 \,d\tau\Big\}  \,.
\end{equation}
\end{lemma}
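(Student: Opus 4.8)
The plan is to run one energy estimate for $A_1$ and then bootstrap it. First I would subtract the stationary normal state $(0,h\kappa^2A_n,h\kappa^2\phi_n)$ from the time-dependent system \eqref{eq:78}; this gives, precisely the first line of \eqref{eq:66} augmented by the term $\sigma\,\partial_tA_1$,
\begin{equation}
\label{eq:A1tdep}
\sigma\,\partial_tA_1 - \kappa^2\curl^2A_1 + \sigma\nabla\phi_1 = \kappa\,\Im(\bar\psi\nabla_{\kappa A}\psi)\ \text{ in }\Omega\,,\qquad \partial\phi_1/\partial\nu = 0\,,\ \ \curl A_1 = 0\ \text{ on }\partial\Omega\,,
\end{equation}
with $\phi_1$ as in \eqref{eq:65} and $A_1$ also satisfying the Coulomb conditions $\Div A_1 = 0$, $A_1\cdot\nu|_{\partial\Omega} = 0$. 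Testing \eqref{eq:A1tdep} with $A_1$ in $L^2(\Omega,\mathbb R^2)$ and integrating by parts exactly as in the derivation of \eqref{eq:20} — the boundary term coming from $\curl^2A_1$ drops since $\curl A_1 = 0$ on $\partial\Omega$, and $\langle\nabla\phi_1,A_1\rangle = 0$ because $\Div A_1 = 0$ and $A_1\cdot\nu = 0$ — and using $\|\psi\|_\infty\le1$ (see \eqref{eq:2.2a}) to estimate $|\Im(\bar\psi\nabla_{\kappa A}\psi)|\le|\nabla_{\kappa A}\psi|$, I obtain, after dividing by $\sigma$ and recalling $\kappa^2/\sigma = c$,
\begin{displaymath}
\tfrac12\tfrac{d}{dt}\|A_1\|_2^2 + c\,\|\curl A_1\|_2^2 \le \tfrac{c}{\kappa}\,\|A_1\|_2\,\|\nabla_{\kappa A}\psi\|_2\,.
\end{displaymath}
Since $A_1(t,\cdot)\in\Hg_d$, the definition \eqref{eq:13} yields $\|\curl A_1\|_2^2\ge\lambda\|A_1\|_2^2$, where $\lambda = \lambda^D > 0$ depends only on $\Omega$ by Proposition \ref{prop:dirichlet}.

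For \eqref{eq:79} I would apply Young's inequality as $\tfrac{c}{\kappa}\|A_1\|_2\|\nabla_{\kappa A}\psi\|_2\le\tfrac{\lambda c}{2}\|A_1\|_2^2+\tfrac{c}{2\lambda\kappa^2}\|\nabla_{\kappa A}\psi\|_2^2$, absorb the first term using the coercivity above, and close with Gronwall to get $\|A_1(t,\cdot)\|_2^2\le\|A_1(0,\cdot)\|_2^2e^{-\lambda ct}+\tfrac{c}{\lambda\kappa^2}\int_0^te^{-\lambda c(t-\tau)}\|\nabla_{\kappa A}\psi(\tau,\cdot)\|_2^2\,d\tau$. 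The essential point — and the one place where the penetration-depth scaling really bites — is that $\|\nabla_{\kappa A}\psi\|_2^2$ must \emph{not} be bounded pointwise in $t$: it is genuinely of size $\kappa^2\|\psi\|_2^2$. Instead I would use the $\psi$-energy identity obtained by testing \eqref{eq:78}a with $\bar\psi$ and taking the real part, namely $\|\nabla_{\kappa A}\psi\|_2^2\le\kappa^2\|\psi\|_2^2-\tfrac12\tfrac{d}{d\tau}\|\psi\|_2^2$; the $\kappa^2$ exactly cancels the $\kappa^{-2}$ prefactor, and an integration by parts of the $\tfrac{d}{d\tau}\|\psi\|_2^2$ term against $e^{-\lambda c(t-\tau)}$ converts the derivative into the factor $\lambda c$ (plus the endpoint $\|\psi(t,\cdot)\|_2^2$, discarded as nonnegative, and the term $e^{-\lambda ct}\|\psi(0,\cdot)\|_2^2\le|\Omega|e^{-\lambda ct}$). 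Collecting the resulting terms gives \eqref{eq:79}, with $C(c,\Omega)$ built from $c$, $\lambda=\lambda^D$ and $|\Omega|$, valid for $\kappa\ge1$.

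For \eqref{eq:80} I would not discard the $\|\curl A_1\|_2^2$ term: splitting $c\|\curl A_1\|_2^2=\tfrac c2\|\curl A_1\|_2^2+\tfrac c2\|\curl A_1\|_2^2\ge\tfrac c2\|\curl A_1\|_2^2+\tfrac{\lambda c}2\|A_1\|_2^2$ and using the same Young bound, the energy inequality rearranges to $c\|\curl A_1(\tau,\cdot)\|_2^2\le-\tfrac{d}{d\tau}\|A_1(\tau,\cdot)\|_2^2+\tfrac{2c}{\lambda\kappa^2}\|\nabla_{\kappa A}\psi(\tau,\cdot)\|_2^2$. Multiplying by $e^{-2\lambda c(t-\tau)}$, integrating over $[0,t]$ and integrating by parts the first term (again discarding the nonnegative endpoint $\|A_1(t,\cdot)\|_2^2$) leaves on the right $e^{-2\lambda ct}\|A_1(0,\cdot)\|_2^2$, a term $2\lambda c\int_0^te^{-2\lambda c(t-\tau)}\|A_1(\tau,\cdot)\|_2^2\,d\tau$ into which I insert the bound \eqref{eq:79} just proved, and a term $\tfrac{2c}{\lambda\kappa^2}\int_0^te^{-2\lambda c(t-\tau)}\|\nabla_{\kappa A}\psi\|_2^2\,d\tau$ treated exactly as in the previous paragraph. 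Collapsing the resulting iterated integrals with the elementary estimates $\int_0^te^{-2\lambda c(t-\tau)}e^{-\lambda c\tau}\,d\tau\le(\lambda c)^{-1}e^{-\lambda ct}$ and $\int_s^te^{-2\lambda c(t-\tau)}e^{-\lambda c(\tau-s)}\,d\tau\le(\lambda c)^{-1}e^{-\lambda c(t-s)}$ together with Fubini, and using $e^{-2\lambda c(t-\tau)}\le e^{-\lambda c(t-\tau)}$, every term acquires the shape $C_1(c,\Omega)\bigl(\|A_1(0,\cdot)\|_2^2+\kappa^{-2}\bigr)e^{-\lambda ct}+C_1(c,\Omega)\int_0^te^{-\lambda c(t-\tau)}\|\psi(\tau,\cdot)\|_2^2\,d\tau$; dividing by $c$ yields \eqref{eq:80}.

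I expect the only genuine obstacle to be the recurring treatment of $\|\nabla_{\kappa A}\psi\|_2^2$ described above: it is where the scaling forces care, and everything hinges on keeping the factor $\kappa^2$ (which cancels the small coupling constant $\kappa^{-2}$) and on absorbing the leftover $\tfrac{d}{d\tau}\|\psi\|_2^2$ through the integration by parts against the exponential weight. The remainder is routine Gronwall and convolution bookkeeping.
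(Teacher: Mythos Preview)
Your proposal is correct and follows essentially the same route as the paper: the same energy inequality for $A_1$, the same use of $\|\curl A_1\|_2^2\ge\lambda\|A_1\|_2^2$, and, crucially, the same treatment of $\|\nabla_{\kappa A}\psi\|_2^2$ via the $\psi$--energy identity \eqref{eq:82} followed by integration by parts against the exponential weight. The only cosmetic difference is that for \eqref{eq:79} the paper divides the basic inequality by $\|A_1\|_2$ to obtain a first-order linear ODE for $\|A_1\|_2$ (then applies Cauchy--Schwarz to the resulting time integral before squaring), whereas you apply Young's inequality directly and run Gronwall on $\|A_1\|_2^2$; both routes yield the same estimate.
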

\begin{proof}
   In a similar manner to the one used to derive \eqref{eq:20} we
   obtain that
   \begin{displaymath}
       \frac{1}{2}\frac{d\|A_1(t,\cdot)\|_2^2}{dt}  +
       \frac{\lambda\kappa^2}{\sigma}\|A_1(t,\cdot)\|_2^2 \leq  \frac{C}{\kappa}\, \|A_1(t,\cdot)\|_2\, \|\nabla_{\kappa A}\psi(t,\cdot)\|_2 \,,
   \end{displaymath}
from which we readily obtain that
\begin{displaymath}
  \frac{d\|A_1(t,\cdot)\|_2}{dt}  +
       \lambda c \, \|A_1(t,\cdot)\|_2 \leq  \frac{C}{\kappa}\, \|\nabla_{\kappa A}\, \psi(t,\cdot) \|_2 \,.
\end{displaymath}
Hence,
\begin{displaymath}
\begin{array}{ll}
  \|A_1(t,\cdot)\|_2 &\leq \|A_1(0,\cdot)\|_2\, e^{-\lambda ct} +
  \frac{C}{\kappa}\int_0^te^{-\lambda c(t-\tau)}\|\nabla_{\kappa A}\psi (\tau,\cdot)\|_2 \,d\tau \\ &
  \leq
\|A_1(0,\cdot)\|_2\, e^{-\lambda ct} +  \frac{C}{\kappa}\, \Big[\int_0^te^{-\lambda c(t-\tau)}\|\nabla_{\kappa A}\psi(\tau,\cdot)\|_2^2 \,d\tau\Big]^{1/2}\,.
\end{array}
\end{displaymath}
Note that $C$ depends on $\Omega$ and $c$. 
Since, as in \eqref{eq:22}
\begin{equation}
\label{eq:82}
      \|\nabla_{\kappa A}\psi\|_2^2 \leq  \kappa^2\|\psi\|_2^2-\frac{1}{2}
      \frac{d\|\psi\|_2^2}{dt} \,,
\end{equation}
we obtain after an integration by parts
\begin{displaymath}
   \|A_1(t,\cdot)\|_2 \leq \|A_1(0,\cdot)\|_2\, e^{-\lambda ct}+
   C\, \Big[\int_0^te^{-\lambda c(t-\tau)}\|\psi\|_2^2 \,d\tau
 +\frac{1}{2\kappa^2}\|\psi(0,\cdot)\|_2^2e^{-\lambda ct}\Big]^{1/2}\,,
\end{displaymath}
from which \eqref{eq:79} readily follows.

To prove \eqref{eq:80} we use the fact that, by straightforward
integration by parts we have,
\begin{displaymath}
    \frac{1}{2}\frac{d\|A_1(\tau ,\cdot)\|_2^2}{d\tau}  +
       c\,  \|\curl A_1(\tau,\cdot)\|_2^2 \leq
       \frac{C}{\kappa}\, \|A_1(\tau,\cdot)\|_2\, \|\nabla_{\kappa A}\psi(\tau,\cdot)\|_2 \,.
\end{displaymath}
Multiplying by $e^{-2\lambda c(t-\tau)}$ and integrating with respect to $\tau$
yields after integration by parts
\begin{multline*}
    \int_0^te^{-2\lambda c(t-\tau)} \|\curl A_1(\tau,\cdot)\|_2^2\,d\tau  \leq C
    \int_0^te^{-2\lambda c(t-\tau)}
    \Big[\|A_1(\tau,\cdot)\|_2^2+\frac{1}{\kappa^2}\|\nabla_{\kappa
      A}\psi(\tau,\cdot)\|_2^2\Big] \,d\tau\\ + \frac{1}{2}e^{-2\lambda
      ct}\|A_1(0,\cdot)\|_2^2 + \lambda c \int_0^te^{-2\lambda c(t-\tau)}
    \|A_1(\tau,\cdot)\|_2^2\,d\tau \,.
\end{multline*}
The above, together with \eqref{eq:79} and \eqref{eq:82},  leads to
\eqref{eq:80}.
\end{proof}

 \begin{lemma}\label{lem:5.l2bound}
  Let $A_1$ be defined by \eqref{eq:65} and $0<\beta<1/2$. Suppose that
  $\|A_1(\cdot,0)\|\leq M$. Then, there
  exist  $\kappa_\beta (h,c,\Omega)$ and  $C_\beta(h,c,\Omega)$, and,  for $\kappa \geq \kappa_\beta$,  $t_\beta(M,\kappa)$ such that, for all $t\geq t_\beta (\kappa) $, 
  \begin{equation}
\label{eq:81}
    \|A_1(t,\cdot)\|_2 + \|\psi(t,\cdot)\|_2 \leq \frac{C_\beta}{\kappa^\beta} \,.
  \end{equation}
\end{lemma}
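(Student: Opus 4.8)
The plan is to combine the spatial‑localization argument of the proof of \eqref{eq:77} with the decay estimates \eqref{eq:79}--\eqref{eq:80} for the induced field, and to close the resulting feedback loop by a bootstrap. Fix $\alpha,\alpha'$ with $2\beta<\alpha<\alpha'<1$ and let $\{\eta_j\}$ be the partition of unity of \eqref{eq:70a}, subordinate to a finite covering of $S_{\kappa,\alpha}$ by balls of radius $\kappa^{-\alpha'}$ contained in $\Omega$ on which $\curl A_n$ has a fixed sign.

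\emph{Step 1 (localized energy inequality).} I would multiply (\ref{eq:78}a) by $\eta_j^2\bar\psi$, integrate over $\Omega$, take real parts (the term $i\kappa\Phi\psi$ drops because $\Phi$ is real, and $-\kappa^2\eta_j^2|\psi|^4\le 0$ is discarded), sum over $j$, and run the argument leading to \eqref{eq:73} — Theorem~4 of \cite{mo95} together with the fixed sign of $\curl A_n$ on each ball — while now retaining the extra term $\tfrac12\tfrac{d}{dt}\sum_j\|\eta_j\psi\|_2^2$. Setting $P_\kappa:=\sum_j\eta_j^2$ (equal to $1$ on $S_{\kappa,\alpha}$) and $\|\psi\|_{P_\kappa}^2:=\int_\Omega P_\kappa|\psi|^2\,dx$, using $|\Omega\setminus S_{\kappa,\alpha}|\le C\kappa^{-\alpha}$ and \eqref{eq:2.2a} to pass between $\|\psi\|_{P_\kappa}^2$, $\|\psi\|_{L^2(S_{\kappa,\alpha})}^2$ and $\|\psi\|_2^2$ up to an error $C\kappa^{-\alpha}$, absorbing the $O(\kappa^{2\alpha'})$ and $O(\kappa^2)$ terms into the coefficient $\sim\kappa^{3-\alpha}$ (legitimate for large $\kappa$ since $\alpha'<1$), and bounding $\int_\Omega|\curl A_1|\,|\psi|^2\le\|\curl A_1\|_2\|\psi\|_2$ (by $\|\psi\|_\infty\le1$) followed by Young's inequality, one arrives at the scalar differential inequality
\[
\tfrac{d}{dt}\,y + \tfrac{h}{2}\kappa^{3-\alpha}\,y \;\le\; C\kappa^{3-2\alpha} + C\kappa^{\alpha-1}\,\|\curl A_1(t)\|_2^2,\qquad y(t):=\|\psi(t,\cdot)\|_{P_\kappa}^2 .
\]

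\emph{Step 2 (Duhamel, the induced field, and bootstrap).} Integrating this and using that $\tfrac h2\kappa^{3-\alpha}\ge 2\lambda c$ for large $\kappa$ (so that $e^{-\frac h2\kappa^{3-\alpha}(t-\tau)}\le e^{-2\lambda c(t-\tau)}$), I would insert \eqref{eq:80} into the last term and then use $\|\psi\|_2^2\le y+C\kappa^{-\alpha}$ once more to get, with $\|A_1(0,\cdot)\|_2\le M$,
\[
y(t)\le y(0)e^{-\frac h2\kappa^{3-\alpha}t}+C(1+M^2)\kappa^{\alpha-1}e^{-\lambda ct}+C\kappa^{-\alpha}+C\kappa^{-1}+C\kappa^{\alpha-1}\!\int_0^t e^{-\lambda c(t-\tau)}y(\tau)\,d\tau .
\]
Since $0\le y\le|\Omega|$ by \eqref{eq:2.2a}, bounding the convolution by $|\Omega|/(\lambda c)$ gives a preliminary bound $y(t)\le C(\kappa^{-\alpha}+\kappa^{\alpha-1})$ valid for $t\ge t_\beta(M,\kappa)$, where $t_\beta=O\big((\lambda c)^{-1}\log(1+M)\big)+O(\log\kappa)$ is the time at which the two exponentially decaying terms fall below $\kappa^{-2\beta}$. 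Re-inserting this improved bound into the convolution, splitting $\int_0^t=\int_0^{t_\beta}+\int_{t_\beta}^t$ (the first piece decaying like $e^{-\lambda c(t-t_\beta)}$), and iterating a bounded number of times — each step improving the decay exponent of $y$ by $1-\alpha$ until it reaches $\alpha$, at the cost of enlarging $t_\beta$ by a further $O(\log\kappa)$ — one reaches $y(t)\le C_\beta\kappa^{-\alpha}$ for $t\ge t_\beta$, hence $\|\psi(t,\cdot)\|_2^2\le y(t)+C\kappa^{-\alpha}\le C_\beta\kappa^{-2\beta}$. Feeding this last bound into \eqref{eq:79} (splitting the time integral at $t_\beta$ as above) yields $\|A_1(t,\cdot)\|_2^2\le (M^2+1)e^{-\lambda ct}+C_\beta\kappa^{-2\beta}$, so $\|A_1(t,\cdot)\|_2\le C_\beta\kappa^{-\beta}$ after one more enlargement of $t_\beta$; together with the bound on $\psi$ this is \eqref{eq:81}.

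The main obstacle is the coupling hidden in Steps~1--2: $A_1$ is driven by $\psi$ through \eqref{eq:79}--\eqref{eq:80}, while the term $\kappa\int|\curl A_1|\,|\psi|^2$ erodes the very large $\kappa^{3-\alpha}$ spectral gap that is responsible for the decay of $\psi$, so the two quantities cannot be estimated separately. What makes the argument go through is that the constant in front of $\|\curl A_1\|_2^2$ is $\kappa^{\alpha-1}=o(1)$, so the feedback is a contraction on finite time intervals and a bounded number of bootstrap iterations suffices; the only real care needed is in propagating the (logarithmically $\kappa$-dependent) time thresholds through that iteration.
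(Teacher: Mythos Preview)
Your argument is correct and follows essentially the same route as the paper: localize via the partition \eqref{eq:70a}, extract the spectral gap $\kappa^{3-\alpha}$ from Montgomery's inequality, feed the $\curl A_1$ remainder back through \eqref{eq:79}--\eqref{eq:80}, and bootstrap. The only cosmetic differences are that the paper applies Cauchy--Schwarz to the Duhamel integral (retaining a term $\kappa\|\curl A_1\|_2\,\Theta^{1/2}$) rather than Young's inequality at the differential-inequality level, and it re-optimizes the localization parameter at each iteration via the recursion $\alpha_{n+1}=\tfrac13+\tfrac23\alpha_n$ instead of fixing a single $\alpha>2\beta$ as you do; both choices close the loop in finitely many steps and give \eqref{eq:81}.
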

\begin{proof}
Let $S_{\kappa,\alpha}$ be defined by \eqref{eq:70} and $\eta_{j}$ by
  \eqref{eq:70a}. We multiply (\ref{eq:78}a) by $\eta_{j}^2\bar{\psi}$
  and integrate over $\Omega$ to obtain
\begin{displaymath}
  \frac{1}{2}\frac{d\|\eta_{j}\psi\|_2^2}{dt} + \|\nabla_{\kappa A}(\eta_{j}\psi)\|_2^2=
  \|\psi\nabla\eta_{j}\|_2^2+ \kappa^2\, (\|\eta_{j}\psi\|_2^2- \|\eta_{j}^{1/2}\psi\|_4^2) \,.
\end{displaymath}
In the same manner used to prove \eqref{eq:73} we can show that
\begin{displaymath}
   \|\nabla_{\kappa A}(\eta_{j}\psi)\|_2^2 \geq \kappa^{3-\alpha}h\, \|\eta_{j}\psi\|^2 \,,
\end{displaymath}
and hence
\begin{displaymath}
   \frac{1}{2}\frac{d\|\eta_{j}\psi\|_2^2}{dt} + \kappa^{3-\alpha}h\, \|\eta_{j}\psi\|^2 \leq
 \kappa\, \int_\Omega  |\curl A_1|\,|\eta_j \psi|^2  +  \int | \psi\,\nabla\eta_{j}|^2\,dx +  C\kappa^2\, \int_\Omega |\eta_j \psi|^2\, dx \,.
\end{displaymath}
Summing over $j$, we obtain with the aid of the Cauchy-Schwarz
inequality and \eqref{eq:2.2a}:
\begin{equation}
\label{eq:153}
  \frac{1}{2}\frac{d\big(\sum_j \|\eta_{j}\psi\|_2^2\big)}{dt} + \kappa^{3-\alpha}h\, (\sum_j  \|\eta_{j}\psi\|^2)
  \leq \kappa\,  \|\curl A_1\|_2\Big(\sum_j\|\eta_j\psi\|_2\Big)^{\frac 12} + C\kappa^2\, \|\psi\|_2^2  \,.
\end{equation}
We rewrite this inequality in the following way, for $\kappa$ large enough,
\begin{equation}
\label{eq:153a}
  \frac{1}{2}\frac{d (\sum_j \|\eta_{j}\psi\|_2^2)}{dt} + \frac 12 \kappa^{3-\alpha}h\, (\sum_j  \|\eta_{j}\psi\|^2)
  \leq  \kappa\,  \|\curl A_1\|_2(\sum_j\|\eta_j\psi\|_2^2)^{\frac 12} +  C \kappa^{2-\alpha}  \,.
\end{equation}
We next define 
$$
\Theta (t):=\Theta_{\kappa,\alpha} (t) = \sum_j \|\eta_{j}(\cdot) \psi(t,\cdot)\|_2^2\,.
$$
From \eqref{eq:153a} we deduce that
\begin{displaymath}
  \Theta(t)  \leq
  \Theta (0) e^{-h\kappa^{3-\alpha}t}+  2\,  \int_0^te^{-h\kappa^{3-\alpha}(t-\tau)}\Big[\kappa\|\curl
  A_1(\tau,\cdot)\|_2\, \Theta (\tau)^{\frac 12}+C\kappa^{2-\alpha}\Big]\,d\tau \,,
\end{displaymath}
from which  we obtain 
\begin{multline}
\label{eq:150}
  \Theta(t)  \leq
  \Theta (0) e^{-h\kappa^{3-\alpha}t}+ \\ +2 \kappa \Big[\int_0^te^{-2\lambda c(t-\tau)}\|\curl
  A_1(\tau,\cdot)\|_2^2\,d\tau \Big]^{1/2}\Big[\int_0^te^{-2(h\kappa^{3-\alpha}-\lambda c)(t-\tau)}\Theta(\tau)\, d \tau \Big]^{1/2}  
+ C \kappa^{-1}   \,.
\end{multline}
Let $t^*=t^*(\kappa,M)$ be given by \eqref{eq:145}. Suppose now that $t\geq t^*$.  It
readily follows from \eqref{eq:80} and   \eqref{eq:145} that
\begin{displaymath}
   \int_0^te^{-2\lambda c(t-\tau)}\|\curl
  A_1(\tau,\cdot)\|_2^2\,d\tau \leq C\Big(\frac{1}{\kappa^2} + \|\psi\|_{L^\infty (0,t;L^2(\Omega,\C))}^2\Big) \,.
\end{displaymath}
Substituting the above into \eqref{eq:150} yields, with the aid of
\eqref{eq:2.2a}, for sufficiently large $\kappa$,
\begin{multline*}
   \Theta(t) \leq \Theta(0) e^{- t^* h \kappa^{2}} +  C\Big(\frac{1}{\kappa}+\frac{1}{\kappa^{(1-\alpha)/2}}\|\psi\|_{L^\infty
       (0,t;L^2(\Omega,\C))}^2 + \frac{1}{\kappa^{1-\alpha}}\|\psi\|_{L^\infty
       (0,t;L^2(\Omega,\C))}^2\Big) \\ \leq \hat{C}\Big(\frac{1}{\kappa^{(1-\alpha)/2)}}\|\psi\|_{L^\infty
       (0,t;L^2(\Omega,\C))}^2+\frac{1}{\kappa}\Big)\,.
\end{multline*}

Hence, for all $t\geq t^*$ we have
\begin{equation*}
  \|\psi(t,\cdot)\|_2^2 \leq  \Theta(t) + |\Omega\setminus\Sg_{\kappa,\alpha}| \leq  \Theta(0) \exp - t^*\, h \kappa^{2} +C\Big(\frac{1}{\kappa^{(1-\alpha)/2)}}\|\psi\|_{L^\infty
       (0,t;L^2(\Omega,\C))}^2+\frac{1}{\kappa^\alpha}\Big)\,.
\end{equation*}
leading to:
\begin{equation}
\label{eq:160}
  \|\psi(t,\cdot)\|_2^2 \leq    e^{- t^*\, h \kappa^{2} } +C\Big(\frac{1}{\kappa^{(1-\alpha)/2)}}\|\psi\|_{L^\infty
       (0,t;L^2(\Omega,\C))}^2+\frac{1}{\kappa^\alpha}\Big)\,.
\end{equation}
Optimizing over $\alpha$ gives, for $\alpha =\frac 13$, the existence of
$\kappa_0$ and $t^*$ such that, for $\kappa \geq \kappa_0$ and $t\geq t^*$, we have:
\begin{equation}
\label{eq:160a}
  \|\psi(t,\cdot)\|_2^2 \leq  C \kappa^{-\frac 13}
  \end{equation}
  Here we have chosen $\kappa_0$ such that:
  $$ 
  e^{- t^*\, h \kappa^{2} } \leq \kappa^{-2}\,,\, \forall \kappa \geq \kappa_0\,.
  $$

  We can now apply the above procedure for $t\geq nt^*$ with $n\geq1$ to
  obtain a generalization of \eqref{eq:160}. Suppose that we
  have found for some $n$ and for any $p\leq n$, two increasing
  sequences $\kappa_p$ and $\alpha_p$, and constants $C_p$ such that $\kappa \geq
  \kappa_p$ and for $t\geq p t^*(\kappa,M)$, we have
\begin{equation}\label{eq:160ap}
  \|\psi(t,\cdot)\|_2^2 \leq  C_p \kappa^{-\alpha_p}\,.
  \end{equation}
 Then, we show that the above is true for $p= n+1$.\\ 
 We rewrite \eqref{eq:160} with initial point $n t^*$ in the form:
 $$
  \|\psi(t,\cdot)\|_2^2 \leq    e^{- t^*\, h \kappa^{2} } +C\Big(\frac{1}{\kappa^{(1-\alpha)/2)}}\|\psi\|_{L^\infty
       (nt^*,t;L^2(\Omega,\C))}^2+\frac{1}{\kappa^\alpha}\Big)\,,\, \forall t \geq (n+1) t^*\,.
$$
Using the recursion argument, we obtain, for any $\alpha\in (0,1)$ and $\kappa\geq \kappa_n$,
$$
  \|\psi(t,\cdot)\|_2^2 \leq    \kappa^{-2} +C(\alpha)  \max (C_n,1) \, \Big(\frac{1}{\kappa^{\frac{(1-\alpha)}{2}+ \alpha_n}}+\frac{1}{\kappa^\alpha}\Big)\,,\, \forall t \geq (n+1) t^*\,.
$$
Optimizing over $\alpha$ leads to the choice: 
\begin{displaymath}
  \alpha_{n+1} = \frac{1}{3} + \frac{2}{3} \alpha_n \,.
\end{displaymath}
and to \eqref{eq:160ap} for $p= (n+1)$.\\

It can be easily shown that
\begin{displaymath}
  \alpha_n =1-\Big(\frac{2}{3}\Big)^n \,.
\end{displaymath}

Hence, for any $\beta<1/2$ there exists $n\in\N$ such that $\alpha_n\geq 2\beta$. It
follows that for all $t\geq n(\beta)t^*$,
\begin{equation}
  \label{eq:161}
 \|\psi(t,\cdot)\|_2\leq \frac{C_\beta}{\kappa^\beta} \,.
\end{equation}

We next substitute \eqref{eq:161}, in conjunction with \eqref{eq:2.2a},
into \eqref{eq:79} to obtain that for all $t \geq n_\beta t^*+\frac{2}{\lambda c}\ln
\kappa$
\begin{displaymath}
  \|A_1(t,\cdot)\|_2^2 \leq \frac{C}{\kappa^2} +   C \int_0^{n_\beta t^*}e^{-\lambda c(t-\tau)}\,d\tau
  + \int_{n_\beta t^*}^t e^{-\lambda c(t-\tau)} \frac{C_\beta^2}{\kappa^{2\beta}} \,d\tau \,.
\end{displaymath}
Hence, for  new constants $C, \hat C, C_\beta$,  we get:
\begin{displaymath}
   \|A_1(t,\cdot)\|_2^2 \leq \frac{C}{\kappa^2} + \frac{C_\beta}{\kappa^{2\beta}}\leq \frac{\hat C_\beta}{\kappa^{2\beta}} \,.
\end{displaymath}
The above, together with \eqref{eq:161},  readily verifies  \eqref{eq:81} with $t_\beta(\kappa) = n_\beta t^*(\kappa)  +\frac{2}{\lambda c}\ln \kappa$.
\end{proof}

\begin{corollary}
\label{cor:5.5}
Let $A_1$ be defined by \eqref{eq:65} and satisfy $\|A_1(\cdot,0)\|_2\leq
M$ for some $M>0$. Let further $0<\gamma<1/2$, and $t_\beta(\kappa,M)$ be defined
as in Lemma \ref{lem:5.l2bound}. For sufficiently large $\kappa$, there
exists $C_\gamma (h,c,\Omega)>0$ such that
  \begin{equation}
\label{eq:87}
  \|A_1(t,\cdot)\|_\infty  \leq \frac{C_\gamma}{\kappa^\gamma} \,,
  \end{equation}
for all $t\geq t_\beta+1$.
 \end{corollary}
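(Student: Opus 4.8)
The plan is to upgrade the $L^2$ bound of Lemma \ref{lem:5.l2bound} to the pointwise $L^\infty$ bound \eqref{eq:87} by parabolic smoothing, following the scheme of the proof of Proposition \ref{lemma3.1} adapted to the penetration-depth scaling. Fix $\gamma<\beta<1/2$ and let $t_\beta=t_\beta(\kappa,M)$ be as in Lemma \ref{lem:5.l2bound}, so that $\|A_1(t,\cdot)\|_2+\|\psi(t,\cdot)\|_2\leq C_\beta\kappa^{-\beta}$ for all $t\geq t_\beta$. Subtracting the normal state from (\ref{eq:78}b) and using \eqref{eq:65} shows, exactly as in \eqref{eq:66}, that $A_1$ satisfies a parabolic equation of the form
\begin{equation*}
\partial_t A_1 + c\,\curl^2 A_1 + \nabla\phi_1 = \frac{c}{\kappa}\,\Im(\bar\psi\,\nabla_{\kappa A}\psi)\quad\text{in }\R_+\times\Omega,
\end{equation*}
together with $\Div A_1=0$ in $\Omega$ and $\curl A_1=A_1\cdot\nu=0$, $\partial\phi_1/\partial\nu=0$ on $\partial\Omega$. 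Since the elliptic operator here carries the fixed coefficient $c$, the natural time scale is $O(1)$; we shall also use the representation $A_1=-\nabla_\perp\Phi$ with $\Delta\Phi=\curl A_1$, $\Phi|_{\partial\Omega}=0$, so that $\|A_1(t,\cdot)\|_\infty\leq C(p,\Omega)\,\|\curl A_1(t,\cdot)\|_{L^p}$ for any $p>2$ by elliptic $W^{2,p}$-regularity and Sobolev embedding.

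First I would control the right-hand side on a unit interval $(t_0,t_0+1)$ with $t_0\geq t_\beta$: by \eqref{eq:67} one has $\|\nabla\phi_1\|_2\lesssim\kappa^{-1}\|\nabla_{\kappa A}\psi\|_2$, while the integrated form \eqref{eq:82} of the $\psi$-energy identity together with Lemma \ref{lem:5.l2bound} gives $\int_{t_0}^{t_0+1}\|\nabla_{\kappa A}\psi\|_2^2\,dt\lesssim\kappa^{2}\sup_{t\geq t_0}\|\psi(t,\cdot)\|_2^2+\|\psi(t_0,\cdot)\|_2^2\lesssim\kappa^{2-2\beta}$, whence the $L^2(t_0,t_0+1;L^2(\Omega))$ norm of the forcing in the $A_1$-equation is $O(\kappa^{-\beta})$. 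Feeding this, together with $\|A_1(t_0,\cdot)\|_2\lesssim\kappa^{-\beta}$, into the abstract parabolic estimate \eqref{eq:147abstract} for the operator $\mathcal L^{(1)}$ (with $X=A_1$) yields, on a shifted subinterval $(t_1,t_0+1)$ with $t_1=t_0+\delta$ for a fixed small $\delta$,
\begin{equation*}
\|A_1\|_{L^\infty(t_1,t_0+1;H^1(\Omega))}+\|A_1\|_{L^2(t_1,t_0+1;H^2(\Omega))}+\|A_1\|_{L^2(t_1,t_0+1;L^\infty(\Omega))}\leq C(c,\Omega)\,\kappa^{-\beta},
\end{equation*}
and in particular $\curl A_1\in L^\infty(t_1,t_0+1;L^2)\cap L^2(t_1,t_0+1;H^1)$ with the same bound.

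To pass from this time-averaged control to the pointwise bound \eqref{eq:87} I would repeat the two remaining stages of the proof of Proposition \ref{lemma3.1}: on a further subinterval, bootstrap the regularity of $\psi$ via the smoothing estimate \eqref{eq:148abstracta} for the Dirichlet--Neumann problem (equivalently, resolvent bounds for $\Lg_\kappa$), treating $2i\kappa A_1\cdot\nabla_{\kappa A}\psi$, $\kappa^2|A_1|^2\psi$, $i\kappa\phi_1\psi$ and $\kappa^2(1-|\psi|^2)\psi$ as forcing terms and absorbing them through Lemma \ref{lem:5.l2bound} and the $H^1$-bound on $A_1$ just obtained; this produces a slice $\tilde t$ at which $\psi$ lies in $H^2$ with a controlled norm, hence control of $\Im(\bar\psi\nabla_{\kappa A}\psi)$ in $L^2(H^1)$ on that subinterval, where one works throughout with the gauge-covariant form of the supercurrent and of its divergence and curl so that the apparently large $\kappa A\psi$ contributions recombine into controlled covariant derivatives. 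A second application of \eqref{eq:148abstract} to the $A_1$-equation, started from the $H^1$-slice of the first step, then gives a bound on $\|A_1(t,\cdot)\|_\infty$ that is pointwise in $t$; the factor $\kappa^{-1}$ multiplying the supercurrent in the $A_1$-equation is what keeps the net power of $\kappa$ below $-\gamma$, the drop from $\beta$ to $\gamma$ furnishing the slack needed to absorb the remaining $\kappa$-powers coming from the large coefficients of the scaled system. I expect this last point --- the bookkeeping of powers of $\kappa$ through the $\psi$-regularity bootstrap --- to be the main obstacle; choosing $\beta$ sufficiently close to $1/2$ and taking $t_\beta(\kappa)=n_\beta t^*(\kappa)+\tfrac{2}{\lambda c}\ln\kappa$ as in Lemma \ref{lem:5.l2bound} then yields \eqref{eq:87} for all $t\geq t_\beta+1$.
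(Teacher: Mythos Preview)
Your first step is correct and matches the paper: the abstract parabolic estimate for $\mathcal L^{(1)}$ applied on a unit interval past $t_\beta$ does give $\|A_1\|_{L^\infty(H^1)}+\|A_1\|_{L^2(H^2)}\leq C\kappa^{-\beta}$, exactly as in \eqref{eq:165}. The gap is in your second step. You expect a further application of \eqref{eq:148abstract} to the $A_1$-equation to produce directly a \emph{small} pointwise bound, because of the factor $\kappa^{-1}$ in front of the supercurrent. But this would require $\kappa^{-1}\|\Im(\bar\psi\nabla_{\kappa A}\psi)\|_{L^2(H^1)}=O(\kappa^{-\gamma})$, and that fails even in the covariant formulation you propose. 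The $\psi$-equation in this scaling carries the large forcing terms $\kappa^2(1-|\psi|^2)\psi$ and $i\kappa\phi\psi$ (with $\phi\sim h\kappa^2\phi_n$), which force $\|\nabla_{\kappa A}^2\psi\|_2$ to be at least of order $\kappa^{2-\beta}$; this propagates to $\nabla j$ via the term $\Im(\bar\psi\,\nabla_{\kappa A}^2\psi)$, so $\kappa^{-1}\|\nabla j\|_2$ is at best $O(\kappa^{1-\beta})$, which is large. The paper's explicit bookkeeping confirms this: one gets $\|\psi\|_{L^2(H^2)}\leq C\kappa^6$ (see \eqref{eq:171}) and then only $\|A_1\|_{L^\infty(H^2)}\leq C\kappa^9$, not something small. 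The ``slack'' $\beta-\gamma$ cannot absorb a polynomial loss in $\kappa$.

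What you are missing is an interpolation step. The paper does not try to make the $H^2$ bound on $A_1$ small; it accepts the crude polynomial bound $\|\nabla A_1\|_{L^\infty(H^1)}\leq C\kappa^9$ and combines it with the good bound $\|\nabla A_1\|_{L^\infty(L^2)}\leq C_\beta\kappa^{-\beta}$ from the first step via the elementary interpolation
\[
\|\nabla A_1(t,\cdot)\|_{L^{2+\delta}}\leq \|\nabla A_1(t,\cdot)\|_{L^2}^{\,s}\,\|\nabla A_1(t,\cdot)\|_{L^p}^{\,1-s}\leq C\,\kappa^{-s\beta+9(1-s)},\qquad s\geq 1-\delta,
\]
which for $\delta$ small gives $\|\nabla A_1\|_{L^{2+\delta}}\leq C\kappa^{-\beta+C\delta}$; then $W^{1,2+\delta}\hookrightarrow L^\infty$ yields \eqref{eq:87} for any $\gamma<\beta$. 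This interpolation is the mechanism that converts ``small $H^1$ plus large $H^2$'' into ``small $L^\infty$'' in the penetration-depth scaling, and it replaces the direct bootstrap that worked in Proposition~\ref{lemma3.1} for the coherence-length scaling.
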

\begin{proof}
  The proof is similar to the proof of Proposition \ref{lemma3.1}, and
  we therefore bring only a brief summary of it here. We first
  formulate the problem for
  $A_1$
\begin{subequations}  
\label{eq:163}
\begin{alignat}{2}
& \frac{\partial A_1}{\partial t} + c\, \curl^2A_1 =
\frac{1}{\kappa}\Im(\bar\psi\nabla_{\kappa A}\psi) -\nabla\phi_1 & \quad \text{ in }  \R_+\times\Omega
\,,\\
& \Div A_1 = 0 & \text{ in }  \R_+\times\Omega \,,\\
& A_1\cdot\nu=0 &\quad \text{ on } \R_+\times\partial\Omega \,,\\&
\dashint_{\partial\Omega}\curl A_1(t,x) \,ds =0 &\quad \mbox{ for all }  t>0\,.
\end{alignat}
\end{subequations}
As in \eqref{eq:67} we obtain here that
\begin{equation}
\label{eq:138}
  \|\nabla\phi_1\|_2 \leq \frac{c}{\kappa}\|\nabla_{\kappa A}\psi\|_2 \,.
\end{equation}
We can now apply \eqref{eq:147abstract} and
  \eqref{eq:148abstracta} to \eqref{eq:163} (in the case of $\mathcal
  L^{(1)}$) with $X=A_1$ on the interval $[t_0,T]$ for some
$t_0>t_\beta$, $t_1=t_0+1$, and $T=t_0+2$, to obtain 
\begin{multline*}
 \Big\|\frac{\partial A_1}{\partial t}\Big\|_{L^2(t_1,T; L^2(\Omega))} +
\|A_1\|_{L^2(t_1,T; H^2(\Omega))} + \|A_1\|_{L^\infty(t_1,T; H^1(\Omega))} 
          \\\leq C   \Big[\|A_1(t_0,\cdot )\|_2 +\Big \|\nabla\phi_1\|_{L^2(t_0,T;L^2(\Omega)}
          + \frac{1}{\kappa^2}\|\Im(\bar\psi\nabla_{\kappa A}\psi) \|_2\Big]
      \end{multline*}
With the aid of \eqref{eq:138}, \eqref{eq:82}, and \eqref{eq:2.2a} we
then conclude that there exists
$C_\beta(c,\Omega,h)$ such that for all $0<\beta<1/2$,
\begin{multline}
\label{eq:165}
 \Big\|\frac{\partial A_1}{\partial t}\Big\|_{L^2(t_1,T; L^2(\Omega))} +
\|A_1\|_{L^2(t_1,T; H^2(\Omega))} + \|A_1\|_{L^\infty(t_1,T; H^1(\Omega))} 
          \\\leq C   \Big[\|A_1(t_0,\cdot )\|_2 + \|\psi\|_{L^2(t_0,T;L^2(\Omega)}
          + \frac{1}{\kappa^2}\|\psi(t_0,\cdot)\|_2\Big]\leq \frac{C_\beta}{\kappa^\beta}\,.
      \end{multline}
   
We next obtain a bound on $ \|\nabla_{\kappa A}\psi\|_{L^\infty(t_0+1,T; L^2(\Omega))}$.  By
\eqref{eq:82} we have that
\begin{equation}
\label{eq:164}
  \int_{t_0}^T\|\nabla_{\kappa A}\psi(t,\cdot)\|_2^2 \,dt \leq
  \kappa^2\|\psi\|_{L^2(t_0,T;L^2(\Omega)}^2 -
  \frac{1}{2\kappa^2}\|\psi(t_0,\cdot)\|_2^2 \leq C\kappa^{2(1-\beta)}\,.
\end{equation}
It is easy to show that
\begin{displaymath}
  \frac{d}{dt}\|\nabla_{\kappa A}\psi(t,\cdot)\|_2^2 = - \Big\langle \frac{\partial\psi}{\partial t},
  \nabla_{\kappa A}^2\psi \Big\rangle - \Big\langle i\kappa\frac{\partial A}{\partial t}\psi,
  \nabla_{\kappa A}\psi \Big\rangle \,.
\end{displaymath}
We next use (\ref{eq:78}a) to obtain from the above that
\begin{multline*}
   \frac{d}{dt}\|\nabla_{\kappa A}\psi(t,\cdot)\|_2^2 =
   -\Big\|\frac{\partial\psi}{\partial t}\Big\|_2^2 +
   \kappa^2\frac{d}{dt}\Big(\frac{1}{2}\|\psi(t,\cdot)\|_2^2-
 \frac{1}{4}\|\psi(t,\cdot)\|_4^4\Big) \\ - \Big\langle i\kappa\frac{\partial A}{\partial t}\psi,
  \nabla_{\kappa A}\psi \Big\rangle -\Big\langle \frac{\partial\psi}{\partial t},
  i\kappa\phi\psi \Big\rangle\,.
\end{multline*}
Hence,
\begin{displaymath}
   \frac{d}{dt}\|\nabla_{\kappa A}\psi(t,\cdot)\|_2^2 \leq 2\frac{d}{dt}\Big(\frac{1}{2}\|\psi(t,\cdot)\|_2^2-
 \frac{1}{4}\|\psi(t,\cdot)\|_4^4\Big) +
 \kappa^2\Big\|\frac{\partial A_1}{\partial t}\Big\|_2^2 + \|\nabla_{\kappa A}\psi\|_2^2+
 \frac{\kappa^2}{2}\|\phi\|_2^2\,.
\end{displaymath}
By \eqref{eq:9a} and \eqref{eq:138} we have:
\begin{equation}
\label{eq:142}
  \|\phi\|_{1,2}^2 \leq C(\kappa^4 +  \|\nabla_{\kappa A}\psi\|_2^2)\,.
\end{equation}
Let $t_0<\tilde{t}_0<t_1$. Integrating the above on $(\tilde{t}_0,t)$
yields, with the aid of \eqref{eq:164} and \eqref{eq:165}, that for every
$t\in[t_1,T]$ it holds:
\begin{displaymath}
  \|\nabla_{\kappa A}\psi(t,\cdot)\|_2^2 \leq \|\nabla_{\kappa A}\psi(\tilde{t}_0,\cdot)\|_2^2 +
  C\kappa^6 \,.
\end{displaymath}
By \eqref{eq:164}, there exists $\tilde{t}_0\in(t_0,t_1)$ such that
\begin{displaymath}
  \|\nabla_{\kappa A}\psi(\tilde{t}_0,\cdot)\|_2^2\leq C\kappa^{2(1-\beta)} \,,
\end{displaymath}
which readily yields
\begin{equation}
\label{eq:166}
   \|\nabla_{\kappa A}\psi\|_{L^\infty(t_1,T; L^2(\Omega))}\leq C\kappa^3\,. 
\end{equation}

We next apply (\ref{eq:147abstract}) to (\ref{eq:1}a) on the interval
$(t_1,T)$ to obtain that
\begin{multline}
\label{eq:167}
  \|\psi\|_{L^\infty(t_1,T;H^1(\Omega,\C))} + \|\psi\|_{L^2(t_1,T;H^2(\Omega,\C))} 
  \leq C(\Omega)\,
  \big[\|\psi(t_1,\cdot )\|_{1,2} + \kappa\|A\cdot\nabla_{\kappa A}\psi\|_{L^2(t_1,T;
    L^2(\Omega,\C))}+\\ \kappa^2\||A|^2\psi\|_{L^2(t_1
    ,T;L^2(\Omega,\C))}+\kappa\|\phi\psi\|_{L^2(t_1
    ,T; L^2(\Omega,\C))} + \kappa^2\| \psi(1-|\psi|^2)\|_{L^2(t_1,T;
    L^2(\Omega,\C))} \big]\,.
 \end{multline}
We now estimate the various terms on the right-hand-side of
\eqref{eq:167}. 
For the first term we have, in view of \eqref{eq:165} and
\eqref{eq:166}, that
\begin{multline}
  \label{eq:169}
\|\psi(t_1,\cdot )\|_{1,2} \leq \|\psi(t_1,\cdot)\|_2 +
\|\nabla_{\kappa A}\psi(t_1,\cdot)\|_2+\kappa\|A\psi(\cdot,t_1)\|_2\leq \\ C\big[\kappa^{-\beta}+ \kappa^{1-\beta} +\kappa\big(\|A_1\|_{L^\infty(t_0,T;
    L^2(\Omega))}+\kappa^2\|A_n\|_\infty\big) \big] \leq \hat C\, \kappa^3 \,.
\end{multline}
For the second term we have by \eqref{eq:166} and
\eqref{eq:164} 
\begin{equation}
\label{eq:168}
   \kappa\|A\cdot\nabla_{\kappa A}\psi\|_{L^2(t_1,T;
    L^2(\Omega,\C))} \leq \kappa\,  \|\nabla_{\kappa A}\psi\|_{L^\infty(t_1,T;
    L^2(\Omega))}\big(\|A_1\|_{L^2(t_0,T; L^\infty(\Omega))}+\kappa^2\|A_n\|_\infty\big) \leq C\kappa^6 \,.
\end{equation}
For the third term we have, with the aid of \eqref{eq:2.2a},
\eqref{eq:165}, and Sobolev embedding 
\begin{equation}
\label{eq:170}
  \kappa^2\||A|^2\psi\|_{L^2(t_1
    ,T;L^2(\Omega,\C))} \leq \kappa^2\big(\|A_1\|_{L^\infty(t_0,T;
    H^1(\Omega))}^2+\|A_n\|_\infty^2\big)  \leq C\kappa^6 \,.
\end{equation}
For the fourth term we have, in view of \eqref{eq:142}
\begin{displaymath}
  \kappa\|\phi\psi\|_{L^2(t_1,T; L^2(\Omega,\C))} \leq C\kappa^3 \,.
\end{displaymath}
Finally, for the last term on the right hand-side we have
\begin{displaymath}
   \kappa^2\| \psi(1-|\psi|^2)\|_{L^2(t_1,T;
    L^2(\Omega,\C))}\leq C\kappa^{2-\beta} \,.
\end{displaymath}
Combining the above with \eqref{eq:169}, \eqref{eq:168}, and
\eqref{eq:170} yields
\begin{equation}
  \label{eq:171}
\|\psi\|_{L^\infty(t_1,T;H^1(\Omega,\C))} + \|\psi\|_{L^2(t_1,T;H^2(\Omega,\C))}  \leq
C\kappa^6 \,.
\end{equation}

We continue from here in precisely the same manner as in the proof of Lemma 
\ref{lemma3.1} to obtain that,
\begin{displaymath}
      \|A_1\|_{L^\infty(t_1,T; H^2(\Omega,\R^2))}
      \leq C \Big[\frac{1}{\kappa}\|\Im(\bar{\psi}\nabla_{\kappa A}\psi)\|_{L^2(t_2,T;H^1(\Omega,\R^2))}+\|A_1(t_2,\cdot )\|_2^2 \Big]\,.
\end{displaymath}
With the aid of \eqref{eq:171} it can then be proved that
\begin{displaymath}
   \|A_1\|_{L^\infty(t_1,T;H^2(\Omega,\R^2))} \leq C\kappa^9 \,,
\end{displaymath}
and hence
$$
  \|\nabla A_1\|_{L^\infty(t_1,T;H^1(\Omega,\R^2))} \leq C\kappa^9 \,.
$$
Note that by \eqref{eq:165} 
$$
\|\nabla A_1\|_{L^\infty (t_1,T;L^2(\Omega))} \leq \frac{C_\beta}{\kappa^\beta}\,.
$$
We next apply a standard interpolation inequality (cf. Section 7.1 in
\cite{gitr01}) together with the above to obtain
\begin{equation}
\label{eq:172}
  \|\nabla A_1(t,\cdot)\|_{2+\delta}\leq \|\nabla A_1(t,\cdot)\|_2^s\|\nabla A_1(t,\cdot)\|_p^{1-s}\leq C\, \kappa^{-s\beta+9(1-s)} \,,
\end{equation}
where
\begin{displaymath}
  s= \frac{\frac{1}{2+\delta}-\frac{1}{p}}{\frac{1}{2}-\frac{1}{p}} \geq 1-
  \delta\frac{p}{2p-4} \,.
\end{displaymath}
For $p\geq4$ we get
\begin{displaymath}
  s\geq1-\delta \,,
\end{displaymath}
which combined with \eqref{eq:172} yields
\begin{displaymath}
    \|\nabla A_1(t,\cdot)\|_{2+\delta}\leq C\kappa^{-\beta+\delta(9-\beta)} \,.
\end{displaymath}
Since we may choose $\delta$ to be arbitrarily small the corollary easily
follows with the aid of Sobolev embeddings. 
 \end{proof}
We can now conclude long-time decay of solutions of \eqref{eq:78}.
\begin{proposition}
\label{prop:5.6}
Suppose that for some $0<\alpha<1$, there exists some $C>0$ and $\kappa_0>0$
  we have 
  \begin{equation}
\label{eq:88}
    \sup_{\gamma\in\R}\|(\Lg_\kappa-i\gamma)^{-1}\| < \frac{1}{\kappa^2}\Big[1- \frac{C}{\kappa^{1-\alpha}}\Big]
  \end{equation} 
  for all $\kappa>\kappa_0$.  Then, there exists $\kappa_1\geq\kappa_0$ such that
  \eqref{eq:181} with $\nu=0$ holds true.
\end{proposition}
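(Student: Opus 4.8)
The plan is to run the argument of the proof of Theorem \ref{thm:1.2} \emph{mutatis mutandis}, with $\LL_h$ replaced by $\Lg_\kappa=\LL_{\kappa^3h}$ and with the a priori bounds of Lemma \ref{lem:5.l2bound} and Corollary \ref{cor:5.5} playing, in the penetration-depth scaling, the roles of \eqref{eq:159} and Proposition \ref{lemma3.1}. First I would rewrite (\ref{eq:78}a) as $\partial_t\psi+\Lg_\kappa\psi=F$, where, with $A_1,\phi_1$ as in \eqref{eq:65},
\[
 F=2i\kappa\,A_1\!\cdot\!\nabla_{\kappa A}\psi-\kappa^2|A_1|^2\psi+i\kappa\phi_1\psi+\kappa^2(1-|\psi|^2)\psi ;
\]
this uses $i\kappa\Phi\psi=i\kappa\phi_1\psi+i\kappa^3h\phi_n\psi$ together with $-\nabla_{\kappa A}^2=-\nabla_{\kappa^3hA_n}^2+2i\kappa\,A_1\!\cdot\!\nabla_{\kappa A}-\kappa^2|A_1|^2$ (recall $\Div A_1=0$). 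I would fix $M\ge\|A_1(0,\cdot)\|_2$, let $t_\beta=t_\beta(\kappa,M)$ be as in Lemma \ref{lem:5.l2bound} and let $\gamma\in(0,\tfrac12)$ be the decay exponent of Corollary \ref{cor:5.5}, and then introduce the cutoff $\chi_{T,\epsilon}$ of \eqref{eq:58} and the Laplace transform along $\Gamma_0$ exactly as in Section \ref{section4}, now with the constraint $1/\epsilon\ge t_\beta+1$, so that on the support of $\chi_{T,\epsilon}$ the estimates \eqref{eq:81}, \eqref{eq:87}, \eqref{eq:138} and \eqref{eq:82} all apply.

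The heart of the matter is the analogue of Step~1: for $\kappa$ large,
\[
 \|\widehat{\chi_{T,\epsilon}F}\|_{L^2(\Gamma_0;L^2)}^2
 \;\le\;\kappa^4\bigl(1+C\kappa^{-\gamma}\bigr)\,\|\widehat{\chi_{T,\epsilon}\psi}\|_{L^2(\Gamma_0;L^2)}^2
 \;+\;C(\kappa)\,\|\psi(\epsilon^{-1},\cdot)\|_2^2 .
\]
To prove this I would split $F$ into the principal term $\kappa^2(1-|\psi|^2)\psi$, for which $|\kappa^2(1-|\psi|^2)\psi|\le\kappa^2|\psi|$ once $\|A_1(t,\cdot)\|_\infty\le1$ (true for $\kappa$ large on the support of $\chi_{T,\epsilon}$ by \eqref{eq:87} and \eqref{eq:2.2a}), and the remainder $R=-\kappa^2|A_1|^2\psi+i\kappa\phi_1\psi+2i\kappa\,A_1\!\cdot\!\nabla_{\kappa A}\psi$, whose $L^2$-norm in $x$ is $\lesssim\kappa^{2-2\gamma}\|\psi\|_2+\kappa^{1-\gamma}\|\nabla_{\kappa A}\psi\|_2$ by \eqref{eq:87}, \eqref{eq:138} and Poincar\'e's inequality applied to $\phi_1$ (recall $\int_\Omega\phi_1\,dx=0$). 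Multiplying by $\chi_{T,\epsilon}^2$, integrating in $t$, invoking \eqref{eq:82} and integrating by parts in $t$ --- the value $\chi_{T,\epsilon}(\epsilon^{-1})=1$ producing the $\|\psi(\epsilon^{-1},\cdot)\|_2^2$ term and $\chi_{T,\epsilon}'\le0$ on $(\epsilon^{-1},\infty)$ being used as in Section \ref{section4} --- and then combining the principal term and $R$ by Cauchy's inequality with parameter $\sim\kappa^{-\gamma}$, one obtains the bound above after Parseval.

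For the analogue of Step~2, I would multiply $\partial_t\psi+\Lg_\kappa\psi=F$ by $\chi_{T,\epsilon}$ and Laplace-transform at a point $i\gamma_0$ of $\Gamma_0$, giving $(\Lg_\kappa-i\gamma_0)\widehat{\chi_{T,\epsilon}\psi}-\widehat{\chi_{T,\epsilon}'\psi}-e^{i\gamma_0/\epsilon}\psi(\epsilon^{-1},\cdot)=\widehat{\chi_{T,\epsilon}F}$. As in the remark after Theorem \ref{thm:1.2}, the identity $\Im\langle u,\Lg_\kappa u\rangle=\kappa^3h\langle u,\phi_nu\rangle$ with $\phi_n\in L^\infty$ forces $\|(\Lg_\kappa-i\gamma_0)^{-1}\|$ to decay like $|\gamma_0|^{-1}$ as $|\gamma_0|\to\infty$, so that together with \eqref{eq:88} the map $\gamma_0\mapsto\|(\Lg_\kappa-i\gamma_0)^{-1}\|$ is square-integrable on $\Gamma_0$. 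Using \eqref{eq:88} (hence $\|(\Lg_\kappa-i\gamma_0)^{-1}\|^2<\kappa^{-4}(1-C\kappa^{-(1-\alpha)})$), Cauchy's inequality with a parameter $\delta$, Parseval, the bound $-\chi_{T,\epsilon}'\le\epsilon\chi_{T,\epsilon}$, and the Step~1 estimate, one arrives at
\begin{multline*}
 \|\chi_{T,\epsilon}\psi\|_{L^2(\R_+;L^2)}^2
 \le\bigl(1-C\kappa^{-(1-\alpha)}\bigr)(1+2\delta)\Bigl(1+C\kappa^{-\gamma}+C\tfrac{\epsilon^2}{\delta}\Bigr)\,\|\chi_{T,\epsilon}\psi\|_{L^2(\R_+;L^2)}^2\\
 +\tfrac{C(\kappa)}{\delta}\,\|\psi(\epsilon^{-1},\cdot)\|_2^2 .
\end{multline*}
Choosing the Corollary \ref{cor:5.5} exponent $\gamma$ with $\gamma>1-\alpha$, then $\delta=\kappa^{-(1-\alpha)}$ and $\epsilon=\min\!\bigl(\kappa^{-2},\,1/(t_\beta+1)\bigr)$, the coefficient of $\|\chi_{T,\epsilon}\psi\|_{L^2(\R_+;L^2)}^2$ on the right drops below $1$ for all $\kappa\ge\kappa_1$ large; since $\|\psi(\epsilon^{-1},\cdot)\|_2\le C_\beta\kappa^{-\beta}<\infty$ by \eqref{eq:81}, the right-hand side is finite, and absorbing the quadratic term gives $\|\chi_{T,\epsilon}\psi\|_{L^2(\R_+;L^2)}^2\le C(\kappa)$ uniformly in $T$. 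Letting $T\to\infty$ and adding the bounded contribution of $[0,\epsilon^{-1}]$ yields \eqref{eq:181} with $\nu=0$.

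The hard part is the bookkeeping of powers of $\kappa$ in Step~1: since $\Lg_\kappa$ and the nonlinear self-interaction $\kappa^2(1-|\psi|^2)\psi$ both scale like $\kappa^2$, hypothesis \eqref{eq:88} provides only the slim contraction factor $1-C\kappa^{-(1-\alpha)}$, and the scheme closes only if every coupling correction is of strictly smaller order than this gap. The critical term is $2i\kappa\,A_1\!\cdot\!\nabla_{\kappa A}\psi$, which cannot be estimated crudely; it is tamed by pairing the $L^\infty$-decay $\|A_1(t,\cdot)\|_\infty\le C_\gamma\kappa^{-\gamma}$ of Corollary \ref{cor:5.5} --- with $\gamma$ taken as close to $1/2$ as needed --- against the averaged bound \eqref{eq:82} for $\|\nabla_{\kappa A}\psi\|_2$, which together make its contribution $O(\kappa^{-\gamma})$ relative to the principal term.
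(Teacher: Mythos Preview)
Your argument is essentially the paper's own proof: rewrite (\ref{eq:78}a) as $\partial_t\psi+\Lg_\kappa\psi=F$, establish the Step~1 bound \eqref{eq:89} via Corollary~\ref{cor:5.5}, \eqref{eq:138} and \eqref{eq:82}, and then close with the resolvent hypothesis \eqref{eq:88} exactly as in Theorem~\ref{thm:1.2}. The one slip is your choice $\delta=\kappa^{-(1-\alpha)}$: with it,
\[
(1-C\kappa^{-(1-\alpha)})(1+2\delta)=1+(2-C)\,\kappa^{-(1-\alpha)}+O(\kappa^{-2(1-\alpha)}),
\]
which is \emph{not} $<1$ for large $\kappa$ whenever the constant $C$ supplied by \eqref{eq:88} satisfies $C\le 2$, so the contraction need not close; the paper avoids this by taking $\delta=\epsilon=1/\kappa$, so that $2\delta=2/\kappa=o(\kappa^{-(1-\alpha)})$ and the coefficient drops below $1$ for any $C>0$.
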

\begin{proof}
  The proof is almost identical with the proof of Theorem
  \ref{thm:1.2}, and hence we bring therefore only a brief summary of
  it. We first set
 \begin{displaymath}
  F=2i\kappa A_1\cdot\nabla_{\kappa A}\psi - |\kappa A_1|^2\psi+ i\kappa\phi_1\psi 
   +\kappa^2(1-|\psi|^2)\psi  \,.
\end{displaymath}
We then recall the definition of $\chi_{T,\epsilon}$ in \eqref{eq:58}, of
$\Gamma_\omega$ in \eqref{eq:57}, and of the Laplace transform, and then, in
the same manner we prove \eqref{eq:59} above, we show that for any
$0<\gamma<1$ there exists $C_\gamma>0$ such that
\begin{equation}
\label{eq:89}
  \|\widehat{\chi_{T,\epsilon}F}\|_{L^2(\Gamma_0\,;\,L^2(\Omega,\C))}^2 \leq
  \kappa^2\Big(1+\frac{C_\gamma}{\kappa^{1-\gamma}}
\Big)\big[\|\widehat{\chi_{T,\epsilon}\psi}\|_{L^2(\Gamma_0\,;\,L^2(\Omega,\C))}^2 +
\|\psi(\epsilon^{-1},\cdot)\|_2^2\big]\,.
\end{equation}

We next write (\ref{eq:78}a) in the form
\begin{displaymath}
   \frac{\partial\psi}{\partial t} + \Lg_\kappa\psi = F  \,,
 \end{displaymath}
and then take its Laplace transform. Then, we make use of \eqref{eq:89}
in conjunction with \eqref{eq:88} and Parseval's identity to obtain 
 \begin{multline*}
    \| \chi_{T,\epsilon}\psi\|_{L^2(\R_+\,;\,L^2(\Omega,\C))}^2 \leq   \Big[1-
    \frac{C}{\kappa^{1-\alpha}}\Big] (1+2\delta)\Big(1+\frac{C_\gamma}{\kappa^{1-\gamma}}+
    C\frac{\epsilon^2}{\delta}
 \Big)\| \chi_{T,\epsilon}\psi\|_{L^2(\R_+\,;\,L^2(\Omega,\C))}^2 \\
   + \frac{C}{\delta}\|\psi(\epsilon^{-1},\cdot )\|_2^2 \,.
 \end{multline*}
Finally, we choose $\delta=\epsilon=1/\kappa$ and $\gamma<\alpha$ to
obtain that
\begin{displaymath}
  \| \chi_{T,\epsilon}\psi\|_{L^2(\R_+\,;\,L^2(\Omega,\C))}^2 \leq C\kappa^4\, \|\psi(\epsilon^{-1},\cdot )\|_2^2 \,,
\end{displaymath}
and take the limit $T\to\infty$ to complete the proof.
\end{proof}
\newpage

\section{Resolvent estimates in the large domain limit}  
\label{section6}
\subsection{Presentation of the problem}
\rm
In the previous sections we have obtained sufficient conditions for
the stability of the semi-group associated with \eqref{eq:1}. These
conditions were phrased in terms of the resolvent norm of the linear
operator $\LL_h$, which is defined in (\ref{eq:44}). As the operator
is defined on a general class of domains in $\R^2$, we attempt to
estimate the resolvent $(\LL_h-\lambda)^{-1}$, in the large domain limit,
by approximate operators defined on $\R^2$ and $\R^2_+$, $h$
being   fixed and strictly positive.\\
Let then
$R>0$. We denote by $\Omega_R$ the image of $\Omega$ under the dilation
\begin{equation}
\label{eq:147}
  x\to R\, x \,.
\end{equation}
We assume that the domain $\Omega$ has the property (R1)-(R2) and that
assumptions   (J1)-(J3), (B)  and (C) are met.  \\
Denote the
transformed electric field by $\phi_R$. It satisfies the problem
\begin{displaymath}
  \begin{cases}
      \Delta\phi_R = 0 & \text{in } \Omega_R\,, \\
      \frac{\partial\phi_R}{\partial\nu}  = -\frac{J_R(x)}{\sigma}  & \text{on } \partial\Omega_R\,, 
  \end{cases}
\end{displaymath}
where the current density $J_R$ remains fixed except for the dilation 
$$
J_R(x)=J_r(x/R)\,,
$$
in which $J_r(x)$ is the reference current density defined in
\eqref{eq:187}.\\
Note that $$ \phi_R(x)=R\, \phi_n(x/R)\,.
$$
The transformed magnetic
potential, which we denote by $A_R$ then satisfies
\begin{equation}
\label{eq:90}
    \begin{cases}
     - \curl^2A_R + \frac{1}{c}\nabla\phi_R = 0 & \text{in } \Omega_R\,, \\
     \curl A_R = B_R(x) & \text{on }\partial\Omega_R\,,
  \end{cases}
\end{equation}
where $$B_R(x)=R\, B(x/R)\,.$$
 It can be easily proved that $$ A_R(x) =
R^2\, A_n(x/R)\,.
$$
Let then
\begin{equation}
\label{eq:91}
  \LL_h^R = -\nabla_{hA_R}^2 + ih\phi_R \,.
\end{equation}
The form domain associated with $\LL_h^R$ is given by
\begin{displaymath}
  H^{1,\pa \Omega_{R,c}}_0= \Big\{ u\in H^1(\Omega_R,\C)\,\Big|\, u|_{\partial\Omega_{R,c}}=0 \Big \} \,, 
\end{displaymath}
and  the domain of  $\LL_h^R$  is
\begin{displaymath}
  \Dg_R= \Big\{ u\in H^2(\Omega_R,\C)\,\Big|\, u|_{\partial\Omega_{R,c}}=0 \,, \; \frac{\partial u}{\partial\nu}\Big|_{\partial\Omega_{R,i}}=0\Big \} \,.
\end{displaymath}

We attempt to estimate $\sup_{\gamma\in\R}\| (\LL_h^R-\mu- i\gamma)^{-1}\|$ as
$R\to\infty$. Once the problem has been defined, we apply the inverse
transformation of \eqref{eq:147} to \eqref{eq:91} to obtain that
\begin{equation}
\label{eq:174}
  \sup_{\gamma\in\R}\| (\LL_h^R-\mu- i\gamma)^{-1}\| = R^2\sup_{\gamma\in\R}\| (\LL_{R^3h}-\mu R^2- i\gamma)^{-1}\| \,.
\end{equation}
We attempt to estimate the right-hand side in the sequel, as it is
more in line with the standard practice in semi-classical analysis
than the estimate of the left-hand-side. Furthermore, estimating the
right-hand-side would be valuable also for the analysis of the
penetration-scale problem presented in the previous section. In
particular, \eqref{eq:88}, which is a sufficient condition for global
stability of the normal state can be written in the form
\begin{equation}
  \kappa^2\sup_{\gamma\in\R}\| (\LL_{\kappa^3h}- i\gamma)^{-1}\| \leq 1 - \frac{C}{\kappa^{1-\alpha}}\,.
\end{equation}
Thus, an estimate of the right-hand-side is valuable for this problem
as well as for verifying that \eqref{eq:180} is satisfied. Additionally,
we can use the estimate of \eqref{eq:174} with $\mu=1$ to find the
critical current where the normal state looses its stability, which
amounts to determining the values of $h$ for which \eqref{eq:174}
becomes infinite.

Let $\gamma\in\R$, $B_n=\curl A_n$ and
 \begin{equation}\label{defF}
 F=\phi_n+icB_n\,.
 \end{equation}
 From \eqref{eq:8}, which is nothing else as the Cauchy-Riemann equations for $F$, we see that
$F$ is holomorphic in $\Omega$ as a function of $x_1+ix_2$.
\begin{lemma}
\label{uniquesol}
Under assumptions (B), (J1)-(J3) and (R1)-(R2), for any $\gamma$, $F-\gamma
/h$ has at most one simple zero in $\Omega$.
 \end{lemma}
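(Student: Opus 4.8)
The plan is to reduce the statement to the geometry of the nodal set $B_n^{-1}(0)$ and then exploit the fact that $\phi_n$ and $cB_n$ are conjugate harmonic. First I would observe that $\gamma$ and $h$ are real, $c>0$, and $\phi_n,B_n$ are real-valued, so a point $z\in\Omega$ with $F(z)-\gamma/h=0$ must have $\Im\!\big(F(z)-\gamma/h\big)=c\,B_n(z)=0$, i.e. $B_n(z)=0$, together with $\phi_n(z)=\gamma/h$. Hence the zero set of $F-\gamma/h$ in $\Omega$ is contained in $\Gamma:=B_n^{-1}(0)$. If $\Gamma=\emptyset$ there is nothing to prove, so by assumption (B) in the form \eqref{(B)} we may assume $B_{min}<0<B_{max}$; then the analysis carried out in Subsection~\ref{ss2.3} applies with $\mu=0\in(B_{min},B_{max})$ and shows that $\Gamma$ is a single simple smooth arc joining the two connected components of $\partial\Omega_c$, lying away from the corners, along which $\nabla B_n\neq0$ (cf. \eqref{hypnabla}).

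Next I would dispose of the word ``simple''. Rewriting \eqref{eq:8} as the Cauchy--Riemann system, $\partial_{x_1}\phi_n=c\,\partial_{x_2}B_n$ and $\partial_{x_2}\phi_n=-c\,\partial_{x_1}B_n$, one gets $F'=\partial_{x_1}\phi_n-i\,\partial_{x_2}\phi_n$ and therefore $|F'|^2=|\nabla\phi_n|^2=c^2|\nabla B_n|^2$. By \eqref{hypnabla} this never vanishes on $\overline\Omega$, so $F'\neq0$ on $\Omega$; consequently every zero of $F-\gamma/h$ is automatically simple, and it remains only to prove that $F-\gamma/h$ has \emph{at most one} zero in $\Omega$.

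The core step is to show that $\phi_n$ is strictly monotone along $\Gamma$. I would parametrize $\Gamma$ by arclength, $s\mapsto z(s)$, with unit tangent $z'(s)$. Since $\Gamma$ is a level curve of $B_n$ and $\nabla B_n\neq0$ on $\Gamma$, we have $z'(s)=\varepsilon\,\nabla^{\perp}B_n(z(s))/|\nabla B_n(z(s))|$ for some $\varepsilon\in\{\pm1\}$ depending continuously on $s$, hence \emph{constant} on the connected arc $\Gamma$, where $\nabla^{\perp}B_n:=(\partial_{x_2}B_n,-\partial_{x_1}B_n)$. Using the Cauchy--Riemann identities in the form $\nabla\phi_n=c\,\nabla^{\perp}B_n$, we obtain
\[
\frac{d}{ds}\,\phi_n(z(s))=\nabla\phi_n(z(s))\cdot z'(s)=\varepsilon\, c\,|\nabla B_n(z(s))|\neq0 .
\]
Thus $\phi_n\circ z$ is strictly monotone, so $\phi_n=\gamma/h$ holds at most once on $\Gamma$, and therefore $F-\gamma/h$ has at most one zero in $\Omega$; combined with the previous paragraph this proves the lemma.

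As for the difficulty: there is essentially no hard analytic step, since everything rests on results already established in Section~\ref{section2} --- the structure of $B_n^{-1}(\mu)$ for $\mu\in(B_{min},B_{max})$ and the nonvanishing of $\nabla B_n$ on $\overline\Omega$ under (B), (J1)--(J3), (R1)--(R2). The only point requiring a little care is checking that the sign $\varepsilon$ is globally constant along $\Gamma$, which is exactly where connectedness of $\Gamma$ and $\nabla B_n\neq0$ on it are used; both are guaranteed precisely by assumption (B).
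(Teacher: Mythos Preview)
Your proof is correct and follows essentially the same approach as the paper's own argument: both reduce to the zero set $\Gamma=B_n^{-1}(0)$, invoke the structural results of Subsection~\ref{ss2.3} (in particular \eqref{hypnabla}) to see that $\Gamma$ is a single smooth arc with $\nabla B_n\neq0$, and then use the Cauchy--Riemann relations from \eqref{eq:8} to conclude that $\nabla\phi_n$ is tangent to $\Gamma$ and hence $\phi_n$ is strictly monotone along it. Your write-up is simply more explicit---spelling out $|F'|^2=|\nabla\phi_n|^2=c^2|\nabla B_n|^2$ for simplicity of the zeros and parametrizing $\Gamma$ to compute $\tfrac{d}{ds}\phi_n(z(s))$---but the underlying idea is identical.
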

 
  \begin{proof}
    Since $\gamma$ is real, any zero of $F-\gamma/h$ must lie in $B_n^{-1}
    (0)$.  It has been established in \eqref{hypnabla} that
    $B_n^{-1}(0)$ is either empty or that it is a regular curve
    $\Gamma$ joining the two components of $\pa \Omega_i$ on which $\nabla B_n \neq
    0$. By \eqref{eq:8}, $\nabla \phi_n \neq 0$ and is tangent to $\Gamma$. Hence
    $\phi_n$ is strictly monotone on $\Gamma$, which completes the proof of
    the lemma.
   \end{proof}
To obtain the supremum with respect to $\gamma$ of the resolvent as is
   clear from \eqref{eq:174}, we allow for dependence of $\gamma$ on $R$.\\
   
   {\em Suppose first} that a zero of $F-\gamma(R)/h$ exists in $\Omega$, and
   let $z_0(R)=(x_0,y_0)$ denote it.  We distinguish between two
   different cases:
\begin{subequations}
\label{eq:92}
  \begin{gather}
  d(z_0,\partial\Omega)\geq 2R^{\alpha -1  } \,,\\
d(z_0,\partial\Omega)< 2R^{\alpha -1  } \,,
\end{gather}
where $\alpha \in (0,1)$ will be determined later.\\
In the case (\ref{eq:92}a) we approximate $\|(\LL_{R^3h}-\lambda)^{-1}\|$ by
the norm of a similar operator on $L^2(\R^2)$, whereas in case
(\ref{eq:92}b), we approximate it by the norm of an operator on
$L^2(\R^2_+)$.\\

{\em Suppose next} that $F\neq\gamma(R)/h$ for all $x\in\overline{\Omega}$. Let $\Gamma$ denote
the set in $\Omega$ where $B_n=0$ ($\Gamma=B_n^{-1}(0)$). By Assumption (B) together with (R2),
this set is a single curve joining the two components of $\partial\Omega_c$, away
from the corners.  Denote the points of intersection of $\Gamma$ with $\partial\Omega$ by
$z_1$ and $z_2$.  Without loss of generality we assume that
$\phi_n(z_1)<\phi_n(z_2)$  and that $\gamma/h<\phi_n(z_1)$ (the case $\gamma/h > \phi_n(z_2)$ can be treated similarly).  We distinguish then
between two cases
\begin{gather}
-\gamma/h+\phi_n(z_1) \geq C^*R^{\alpha -1  } \,,\\
-\gamma/h + \phi_n(z_1)< C^*R^{\alpha -1  } \,
\end{gather}
\end{subequations}
where $C^*$ is determined below.

In the case (\ref{eq:92}d) we set
\begin{equation*}
  z_0 = z_1 +\frac{\frac{\gamma}{h}-F(z_1)}{F^\prime(z_1)}\quad  \,,
\end{equation*}
which is clearly well defined, since $|F^\prime(z_j)|=|\nabla\phi_n(z_j)|$ which
is strictly positive, as has already been stated in the proof of Lemma
\ref{uniquesol}.  Let
\begin{equation}
  d_0 = d(z_0,\Omega) \,.
\end{equation}
By Assumption (C) $\nabla\phi_n(z_1)$ is perpendicular to $\partial\Omega$ at $z_1$,
and so is, by the Cauchy-Riemann relations, $F^\prime(z_1)$. For sufficiently large $R$, 
we thus have $z_0\not\in \Omega$, 
\begin{equation}
\label{eq:185}
  d_0=|z_0-z_1|\,.
\end{equation}
We thus choose $$C^*=2|\nabla\phi_n(z_1)|$$ which guarantees that $$ d_0<2R^{\alpha
  -1 }\,,
  $$
  for $R$ large enough, when (\ref{eq:92}d) is
  met. \\
  Hence in Cases (\ref{eq:92}a,b,d), we have constructed a point
  $z_0(R)$ which for $R$ large enough is in a fixed neighborhood
  $\mathcal V(\Omega)$ of $\Omega$. We shall consolidate in the sequel the
  treatment the cases (\ref{eq:92}d) and (\ref{eq:92}b). In the case
  (\ref{eq:92}c) we shall prove that $\lambda_R$ is not in the spectrum and
  that $\|(\LL_{R^3h}-\lambda_R)^{-1}\|\to0$ as $R\to\infty$, where
\begin{equation}
\label{defgammaR}
\lambda_R = \nu(R) R^2
+i\gamma(R) R^3\,.
\end{equation}

 \subsection{A resolvent estimate}

 We seek an estimate for $\|(\LL_{R^3h}-\lambda_R)^{-1}\|$ for the cases
 (\ref{eq:92}a,b,d). Let then $\chi,\tilde \chi \in C^\infty(\R_+,[0,1])$ form a
 partition of unity satisfying
\begin{equation}
\label{eq:93}
  \chi(t) =
  \begin{cases}
    1 &\mbox{ for }  t \leq 1\,, \\
    0 & \mbox{ for }  t \geq  2\,,
  \end{cases}  
\end{equation}
and
\begin{displaymath}
  \chi^2 + \tilde \chi^2 =1 \mbox{ on } \mathbb R^+\,.
\end{displaymath}
We then introduce   $$\chi_1(x)=\chi(R^{1-\alpha  }|x-z_0|)  \mbox{ and }
\chi_2(x)= \tilde \chi(R^{1-\alpha  } |x-z_0|) \,,$$
where \begin{equation}\label{condalpha}
0<\alpha<1/3
\end{equation}
 is kept fixed throughout the sequel.\\

We establish the  following auxiliary result:
\begin{lemma}
\label{lem:5.1}
Suppose that  for some $\ell$ and $R_0$, $\nu(R)\leq \ell $ in \eqref{defgammaR}  for all $R\geq R_0$ and  let $z_0(R)$ as defined in the previous subsection for the cases (\ref{eq:92}a,b,d)). Then, there exist $R_1\geq R_0$ and $C>0$, such
that, if $R\geq R_1$ and $\lambda_R$ belongs to the
resolvent set $\rho(\LL_{R^3h})$ of $ \LL_{R^3h}$, then
  \begin{equation}
    \label{eq:94}
\| \chi_2(\LL_{R^3h}-\lambda_R)^{-1} \| \leq C\, R^{-\alpha}\,\big(R^{-2}+\|  (\LL_{R^3h}-\lambda_R)^{-1}\|\big) \,.
  \end{equation}
\end{lemma}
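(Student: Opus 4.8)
The plan is to exploit the fact that $\chi_2$ is supported away from the ball $B(z_0, R^{\alpha-1})$, precisely where the relevant "potential" $F-\gamma/h$ (equivalently the magnetic-field/electric-potential structure of $\LL_{R^3h}$) is bounded below. Write $v = (\LL_{R^3h}-\lambda_R)^{-1}f$ for arbitrary $f\in L^2(\Omega_R)$ (I work after the dilation $x\mapsto Rx$, so on $\Omega_R$ with the operator $\LL_h^R$, using \eqref{eq:174}); equivalently in rescaled variables. First I would compute the commutator: $(\LL_{R^3h}-\lambda_R)(\chi_2 v) = \chi_2 f + [\LL_{R^3h},\chi_2] v$, where the commutator term is a first-order operator whose coefficients are supported in the annulus $1\le R^{1-\alpha}|x-z_0|\le 2$, with $|\nabla\chi_2|=O(R^{1-\alpha})$ and $|\Delta\chi_2| = O(R^{2-2\alpha})$. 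Thus $\|[\LL_{R^3h},\chi_2]v\|_2 \le C R^{1-\alpha}\|\nabla_{R^3hA_n}v\|_{L^2(\text{annulus})} + C R^{2-2\alpha}\|v\|_{L^2(\text{annulus})}$.

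Next I would establish a lower bound: on $\operatorname{supp}\chi_2$ the operator $\LL_{R^3h}-\lambda_R$ is invertible with norm of the inverse $O(R^{-2}\cdot R^{?})$ — more precisely, I would show that the quadratic form $\Re\langle (\LL_{R^3h}-\lambda_R)u,u\rangle$ together with the imaginary part $\Im\langle\cdot\rangle = R^3 h\langle \phi_n u,u\rangle - R^3\gamma\|u\|_2^2$ controls $\|u\|_2$ for $u$ supported away from $z_0$. The point is that $z_0$ was constructed (Cases \eqref{eq:92}a,b,d) as essentially the unique point where $\phi_n - \gamma/h$ vanishes (Lemma \ref{uniquesol}), so for $|x-z_0|\ge R^{\alpha-1}$ we have $|\phi_n(x)-\gamma(R)/h| \ge c R^{\alpha-1}$ by the strict monotonicity/non-degeneracy (the simple zero). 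Multiplying the imaginary part by $R^3 h$ gives a lower bound $\ge c R^{3}\cdot R^{\alpha-1}\|u\|_2^2 = cR^{2+\alpha}\|u\|_2^2$ on $\operatorname{supp}\chi_2$. Hence $\|(\LL_{R^3h}-\lambda_R)u\|_2 \ge cR^{2+\alpha}\|u\|_2$ for such $u$, i.e. a local resolvent bound $O(R^{-2-\alpha})$. Applying this to $u = \chi_2 v$:
\begin{equation*}
  \|\chi_2 v\|_2 \le C R^{-2-\alpha}\big(\|\chi_2 f\|_2 + \|[\LL_{R^3h},\chi_2]v\|_2\big)\,.
\end{equation*}
Then I would absorb the commutator: $\|[\LL_{R^3h},\chi_2]v\|_2 \le CR^{1-\alpha}\|\nabla_{R^3hA_n}v\|_2 + CR^{2-2\alpha}\|v\|_2$, and the gradient term is itself controlled via the basic energy identity $\|\nabla_{R^3hA_n}v\|_2^2 \lesssim \Re\langle(\LL_{R^3h}-\lambda_R)v,v\rangle + (\ell+1)\|v\|_2^2 \lesssim \|f\|_2\|v\|_2 + R^2\|v\|_2^2$ (here using $\nu(R)\le\ell$ and that the constant multiplying $\|v\|_2^2$ scales like $R^2$ after dilation), so $\|\nabla_{R^3hA_n}v\|_2 \lesssim R(\|v\|_2 + R^{-1}\|f\|_2^{1/2}\|v\|_2^{1/2})$. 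Combining, the dominant contribution to $\|[\LL_{R^3h},\chi_2]v\|_2$ is $O(R^{2-\alpha})\|v\|_2 + (\text{lower order})\|f\|_2$, and after multiplying by $R^{-2-\alpha}$ one gets exactly $\|\chi_2 v\|_2 \le C R^{-2\alpha}\|v\|_2 + C R^{-2-\alpha}\|f\|_2$. Since $\|v\|_2 \le \|(\LL_{R^3h}-\lambda_R)^{-1}\|\,\|f\|_2$ and $\|f\|_2 = \|(\LL_{R^3h}-\lambda_R)(\LL_{R^3h}-\lambda_R)^{-1}f\|$, rewriting as an operator inequality yields \eqref{eq:94} with the stated $R^{-\alpha}$ (one factor $R^{-\alpha}$ extracted; the condition $\alpha<1/3$ ensures the error terms like $R^{2-2\alpha}$ stay below the gain $R^{2+\alpha}$ from the imaginary part, since $2-2\alpha < 2+\alpha$ trivially, and more delicately that the commutator-energy cross terms don't overwhelm).

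The main obstacle I anticipate is getting the scaling of the lower bound right and making sure all the $R$-powers line up: one must carefully track how the dilation \eqref{eq:147} transforms the operator (the $+1$ term, the $ih\phi_n$ term, and the magnetic term scale differently), verify that on $\operatorname{supp}\chi_2$ one genuinely has $|\phi_n - \gamma/h|\gtrsim R^{\alpha-1}$ uniformly (this uses the simple-zero statement of Lemma \ref{uniquesol} plus compactness in Cases (a),(b) and the explicit construction of $z_0$ from $z_1$ via $F'$ in Case (d), together with Assumption (C) so that $d_0 = |z_0-z_1|$), and then check that the loss $R^{2-2\alpha}$ from the Laplacian part of the commutator, divided by the gain $R^{2+\alpha}$, gives $R^{-3\alpha}$, which is indeed $\le R^{-\alpha}$, while the first-order part of the commutator contributes $R^{1-\alpha}\cdot R / R^{2+\alpha} = R^{-2\alpha} \le R^{-\alpha}$. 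The bookkeeping — rather than any deep idea — is where care is needed; the conceptual content is just "away from the single zero of $F-\gamma/h$, the operator is coercive, so the cutoff piece is small by a commutator/Agmon-type argument."
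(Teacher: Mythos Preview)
Your proposal has a genuine gap in the key coercivity step. You claim that for $|x-z_0|\ge R^{\alpha-1}$ one has $|\phi_n(x)-\gamma(R)/h|\ge cR^{\alpha-1}$, invoking Lemma~\ref{uniquesol}. But that lemma says $F-\gamma/h$ has at most one simple zero, where $F=\phi_n+icB_n$; the point $z_0$ is the unique zero of the \emph{holomorphic} function $F-\gamma/h$, i.e.\ the unique point where \emph{both} $\phi_n=\gamma/h$ and $B_n=0$. Away from $z_0$ the level curve $\{\phi_n=\gamma/h\}$ may well persist (indeed $\phi_n$ is harmonic and non-constant, so its level sets are curves), and on that curve your imaginary-part argument yields nothing. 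The correct pointwise lower bound on $\operatorname{supp}\chi_2$ is
\[
|h\phi_n(x)-\gamma|+|B_n(x)|\ge C_0R^{\alpha-1},
\]
i.e.\ at least one of the two quantities is large, but not necessarily the first.

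Consequently you need a second mechanism for the region where $|B_n|\gtrsim R^{\alpha-1}$ but $\phi_n-\gamma/h$ may vanish. The paper handles this by splitting $\Omega\setminus B(z_0,R^{\alpha-1})$ into four subdomains according to the sign and size of $B_n$ and of $h\phi_n-\gamma$, with adapted cutoffs. On the regions where $|B_n|$ is large it uses the \emph{real} part $\Re\langle\eta^2 u,(\LL_{R^3h}-\lambda_R)u\rangle=\|\nabla_{R^3hA_n}(\eta u)\|_2^2-\ldots$ together with a lower bound for the magnetic Laplacian in a strong field (of order $R^3h\cdot C_0R^{\alpha-1}/4=O(R^{2+\alpha})$, via a result of Bonnaillie-No\"el--Dauge covering corners); on the regions where $|h\phi_n-\gamma|$ is large it uses the imaginary part exactly as you propose. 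Your commutator/energy bookkeeping is fine once this extra ingredient is in place, but without it the argument fails on the curve $\{\phi_n=\gamma/h\}\setminus\{z_0\}$.
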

\begin{remark}
  In case (\ref{eq:92}c), we set $\chi_2=1$ on $\Omega$ to obtain $\lambda_R$ is
  not in the spectrum of $ \LL_{R^3h}$ and that:
\begin{equation}
    \label{eq:94aa}
\| (\LL_{R^3h}-\lambda_R)^{-1} \| \leq C\, R^{-\alpha -2}\,.
  \end{equation}
  \end{remark}
\begin{proof}~\\
It can be readily verified that there
  exists $C_0>0$ and $R_1$  such that for $R\geq R_1$ and  $x\in\Omega\setminus B(z_0,R^{\alpha -1 })$, 
  \begin{equation}
\label{eq:95}
 |h\phi_n(x) -\gamma | + |B_n(x) | \geq C_0\, R^{\alpha-1} \,.
  \end{equation}
Note that in Case (\ref{eq:92}c), this inequality is satisfied for $\forall
x\in \Omega$ as $B(z_0,R^{\alpha-1})$ is not in $\Omega$ in this case.
We then define the following subdomains of $\Omega\setminus B(z_0,R^{\alpha -1  })$:
\begin{align*}
  &\D_1^+ = \Big\{x\in\Omega\setminus B(z_0,R^{\alpha -1  })\,|\, B_n \geq 
  \frac{C_0}{2}R^{\alpha -1  }\Big\} ;\\
  &\widehat \D_1^+ = \Big\{x\in\Omega\setminus B(z_0,R^{\alpha -1  })\,|\, B_n \geq 
  \frac{C_0}{4}R^{\alpha -1  }\Big\} ;\\
  & \D_1^- = \Big\{x\in\Omega\setminus B(z_0,R^{\alpha -1  })\,|\, B_n \leq - 
  \frac{C_0}{2}R^{\alpha -1  }\Big\}; \\
  &\widehat \D_1^- = \Big\{x\in\Omega\setminus B(z_0,R^{\alpha -1  })\,|\, B_n \leq - 
  \frac{C_0}{4}R^{\alpha -1  }\Big\}; \\
  &\D_2^+ = \Big\{x\in\Omega\setminus B(z_0,R^{\alpha -1  })\,|\, h\phi_n-\gamma   \geq
  \frac{C_0}{2}R^{\alpha -1  }\Big\} ; \\
    &\widehat \D_2^+ = \Big\{x\in\Omega\setminus B(z_0,R^{\alpha -1  })\,|\, h\phi_n-\gamma   \geq   \frac{C_0}{4}R^{\alpha -1  }\Big\} ; \\
 &\D_2^- = \Big\{x\in\Omega\setminus B(z_0,R^{\alpha -1  })\,|\, h\phi_n-\gamma   \leq 
  -\frac{C_0}{2}R^{\alpha -1  }\Big\}\,; \\
   &\widehat \D_2^- = \Big\{x\in\Omega\setminus B(z_0,R^{\alpha -1  })\,|\, h\phi_n-\gamma   \leq - \frac{C_0}{4}R^{\alpha -1  }\Big\}  \,.
  \end{align*}
It readily follows from \eqref{eq:95} that
\begin{equation}\label{cov}
\Omega\setminus B(z_0,R^{\alpha -1  })\subseteq\D_1^+\cup\D_1^-\cup\D_2^+\cup\D_2^-\,.
\end{equation}
Denote then by $\chi_3,\chi_4, \chi_5,\chi_6\in C^\infty(\Omega,[0,1])$ cutoff
$R$-dependent functions satisfying 
\begin{displaymath}
  \chi_3(x) = 1 \mbox{ for } x\in \D_1^+\,,\, \supp \chi_3 \subset \widehat \D_1^+\,,\mbox{ and }
   |\nabla\chi_3| \leq CR^{1-\alpha} \,,
\end{displaymath}
\begin{displaymath}
  \chi_4(x) = 1 \mbox{ for } x\in \D_1^-\,,\, \supp \chi_4 \subset \widehat \D_1^-\,,\mbox{ and }
    |\nabla\chi_4| \leq CR^{1-\alpha} \,,
\end{displaymath}
\begin{displaymath}
  \chi_5(x) =1 \mbox{ for } x\in \D_2^+\,,\, \supp \chi_5 \subset \widehat \D_2^+ \,,\mbox{ and }
   |\nabla\chi_5| \leq  CR^{1-\alpha}\,,
\end{displaymath}
and 
\begin{displaymath}
  \chi_6(x) = 1  \mbox{ for } x\in \D_2^-\,,\, \supp \chi_6 \subset \widehat \D_2^- \,,\mbox{ and }
   |\nabla\chi_6| \leq  CR^{1-\alpha} \,.
\end{displaymath}
Having introduced these cut-off functions, 
we consider 
$$
f\in L^2(\Omega) \mbox{ and } u=(\LL_{R^3h}-\lambda_R)^{-1}f\,.
$$
In order to control $\chi_2 u$ in $L^2$ we successively control
$\chi_j \chi_2 u$ for $j=3,\cdots,6$, observing that the support of these
$\chi_j$ ($j=3,\cdots,6$) cover by \eqref{cov} the support of $\chi_2$.
Let $\eta_1=\chi_2\chi_3$. An integration by parts readily yields
\begin{equation}\label{minoraaa}
  \Re \langle\eta_1^2u,(\LL_{R^3h}-\lambda_R)u\rangle = \|\nabla_{R^3hA_n}(\eta_1u)\|_2^2 - \|u\nabla\eta_1\|_2^2
  -\nu R^2\|\eta_1 u\|_2^2 = \Re\langle \eta_1u,\eta_1f\rangle \,.
\end{equation}
For sufficiently large $R$ we have, in view of Theorem  7.1 in
\cite{boda06}, our condition on the support of $\chi_3$ and  Assumption (R1):
\begin{equation}\label{minoraab}
  \|\nabla_{R^3hA_n}(\eta_1u)\|_2^2 \geq\frac{\Theta_{\frac \pi 2} C_0}{4}R^{2+\alpha}\|\eta_1 u\|_2^2 \,,
\end{equation}
where $\Theta_{\frac \pi 2}$ is the lowest eigenvalue of the magnetic Laplacian with constant magnetic field equal to 
$1$  in an infinite sector.\\
Hence, since by assumption $\nu(R)$ is bounded, we obtain from \eqref{minoraaa} and \eqref{minoraab}
\begin{equation}
  \label{eq:96}
 \|\chi_2u\|_{L^2(\mathcal D_1^+)}^2 \leq CR^{-2\alpha}(R^{-4}\|f\|_2^2+R^{-\alpha} \|u\|_2^2) \,.
\end{equation}
 
We next introduce $\eta_2=\chi_2\chi_5$. An integration by parts yields again
\begin{displaymath}
  \Im \langle\eta_2^2u,\LL_{R^3h}u\rangle  = - 2 \Im\langle \eta_2 u\nabla\eta_2,\nabla_{R^3hA_n}u\rangle +
     \langle R^3(h\phi_n-\gamma)\eta_2u,\eta_2 u\rangle = \Im \langle\eta_2u,\eta_2f\rangle  \,.
\end{displaymath}
Since
\begin{displaymath}
 \frac{C_0}{4} R^{2+\alpha}  \|\eta_2u\|_2^2 \leq |\langle R^3(h\phi_n-\gamma)\eta_2u,\eta_2u\rangle| \,,
\end{displaymath}
we obtain that
$$
   \frac{C_0}{4}R^{2+\alpha}  \|\eta_2u\|_2^2 \leq ||\eta_2 u||_2 \,||f||_2 + C R^{1-\alpha} ||\eta_2 u|| \,  \|\nabla_{hR^3A_n}u\|_2\,,
$$
which leads to
$$
   \frac{C_0}{8}R^{2+\alpha}  \|\eta_2u\|_2^2 \leq \frac{16}{C_0}  R^{-2-\alpha} \,||f||_2^2 + \hat C R^{-3 \alpha} \,  \|\nabla_{hR^3A_n}u\|_2^2\,.
$$
Observing that
\begin{displaymath}
  \|\nabla_{hR^3A_n}u\|_2^2 \leq  \nu R^2\|u\|_2^2+ \|u\|_2\|f\|_2 \,,
\end{displaymath}
we finally get
\begin{equation}
\label{eq:97}
  \|\chi_2u\|_{L^2(\mathcal D_2^+)}^2\leq C\, R^{-2\alpha}(R^{-4}\|f\|_2^2+ \|u\|_2^2) \,.
\end{equation}
In a similar manner, we obtain that
\begin{displaymath}
   \|\chi_2u\|_{L^2(\mathcal D_1^-)}^2 + \|\chi_2u\|_{L^2(\mathcal D_2^-)}^2\leq CR^{-2\alpha}(R^{-4}\|f\|_2^2+ \|u\|_2^2) \,,
\end{displaymath}
which, together with \eqref{eq:96} and  \eqref{eq:97}, proves 
\begin{equation}
\label{eq:97a}
  \|\chi_2u\|_{2}^2\leq CR^{-2 \alpha}(R^{-4}\|f\|_2^2+ \|u\|_2^2)  \,,
\end{equation}
and the claim of the lemma.
\end{proof}

\subsection{The entire plane limit case}
Consider first the case (\ref{eq:92}a).
We choose a frame of reference whose origin is located at $z_0$, and
with the $x$ and $y$ axes respectively directed tangentially at $z_0$
to the level curves of $\phi_n$ and $B_n$. We recall that $\mathfrak j$ was introduced in \eqref{deffrakj}. It  follows then from \eqref{eq:90} that
   \begin{equation}\label{defja}
   {\mathfrak j}=h|\nabla\phi_n(z_0)|/c\neq 0\,.
   \end{equation}
    
We next define a potential function $V:\R^2\to\R$ via
\begin{equation}
\label{eq:60}
  \nabla V({\bf x}) = \frac{1}{2}{\mathbf x}\cdot D^2A_n(z_0){\mathbf x} + {\mathbf
    x}\cdot\nabla A_n(z_0) + A_n(z_0)- \frac{1}{2}{\mathfrak j} x^2\hat{i}_y
  \quad \mbox{and} \quad V(0)=0\,, 
\end{equation}
where ${\mathbf x}=(x,y)$.  It can be readily verified that $V$ is
properly defined, since the curl  of the right-hand-side
identically vanishes in $\mathbb R^2$ (recall that $B_n\sim {\mathfrak j} x$ near $z_0$).  We
can thus define
\begin{displaymath}
  \tilde{\LL}_h=-\nabla_{h[A_n-\nabla V]}^2 +ih\phi_n \,.
\end{displaymath}
By the gauge transformation $u \mapsto e^{i V}u$,   it readily follows that
$$\|(\tilde{\LL}_{R^3h}-\lambda_R)^{-1}\| =\|(\LL_{R^3h}-\lambda_R)^{-1}\|\,.
$$
For convenience we drop the accent from $\LL_{R^3h}$ in the sequel and
refer to $A_n -\nabla V$ as $A_n$. \\
We have introduced $\A({\mathfrak j}, c)$ and its domain in
\eqref{eq:98} and \eqref{zzz2}. 
Define the dilation operator $\Tg_{\mathfrak j} u(x) =
u(\mathfrak j ^{1/3}x)$. It can be readily verified that
\begin{equation}
\label{eq:103}
  \Tg_{\mathfrak j}^{-1}
\A({\mathfrak j},c)\Tg_{\mathfrak j} = {\mathfrak j}^\frac 23 \mathcal
A (1, c) \,,
\end{equation}
and hence ${\mathfrak j}^\frac 23 \A (1, c)$ is unitarily equivalent
to $\A({\mathfrak j},c)$.  As in \eqref{zzz5} that we set $ \A(z_0) =
\A({\mathfrak j}(z_0),c)$ or simply $\mathcal A$.  Recall from
\cite{aletal10}, that $D(\A)$ is the closure of $C_0^\infty(\R^2)$ under
the graph norm $\|u\|_2+\|\A u\|_2$, that the spectrum of $\mathcal
A$ is empty, and
$$
\|(\A-\lambda)^{-1}\|=\|(\A-\Re\lambda)^{-1}\| 
$$ 
for all $\lambda\in\C$.  Moreover (see Lemma 4.4 in \cite{aletal10}), for
any $\lambda_0\in \mathbb R$, there exists $C(\lambda_0)$ such that for all $\lambda
\in \mathbb C$ such that $\Re \lambda \leq \lambda_0$, we have:
\begin{equation}\label{spectreA}
\|(\A-\lambda)^{-1}\| \leq C(\lambda_0)\,.
\end{equation}

We can now state the following:
\begin{lemma}
  \label{lem:5.2}
Let $\lambda_R=R^2\nu(R)+iR^3\gamma(R)$.  Suppose that there exist
positive $R_0$ and $\ell $ such that for all $R\geq R_0$ we have:
\begin{enumerate}
\item $\nu(R) \leq \ell $; 
\item (\ref{eq:92}a) holds true;
\item $\lambda_R\in\rho(\LL_{R^3h})$.
\end{enumerate}
Let
\begin{equation}
  \label{eq:99}
M(R) = R^2\|(\LL_{R^3h}-\lambda_R)^{-1}\| \,.
\end{equation}
Then, there exist $C>0$ and $R_1\geq R_0$ depending only on $\ell $, $\Omega$,
${\mathfrak j}$, and $\alpha$, such that, for all $R\geq R_1$ we have
\begin{equation}
  \label{eq:100}
M(R) \leq \|(\A-\nu(R))^{-1}\| + C\, (R^{-\alpha}+R^{-(1-3\alpha)}) \big[\|(\A-\nu(R))^{-1}\|+1+M(R)^2\big] \,.
\end{equation}
\end{lemma}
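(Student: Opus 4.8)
The plan is to localise the resolvent near the point $z_0=z_0(R)$ produced in the previous subsection (we are in case (\ref{eq:92}a), so $d(z_0,\partial\Omega)\ge 2R^{\alpha-1}$), to replace $\LL_{R^3h}-\lambda_R$ there by the constant–current model operator $\A$, and to glue the resulting estimate to the far–field bound furnished by Lemma \ref{lem:5.1}. Throughout, $f\in L^2(\Omega)$ is fixed, $u:=(\LL_{R^3h}-\lambda_R)^{-1}f$ (which is legitimate by hypothesis (3)), $\chi_1,\chi_2$ are the cut–offs introduced before the statement, and it suffices to bound $R^2\|u\|_2$ by the right–hand side of \eqref{eq:100} times $\|f\|_2$, splitting $u=\chi_1u+\chi_2u$. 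After the gauge change $u\mapsto e^{iV}u$ already performed, with $V$ as in \eqref{eq:60}, one has $A_n=\tfrac12{\mathfrak j}x^2\hat{i}_y+\Og(|x-z_0|^3)$ near $z_0$; accordingly we introduce the frozen operator $\mathcal A^0_R$ on $L^2(\R^2)$ obtained from $\LL_{R^3h}$ by keeping the quadratic part of the magnetic potential and the affine part of $h\phi_n$ around $z_0$. Since $h\phi_n(z_0)=\gamma(R)$ and $B_n(z_0)=0$, the term $iR^3\gamma(R)$ in $\lambda_R$ is absorbed into the electric part, and by the dilation identity \eqref{eq:103} the operator $\mathcal A^0_R-R^2\nu(R)$ is unitarily equivalent to $R^2(\A-\nu(R))$; in particular it is boundedly invertible with $\|(\mathcal A^0_R-R^2\nu(R))^{-1}\|=R^{-2}\|(\A-\nu(R))^{-1}\|\le C(\ell)R^{-2}$ by \eqref{spectreA} and $\nu(R)\le\ell$.

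For the tail, Lemma \ref{lem:5.1} gives $\|\chi_2u\|_2\le CR^{-\alpha}\big(R^{-2}+\|(\LL_{R^3h}-\lambda_R)^{-1}\|\big)\|f\|_2=CR^{-\alpha}R^{-2}(1+M(R))\|f\|_2$, and the same argument — applied with an auxiliary cut–off supported in a slightly enlarged annulus around $\partial B(z_0,R^{\alpha-1})$, where the non–resonance inequality \eqref{eq:95} still holds — also bounds $\|u\|_2$ there and, via the associated magnetic Caccioppoli energy identity together with $|\nabla\chi|\le CR^{1-\alpha}$, bounds $\|\nabla_{R^3hA_n}u\|_2$ over that annulus by the same quantity $CR^{-1-\alpha}(1+M(R))\|f\|_2$. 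Taking the real part of $\langle u,(\LL_{R^3h}-\lambda_R)u\rangle=\langle u,f\rangle$ yields the global identity $\|\nabla_{R^3hA_n}u\|_2^2=R^2\nu(R)\|u\|_2^2+\Re\langle u,f\rangle$, hence $\|\nabla_{R^3hA_n}u\|_2\le CR^{-1}(1+M(R))\|f\|_2$.

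Now set $w:=\chi_1u\in H^2(\R^2)$, supported compactly in $\Omega$. Then $(\LL_{R^3h}-\lambda_R)w=\chi_1f+[\LL_{R^3h},\chi_1]u$ with $[\LL_{R^3h},\chi_1]u=-2\nabla\chi_1\cdot\nabla_{R^3hA_n}u-(\Delta\chi_1)u$ supported in the transition annulus; by the annulus bounds above and $|\Delta\chi_1|\le CR^{2(1-\alpha)}$ one gets $\|[\LL_{R^3h},\chi_1]u\|_2\le CR^{-2\alpha}(1+M(R))\|f\|_2$. On $\mathrm{supp}\,\chi_1\subset B(z_0,2R^{\alpha-1})$ we compare $\LL_{R^3h}-\lambda_R$ with $\mathcal A^0_R-R^2\nu(R)$: Taylor's theorem produces a difference operator $E_R$ consisting of a multiplication term of size $\Og(R^3R^{2(\alpha-1)})$ (from the affine approximation of $\phi_n$), a first–order term $\Og(R^3R^{3(\alpha-1)})\,\nabla_{R^3hA_n}$ and a multiplication term $\Og(R^6R^{6(\alpha-1)})$ (from the cubic remainder of $A_n$). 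Because $\alpha<1/3$ by \eqref{condalpha}, after multiplication by $\|(\mathcal A^0_R-R^2\nu(R))^{-1}\|=R^{-2}\|(\A-\nu(R))^{-1}\|$ and insertion of the a priori bounds $\|u\|_2\le R^{-2}M(R)\|f\|_2$ and $\|\nabla_{R^3hA_n}u\|_2\le CR^{-1}(1+M(R))\|f\|_2$, these contribute at most $CR^{-(1-3\alpha)}\big(1+\|(\A-\nu(R))^{-1}\|+M(R)^2\big)R^{-2}\|f\|_2$ (the cubic remainder of $A_n$ yielding the even smaller relative error $R^{-2(1-3\alpha)}$). Writing $w=(\mathcal A^0_R-R^2\nu(R))^{-1}\big[\chi_1f+[\LL_{R^3h},\chi_1]u+E_Rw\big]$ and combining these estimates gives $\|w\|_2\le R^{-2}\big[\|(\A-\nu(R))^{-1}\|+C(R^{-\alpha}+R^{-(1-3\alpha)})\big(1+\|(\A-\nu(R))^{-1}\|+M(R)^2\big)\big]\|f\|_2$; adding the bound for $\|\chi_2u\|_2$ and multiplying by $R^2$ yields \eqref{eq:100}.

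The main obstacle is the error analysis in the last step: one must check that every constituent of $E_R$ acquires a negative power of $R$ once the scaling $R^3$ of the field and the localisation radius $R^{\alpha-1}$ are taken into account — this is exactly what forces $\alpha<1/3$, the borderline contribution being the first–order piece $\Og(R^3|x-z_0|^3)\nabla_{R^3hA_n}$ which, fed through $\|\nabla_{R^3hA_n}u\|_2\lesssim R^{-1}(1+M(R))\|f\|_2$, has relative size $R^{-(1-3\alpha)}$ — while simultaneously the commutator $[\LL_{R^3h},\chi_1]u$, which a priori involves a magnetic gradient of size $\Og(R)$, is tamed only because $u$ and $\nabla_{R^3hA_n}u$ are $R^{-\alpha}$–small on the transition annulus, a fact resting on the non–resonance estimate \eqref{eq:95} underlying Lemma \ref{lem:5.1}. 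The remaining ingredients — the gauge reduction, the unitary dilation equivalence of $\mathcal A^0_R-R^2\nu(R)$ with $R^2(\A-\nu(R))$, and the uniform resolvent bound \eqref{spectreA} for $\A$ — are routine.
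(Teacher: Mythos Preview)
Your argument is correct and yields \eqref{eq:100}, but it follows a genuinely different route from the paper. The paper builds a right parametrix $\RR=\chi_1\RR_\infty\chi_1+\chi_2(\LL_{R^3h}-\lambda_R)^{-1}\chi_2$ with $\RR_\infty=(\A_R-\nu R^2)^{-1}$, expands $(\LL_{R^3h}-\lambda_R)\RR=I+E$, estimates the three pieces of $E$ in operator norm, and then inverts $I+E$ by a Neumann series for $R$ large. The crucial commutator term $[\LL_{R^3h},\chi_1]\RR_\infty\chi_1$ is handled there via the model energy identity $\|(\nabla-iR^3{\mathfrak j}\tfrac{x^2}{2}\hat i_y)\RR_\infty\chi_1 f\|_2\le CR^{-1}\|f\|_2$, so no information about $u$ on the transition annulus is required. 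You instead apply the model resolvent directly to the equation $(\LL_{R^3h}-\lambda_R)(\chi_1u)=\chi_1f+[\LL_{R^3h},\chi_1]u$ and control the right–hand side using the a~priori bounds $\|u\|_2\le R^{-2}M(R)\|f\|_2$ and $\|\nabla_{R^3hA_n}u\|_2\le CR^{-1}(1+M(R))\|f\|_2$; this bypasses the Neumann inversion at the price of the quadratic term $M(R)^2$ appearing explicitly (which the paper also gets, from $[\LL_{R^3h},\chi_2](\LL_{R^3h}-\lambda_R)^{-1}\chi_2$).

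One remark: your detour through an annulus–localised gradient estimate $\|\nabla_{R^3hA_n}u\|_{L^2(\text{annulus})}\le CR^{-1-\alpha}(1+M)\|f\|_2$ is both unnecessary and slightly overstated --- the Caccioppoli identity on the annulus is dominated by the cross term $\langle\eta^2u,f\rangle$, which only yields $R^{-1-\alpha/2}$. But the \emph{global} gradient bound already gives $\|\nabla\chi_1\cdot\nabla_{R^3hA_n}u\|_2\le CR^{1-\alpha}\cdot R^{-1}(1+M)\|f\|_2=CR^{-\alpha}(1+M)\|f\|_2$, which is exactly the $R^{-\alpha}$ contribution \eqref{eq:100} requires; the $(\Delta\chi_1)u$ piece is likewise handled by the global $\|u\|_2\le R^{-2}M\|f\|_2$. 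So your scheme works with the global estimates alone, matching the paper's use of the global model energy identity.
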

\begin{proof}~\\
Let $U_R:L^2(\R^2)\to L^2(\R^2)$ denote the unitary dilation operator
\begin{displaymath}
  u\longmapsto U_Ru(x) = R^{-1}u(x/R) \,,
\end{displaymath}
and then set 
\begin{displaymath}
  \RR_\infty = R^{-2}U_R^{-1}(\A-\nu(R))^{-1}U_R \,.
\end{displaymath}
Note that 
$$\RR_\infty = (\A_R-\nu(R) R^2)^{-1}\,.
$$
where $\A_R(z_0):D(\A_R)\to L^2(\R^2)$ is given by
\begin{displaymath}
  \A_R = \Big(\nabla-iR^3{\mathfrak j} \frac{x^2}{2}\hat{i}_y\Big)^2
+ icR^3{\mathfrak j} y \,.
\end{displaymath}

  We attempt to approximate $(\LL_{R^3h}-\lambda_R)^{-1}$ by the following
  operator
\begin{equation}\label{defRR}
  \RR =  \chi_1\RR_\infty \chi_1 +
  \chi_2(\LL_{R^3h}-\lambda_R)^{-1}\chi_2 \,.
\end{equation}
Clearly, 
\begin{multline}
\label{eq:101}
  (\LL_{R^3h}-\lambda_R)\RR = I + [\LL_{R^3h},\chi_1]\RR_\infty\chi_1 +
  \chi_1(\LL_{R^3h}-iR^3\gamma-\A_R)\RR_\infty\chi_1 \\ + [\LL_{R^3h},\chi_2]
  (\LL_{R^3h}-\lambda_R)^{-1}\chi_2 \,,
\end{multline}
We now  estimate the norms of the three operators appearing on the right-hand-side
of \eqref{eq:101} after the identity. 

{\em For the first operator}, we evaluate the commutator as follows:
\begin{multline}\label{term2}
  [\LL_{R^3h},\chi_1]=-\Delta\chi_1 + 2\nabla\chi_1\cdot\nabla_{R^3hA_n} = \\-\Delta\chi_1 +
  2\nabla\chi_1\cdot\Big(\nabla-iR^3{\mathfrak j} \frac{x^2}{2}\hat{i}_y\Big)+
  2i\nabla\chi_1\cdot R^3\Big(hA_n- {\mathfrak j} \frac{x^2}{2}\hat{i}_y\Big) \,,
\end{multline}
and then successively estimate in $\mathcal L (L^2)$ the three resulting terms on
the right-hand-side, i.e, $-\Delta\chi_1\RR_\infty\chi_1$, $2\nabla\chi_1\cdot\Big(\nabla-iR^3{\mathfrak j} \frac{x^2}{2}\hat{i}_y\Big)\RR_\infty\chi_1$, and 
$2i\nabla\chi_1\cdot R^3\Big(hA_n- {\mathfrak j}
\frac{x^2}{2}\hat{i}_y\Big)\RR_\infty\chi_1$.

For the first term, since by assumption $\nu(R)\leq \ell
\,,$ we observe,  by using  \eqref{spectreA}, that
\begin{equation}
\label{eq:175}
  \|\RR_\infty\|=\frac{\|(\A-\nu(R))^{-1}\|}{R^2} \leq \frac{C(\ell)}{R^2}  \,, 
\end{equation}
and that
\begin{equation*}
\|-\Delta\chi_1 \|\leq C R^{2-2 \alpha}\,.
\end{equation*}
We can, thus, readily conclude
\begin{equation}
\label{eq:175a}
\|-\Delta\chi_1 \RR_\infty\chi_1\| \leq C R^{-2 \alpha}\,.
\end{equation}

To estimate the second term, let $u=\mathcal R_\infty \chi_1f$ for some $f\in L^2(\Omega)$. 
It follows that 
\begin{displaymath}
  \Big\|\Big(\nabla-iR^3{\mathfrak j}\frac{x^2}{2}\hat{i}_y\Big)u\Big\|_2^2 = \nu(R) R^2\|u\|_2^2 + 
  \Re \langle u,f\rangle \leq \nu(R) R^2\|u\|_2^2 + \|u\|_2\|f\|_2\,.
\end{displaymath}
Thus, by \eqref{eq:175},
\begin{equation}\label{eq:104aa}
  \Big\|\Big(\nabla-iR^3{\mathfrak j}\frac{x^2}{2}\hat{i}_y\Big)\mathcal R_\infty \chi_1f\Big\|_2
  \leq \frac{C}{R}\|f\|_2 \,,
\end{equation}
and hence,
\begin{equation}\label{eq:104}
 \Big\|\nabla\chi_1\cdot\Big(\nabla-iR^3{\mathfrak j}\frac{x^2}{2}\hat{i}_y\Big) \mathcal R_\infty \chi_1\Big\| \leq C\, R^{-\alpha} \,.
\end{equation}
For the third sub-term, as  in view of \eqref{eq:60}, 
\begin{equation}
\label{eq:102}
  \| hA_n- {\mathfrak j} \frac{x^2}{2}\hat{i}_y\|_{L^\infty(B(z_0,2R^{\alpha-1}))} \leq
  CR^{-3(1-\alpha)} \,,
\end{equation}
we obtain 
\begin{displaymath}
 \|  2iR^3\nabla\chi_1\cdot\Big(hA_n- {\mathfrak j} \frac{x^2}{2}\hat{i}_y\Big) \|_\infty \leq CR^{1+2\alpha} \,,
 \end{displaymath}
 and then, using \eqref{eq:175} yields
 \begin{equation}\label{eq:102a}
 \| 2i\nabla\chi_1\cdot R^3\Big(hA_n- {\mathfrak j} \frac{x^2}{2}\hat{i}_y\Big)\RR_\infty\chi_1\| \leq
  C \,C(\ell) R^{1+2 \alpha}\, R^{-2} \leq 
 \widehat  CR^{-(1-2\alpha)} \,.
\end{equation}
Combining \eqref{eq:175a}, \eqref{eq:104} and \eqref{eq:102a}, we obtain
\begin{equation}
  \label{eq:105}
\|[\LL_{R^3h},\chi_1]\mathcal R_\infty\, \chi_1\| \leq C(R^{-\alpha} + R^{2 \alpha -1  }) \leq \tilde  C R^{-\alpha}\,,
\end{equation}
for any choice of $\alpha\in (0,\frac 13)$.\\

{\em Consider next the second operator} after the identity   on the right hand side of \eqref{eq:101}. We use the decomposition:
\begin{multline}
  (\LL_{R^3h}-iR^3\gamma(R) -\A_R)=\\ 2iR^3\Big(hA_n
  -{\mathfrak j}\frac{x^2}{2}\hat{i}_y\Big)\cdot\Big(\nabla
  -iR^3{\mathfrak j}\frac{x^2}{2}\hat{i}_y\Big) + R^6\Big|hA_n
  -{\mathfrak j}\frac{x^2}{2}\hat{i}_y\Big|^2 + iR^3(\phi_n-\gamma(R)-c{\mathfrak j}y)\,.
\end{multline}
By \eqref{eq:102a}  we have that
\begin{equation}
\label{eq:106}
 \Big\|  \chi_1R^6\Big|hA_n
  -{\mathfrak j}\frac{x^2}{2}\hat{i}_y\Big|^2\, \mathcal R_\infty\,\chi_1\Big\| \leq CR^{-2(1-3\alpha)} \,.
\end{equation}
Similarly, by \eqref{eq:102} and \eqref{eq:104aa} we obtain that
\begin{equation}
\label{eq:107}
  \Big\|\chi_1R^3\Big(hA_n
  -{\mathfrak j}\frac{x^2}{2}\hat{i}_y\Big)\cdot(\nabla
  -iR^3{\mathfrak j}\frac{x^2}{2}\hat{i}_y\Big)\,\mathcal R_\infty\, \chi_1\Big\|\leq CR^{-(1-3\alpha)}   \,.
\end{equation}
Finally, as
\begin{displaymath}
  \|\phi_n-\gamma(R) -c{\mathfrak j}y\|_{L^\infty(B(0,2R^{\alpha -1  }))}\leq CR^{-2(1-\alpha)} \,,
\end{displaymath}
we obtain, 
\begin{equation}\label{eq:107aa}
\| R^3\chi_1 \left(\phi_n-\gamma(R) -c{\mathfrak j}y\right)\mathcal R_\infty \chi_1\| \leq C R^{-1 + 2 \alpha}\,.
\end{equation}
Combining \eqref{eq:106}, \eqref{eq:107}, and \eqref{eq:107aa} we
obtain for the norm of the second operator:
\begin{equation}
\label{eq:108}
  \|\chi_1(\LL_{R^3h}-iR^3\gamma(R)-\A_R)\, \mathcal R_\infty\, \chi_1\| \leq CR^{-(1-3\alpha)}  \,.
\end{equation}

{\em For the last operator}  on the right hand side of \eqref{eq:101} we use the decomposition:
\begin{displaymath}
  [\LL_{R^3h},\chi_2] (\LL_{R^3h}-\lambda_R)^{-1}\chi_2 =  (-\Delta\chi_2 +
  2\nabla\chi_2\cdot\nabla_{R^3hA_n})(\LL_{R^3h}-\lambda_R)^{-1}\chi_2 \,.
\end{displaymath}
It can be easily verified, as in the derivation of \eqref{eq:104}, that
\begin{displaymath}
  \|\nabla_{R^3hA_n}(\LL_{R^3h}-\lambda_R)^{-1}\|\leq \frac{C}{R}(1+ M(R))\,.
\end{displaymath}
Hence,
\begin{equation}
\label{eq:109}
  \|[\LL_{R^3h},\chi_2] (\LL_{R^3h}-\lambda_R)^{-1}\chi_2\|\leq C\frac{M(R)+1}{R^\alpha} \,. 
\end{equation}

We next rewrite \eqref{eq:101} in the following form
\begin{multline*}
    (\LL_{R^3h}-\lambda_R)\left( \RR - (\LL_{R^3h}-\lambda_R)^{-1} [\LL_{R^3h},\chi_2]
  (\LL_{R^3h}-\lambda_R)^{-1}\chi_2\right) \\= I + [\LL_{R^3h},\chi_1](\A_R-\nu(R) R^2)^{-1}\chi_1 +
  \chi_1(\LL_{R^3h}-iR^3\gamma-\A_R)(\A_R-\nu(R) R^2)^{-1}\chi_1 \,.
\end{multline*}
In view of \eqref{eq:105} and \eqref{eq:108} we obtain
that for sufficiently large $R$ the right-hand-side of the above
identity becomes invertible. Hence,
\begin{multline}
\label{eq:110}
   (\LL_{R^3h}-\lambda_R)^{-1}= \Big(\RR- (\LL_{R^3h}-\lambda_R)^{-1} [\LL_{R^3h},\chi_2]
  (\LL_{R^3h}-\lambda_R)^{-1}\chi_2\Big) \times \\ \times \Big( I + [\LL_{R^3h},\chi_1]\RR_\infty \chi_1+
  \chi_1(\LL_{R^3h}-iR^3\gamma(R) -\A_R)\RR_\infty \chi_1 \Big)^{-1} \,.
\end{multline}
By \eqref{eq:94} there exists $C>0$ such that 
\begin{equation}\label{eq:110aa}
  \|\RR\|\leq \frac{\|(\A-\nu(R))^{-1}\|}{R^2} + CR^{-(2+\alpha)}\big(1+M (R) \big) \,.
\end{equation}
By \eqref{eq:105}, \eqref{eq:108}, \eqref{eq:109}, 
\eqref{eq:110}, and \eqref{eq:110aa},  we easily obtain \eqref{eq:100}.
\end{proof}

\subsection{The half-plane limit case}

We next consider together the cases (\ref{eq:92}b) and (\ref{eq:92}d).
Let $\tilde{z}_0(R)$ denote the projection of $z_0(R)$ on $\partial\Omega$,
($d(z_0,\partial\Omega)=d(z_0(R) ,\tilde{z}_0)$).  Note that by Assumption (B) (see
\eqref{(B)}) and (R2), since the curve $\Gamma = \{B_n^{-1}(0)\}$ intersects $\partial\Omega$
on the interior of $\partial\Omega_c$ it follows that $\partial\Omega$ is smooth near
$\tilde{z}_0(R)$. Hence, since $d(z_0,\partial\Omega)\leq 2R^{\alpha -1 }$,
$\tilde{z}_0(R)$ must exist.  In case (\ref{eq:92}d) where $z_0$
  lies outside $\Omega$ we have $\tilde z_0(R) = z_1$.  We define a curvilinear
coordinate system $(s,t)$ such that $t=d(x,\partial\Omega)$, and $s$ denotes the
arc length of $\partial\Omega$ from $\tilde{z}_0(R)$ in the positive
trigonometric direction to $z(s)$, which denotes the projection of
$x(s,t)$ on $\partial\Omega$. Thus (cf.  \cite{lupa99}),
\begin{displaymath}
  x=\F(s,t)=z(s)-t\nu(s) \,,
\end{displaymath}
where $\nu(s)$ denotes the outward normal on $\partial\Omega$ at $z(s)$. We
further set
\begin{displaymath}
  g = |\det \D \F|=1-t\kappa_r(s) \,,
\end{displaymath}
where $\kappa_r$ denotes the relative curvature of $\partial\Omega$ at
$z(s)$. Note that, outside a fixed neighborhood of the corners,  there exists $C>0$
depending only on $\Omega$ such that
\begin{equation}
\label{eq:111}
  |\kappa_r|+|\kappa_r^\prime| \leq C\,.
\end{equation}
In case (\ref{eq:92}b) we then set
\begin{equation}
\label{eq:176}
  \nabla V_+({\bf x}) = \frac{1}{2}{\mathbf x}\cdot D^2A_n(z_0){\mathbf x} + {\mathbf
    x}\cdot\nabla A_n(z_0) + A_n(z_0)- \frac{1}{2}{\mathfrak j} ({\mathbf x}\cdot\hat{i}_s(z_0))^2\hat{i}_t(z_0) \,,
\end{equation}
where $V_+(0)=0$.
In the case (\ref{eq:92}d) we set
\begin{displaymath}
  \tilde{A}_n ({\bf x})= \frac{1}{2}{\mathbf x}\cdot D^2A_n(z_1){\mathbf x} + {\mathbf
    x}\cdot\nabla A_n(z_1) + A_n(z_1)\,,
\end{displaymath}
and then define $V_+$ in the following manner
\begin{displaymath}
  \nabla V_+({\bf x}) = \frac{1}{2}{\mathbf x}\cdot D^2\tilde{A}_n(z_0){\mathbf x} + {\mathbf
    x}\cdot\nabla \tilde{A}_n(z_0) + \tilde{A}_n(z_0)+ \frac{1}{2}{\mathfrak j} ({\mathbf x}\cdot\hat{i}_s(z_0))^2\hat{i}_t(z_0) \,.
\end{displaymath}
Then we let
\begin{displaymath}
  \tilde{\LL}_h=-\nabla_{h[A_n-\nabla V_+]}^2 +ih\phi_n \,,
\end{displaymath}
and attempt to estimate $\|(\tilde{\LL}_{R^3h}-\lambda_R)^{-1}\|$.  As
before, we drop the accent from $\LL_{R^3h}$ in the sequel and refer
to $A_n -\nabla V_+$ as $A_n$.

We can now write $\LL_{R^3h}$ in terms of the curvilinear coordinates
(cf. \cite{lupa99})
\begin{displaymath}
  \LL_{R^3h} = -\Big(\frac{1}{g}\Big[\frac{\partial}{\partial s}-ihR^3a_s\Big]\Big)^2
  - 
  \frac{1}{g}\Big(\frac{\partial}{\partial t}-ihR^3a_t\Big)g\Big(\frac{\partial}{\partial t}-ihR^3a_t\Big) +ihR^3\phi_n(s,t) \,,
\end{displaymath}
where
\begin{displaymath}
  a_s= gA\cdot\hat{i}_s \quad ; \quad  a_t= A\cdot\hat{i}_t \,.
\end{displaymath}
\\~
We have defined the operator $\A_+({\mathfrak j},c)$ in \eqref{eq:112}.
Since \eqref{eq:103} is valid for $\A_+$ as well, it
readily follows that $\A_+({\mathfrak j}, c)$ is  unitarily equivalent
to ${\mathfrak j}^\frac 23 \mathcal A_+ (1, c)$. As before (see \eqref{zzz5}), we may set
$ \A_+ (z_0)=\A_+({\mathfrak j} (z_0),c)\,$ or more simply $\A_+$.
The domain of $\A_+$ is given by \eqref{zzz2} (cf. \cite{aletal13}).\\
 Note that, by  Assumption (C),  
$\nabla\phi_n$ (which is always orthogonal to $\nabla B_n$ by \eqref{eq:8}) is perpendicular to $\partial\Omega$ at the
intersection with the curve $\Gamma$, which explains why $\nabla\phi_n$ is  parallel to 
$\hat{i}_t$ in \eqref{eq:112}. 

We begin by stating a rather standard estimate.
\begin{lemma}
\label{lem:aux}
Let $\nu \in \mathbb R$, ${\mathfrak j} \neq 0$ and $c\neq 0$ . Then, there
exists $C>0$, such that, for any $f\in L^2(\R^2_+)$, with compact
support in $\overline{\R^2_+}$, and for all $\gamma\in \mathbb R$ such that
$\lambda=\nu+i\gamma\in\rho(\A_+(\mathfrak j,c))$, we have:
  \begin{equation}
    \label{eq:113}
\|u_{ss}\|_2 \leq C(\|u\|_2+\|(1+|s|)f\|_2) \,,
  \end{equation}
  where $u=(\A_+(\mathfrak j,c) -\lambda)^{-1}f$. 
\end{lemma}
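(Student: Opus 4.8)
The plan is to prove \eqref{eq:113} by a sequence of energy (multiplier) identities for the resolvent equation $(\A_+(\mathfrak j,c)-\lambda)u=f$, organized so that the imaginary shift $i\gamma$ never appears in any term that is not already under control. First I would reduce, by density and continuity of both sides in the relevant norms, to the case $f\in C_0^\infty(\overline{\R^2_+})$; for such $f$ the solution $u=(\A_+(\mathfrak j,c)-\lambda)^{-1}f$ is smooth up to the boundary, belongs to $D(\A_+)$, and decays enough to legitimate all the integrations by parts below. It is convenient to write $D_t:=\partial_t-\tfrac{i\mathfrak j}{2}s^2$, so that $\A_+(\mathfrak j,c)=-\partial_s^2-D_t^2+ic\mathfrak j\,t$ on $\{t>0\}$ with Dirichlet condition at $t=0$, and to note that $D_t$ is skew-adjoint in $L^2(\R^2_+)$ (the boundary term at $t=0$ vanishing since $u|_{t=0}=0$) and that multiplication by any function of $s$ commutes with $D_t$. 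All constants $C$ below depend only on $\nu$, $\mathfrak j$, $c$.

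The first identity is obtained by pairing the equation with $\bar u$ and taking the real part: $\|\partial_s u\|_2^2+\|D_t u\|_2^2=\nu\|u\|_2^2+\Re\langle u,f\rangle$. The point is that $\Im\lambda=\gamma$ enters only the imaginary part of this pairing (through the balance $c\mathfrak j\,\|t^{1/2}u\|_2^2=\gamma\|u\|_2^2+\Im\langle u,f\rangle$), so the real part gives the $\gamma$-free bound $\|\partial_s u\|_2^2+\|D_t u\|_2^2\le|\nu|\,\|u\|_2^2+\|u\|_2\|f\|_2$. The second identity comes from pairing the equation with $s^2\bar u$ and taking the real part; integrating by parts in $s$ one gets $\|s\,\partial_s u\|_2^2+\|s\,D_t u\|_2^2=\|u\|_2^2+\nu\|s\,u\|_2^2+\Re\langle s^2u,f\rangle$, and since $\Re\langle s^2u,f\rangle\le\|s\,u\|_2\,\|s\,f\|_2$ this is where the weight $(1+|s|)$ on $f$ enters: it lets one close the estimate and bound $\|s\,u\|_2$, $\|s\,\partial_s u\|_2$, $\|s\,D_t u\|_2$ by $C\big(\|u\|_2+\|(1+|s|)f\|_2\big)$.

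The third, and central, step is to differentiate the equation in the tangential variable $s$. Using the commutator identity $[\partial_s,\A_+]=2i\mathfrak j\,s\,D_t$ (immediate from $[\partial_s,D_t]=-i\mathfrak j\,s$), one obtains $(\A_+(\mathfrak j,c)-\lambda)\,\partial_s u=\partial_s f-2i\mathfrak j\,s\,D_t u$. Pairing this with $\overline{\partial_s u}$, integrating by parts once in $s$ to move the derivative off $f$, and taking the real part, $\gamma$ again disappears and one is left with an identity of the shape $\|\partial_s^2 u\|_2^2+\|D_t\partial_s u\|_2^2=\nu\|\partial_s u\|_2^2-\Re\langle f,\partial_s^2 u\rangle+2\mathfrak j\,\Im\langle D_t u,\,s\,\partial_s u\rangle$. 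The genuinely delicate term is the last one; two further integrations by parts in $s$ (again invoking skew-adjointness of $D_t$ and $[\partial_s,D_t]=-i\mathfrak j\,s$) rewrite it entirely in terms of $\Im\langle D_t u,u\rangle$ and $\|s\,u\|_2^2$, quantities already controlled by the first two identities. Absorbing $\Re\langle f,\partial_s^2 u\rangle\le\tfrac14\|\partial_s^2 u\|_2^2+\|f\|_2^2$ into the left side and inserting the earlier bounds then yields $\|\partial_s^2 u\|_2^2\le C\big(\|u\|_2^2+\|(1+|s|)f\|_2^2\big)$, which is \eqref{eq:113} (note $u_{ss}=\partial_s^2 u$).

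The hard part will be the uniformity in $\gamma$ in the second step. Because the quadratic term $\tfrac{\mathfrak j}{2}s^2$ in $D_t$ is, as a $t$-potential, a pure gauge, $\A_+$ has no confinement in the $s$ direction, so the weighted bound on $\|s\,u\|_2$ cannot be read off cheaply; it must be wrung out of the $s^2$-weighted identity. When $\nu\le 0$ — the range relevant to the later applications (cf. the hypotheses $\nu(R)\le\ell$ in Lemma~\ref{lem:5.1} and $\nu<\mu_\infty=\nu_m$ in Theorem~\ref{thm1.3}) — the term $\nu\|s\,u\|_2^2$ in that identity has the right sign and is simply absorbed; in the remaining range one should supplement this with the uniform-in-$\gamma$ resolvent bound for $\A_+(\mathfrak j,c)-\nu-i\gamma$ valid for $\nu$ below $\inf_{\lambda\in\sigma(\A_+(\mathfrak j,c))}\Re\lambda$. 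A secondary technical nuisance is justifying the integrations by parts in $s$ for $u$ itself: this is exactly why one passes first to $f\in C_0^\infty(\overline{\R^2_+})$, and the compact support of $f$ (equivalently, the weight) is what kills the boundary contributions at $|s|=\infty$.
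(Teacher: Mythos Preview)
Your multiplier approach is genuinely different from the paper's, and the commutator machinery in the third step is correct: $[\partial_s,\A_+]=2i\mathfrak j\,sD_t$, and pairing with $\overline{\partial_s u}$ followed by your integration-by-parts identity does reduce the cross term to $-\tfrac12\Im\langle u,D_tu\rangle+\tfrac{\mathfrak j}{2}\|su\|_2^2$. The paper, by contrast, obtains $\|u_{ss}\|_2$ by localizing (gauging away the potential on each unit ball), using standard elliptic estimates, summing over a covering to get a global bound involving $\|(1+s^2)u\|_2$ and $\|(t-\gamma/(c\mathfrak j))u\|_2$, and then controlling these two weighted norms separately. Your route is more direct and avoids the covering argument.

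The gap is in your second step. The $s^2$-weighted identity
\[
\|s\,\partial_s u\|_2^2+\|s\,D_t u\|_2^2=\|u\|_2^2+\nu\|s\,u\|_2^2+\Re\langle s^2u,f\rangle
\]
has no coercivity in $\|su\|_2$ when $\nu\ge0$; it bounds $\|s\partial_s u\|_2$ and $\|sD_tu\|_2$ only \emph{in terms of} $\|su\|_2$, not the other way round. Your proposed remedy---invoking a uniform-in-$\gamma$ resolvent bound---does not supply the missing weighted estimate: a bound on $\|(\A_+-\lambda)^{-1}\|$ controls $\|u\|_2$, not $\|su\|_2$. Even if you mean the commuted resolvent trick $(\A_+-\lambda)(su)=sf-2\partial_s u$, that still needs $\sup_\gamma\|(\A_+-\nu-i\gamma)^{-1}\|<\infty$, which holds only for $\nu$ below $\inf\Re\sigma(\A_+)$, whereas the lemma is stated for arbitrary $\nu\in\R$ with $\nu+i\gamma\in\rho(\A_+)$.

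The paper resolves this via a recursive moment inequality imported from \cite{aletal13} (their (5.20)), which yields $\|s^2u\|_2\le C(\nu,\mathfrak j)(\|u\|_2+\|(1+|s|)f\|_2)$ for \emph{every} $\nu$. Within your framework there is an equally clean fix: use the commutator once more, but now as a \emph{source} of the weight. From $i\mathfrak j\,su=[D_t,\partial_s]u$, pairing with $su$ and integrating by parts gives
\[
|\mathfrak j|\,\|su\|_2^2\le 2\|D_tu\|_2\|s\partial_s u\|_2+\|u\|_2\|D_tu\|_2.
\]
Combining this with the $s^2$-weighted identity (which controls $\|s\partial_s u\|_2^2$ in terms of $\|u\|_2^2$, $\|su\|_2^2$, and $\|su\|_2\|sf\|_2$) and the first identity (which controls $\|D_tu\|_2$) closes the loop for any $\nu$: denoting $M=\|u\|_2+\|(1+|s|)f\|_2$, one gets $\|s\partial_s u\|_2\le CM$ and $\|su\|_2\le CM$, after which your third step goes through unchanged.
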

\begin{proof}~\\
  We first establish local estimates before assembling them together by a covering argument.\\
  Let $s_0\in\R$. We set $v=e^{{\mathfrak j} s_0^2t/2}u$. We then have
  \begin{equation}
\label{eq:114}
    \begin{cases}
  -\Delta v=    - i{\mathfrak j}(s^2-s_0^2)\frac{\partial v}{\partial t} -
  \Big[{\mathfrak j}^2\frac{(s^2-s_0^2)^2}{4} + i\,c{\mathfrak j}t-\lambda\Big]v+f 
 & \text{in } \R^2_+ \\
v=0 & \text{on } \partial\R^2_+ \,.
    \end{cases}
   \end{equation}
Let $\tau_0\geq1$, and let $x_0=(s_0,\tau_0)$. It is easy to show that
\begin{displaymath}
   -\Delta v= -i{\mathfrak j}(s^2-s_0^2)\Big(\frac{\partial}{\partial t} -
   i{\mathfrak j}\frac{s^2-s_0^2}{2}\Big)v + \Big[{\mathfrak j} ^2\frac{(s^2-s_0^2)^2}{4} -
   ic{\mathfrak j}t+\lambda\Big]v+f \,.
\end{displaymath}
As $|s^2-s_0^2|\leq 1+2|s_0|$ in $B(x_0,1)$, standard elliptic estimates
show that
\begin{multline}
\label{eq:115}
  \|u_{ss}\|_{L^2(B(x_0,1/2))}= \|v_{ss}\|_{L^2(B(x_0,1/2))} \leq C(\nu,{\mathfrak j})\Big[(1+|s_0|)\Big\|\Big(\frac{\partial}{\partial t} -
   i{\mathfrak j}\frac{s^2-s_0^2}{2}\Big)v\Big\|_{L^2(B(x_0,1))} +\\
   (1+s_0^2)\|v\|_{L^2(B(x_0,1))}+ \|(t-\gamma)v\|_{L^2(B(x_0,1))} +
   \|f\|_{L^2(B(x_0,1))}\Big] \,.
\end{multline}

Recall the cutoff function $\chi$ defined by
\eqref{eq:93}. Multiplying \eqref{eq:114} by $\chi^2\bar{v}$ and integrating 
yield for the real part
\begin{displaymath}
  \Big\|\Big(\nabla-i{\mathfrak j}\frac{s^2-s_0^2}{2}\hat{i}_t\Big)(\chi v)\|_2^2
  = \|v\nabla\chi\|_2^2 + \nu\|\chi v\|_2^2 + \Re \langle\chi f,\chi v\rangle\,,
\end{displaymath}
which holds also for $\tau_0>2\,$.\\
 Consequently,
\begin{displaymath}
  \Big\|\Big(\nabla-i{\mathfrak j}\frac{s^2-s_0^2}{2}\hat{i}_t\Big)v\|_{L^2(B(x_0,1))}\leq  
C(\nu)[\|v\|_{L^2(B_+(x_0,2))} + \|f\|_{L^2(B_+(x_0,2))}]  \,,
\end{displaymath}
where $B_+(x_0,r)=B(x_0,r)\cap\R^2_+$.
Substituting the above into \eqref{eq:115} yields, using the fact that
$|s_0|\leq|s|+1$ in $B(x_0,1)$,
\begin{equation}
\label{eq:116}
\begin{array}{ll}
  \|u_{ss}\|_{L^2(B(x_0,1/2))} & \leq C(\nu,{\mathfrak j})\left[\|(1+|s|^2)v \|_{L^2(B(x_0,2)\cap\R^2_+)} \right. \\
 &\quad\quad +  \left.  \|(t-\gamma)v\|_{L^2(B(x_0,2)\cap\R^2_+)} +
   \|(1+|s|)f\|_{L^2(B(x_0,2)\cap\R^2_+)}\right] \,.
   \end{array}
\end{equation}
In a similar manner we obtain that, when $\tau_0=0$, we have 
\begin{equation}
\label{eq:117}
\begin{array}{ll}
  \|u_{ss}\|_{L^2(B_+(x_0,1))} &\leq C(\nu,{\mathfrak j})\left[ \|(1+|s|^2)v \|_{L^2(B_+(x_0,4))} \right.\\
 &\quad + \left.  \|(t-\gamma)v\|_{L^2(B_+(x_0,4))} +
   \|(1+|s|)f\|_{L^2(B_+(x_0,4))}\right] \,.
   \end{array}
\end{equation}
 We can now define a covering  of
$\R^2_+$ by discs of radius $1/2$ and semi-discs of radius $1$
centered on the boundary, such that each point in $\R^2_+$ is
covered a finite number of times. Summing up \eqref{eq:116} and
\eqref{eq:117} over these discs and semi-discs yields
\begin{equation}
  \label{eq:118}
 \|u_{ss}\|_2 \leq C(\nu,{\mathfrak j})\, [\|(1+|s|^2)u \|_2 +\\
  \|(t-\gamma)u\|_2 +
   \|(1+|s|)f\|_2] \,.
\end{equation}
Note that in \cite{aletal13}  it has been established that $u$ has
finite moments of any order, when $f$ is compactly supported in $\overline{\mathbb R^2_+}$. 

We next estimate $\|s^2u\|_2$. It has been proved in
\cite{aletal13} (cf. (5.20) therein) that for $k\geq0$ we have
\begin{displaymath}
  \||s|^{(k+1)/2}u\|_2^2 \leq C(\nu,{\mathfrak j})\,\left[ \|(1+|s|^{k/2})u\|_2^2 +
  |\langle|s|^{(k+1)/2}u,|s|^{(k-1)/2}f\rangle \right] \,.
\end{displaymath}
From this  it readily follows that for $k\geq1$
\begin{equation}
\label{eq:119}
   \||s|^{(k+1)/2}u\|_2 \leq C\,\left[ \|(1+|s|^{k/2})u\|_2 +
  \||s|^{(k-1)/2}f\|_2 \right] \,.
\end{equation}
For $k=0\,$, we have 
\begin{displaymath}
   \||s|^{1/2}u\|_2 \leq C\, [ \|u\|_2 +
  \|f\|_2 ] \,.
\end{displaymath}
Applying the above together with \eqref{eq:119} recursively for
$k=1,2,3$ yields
\begin{equation}
  \label{eq:120}
\|s^2u\|_2 \leq C(\nu,{\mathfrak j})\,\left[ \|u\|_2 +
  \|(1+|s|)f\|_2 \right] \,.
\end{equation}
Finally, we obtain an estimate for $\|(t-\gamma)u\|_2$. An integration by
parts yields
\begin{displaymath}
  \Im\langle(t-\gamma)u\,,\,(\A_+(\mathfrak j,c) -\lambda)u\rangle=\Im\Big\langle u\,,\, \Big(\frac{\partial}{\partial t} -
   i\,{\mathfrak j}\frac{s^2}{2}\Big)u\Big\rangle + {\mathfrak j}c\||t-\gamma|u\|_2^2 = \Im\langle(t-\gamma)u\,,\, f\rangle \,,
\end{displaymath}
from which it readily follows that
\begin{equation}
\label{eq:121}
  \||t-\gamma|\, u\|_2^2 \leq   C\, \Big[\||t-\gamma|\, u\|_2\|f\|_2 +  \|u\|_2\Big\| \Big(\frac{\partial}{\partial t} -
   i{\mathfrak j}\frac{s^2}{2}\Big)u\Big\|_2\Big] \,.
\end{equation}
As
\begin{displaymath}
  \Re\langle u\,,\,(\A_+(\mathfrak j,c) -\lambda)u\rangle =
  \Big\|\Big(\nabla-i{\mathfrak j}\frac{s^2}{2}\hat{i}_t\Big)u\Big\|_2^2 -\nu\, \|u\|_2^2
  = \Re\langle u\,,\,f\rangle \,,
\end{displaymath}
we immediately conclude that
\begin{displaymath}
  \Big\| \Big(\frac{\partial}{\partial t} -
   i\, {\mathfrak j}\frac{s^2}{2}\Big)u\Big\|_2 \leq C\, [ \|u\|_2 +
  \|f\|_2 ] \,.
\end{displaymath}
Substituting into \eqref{eq:121} then yields
\begin{displaymath}
 \||t-\gamma|\, u\|_2 \leq C\, [ \|u\|_2 +
  \|f\|_2 ] \,,
\end{displaymath}
which, in conjunction with \eqref{eq:120} and \eqref{eq:118} yields \eqref{eq:113}.
\end{proof}

We can now provide another estimate for $\|(\LL_{R^3h}-\lambda_R)^{-1}\|$.
\begin{lemma}
  \label{lem:5.3}
Let $\lambda_R=\nu(R) R^2+i\, R^3\gamma(R)$. Suppose that there exist
positive $R_0$ and $\ell $ such that for all $R\geq R_0$ we have
\begin{enumerate}
\item $\nu(R) \leq \ell $;
\item (\ref{eq:92}b) or (\ref{eq:92}d) hold true;
\item $\lambda_R\in\rho(\LL_{R^3h})$.
\end{enumerate}
Let $M(R)$ be defined by \eqref{eq:99}.  Then, there exist $C>0$ and
$R_1\geq R_0$ depending only on $\ell $, $\Omega$, ${\mathfrak j}$, and $\alpha$, such that for
all $R\geq R_1$ we have
\begin{equation}
\label{eq:122}
M(R)  \leq \|(\A_+-\nu -i\, c{\mathfrak j}Rt_0)^{-1}\|  + C\,(R^{-\alpha}+R^{-(1-3\alpha)})\, 
\big[\|(\A_+-\nu -i\, c{\mathfrak j}Rt_0)^{-1}\|  +1+M(R)^2\big] \,,
\end{equation}
where  $t_0=d_0$ in case (\ref{eq:92}b)  and $t_0=-d_0$  in case (\ref{eq:92}d).
\end{lemma}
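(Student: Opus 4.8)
The plan is to mimic the proof of Lemma \ref{lem:5.2}, the three changes being: the entire--plane model $\A$ is replaced by the shifted half--plane model $\A_+-\nu(R)-ic{\mathfrak j}Rt_0$ (here ${\mathfrak j}={\mathfrak j}(z_0)$); the Cartesian coordinates near $z_0$ are replaced by the curvilinear coordinates $(s,t)$ introduced before the lemma; and Lemma \ref{lem:aux} is invoked to supply the extra control on $s$--derivatives that the non--flat metric factor $g=1-t\kappa_r(s)$ forces. After the gauge transformation by $V_+$ one works, as announced, with the operator still written $\LL_{R^3h}$ but in the $(s,t)$ variables, treating the cases \eqref{eq:92}b and \eqref{eq:92}d together — they differ only in the sign of $t_0$ and in the fact that in case \eqref{eq:92}d the centre $z_0$ lies just outside $\Omega$, which is harmless once $R$ is large. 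Let $U_R$ be the unitary dilation $u\mapsto R^{-1}u(\cdot/R)$, set $N=\|(\A_+-\nu(R)-ic{\mathfrak j}Rt_0)^{-1}\|$, and put
\begin{displaymath}
  \RR_\infty=R^{-2}U_R^{-1}\big(\A_+-\nu(R)-ic{\mathfrak j}Rt_0\big)^{-1}U_R\,;
\end{displaymath}
by the dilation identity \eqref{eq:103} applied to $\A_+$, $\RR_\infty$ is the resolvent of a half--plane operator with $R^3$--scale coefficients, $\A_{+,R}-\nu(R)R^2-ic{\mathfrak j}R^3t_0$ with $\A_{+,R}=-\big(\nabla-iR^3{\mathfrak j}\frac{s^2}{2}\hat{i}_t\big)^2+ic{\mathfrak j}R^3t$, and $\|\RR_\infty\|=R^{-2}N$.

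Next I would form the approximate inverse $\RR=\chi_1\RR_\infty\chi_1+\chi_2(\LL_{R^3h}-\lambda_R)^{-1}\chi_2$ (extending $\chi_1,\RR_\infty$ by zero off the curvilinear chart) and compute $(\LL_{R^3h}-\lambda_R)\RR=I+E_1+E_2+E_3$, where $E_1=[\LL_{R^3h},\chi_1]\RR_\infty\chi_1$, $E_3=[\LL_{R^3h},\chi_2](\LL_{R^3h}-\lambda_R)^{-1}\chi_2$, and $E_2=\chi_1\big(\LL_{R^3h}-\lambda_R-(\A_{+,R}-\nu(R)R^2-ic{\mathfrak j}R^3t_0)\big)\RR_\infty\chi_1$. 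The commutator terms are handled exactly as in \eqref{eq:105} and \eqref{eq:109}: using $|\nabla\chi_j|\le CR^{1-\alpha}$, $|\Delta\chi_j|\le CR^{2(1-\alpha)}$, the energy inequality $\|\nabla_{R^3hA_n}\RR_\infty\chi_1f\|_2\le CR^{-1}(N+1)\|f\|_2$, and $\|\nabla_{R^3hA_n}(\LL_{R^3h}-\lambda_R)^{-1}\|\le CR^{-1}(1+M(R))$, one gets
\begin{displaymath}
 \|E_1\|\le CR^{-\alpha}(N+1)\,,\qquad \|E_3\|\le CR^{-\alpha}(1+M(R))\,.
\end{displaymath}

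The core is the bound for $E_2$, which I would break into three pieces. First, the deviation of $-\Delta$ in the $(s,t)$ chart from the model's flat Laplacian has coefficients that are $O(t)=O(R^{\alpha-1})$ on $\supp\chi_1$ by \eqref{eq:111}; the dangerous piece is the second--order term $(g^{-2}-1)\partial_s^2$, and here the dilation computation identifies $\partial_s^2\RR_\infty\chi_1f$, up to the unitary $U_R$, with $w_{ss}$ for $w=(\A_+-\nu(R)-ic{\mathfrak j}Rt_0)^{-1}(U_R\chi_1f)$ whose source lies in $\{|s|\lesssim R^\alpha\}$, so Lemma \ref{lem:aux} gives $\|\partial_s^2\RR_\infty\chi_1f\|_2\le C(N+R^\alpha)\|f\|_2$; multiplying by the $O(R^{\alpha-1})$ coefficient and using $0<\alpha<1/3$ yields a contribution $\le CR^{-(1-3\alpha)}(N+1)$, and the first--order metric corrections are dispatched the same way with only the energy bound. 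Second, the magnetic--potential remainder: the choice of $V_+$ in \eqref{eq:176} ensures $\|hA_n-{\mathfrak j}\frac{s^2}{2}\hat{i}_t\|_{L^\infty(\supp\chi_1)}\le CR^{-3(1-\alpha)}$, so the attendant quadratic and cross terms are bounded exactly as \eqref{eq:106}--\eqref{eq:107}. Third, the electric--potential remainder: since $F-\gamma(R)/h$ vanishes at $z_0$ to first order — identically in case \eqref{eq:92}b, and up to $O(R^{-2(1-\alpha)})$ in case \eqref{eq:92}d where $z_0$ is the linearized zero of $F-\gamma(R)/h$ — one has $\|h\phi_n-\gamma(R)-c{\mathfrak j}(t-t_0)\|_{L^\infty(\supp\chi_1)}\le CR^{-2(1-\alpha)}$, which contributes $\le CR^{-(1-2\alpha)}N$ as in \eqref{eq:107aa}. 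Assembling, $\|E_2\|\le C(R^{-\alpha}+R^{-(1-3\alpha)})(N+1)$.

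Finally, since $\nu(R)\le\ell$ and $|c{\mathfrak j}Rt_0|\le 2c{\mathfrak j}R^\alpha$, the quantity $N$ stays bounded as $R\to\infty$ (uniformly when $\nu(R)$ remains below the leftmost eigenvalue $\nu_m({\mathfrak j},c)$ of $\A_+$; in case \eqref{eq:92}d it even decays like $(c{\mathfrak j}R|t_0|)^{-1}$; and where $N=\infty$, \eqref{eq:122} holds trivially). Hence for $R$ large $\|E_1+E_2\|<\tfrac12$, so $I+E_1+E_2$ is invertible and, rearranging as in \eqref{eq:110}, $(\LL_{R^3h}-\lambda_R)^{-1}=\big(\RR-(\LL_{R^3h}-\lambda_R)^{-1}E_3\big)(I+E_1+E_2)^{-1}$. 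Bounding $\|\RR\|\le R^{-2}N+CR^{-(2+\alpha)}(1+M(R))$ through Lemma \ref{lem:5.1}, multiplying out and multiplying through by $R^2$ gives \eqref{eq:122}. The one genuinely new obstacle relative to Lemma \ref{lem:5.2} is controlling the second--order tangential term of the curvilinear Laplacian, with its $O(R^{\alpha-1})$ coefficient — exactly what Lemma \ref{lem:aux} was designed to absorb — together with the need to carry the quantities $N$ and $M(R)^2$ through the whole estimate, since the half--plane model, unlike $\A$, has non--empty spectrum and its resolvent norm cannot be swept into a constant.
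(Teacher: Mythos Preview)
Your proposal is correct and follows essentially the same route as the paper: the same approximate inverse $\chi_1\RR_\infty\chi_1+\chi_2(\LL_{R^3h}-\lambda_R)^{-1}\chi_2$ (the paper relabels the cutoffs $\eta_1,\eta_2$), the same expansion into commutator and model--difference errors, and---crucially---the same use of Lemma~\ref{lem:aux} to control the $(g^{-2}-1)\partial_s^2$ term coming from the curvilinear Laplacian, which is indeed the only new ingredient beyond Lemma~\ref{lem:5.2}. Your parenthetical about the case $N=\infty$ and the boundedness of $N$ is slightly informal (the lemma's hypotheses do not by themselves force $N$ bounded), but the paper is equally tacit on this point and the issue is resolved in the application (Theorem~\ref{thm1.3}), where one works with $\nu<\nu_m$.
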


\begin{proof}~\\
  We prove \eqref{eq:122} in a similar manner  to \eqref{eq:100}.
  Recall the definition of $\chi$ from \eqref{eq:93}. We let
  $\eta_1=\chi(4R^{\alpha -1  } |x-z_0|)\, {\mathbf 1}_{\Omega}$ and
  $\eta_2=\tilde  \chi(4R^{\alpha -1  } |x-z_0|)\,  {\mathbf 1}_{\Omega}$. Let further
  \begin{displaymath}
    \RR_+^\infty = U^{-1}_R(\A_+-\nu-i\, c{\mathfrak j}Rt_0)^{-1}U_R=(\A_+^R-R^2[\nu-i\, c{\mathfrak j}Rt_0])^{-1}\,,
  \end{displaymath}
where
\begin{displaymath}
  \A_+^R=\Big(\nabla-iR^3{\mathfrak j}\frac{s^2}{2}\hat{i}_t\Big)^2
+ icR^3{\mathfrak j}t \,.
\end{displaymath}
Then we
  set
\begin{displaymath}
  \RR_+ =  \eta_1\RR_+^\infty  \eta_1 + \eta_2\,(\LL_{R^3h}-\lambda_R)^{-1}\,\eta_2 \,.
\end{displaymath}
Clearly, 
\begin{multline}
\label{eq:123}
  (\LL_{R^3h}-\lambda_R)\RR_+ = I + [\LL_{R^3h},\eta_1]\, \RR_+^\infty\, \eta_1 +\\
  \eta_1\, \big(\LL_{R^3h}-i\,R^2(\gamma-c{\mathfrak j}Rt_0)-\A_+^R\big)\RR_+^\infty\, \eta_1 + [\LL_{R^3h},\eta_2]
  (\LL_{R^3h}-\lambda_R)^{-1}\, \eta_2 \,.
\end{multline}

The estimate for the various terms on the right-hand-side
of \eqref{eq:123} proceeds in exactly the same manner as in the proof
of \eqref{eq:100}. We note that as in the case (\ref{eq:92}a) we have
\begin{equation}
\label{eq:124}
  \| hA_n-{\mathfrak j}\frac{s^2}{2}\hat{i}_t\|_{L^\infty(B(z_0,8R^{\alpha -1  }))} \leq
  CR^{-3(1-\alpha)} \,,
\end{equation}
from which we obtain, as in the proof of \eqref{eq:105} that 
\begin{equation}
\label{eq:125}
\|[\LL_{R^3h},\eta_1]\RR_+^\infty\eta_1\| \leq C\, (R^{-\alpha} +
R^{\alpha -1  })\,  (1+\|(\A_+-\nu-ic{\mathfrak j}Rt_0)^{-1}\|)\,.
\end{equation}

Consider next the third term on the right hand side of
\eqref{eq:123}. We first note that for every $x\in\Omega\cap B(z_0,8R^{\alpha -1  })$
\begin{multline*}
  (\LL_{R^3h}-i\,R^3(\gamma -c{\mathfrak j}t_0)-\A_+)=
  -(g^{-2}-1)\frac{\partial^2}{\partial s^2}+g^{-3}(g_s+2i\,hR^3ga_s)\frac{\partial}{\partial s} +
  g^{-2}h^2R^6|a_s|^2 + \\
    g^{-1}\Big[g_t+2i\, gR^3\Big(ha_t-{\mathfrak j}\frac{s^2}{2}\Big)\Big]
     \Big(\frac{\partial}{\partial t}-i\, jR^3\frac{s^2}{2}\Big)  +
     R^6\Big(ha_t-{\mathfrak j}\frac{s^2}{2}\Big)^2 + i\, R^3[h\phi_n-\gamma-c{\mathfrak j}(t-t_0)]\,. 
\end{multline*}
By \eqref{eq:111} we have 
\begin{displaymath}
  \|\eta_1(g^{-2}-1)\| \leq \frac{C}{R^{1-\alpha}} \,.
\end{displaymath}
In view of \eqref{eq:113} we thus obtain 
\begin{equation}
\label{eq:126}
 \Big \|\eta_1(g^{-2}-1)\frac{\partial^2}{\partial s^2}
 \RR_+^\infty\eta_1\Big\|\leq \frac{C}{R^{1-2\alpha}}
 \big(1+\|(\A_+-\nu-i\, c{\mathfrak j}Rt_0)^{-1}\|\big)  \,. 
\end{equation}
As
\begin{displaymath}
  \|g_s\|\leq \frac{C}{R^{1-\alpha}} \,, 
\end{displaymath}
and since
\begin{displaymath}
  \Big\|\frac{\partial}{\partial s} \RR_+^\infty\Big\|  \leq
  \Big\|\Big(\nabla-iR^3{\mathfrak j}\frac{s^2}{2}\hat{i}_t\Big)
  (\A_+^R-R^2\nu-i\,c{\mathfrak j}R^3t_0)^{-1}\Big\|  \leq \frac{C}{R}(1+\|(\A_+^R-R^2\nu-i\,c{\mathfrak j}R^3t_0)^{-1}\|)\,,
\end{displaymath}
we readily obtain, in view of \eqref{eq:124}, that
\begin{multline}
  \label{eq:127}
  \Big \|\eta_1\Big[g^{-3}(g_s+2i\, hR^3ga_s)\frac{\partial}{\partial s} +
  g^{-2}h^2R^6|a_s|^2\Big] 
  (\A_+-\nu-i\,c{\mathfrak j}t_0)^{-1}\eta_1\Big\| \\ \leq \frac{C}{R^{1-3\alpha}}
  \big(1+\|(\A_+-\nu-i\,c{\mathfrak j}t_0)^{-1}\|\big)  \,.
\end{multline}
In a similar manner we show that
\begin{multline}
  \label{eq:128}
\Big \|\eta_1g^{-1}\Big\{\Big[g_t+2i g R^3\Big(ha_t-{\mathfrak j}\frac{s^2}{2}\Big)\Big]
     \Big(\frac{\partial}{\partial t}-iR^3{\mathfrak j}\frac{s^2}{2}\Big) \\ +
     R^6\Big(ha_t-{\mathfrak j}\frac{s^2}{2}\Big)^2\Big\} 
  (\A_+^R-R^2\nu-i\,c{\mathfrak j}R^3t_0)^{-1}\eta_1\Big\|\\ \leq \frac{C}{R^{1-3\alpha}}
  \big(1+\|(\A_+-\nu-i\,c{\mathfrak j}Rt_0)^{-1}\|\big)  \,.
\end{multline}
Finally, as 
\begin{displaymath}
  \|\eta_1[h\phi_n-\gamma-c{\mathfrak j}(t-t_0)]\| \leq CR^{-2(1-\alpha)} \,,
\end{displaymath}
we obtain, by combining the above together with \eqref{eq:126},
\eqref{eq:127}, and \eqref{eq:128} that
\begin{multline}
  \label{eq:129}
\|\eta_1\big(\LL_{R^3h}-iR^3(\gamma-c{\mathfrak j}t_0)-\A_+\big)(\A_+^R-R^2\nu-ic{\mathfrak j}Rt_0)^{-1}\eta_1\| \\ \leq C(R^{-\alpha}+R^{-(1-3\alpha)})\big(1+\|(\A_+-\nu-ic{\mathfrak j}Rt_0)^{-1}\|\big) \,.
\end{multline}

It is easy to show that \eqref{eq:109} holds true when $\chi_2$ is
replace by $\eta_2$, and hence in view of \eqref{eq:125} and
\eqref{eq:129} we obtain we can complete the proof of \eqref{eq:122}
in the same manner of \eqref{eq:100}.
\end{proof}

\begin{remark}
  In the previous proofs, we have substantially relied on Assumption
  (C) (formulated in \eqref{(C)}).  If we do not make this assumption,
  we could still attempt to approximate $\LL_{h,R} - \lambda_R$ by the
  Dirichlet realization in $\R^2_+$ of the following operator
  \begin{displaymath}
\A_+^\theta(\tilde{z}_0) =
-\Big(\nabla-i\Big[{\mathfrak j} \cos\theta\frac{s^2}{2}\hat{i}_t+{\mathfrak j} \sin\theta\frac{t^2}{2}\hat{i}_s\Big]\Big)^2
+ ic{\mathfrak j}[(t-t_0)\cos\theta+(s-s_0)\sin\theta] \,.
  \end{displaymath}
  In the above $\theta$ represents the angle between the curve $\Gamma $ and
  the boundary at the point of intersection.  However, since we are
  unaware of any study addressing the above operator, and even the
  definition of this operator in the half-plane appears to be
  non-trivial whenever $\theta \neq \frac \pi 2$\,, we defer the discussion of the
  more general case to a later stage.
\end{remark}
\subsection{Proof of Theorem \ref{thm1.3}}

  The theorem follows from Lemmas \ref{lem:5.1}, \ref{lem:5.2}, and
  \ref{lem:5.3}.  Let $\gamma\in\R$ and $\nu<\nu_m$ where $\nu_m$ is
  given by \eqref{eq:183c}. Then, for sufficiently
  large $R$, we must be in one of the cases listed in \eqref{eq:92}.
  Therefore, it is easy to show that a path $\gamma=\gamma(R)$, $\nu=\nu_0$, 
  exists in $[R_0,\infty)$ such that we remain in the same \eqref{eq:92}
  case with $\nu R^2 + i \gamma R^3\in\rho(\LL_{R^3h})$ for all $R\geq R_0$. In case
  (\ref{eq:92}a) we can thus apply \eqref{eq:100}, if either (\ref{eq:92}b)
  or (\ref{eq:92}d) is satisfied, we can apply \eqref{eq:122}.
  Finally, in cases where (\ref{eq:92}c) is satisfied, we can employ \eqref{eq:94aa}.
Let
\begin{displaymath}
   C_0(\nu)=\max
\Big(\sup_{z_0\in\Gamma}\|(\A(z_0)-\nu)^{-1}\|,\sup_{
  \begin{subarray}
\,    \gamma\in\R \\
    i=1,2
  \end{subarray}}\|(\A_+(z_i)-\nu-i\gamma)^{-1}\|\Big) \,.
\end{displaymath} 
Choosing
$\alpha=1/4$ in all of the  estimates of the two previous subsections yields that whenever
$\nu<\nu_m$ and $\lambda_R\in\rho(\LL_{R^3h})$ we have
\begin{displaymath}
  M(R) \leq C_0 + \frac{C}{R^{1/4}}\big[C_0 +1+M(R)^2\big] \,,
\end{displaymath}
where $M(R)$ is given by \eqref{eq:99}. It follows that there exist $C_1$
and $C_2$ such that either
\begin{equation}
  \label{eq:132}
 M(R) \leq C_0 \Big[1+ \frac{C_1}{R^{1/4}}\Big]\,,
\end{equation}
 or
\begin{equation}
  \label{eq:133}
M(R) \geq C_2R^{1/4}\,,
\end{equation}
for all sufficiently large $R$. Note that $[C_2R^{1/4},\infty)$ and
$(-\infty,C_0 \Big[1+ \frac{C_1}{R^{1/4}}\Big])$ are disjoint for
sufficiently large $R$ and $\nu<\nu_m$ .

Let $\nu\leq-1$. It is easy to show that $M(R) \leq1$ independently of $\gamma$ and
$R$ in that case. Consequently, for sufficiently large $R$, 
\eqref{eq:132} must be satisfied. We next increase $\nu$ gradually,
keeping $\gamma$ and $R$ fixed, thereby changing $M$ continuously. It
follows that as long as $\nu\leq\nu_m-\delta$ for any fixed positive $\delta$,
then for sufficiently large $R$, $M(R)$ must satisfy \eqref{eq:132}
since $\nu\mapsto M(\nu,R)$ maps $(-1,\nu_m-\delta)$ onto an interval in $\R$.

From the above discussion we can conclude that
\begin{displaymath}
  M(R,\nu,\gamma)\leq C_0(\nu) \Big[1+ \frac{2C_1(\nu)}{R^{1/4}}\Big]\,,
\end{displaymath}
for all $\nu<\nu_m$ and $R>R_0$.
Let
\begin{displaymath}
  M_0(R,\nu)=\sup_{\gamma\in\R}M(R,\gamma,\nu) \,.
\end{displaymath}
Since $C_1$ and $C_0$ are both independent of $\gamma$, it follows that
\begin{displaymath}
    M_0 (R) \leq C_0 \Big[1+ \frac{2C_1}{R^{1/4}}\Big]\,,
\end{displaymath}
for all $R>R_0$. The above inequality readily provides us with both
\eqref{eq:131} and a lower bound for $\mu_\infty$, i.e,
\begin{displaymath}
  \mu_\infty\geq\nu_m \,.
\end{displaymath}

To complete the proof of the theorem we need to show that
$$\limsup_{R\to\infty}\mu_R\leq\nu_m\,.
$$
If $\nu_m$ is infinite, then \eqref{eq:130} is readily proved. If
$\nu_m$ is finite, we now prove that for any sequence
$\{R_k\}_{k=1}^\infty$, such that $R_k\uparrow\infty$, there exists a corresponding
sequence of eigenvalues of $\LL_{R^3_kh}$, which we denote by
$\{\tilde{\lambda}_{R_k}\}_{k=1}^\infty$, satisfying
\begin{equation}\label{zzzgoal}
  R^2_k\,  \Re\tilde{\lambda}_{R_k}  \to \nu_m \,.
\end{equation}
For convenience of notation we denote $\A_+(z_i)$, at the point $z_i$ where
$\inf_{\lambda\in \sigma(\A_+(z_i))}\Re\lambda=\nu_m$, by $\A_+$ in the sequel.

We first claim that there exists $\lambda_{min}  \in\sigma(\A_+)$ which lies on the left
margin of $\sigma(\A_+)$, i.e., $\Re\lambda_{min}=\nu_m$. We prove this claim by
using the same techniques as in the proof of Lemma 7.2 in
\cite{aletal13}, to show that there exists $C>0$,  such that $\nu + i
\gamma\in\rho(\A_+)$ for $\nu\in [0, \nu_m +1]$ and $|\gamma| \geq C$. Hence the infimum
defining $\nu_m$ is attained by some $\lambda_{min}$ whose real part is
$\nu_m$. 

Let then $\lambda_{min}\in \sigma(\A_+)$ satisfy $\Re\lambda_{min}=\nu_m$.  We choose $\gamma$ such that $z_0\in\partial\Omega$, i.e., so that $F(z_1)-\gamma/h=0$,
and hence $d_0=0$ in \eqref{eq:185}.   Since \eqref{eq:123}
holds for any $\lambda_R\in\rho(\LL_{R^3h})$ for which $\nu<\nu_m$, we choose
$\nu=0$, and hence $\lambda_R=i\gamma R^3$.\\
Hence, we have
\begin{displaymath}
   (\LL_{R^3h}-\lambda_R)\RR_+ - I = [\LL_{R^3h},\eta_1](\A_+^R)^{-1}\eta_1 +\\
  \eta_1\big(\LL_{R^3h}-i\gamma R^3-\A_+^R\big)(\A_+^R)^{-1}\eta_1 + [\LL_{R^3h},\eta_2]
  (\LL_{R^3h}-\lambda_R)^{-1}\eta_2 \,.
\end{displaymath}
By \eqref{eq:125} (recalling that $\RR_+^\infty=(\A_+^R)^{-1}$ since
$t_0=\nu=0$), 
\eqref{eq:129}, \eqref{eq:94}, and \eqref{eq:131} we obtain that 
\begin{displaymath}
  R^2\|\RR_+ - (\LL_{R^3h}-\lambda_R)^{-1}\| \leq \frac{C}{R^{1/4}} \,,
\end{displaymath}
and hence, using \eqref{eq:94} once again yields
\begin{displaymath}
   R^2\|\eta_1(\A_+^R)^{-1}\eta_1  - R^2(\LL_{R^3h}-\lambda_R)^{-1}\| \leq \frac{C}{R^{1/4}} \,.
\end{displaymath}
Equivalently we may state that
\begin{displaymath}
  \|\eta_1^R\A_+^{-1}\eta_1^R  - R^2U_R(\LL_{R^3h}-\lambda_R)^{-1}U_R^{-1}\|=\|\eta_1^R\A_+^{-1}\eta_1^R  - (\LL_{h}^R-R^{-2}\lambda_R)^{-1}\| \leq
  \frac{C}{R^{1/4}} \,,
\end{displaymath}
where $\eta_1^R=U_R\eta_1$. \\
We can now define
$\RR^R_+:L^2(\R^2_+)\to H^2_{mag}(\R^2_+)$ by
\begin{displaymath}
  \RR^R_+=\eta_1^R\A_+^{-1}\eta_1^R \,.
\end{displaymath}
It can be easily verified that $\RR^R_+\to\A_+^{-1}$ in
$\LL\big(L^2(\R^2_+)\big)$. Here we use that the form domain of the
operator $\A_+$ is contained in a suitable weighted space $L^2(\mathbb
R^2_+,\rho)$ with $\rho \to +\infty$ as $|x|\to\infty$ (see the proof of Proposition
2.4 in \cite{aletal10}). Hence, by Section IV, \S3.5 in \cite{ka80}, it
follows that, for any sequence $R_k\uparrow\infty$, there exists $\Lambda_{R_k}\in\sigma
((\mathcal L_h^{R_k} - R_k^{-2} \lambda_{R_k})^{-1})$ such that
$\Lambda_{R_k}\to\lambda_{min}^{-1}$, thereby verifying (\ref{zzzgoal}) for
\begin{displaymath}
   \tilde{\lambda}_{R_k}=\frac{1}{R_k^2\Lambda_{R_k}} + \frac{1}{R_k^4\lambda_{R_k}}\,.
\end{displaymath}

\begin{remark}
  By the proof of Lemma 7.2 in \cite{aletal13} it follows that
  \begin{equation}
\label{eq:54}
   \| (\A(1,c)-\nu)^{-1}\| = \lim_{\gamma\to+\infty}\|(\A_+(1,c)-\nu-i\gamma)^{-1}\| \,.
  \end{equation}
  Let ${\mathfrak j}_m $ and ${\mathfrak j}_+$ be given by
  \eqref{eq:184}.  It follows from \eqref{eq:54} that, if $ {\mathfrak
    j}_m = {\mathfrak j}_+$ we can drop
  $\sup_{z_0\in\Gamma}\|(\A(z_0)-\nu)^{-1}\|$ from the right-hand-side of
  \eqref{eq:131}.
\end{remark}

\newpage

\appendix

\section{Scalar Regularity in non-smooth domains}\label{sec:a}

In this section we provide some elliptic estimates, valid for domains
with right-angled corners. For simplicity we assume property (R1) for
the domain. We note that one can extend these estimates to domains
composed of $C^2$ curves (or curved polygons).  For the case of
homogeneous boundary condition we prove, for the so-called variational
(or weak) solution obtained by the Lax-Milgram Theorem, regularity in
$W^{2,p}(\Omega)$ for any $p\geq2$. Inhomogeneous boundary conditions are
converted into homogeneous ones by subtracting an appropriate function
from the variational solution.

While most of the estimates we present here were first
obtained in the pioneering works of  Kondratiev
 \cite{kon67} and Mazya-Plamenevskii \cite{mapla}, we prefer to
 refer the reader to \cite{gr85}, \cite{gr92} or Chapter~1 in
 \cite{gira79}, where the results we use here can be obtained more
 easily.  We finally note that while many of the results we cite are
 stated for polygons, they are equally valid for domains with property
 (R1). The regularity away from the corners follows from standard
 elliptic estimates, and the regularity near the corners can be
 obtained using the arguments applied in the case of polygons.

\begin{proposition}\label{propoA1}~\\
 Let $ \Omega$ satisfy property (R1) and  $p\geq 2$, 
   Let $f\in L^p(\Omega)$,  $g_{i}\in W^{2-\frac 1p,p} (\pa\Omega_{i })$ $g_{c}\in W^{2-\frac 1p,p} (\pa\Omega_{c })$ such  that
 \begin{equation}\label{eq:161D}
    g_c (S_\ell) =  g_i (S_\ell) \quad l=1,\ldots,4
 \end{equation}
   at any  of the four corners $\{S_l\}_{l=1}^4$.\\
  Let  $u\in H^1(\Omega)$
  denote the  variational  solution for the following problem
  \begin{equation}
    \label{eq:162D}
    \begin{cases}
      -\Delta u = f & \text{in } \Omega \\
      u =g_i & \text{in } \partial\Omega_c \\
      u = g_c & \text{in } \partial\Omega_i\,.
    \end{cases}
  \end{equation}
  Then, $u \in W^{2,p}(\Omega)$ and   there exists $C(p,\Omega)$ such that
\begin{equation}
  \label{eq:173D}
\|u\|_{2,p} \leq C(p,\Omega) \left(  \|f\|_p + \|g_i\|_{W^{2- \frac 1 p}(\pa
    \Omega_i)} +  \|g_c\|_{W^{2- \frac 1 p} (\pa \Omega_i)}\right)  \,.
\end{equation}
Furthermore, if $f\in W^{1,q}(\Omega)$ for some $1<q<2$ and $g_i\equiv g_c\equiv0$, then $u \in W^{3,q}(\Omega)$
\begin{equation}
\label{eq:136}
\|u\|_{3,q} \leq C\, \|f\|_{1,q}   \,.
\end{equation}
\end{proposition}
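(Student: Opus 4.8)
The plan is to deduce \eqref{eq:136} from the first part of the proposition together with elliptic boundary regularity localised at the corners. Since $g_i\equiv g_c\equiv 0$, the function $u$ is the $H^1_0(\Omega)$ variational solution of $-\Delta u=f$, and by \eqref{eq:173D} applied with $p=2$ we already know $u\in W^{2,2}(\Omega)$; because $\Omega$ is bounded and $q<2$ this gives $u\in W^{2,q}(\Omega)$, and in particular $\nabla u\in W^{1,q}(\Omega)$. Fix a finite covering of $\overline\Omega$ by balls: interior balls, balls centred on the smooth part of $\partial\Omega$, and balls centred at the corners $S_\ell$ chosen so small that, by (R1)(b), the intersection of each corner ball with $\Omega$ is a flat sector of aperture $\pi/2$. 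Let $\{\chi_k\}$ be a subordinate partition of unity. For each $k$ one has
\begin{equation*}
-\Delta(\chi_k u)=F_k:=\chi_k f-2\nabla\chi_k\cdot\nabla u-(\Delta\chi_k)u ,\qquad \chi_k u=0 \text{ on }\partial\Omega ,
\end{equation*}
and $F_k\in W^{1,q}$: indeed $\chi_k f\in W^{1,q}$, $(\Delta\chi_k)u\in W^{2,q}\subset W^{1,q}$, and $\nabla\chi_k\cdot\nabla u\in W^{1,q}$ since $\nabla u\in W^{1,q}$.

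\textbf{Interior and smooth boundary pieces.} On an interior ball this is the classical Calder\'on--Zygmund/Agmon--Douglis--Nirenberg estimate, and on a ball centred on the smooth part of $\partial\Omega$ it is the corresponding boundary estimate for the Dirichlet problem, which is legitimate because $\partial\Omega$ is of class $C^3$ there by (R1)(a). Both yield $\chi_k u\in W^{3,q}$ with $\|\chi_k u\|_{3,q}\leq C\big(\|f\|_{1,q}+\|u\|_{2,q}\big)$.

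\textbf{Corner pieces.} Near a corner, after flattening, $\chi_k u$ is the Dirichlet solution in the sector $Q$ of aperture $\pi/2$ with compactly supported right-hand side $F_k\in W^{1,q}(Q)$. Here I would invoke the $L^q$ corner-regularity theory for the Laplacian (\cite{kon67}, \cite{mapla}, and the more accessible presentations \cite{gr85}, \cite{gr92}): the Dirichlet realisation of $-\Delta$ on $Q$ is an isomorphism from $\{v\in W^{3,q}(Q)\,:\,v=0 \text{ on }\partial Q\}$ onto $W^{1,q}(Q)$ provided the regularity index $3-2/q$ stays strictly below the first exponent $\pi/\omega=\pi/(\pi/2)=2$ of the associated operator pencil, which is exactly the condition $q<2$. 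The threshold is sharp: at $q=2$ a stress-intensity term of type $r^{2}\log r$ can enter the solution, and $r^{2}\log r\in W^{3,q}$ if and only if $q<2$ (its third derivatives behave like $r^{-1}$). Thus $\chi_k u\in W^{3,q}$ with $\|\chi_k u\|_{3,q}\leq C\|F_k\|_{1,q}\leq C\big(\|f\|_{1,q}+\|u\|_{2,q}\big)$. Summing over $k$ gives $u\in W^{3,q}(\Omega)$ together with $\|u\|_{3,q}\leq C\big(\|f\|_{1,q}+\|u\|_{2,q}\big)$; absorbing $\|u\|_{2,q}\leq C\|f\|_{L^q}\leq C\|f\|_{1,q}$, again via \eqref{eq:173D}, yields \eqref{eq:136}. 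Equivalently, once $u\in W^{3,q}(\Omega)$ is known, uniqueness of the variational solution and the closed-graph theorem produce the stated bound.

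\textbf{Main obstacle.} The only non-routine ingredient is the corner step. For the $W^{2,p}$ statement one can get away with a naive odd reflection in each edge, since $L^p$ is stable under odd extension; but for $W^{3,q}$ the odd extension of $f$ across an edge on which $f$ need not vanish develops a jump and therefore leaves $W^{1,q}$, so reflection is useless and one genuinely needs the Kondratiev--Mazya-Plamenevskii--Grisvard analysis. Concretely, the crux is verifying that for an aperture $\pi/2$ the admissible range of the regularity index is precisely $3-2/q<2$, i.e. that the restriction $q<2$ is exactly what keeps the potential logarithmic corner term inside $W^{3,q}$; everything else is bookkeeping with cutoffs and the already-established $W^{2,p}$ estimate.
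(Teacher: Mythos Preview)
Your argument is correct and rests on the same machinery as the paper's proof: the paper simply invokes Grisvard directly (Corollary~4.4.3.8 for \eqref{eq:173D} and Theorem~5.1.2.3 for \eqref{eq:136}), whereas you unpack the localisation-plus-corner-analysis that underlies those results and correctly identify the threshold $3-2/q<\pi/\omega=2$, i.e.\ $q<2$, as the reason for the restriction. One minor imprecision: your phrasing ``isomorphism from $\{v\in W^{3,q}(Q):v=0\text{ on }\partial Q\}$ onto $W^{1,q}(Q)$'' is not literally true on the \emph{infinite} sector $Q$; what Grisvard gives (and what you actually use) is the local regularity statement for compactly supported data near the vertex, which is exactly what your cutoff reduces to.
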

\begin{proof}~\\
The proof of \eqref{eq:173D} follows immediately from Corollary
4.4.3.8 \cite{gr85}. The proof of \eqref{eq:136} follows from  Theorem
5.1.2.3 there.
\end{proof}
\begin{proposition}\label{propoA2}~\\
 Let $\Omega$ satisfy property (R1) and  $p\geq 2$.
   Let $f\in L^p(\Omega)$, $g_{i}\in W^{1-\frac 1p,p} (\pa\Omega_{i })$ $g_{c}\in W^{1-\frac 1p,p} (\pa\Omega_{c })$ such  that
   $$
   \int_\Omega f(x) dx + \int_{\pa \Omega_c} g_c \, ds + \int_{\pa \Omega_i} g_i \, ds=0\,.
   $$
  Let  $u\in H^1(\Omega)$
  denote the weak solution for the following problem
  \begin{equation}
    \label{eq:162N}
    \begin{cases}
      -\Delta u = f & \text{in } \Omega \\
      \int_\Omega   u\, dx =0 & \\
      \frac{\partial u}{\partial\nu} =g_i & \text{in } \partial\Omega_c \\
      \frac{\partial u}{\partial\nu}= g_c & \text{in } \partial\Omega_i\,.
    \end{cases}
  \end{equation}
  Then $u \in W^{2,p}(\Omega)$ and  there exists $C(p,\Omega)$ such that
\begin{equation}
  \label{eq:173N}
\|u\|_{2,p} \leq C \left(  \|f\|_p + \|g_i\|_{W^{1- \frac 1 p}(\pa
    \Omega_i)} +  \|g_c\|_{W^{1- \frac 1 p} (\pa \Omega_i)}\right) \,.
\end{equation}
Furthermore, if $f\in W^{1,q}$ for some $1<q<\infty$ and $g_i\equiv g_c\equiv0$, then $u\in W^{3,q}(\Omega)$ and 
\begin{equation}
\label{eq:137}
\|u\|_{3,q} \leq C\, \|f\|_{1,q}   \,.
\end{equation}

\end{proposition}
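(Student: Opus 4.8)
The plan is to mirror the proof of Proposition \ref{propoA1}, reducing the Neumann problem \eqref{eq:162N} to the known regularity theory for boundary value problems in plane domains with right-angled corners. First I would reduce to homogeneous boundary data. Using the $C^3$ regularity of each component of $\partial\Omega$ and the flatness near the edges guaranteed by (R1)(b), one constructs a lifting $G\in W^{2,p}(\Omega)$ whose normal derivative on $\partial\Omega$ equals the prescribed data $g_i$, $g_c$ on the two boundary pieces, with $\|G\|_{2,p}\le C\big(\|g_i\|_{W^{1-1/p,p}(\partial\Omega_i)}+\|g_c\|_{W^{1-1/p,p}(\partial\Omega_c)}\big)$. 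No compatibility between $g_i$ and $g_c$ at the edges is needed here, in contrast with the Dirichlet case of Proposition \ref{propoA1}: the two data pieces live on disjoint components of $\partial\Omega$ and involve traces of a normal derivative rather than of $u$ itself. Then $v=u-G$, normalized to have zero mean, solves $-\Delta v=f+\Delta G$ in $\Omega$ with homogeneous Neumann conditions, where $f+\Delta G\in L^p(\Omega)$, so it suffices to prove \eqref{eq:173N} for the homogeneous problem and absorb the lifting bound at the end.

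For the homogeneous Neumann problem the regularity away from the four edges $\{S_\ell\}$ is classical, since each component of $\partial\Omega$ is $C^3$; the delicate point is the behaviour at the edges. Near each $S_\ell$ the domain is, after flattening, a quarter-plane of opening $\pi/2$ carrying Neumann conditions on both sides. The singular exponents of this model operator are $\lambda_k=k\pi/(\pi/2)=2k$, $k\ge 0$, with singular functions $r^{2k}\cos(2k\theta)$, which are harmonic polynomials; hence no genuinely singular (neither non-smooth nor logarithmic) term enters the solution. Consequently $v\in W^{2,p}(\Omega)$ for every $p\ge 2$ with the bound \eqref{eq:173N}. Concretely this is the Neumann counterpart of Corollary~4.4.3.8 in \cite{gr85} (see also \cite{gr92} and Chapter~1 in \cite{gira79}), applied locally near each $S_\ell$ and patched with interior elliptic estimates by a partition of unity; the integral compatibility condition enters only to guarantee solvability of the weak problem, which we already have by hypothesis.

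The second statement is handled the same way. With $f\in W^{1,q}(\Omega)$ and $g_i\equiv g_c\equiv 0$, the expansion of $v$ near each edge now also involves particular solutions of $\Delta w=(\text{affine part of }f)$ on the quarter-plane with homogeneous Neumann data; because the opening is $\pi/2$, these may be taken polynomial — for instance $\Delta w=\mathrm{const}$ is solved by $\frac14(\mathrm{const})(x^2+y^2)$, which meets the Neumann conditions on both edges — so, unlike the Dirichlet situation of Proposition \ref{propoA1} where an $r^2\log r$ term forces $q<2$, there is no logarithmic obstruction and $v\in W^{3,q}(\Omega)$ for all $1<q<\infty$, with \eqref{eq:137}. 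This follows from the Neumann version of Theorem~5.1.2.3 in \cite{gr85}, again localized at the edges. I expect the only real work to be the edge bookkeeping — checking that the resonant integer exponents $\lambda_k=2k$ contribute only polynomial (hence harmless) terms and that the partition-of-unity patching keeps the constants under control — after which \eqref{eq:173N} and \eqref{eq:137} follow, together with the lifting estimate from the first step.
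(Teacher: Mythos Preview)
Your proposal is correct and follows essentially the same route as the paper: the paper's proof simply cites Corollary~4.4.3.8 in \cite{gr85} for \eqref{eq:173N} and Theorem~5.1.2.3 in \cite{gr85} for \eqref{eq:137}, which are exactly the references you invoke after the lifting step. Your additional discussion of the singular exponents $\lambda_k=2k$ and the absence of logarithmic terms in the Neumann case (in contrast with the Dirichlet restriction $q<2$ in Proposition~\ref{propoA1}) correctly explains why those cited results apply here, but this elaboration is not present in the paper's two-line proof.
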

\begin{proof}~\\
The proof of \eqref{eq:173N} follows immediately from Corollary
4.4.3.8 in \cite{gr85}.  The proof of \eqref{eq:137} follows from  Theorem
5.1.2.3 there.
  \end{proof}

\begin{proposition}\label{propoA3}~\\
  Let $ \Omega$ satisfy property (R1)  and $ p\geq 2$. Let  $f\in L^p(\Omega)$ and  $u\in H^1(\Omega)$
  denote the variational  solution for the following problem
  \begin{equation}
    \label{eq:162}
    \begin{cases}
     - \Delta u = f & \text{in } \Omega \\
      u= 0& \text{in } \partial\Omega_c \\
      \frac{\partial u}{\partial\nu}=0 & \text{in } \partial\Omega_i\,.
    \end{cases}
  \end{equation}
 Then  $u\in W^{2,p}(\Omega)$  and  there exists $C(p,\Omega)$ such that, for any $f\in L^p(\Omega)$,
\begin{equation}
  \label{eq:173}
\|u\|_{2,p} \leq C \|f\|_p \,.
\end{equation}
\end{proposition}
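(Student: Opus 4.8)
The plan is to localise. Interior and lateral (away from the corners) regularity are the classical Calder\'on--Zygmund estimates, while at each of the four corners the mixed boundary condition is removed by a reflection, which is legitimate precisely because of assumption (R1)(b): near a corner one edge is flat and Dirichlet, the other flat and Neumann, and they meet orthogonally.

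First I would record the global energy bound. Since $u$ vanishes on $\partial\Omega_c$, a set of positive surface measure, Poincar\'e's inequality together with the weak formulation of \eqref{eq:162} gives $\|u\|_{H^1(\Omega)}\le C\|\nabla u\|_{L^2(\Omega)}\le C\|f\|_{L^2(\Omega)}\le C(\Omega,p)\|f\|_{L^p(\Omega)}$, and the two-dimensional embedding $H^1(\Omega)\hookrightarrow L^p(\Omega)$ then yields $\|u\|_{L^p(\Omega)}\le C(\Omega,p)\|f\|_{L^p(\Omega)}$. Away from the corners the usual local theory applies on nested neighbourhoods $U'\subset\subset U$: the interior Calder\'on--Zygmund estimate for $-\Delta$; near $\partial\Omega_c$, where the boundary is $C^3$ and $u$ solves a homogeneous Dirichlet problem, the boundary estimate $\|u\|_{W^{2,p}(U')}\le C(\|f\|_{L^p(U)}+\|u\|_{L^p(U)})$; near $\partial\Omega_i$ the analogous estimate for the homogeneous Neumann problem. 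These are standard (cf. \cite{gr85}, \cite{gitr01}).

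The only genuinely nonstandard point is regularity near a corner $S_\ell$. By (R1)(b) there is a neighbourhood $\mathcal V_\ell$ of $S_\ell$ in which, after a rigid motion, $\Omega\cap\mathcal V_\ell=\{x>0,\ y>0\}\cap\mathcal V_\ell$, with $\partial\Omega_c\cap\mathcal V_\ell\subset\{y=0\}$ carrying the Dirichlet condition and $\partial\Omega_i\cap\mathcal V_\ell\subset\{x=0\}$ carrying the Neumann condition. Extend $u$ and $f$ by odd reflection across the Dirichlet edge, $\tilde u(x,y)=-u(x,-y)$ and $\tilde f(x,y)=-f(x,-y)$ for $y<0$. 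Since $u=0$ on $\{y=0\}$ in the trace sense, $\tilde u\in H^1$ across that line, and $\tilde u$ is the variational solution of $-\Delta\tilde u=\tilde f$ on a half-ball $\{x>0\}\cap B(S_\ell,\rho)$ subject to the homogeneous Neumann condition on the flat segment $\{x=0\}$ --- the reflection, being in a direction parallel to $\{x=0\}$, preserves $\partial_x u=0$ there. Thus the corner has become an ordinary point of a flat Neumann boundary, and the classical $W^{2,p}$ estimate for the Neumann problem on a half-ball with $L^p$ right-hand side gives $\tilde u\in W^{2,p}$ on a smaller half-ball; restricting back, $\|u\|_{W^{2,p}(\mathcal V'_\ell)}\le C(\|f\|_{L^p(\mathcal V_\ell)}+\|u\|_{L^p(\mathcal V_\ell)})$ for some $\mathcal V'_\ell\ni S_\ell$. (One could equally reflect evenly across the Neumann edge and use the Dirichlet half-ball estimate, or invoke the mixed-problem theory in the spirit of \cite{kon67}, \cite{mapla}, \cite{gr85}, noting that at a right angle the sole candidate corner singularity $r\sin\theta$ is the smooth function $y$ and hence causes no loss of regularity.) Finally I would cover $\overline\Omega$ by interior balls, lateral neighbourhoods and the corner neighbourhoods $\mathcal V'_\ell$, take a subordinate partition of unity, and sum the local estimates to get $\|u\|_{W^{2,p}(\Omega)}\le C(p,\Omega)(\|f\|_{L^p(\Omega)}+\|u\|_{L^p(\Omega)})$; the first-step bound on $\|u\|_{L^p(\Omega)}$ absorbs the last term, giving \eqref{eq:173}.

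I expect the corner reflection to be the delicate step: one must check that the odd (resp. even) extension is genuinely the variational solution of the reflected problem, so that the half-space regularity theory applies, and one must use that the flatness of the two edges near $S_\ell$ together with the right angle is exactly what makes the doubled boundary flat. For merely $C^3$ edges meeting at angle $\tfrac{\pi}{2}$, without flatness, one would instead have to run the Kondratiev--Maz'ya--Plamenevskii asymptotic expansion near the corner and argue separately that the leading term is smooth, which is the reason (R1)(b) is imposed.
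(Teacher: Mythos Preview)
Your argument is correct. The reflection step is sound: the odd extension across the Dirichlet edge lands in $H^1$ because of the zero trace, and a short computation (decompose a generic $H^1$ test function $v$ on the half-ball into its odd and even parts in $y$; the odd part is an admissible test function for the original mixed problem) shows that the extended $\tilde u$ is a weak solution of the Neumann problem on the half-ball, so local flat-boundary $W^{2,p}$ regularity applies. The patching via a partition of unity and the absorption of $\|u\|_{L^p}$ through the a priori $H^1$ bound are routine.

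Your route differs from the paper's, which simply invokes Grisvard's general results for mixed problems in curvilinear polygons (Theorem~2.4.3 in \cite{gr92} for $p=2$, Theorem~4.4.3.7 in \cite{gr85} for $p>2$), noting that for a right angle the Kondratiev singular exponent for the Dirichlet--Neumann corner equals $1$ and hence produces no loss of regularity. Your reflection argument is the elementary, hands-on version of exactly that fact: doubling the quarter-plane across the Dirichlet edge turns the corner into a smooth point of a flat Neumann boundary, which is possible precisely because (R1)(b) guarantees flat orthogonal edges. The paper's citation covers the more general curved-polygon case without flatness (so your closing caveat is slightly over-cautious: Grisvard's theorems do handle curved edges meeting at $\pi/2$), but your approach has the merit of being self-contained and of making transparent why the specific geometry in (R1)(b) kills the corner singularity.
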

\begin{proof}
This is an immediate consequence of Theorem 2.4.3  in \cite{gr92} (in
the case $p=2$) and Theorem 4.4.3.7 in \cite{gr85} for $p>2$.  
For $p=2$, the reader is referred to p.~55 in \cite{gr92} to verify
that indeed mixed Dirichlet-Neumann conditions near a right-angled
corners do not produce any singularities.  
\end{proof}
\begin{remark}\label{DN}~\\
  It is necessary to address, in addition, \eqref{eq:162} with
  non-homogeneous boundary conditions. In particular we consider:
$$
 \frac{\partial u}{\partial\nu}= B\, \mbox{ on } \pa \Omega_i\,.
$$
Assuming that $B\in H^\frac 12(\pa \Omega)$, an $H^2(\Omega)$ regularity of $u$
does not follow as in the homogeneous case unless $B$ vanishes at the
corners. The best regularity we can obtain in this case is:
$$
u\in \cup_{q<2}W^{2,q}(\Omega)\,.
$$
We refer the reader this to Corollary 4.4.3.8 in \cite{gr85} for the
proof of this result.
\end{remark}

\section{Vector Regularity in Non-Smooth domains}
\label{sec:b}
\subsection{Decomposition of vector fields}
We begin by presenting a well-known decomposition of vector fields in
$L^2(\Omega;\mathbb R^2)$ using their $\curl$ and their $\Div$.
\begin{proposition}\label{decprop}~\\
Any vector $U \in L^2(\Omega;\mathbb R^2)$ can be  uniquely written as the sum 
\begin{equation}
  \label{eq:148}
U = V + W\,,
\end{equation}
where $$V\in H^0_0(\curl, \Omega)=\{\widehat V \in L^2(\Omega,\mathbb R^2)\,,\, \curl \widehat V=0\}$$ and $$W\in \mathcal H^0_d:= \{\widehat W \in L^2(\Omega,\mathbb R^2)\,,\, \Div \widehat W=0\mbox{ and }
 \widehat W \cdot \nu =0\mbox{ on } \pa \Omega \}\,.
 $$
The maps $U\mapsto \pi_1 U=V$  and $U \mapsto \pi_2 U= W$ in \eqref{eq:148}  are mutually orthogonal projections 
in $L^2(\Omega;\mathbb R^2)$.

Moreover $\pi_1$ and $\pi_2$ are continuous from $H^1(\Omega,\mathbb R^2)$ into  $H^1(\Omega,\mathbb
R^2)$ 
 and there exist $q$ and $r$ such
that 
\begin{equation}\label{cd1}
V= \nabla q\,,\,  W=\nabla^{\perp} r\,,
\end{equation}
where $q\in H^2(\Omega)$  is the variational solution (orthogonal to the constant) of the Neumann problem
\begin{equation}\label{cd2}
\Delta q = \Div U\,,\, \pa_n q= U\cdot \nu \,,\, \int_\Omega q \, dx = 0\,,
\end{equation}
and $r \in H_0^1(\Omega)\cap H^2(\Omega)$ is the variational  solution of the Dirichlet problem
\begin{equation}
\label{eq:83}
  \begin{cases}
    \Delta r = \curl  U & \text{in } \Omega \\
    r=0 & \text{on } \partial\Omega
  \end{cases}
\end{equation}
\end{proposition}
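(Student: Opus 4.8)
The plan is to establish the Helmholtz-type decomposition \eqref{eq:148} by constructing $V$ and $W$ explicitly via the two scalar auxiliary problems \eqref{cd2} and \eqref{eq:83}, then verify uniqueness and orthogonality. First I would note that for $U\in L^2(\Omega,\mathbb R^2)$, the quantities $\Div U$ and $\curl U$ are a priori only distributions in $H^{-1}(\Omega)$, and likewise the normal trace $U\cdot\nu$ lies in $H^{-1/2}(\partial\Omega)$; the Neumann problem \eqref{cd2} for $q$ and the Dirichlet problem \eqref{eq:83} for $r$ should therefore first be solved variationally in $H^1(\Omega)$, using Lax--Milgram on the appropriate bilinear forms (the Neumann problem being posed modulo constants, whence the normalization $\int_\Omega q\,dx=0$, and requiring the usual compatibility $\langle \Div U,1\rangle = \langle U\cdot\nu,1\rangle$, which holds by the definition of the distributional divergence). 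Setting $V=\nabla q$ and $W=\nabla^\perp r$, one checks directly from the equations that $\curl V = \curl\nabla q = 0$, that $\Div W = \Div\nabla^\perp r = 0$, and that $W\cdot\nu = \nabla^\perp r\cdot\nu = \partial_\tau r = 0$ on $\partial\Omega$ since $r\in H^1_0(\Omega)$; moreover $V+W=U$ because $U-\nabla q-\nabla^\perp r$ has vanishing curl, vanishing divergence, and vanishing normal trace, hence is zero by the uniqueness statement for that overdetermined system (already invoked elsewhere in the paper, e.g. in the proof following \eqref{eq:49}, and itself a consequence of this very decomposition applied to such a field — so I would instead argue it here by pairing $U-\nabla q - \nabla^\perp r$ against test gradients and test $\nabla^\perp$'s).

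Next I would prove orthogonality of the two subspaces: for $\widehat V\in H^0_0(\curl,\Omega)$ and $\widehat W\in\mathcal H^0_d$, one has $\langle \widehat V,\widehat W\rangle_{L^2}=0$. The clean way is to first establish this when $\widehat V=\nabla q$ is a genuine gradient and $\widehat W\in\mathcal H^0_d$: then $\langle\nabla q,\widehat W\rangle = -\langle q,\Div\widehat W\rangle + \langle q, \widehat W\cdot\nu\rangle_{\partial\Omega}=0$. To reduce the general curl-free case to the gradient case one uses that on $\Omega$ (which, under (R1), is simply connected — or at least one argues componentwise / via the fact that the relevant cohomology is handled by the geometric hypotheses; this is where a little care about the topology of $\Omega$ from (R1)--(R2) enters) every $L^2$ field with $\curl=0$ is of the form $\nabla q$ for some $q\in H^1(\Omega)$. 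Granting that, $H^0_0(\curl,\Omega)=\{\nabla q : q\in H^1(\Omega)\}$, and orthogonality follows; this simultaneously shows the decomposition $U=V+W$ is orthogonal and hence unique, and that $\pi_1,\pi_2$ are well-defined orthogonal projections.

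Then I would address the $H^1$-continuity: if $U\in H^1(\Omega,\mathbb R^2)$, then $\Div U\in L^2(\Omega)$ and $\curl U\in L^2(\Omega)$, and the trace $U\cdot\nu$ lies in $H^{1/2}(\partial\Omega)$. Feeding this regularity into \eqref{cd2} and \eqref{eq:83} and invoking the corner-regularity results of Appendix~\ref{sec:a} — Proposition~\ref{propoA2} for the Neumann problem (giving $q\in H^2(\Omega)$ with $\|q\|_{2,2}\le C\|U\|_{1,2}$) and Proposition~\ref{propoA1} for the Dirichlet problem (giving $r\in H^2(\Omega)$ with the analogous bound) — one obtains $V=\nabla q\in H^1(\Omega,\mathbb R^2)$, $W=\nabla^\perp r\in H^1(\Omega,\mathbb R^2)$, with norm control by $\|U\|_{1,2}$, i.e. $\pi_1,\pi_2:H^1\to H^1$ are bounded. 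The identities \eqref{cd1} and the stated characterizations of $q$ and $r$ then record exactly the construction used.

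The main obstacle I expect is not the variational solvability or the elliptic estimates (these are routine given the appendix), but two subtler points: first, making the manipulations with $\Div$, $\curl$ and normal/tangential traces rigorous at the $L^2$-vector-field level, which requires the density/trace theory for $H(\Div)$ and $H(\curl)$ in a domain with right-angled corners (this is why the paper elsewhere cites \cite{gira79}, Theorem~2.2); and second, the appeal to simple-connectedness to identify curl-free $L^2$ fields with gradients — for a general $\Omega$ of the type in Fig.~1 one must either verify $\Omega$ is simply connected under (R1) or, if $\partial\Omega$ has several components, account for the (finite-dimensional) space of harmonic fields, which the statement as phrased implicitly assumes away. I would handle the trace issues by working first with smooth $U$ and passing to the limit, and I would flag the topological point explicitly, noting that for the domains considered (a wire with the stated boundary structure) $\Omega$ is indeed simply connected, so no harmonic-field correction is needed.
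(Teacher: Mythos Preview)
Your proposal is correct and gives a self-contained argument; the paper's own proof consists of a single sentence referring to Remarks~3.3 and Theorem~3.4 in \cite{gira79}, so you are essentially reconstructing the standard Girault--Raviart argument that the paper cites rather than proving. The construction via the Neumann problem for $q$ and the Dirichlet problem for $r$ is exactly the expected route, and your use of Propositions~\ref{propoA1} and~\ref{propoA2} for the $H^2$ regularity (hence $H^1$-continuity of $\pi_1,\pi_2$) matches how the paper intends these appendix results to be applied. Your flagging of the simple-connectedness issue is appropriate and not addressed explicitly in the paper; for the wire-type domains satisfying (R1)--(R2) this is indeed satisfied, so no harmonic-field correction arises.
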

\begin{proof}
  This is an immediate consequence of Remarks 3.3 and Theorem 3.4 in \cite{gira79}.
\end{proof}

\begin{remark}
Note that by applying Proposition \ref{propoA1} to \eqref{eq:83} and
Proposition \ref{propoA2} to \eqref{cd2} we obtain that, if $U\cdot \nu =0$ on $\pa \Omega$,
 for all $p\geq2$
there exist $C(p,\Omega)$ such that
\begin{equation}
  \|r\|_{2,p} \leq C\, \|\curl  U\|_p \quad ; \quad  \|q\|_{2,p} \leq C\,
  \|\Div  U\|_p \,.
\end{equation}
By \eqref{cd1} we then obtain that, if $U\cdot \nu =0$ on $\pa \Omega\,$,
\begin{equation}
\label{eq:135}
  \|U\|_{1,p} \leq C(p,\Omega)\left(   \| V\|_{1,p} + \|W\|_{1,p} \right) \leq C_1(p,\Omega) \left(\|\curl U\|_p + \|\Div U\|_p\right)  \,,
\end{equation}
holds for any $p\geq 2\,$.
\end{remark}

\subsection{A new Laplacian}
  Set  $$\mathcal L^{(1)}=  \nabla^{\perp} \curl  -  \grad \Div$$ which is
  defined as the linear operator associated with the form
  \begin{equation}
\label{eq:188}
  (V,W) \mapsto \int_\Omega \curl V\cdot \curl W\,dx  + \int_\Omega \Div V \cdot \Div W\, dx\,,
  \end{equation}
where $V\in H^1(\Omega;\mathbb R^2)\,,$ such that $V\cdot \nu =0$ on $\partial\Omega$. It
can be easily verified that the domain of $\LL^{(1)}$ is given by
  \begin{equation}\label{domaineL2}
 D(\mathcal L^{(1)})= \{ V\in H^2(\Omega;\mathbb R^2)\,,\, V\cdot \nu_{/\pa \Omega} =0\,,\,\curl V _{/\pa \Omega}= 0\,\}.
  \end{equation}
Note, that, for any smooth function $V=(u,v)$,  we have $\LL^{(1)} V=(-\Delta u, -\Delta v)$
at every point in $\Omega$.

\begin{proposition}\label{propoA4}~\\
  Let $ \Omega$ satisfy property (R1)  and $ p\geq 2$. Let
  $F\in L^p(\Omega,\mathbb R^2)$, $V\in H^1(\Omega,\mathbb R^2)$ denote the
  variational  solution for the above problem, i. e.
  \begin{equation}\label{dc6}
  \begin{array}{ll}
  \nabla^{\perp} \curl V  -  \grad \Div V= F\,& \mbox{ in } \Omega\,,\\
  V\cdot \nu =0& \mbox{on }\pa \Omega\,,\\
\curl V = 0 & \mbox{ on  } \pa \Omega\,.
  \end{array}
  \end{equation}
Then  $V \in W^{2,p}(\Omega)$  and  there exists $C(p,\Omega)$ such that, for
any $F \in L^p(\Omega)$, 
\begin{equation}
  \label{eq:173r}
\|V \|_{2,p} \leq C \| F\|_p  \,.
\end{equation}
\end{proposition}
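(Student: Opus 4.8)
The plan is to reduce the vector Laplacian $\mathcal L^{(1)}$ to the scalar regularity results of Appendix A by exploiting the decomposition in Proposition \ref{decprop}. First I would apply the orthogonal projections $\pi_1,\pi_2$ to the equation \eqref{dc6}. Write $V=\nabla q+\nabla^\perp r$ with $q\in H^2(\Omega)$ satisfying $\int_\Omega q\,dx=0$ and $r\in H^1_0(\Omega)\cap H^2(\Omega)$, as furnished by the proposition. Since $\curl\nabla q=0$ and $\Div\nabla^\perp r=0$, we have $\Div V=\Delta q$ and $\curl V=\Delta r$, so that $\mathcal L^{(1)}V=\nabla^\perp\curl V-\grad\Div V=\nabla^\perp(\Delta r)-\nabla(\Delta q)$. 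Decomposing $F=\nabla f_1+\nabla^\perp f_2$ by Proposition \ref{decprop} (note $F\cdot\nu$ need not vanish, so one should instead decompose $F$ directly, or better, test the weak formulation against $\nabla\varphi$ and against $\nabla^\perp\psi$ separately), the weak equation separates into two scalar problems for $q$ and $r$.

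The key point is to identify the boundary conditions these scalar problems inherit. The constraint $V\cdot\nu=0$ on $\partial\Omega$ together with $r\in H^1_0(\Omega)$ forces $\partial_\nu q=-\,\nabla^\perp r\cdot\nu+V\cdot\nu = \partial_\tau r$ on $\partial\Omega$ (up to sign conventions), but since $r=0$ on $\partial\Omega$ we get $\partial_\tau r=0$, hence $\partial_\nu q=0$: so $q$ solves a pure Neumann problem $-\Delta q=-\Div F$ in $\Omega$, $\partial_\nu q=0$ on $\partial\Omega$, $\int_\Omega q=0$. Likewise, the condition $\curl V=0$ on $\partial\Omega$ reads $\Delta r=0$ on $\partial\Omega$ in a trace sense; combined with the interior equation for $r$ one recognizes a biharmonic-type problem, or — testing more carefully — one sees that $r$ satisfies a Dirichlet problem $-\Delta r=\curl F$ in $\Omega$, $r=0$ on $\partial\Omega$, with the compatibility that the normal trace of $\nabla^\perp(\Delta r)$ matches. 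I would then invoke Proposition \ref{propoA2} for $q$ (Neumann data zero, right-hand side $-\Div F\in W^{-1,p}$... here one needs $F\in L^p$, giving $q\in W^{1,p}$ only; to reach $W^{2,p}$ one uses that $\Div F$ appears, so actually $\nabla q=\pi_1 F$ directly and $\|\nabla q\|_{1,p}\le C\|F\|_p$ via the continuity of $\pi_1$ on... no — rather, since $\nabla q = \pi_1 F$ and $\pi_1$ is continuous $H^1\to H^1$, one cannot conclude; instead use $q$ solves Neumann with data in $L^p$ after noting $\Div V=\Div F$... ) — this is where care is needed.

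Cleaner route: observe directly that $\Div V$ and $\curl V$ each satisfy a scalar elliptic problem. Taking $\Div$ of \eqref{dc6} gives $-\Delta(\Div V)=\Div F$ in $\Omega$; taking $\curl$ gives $-\Delta(\curl V)=\curl F$ in $\Omega$, with boundary condition $\curl V=0$ on $\partial\Omega$ from \eqref{dc6}, and a Neumann-type condition for $\Div V$ coming from $V\cdot\nu=0$ (which yields $\partial_\tau(V\cdot\nu)=0$ hence a relation between $\Div V$ and $\curl V$ on the boundary). One then applies Propositions \ref{propoA1}–\ref{propoA2} to get $\curl V,\Div V\in W^{1,p}(\Omega)$ with norms controlled by $\|F\|_p$, provided $F\in W^{1,p}$... but $F$ is only $L^p$, so $\Div F,\curl F$ are only distributions. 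Hence the honest approach is the variational one: test the bilinear form \eqref{eq:188} against $\nabla\varphi$, $\varphi\in H^1(\Omega)$, to see $\int\Div V\,\Delta\varphi=\int F\cdot\nabla\varphi$, i.e. $\Div V$ is the very weak solution of a Neumann problem with $L^p$ data $F$ "in divergence form", which by Proposition \ref{propoA2} lies in $W^{2,p}$...

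The main obstacle I anticipate is precisely this bookkeeping of regularity scales and the correct weak formulations of the boundary conditions for $q$ and $r$ (or equivalently for $\Div V$ and $\curl V$) at the right-angled corners; once those are pinned down, \eqref{eq:173r} follows by summing the two scalar estimates via \eqref{eq:135}, i.e. $\|V\|_{2,p}\le C(\|\curl V\|_{1,p}+\|\Div V\|_{1,p})\le C\|F\|_p$. I would therefore structure the proof as: (1) decompose $V=\nabla q+\nabla^\perp r$ and reduce to scalar problems for $q,r$; (2) verify $q$ solves a homogeneous-Neumann problem and $r$ a homogeneous-Dirichlet problem, each with right-hand side built from $F$; (3) apply Propositions \ref{propoA1} and \ref{propoA2} to get $q,r\in W^{3,p}$-type or $W^{2,p}$ control; (4) reassemble using \eqref{eq:135} and \eqref{eq:173D}, \eqref{eq:173N} to conclude \eqref{eq:173r}, noting as in Remark after Proposition \ref{decprop} that the zero normal trace of $V$ is exactly what makes \eqref{eq:135} applicable.
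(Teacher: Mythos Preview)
Your approach via the Helmholtz decomposition $V=\nabla q+\nabla^\perp r$ is genuinely different from the paper's, and it runs into a real obstruction that you sense but do not resolve. To recover $V\in W^{2,p}$ from the potentials you would need $q,r\in W^{3,p}$. For the Neumann potential $q$ this is available from Proposition~\ref{propoA2} (see \eqref{eq:137}, valid for all $1<q<\infty$). But for the Dirichlet potential $r$, Proposition~\ref{propoA1} gives $W^{3,q}$ only for $q<2$ (see \eqref{eq:136}), and this restriction is sharp at a right-angle corner: the paper itself notes just after \eqref{eq:9a2} that the Dirichlet solution is generically not in $W^{3,p}$ for $p>2$. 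The same obstruction blocks your ``cleaner route'': even if you manage to show that $\curl V$ solves a Dirichlet problem with data $\curl F\in W^{-1,p}$, you cannot conclude $\curl V\in W^{1,p}$ for $p\geq 2$ from the scalar results as stated, and the estimate \eqref{eq:135} only yields $\|V\|_{1,p}$, not $\|V\|_{2,p}$.

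The paper sidesteps this by not decomposing into curl-free and divergence-free parts at all. Instead it works locally near each corner in Cartesian coordinates aligned with the two boundary edges, and observes that the two boundary conditions $V\cdot\nu=0$ and $\curl V=0$ decouple the system for the Cartesian components $V=(u,v)$: each of $u$ and $v$ separately satisfies $-\Delta=\text{data}$ with a \emph{mixed} Dirichlet--Neumann condition (Dirichlet on one edge, Neumann on the other). The point is that the mixed Dirichlet--Neumann problem at a right angle has no singular exponent below $2$, so Proposition~\ref{propoA3} gives $W^{2,p}$ for all $p\geq2$ directly, without ever needing third-order regularity of a scalar potential. After cutting off and gluing with interior estimates one gets $\|V\|_{2,p}\leq C(\|F\|_p+\|V\|_p)$, and the lower-order term is absorbed via \eqref{eq:135}. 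The key idea you are missing is this componentwise decoupling into mixed problems; the global curl/div decomposition leads instead to pure Dirichlet and pure Neumann problems, and the pure Dirichlet one is exactly the case where the corner obstructs the regularity you need.
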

\begin{proof}
Due to Assumption (R1), it is sufficient to estimate $\|V\|_{2,p}$ in
a neighborhood of one of the right angle corners. Set then a coordinate
system whose origin is located at the corner and its axes coincide
with the boundary in some neighborhoods of the corner. We set
$V=(u,v)$ and $F=(f,g)$. Without loss of generality we can represent the problem
satisfied by $u$ near the corner by
\begin{displaymath}
  \begin{cases}
     - \Delta u = f & \text{in } B(0,r)\cap\Omega \\
      u(x_1,0)= 0&\mbox{ for } 0<  x_1<r \\
      \frac{\partial u}{\partial x_1}(0,x_2)=0 &\mbox{ for } 0<  x_2<r\,,
  \end{cases}
\end{displaymath}
and
\begin{displaymath}
  \begin{cases}
     - \Delta v = g& \text{in } B(0,r)\cap\Omega \\
      \frac{\partial v}{\pa x_2} (x_1,0)= 0&\mbox{for } 0 < x_1<r \\
      v(0,x_2)=0 & \mbox{ for } 0< x_2<r\,.
  \end{cases}
\end{displaymath}

Hence, the system is decoupled near each corner into a
Dirichlet-Neumann problem, for $u$ (and a Neumann-Dirichlet problem
for $v$). Multiplying $u$ and $v$ by a cutoff function supported in
$B(0,r)$, we can transform the problem in $\Omega$ to a problem in a polygon.
We can then apply Proposition \ref{propoA3} to obtain that
\begin{displaymath}
  \|V \|_{W^{2,p}(\Omega\cap B(0,r/2)} \leq C (\| F\|_p +   \|V \|_{L^p(\Omega\cap B(0,r/2))})\,.
\end{displaymath}
With the aid of standard elliptic estimates away from the corners we
can glue together the estimates and establish that 
\begin{displaymath}
\|  V \|_{2,p} \leq C (\| F\|_p+ \|V\|_p)\,.  
\end{displaymath}
We complete the proof by substituting \eqref{eq:135} into the above, in
conjunction with the inequality
\begin{displaymath}
  \|\curl U\|_p + \|\Div U\|_p \leq \|F\|_p\,,
\end{displaymath}
which easily follows from \eqref{eq:188} and \eqref{dc6}.
\end{proof}
\begin{remark}\label{regnonhom}~\\
  In Appendix \ref{appendixD} we meet a non-homogeneous version of
  \eqref{dc6}. In particular we consider the boundary condition 
$$
 \curl V = B\,, \mbox{ on } \pa \Omega \,.
$$
Assuming that $B\in H^\frac 12(\pa \Omega)$, an $H^2(\Omega)$ regularity of $u$
does not follow as in the homogeneous case unless $B$ vanishes at the
corners. The best regularity we can obtain in this case is:
$$
u\in \cup_{q<2}W^{2,q}(\Omega)\,.
$$
Once we have observed the above decoupling near each corner, the proof
is a direct consequence of Remark \ref{DN}.
\end{remark}

\section{Time Dependent Regularity}
\label{sec:A}

We derive here some regularity results for solutions of parabolic
equations in domains satisfying property (R1), relying heavily on the
elliptic estimates recalled in the previous subsections. These
estimates are exploited throughout the work and in the next appendix,
where global existence and uniqueness of solutions for (TDGL2) is
established.  We derive all estimates for a general abstract setting.
When we refer to them throughout the work we cite the specific example
in which they are applied.

Let $(\mathcal V,\mathcal H,\mathcal V')$ denote a triplet of Hilbert
spaces such that $(\mathcal V\subset \mathcal H\subset \mathcal V')$ (where
$\Vg^\prime$ denotes the dual space of $\Vg$) with continuous injection such
that $\mathcal V$ is dense in $\mathcal H$ with compact injection. Let
$a:\Vg\times\Vg\to\R$ denote a coercive, continuous, and symmetric bilinear
map.  For some $\alpha >0$ we thus have
$$
\alpha\,  \|X\|^2_{\Vg} \leq a(X,X)\,,\, \forall X \in \mathcal V\,.
$$
By Lax-Milgram theorem we can associate two bijective operators
$A_{\mathcal V}$ and $A_{\mathcal V^\prime}$ from $\mathcal V$ onto $\mathcal V'$
satisfying
 \begin{displaymath}
   a(u,v)= \langle A_{\mathcal V}u\,,\,v\rangle_{\mathcal V}= \langle A_{\mathcal
   V^\prime}u\,,\,v\rangle_{\mathcal V^\prime}\,,\, \forall u \in \mathcal V\,,\, \forall v\in
 \mathcal V\,,
 \end{displaymath}
 and $\mathcal A$ which is a self-adjoint strictly positive unbounded
 operator in $\mathcal H$, with compact resolvent, satisfying 
 $$
 a(u,v) = \langle  \mathcal A u\,,\, v\rangle_{\mathcal H}\,,\, \forall u\in D(\mathcal A)\,,\, \forall v\in \mathcal V\,.
 $$
 
Use of the above general setting is being made throughout this work in
the following particular cases
\begin{enumerate}
\item The Dirichlet Laplacian $-\Delta^D$ with $\mathcal H=L^2(\Omega)$, $
  \mathcal V =H_0^1(\Omega)$,  and $D(\mathcal A)= H^2(\Omega)\cap H_0^1(\Omega)$; 
\item The Neumann Laplacian $-\Delta^N$ with $\mathcal H=\{ u\in
  L^2(\Omega)\,,\,\int_\Omega u\, dx =0\}$, $\mathcal V =H^1(\Omega)\cap \mathcal H$,
  and $D(\mathcal A)= \{u\in H^2(\Omega)\,,\int_\Omega u\, dx=0\,,\, 
 \pa_\nu u =0 \mbox{ on each regular piece of } \pa \Omega \}\,$;
 \item The Dirichlet-Neumann Laplacian $-\Delta^{DN}$ with $\mathcal
   H=L^2(\Omega)$,  $ \mathcal V =H_0^{1,\pa \Omega_c}(\Omega)$, and $D(\mathcal
   A)= \{u\in H^2(\Omega)\,,\, u_{/\pa \Omega_c}=0\,,\,\pa_\nu u_{/\pa \Omega_i}=0\}$
   \,; 
 \item The operator $\mathcal L^{(1)}$ with $\mathcal H=L^2(\Omega;\mathbb
   R^2)$, $\mathcal V =\{V \in H^1(\Omega;\mathbb R^2)\,,\, V\cdot \nu_{/\pa \Omega}
   =0 \}\,,$ and $D(\mathcal A) =\{ V \in H^2(\Omega;\mathbb R^2)\,,\, V \cdot
   \nu_{/\pa \Omega} =0\,,\, \curl V_{/\pa \Omega} =0\}\,$;
 \end{enumerate}
In each of the above cases the domain is deduced with the aid of the
relevant regularity result in the previous appendices. Furthermore, there
exists in each case $C(\Omega,\A)$ such that 
\begin{equation}
  \label{eq:84}
\|X\|_{D(\A)}\leq C\,  \|\A X\|_\Hg \,.
\end{equation}

We now state and prove the general statement.\\
 \begin{theorem}\label{abstractproposition}
   Let $T>0$, $F\in L^2(0,T;\Hg)$, and $X_0\in\Hg$. Then, there exists
   a unique $X(t,\cdot) \in L^2(0,T;\mathcal V)$, which serves as weak solution
   for
 \begin{equation}
\label{eq:146abstract}
    \begin{cases}
      X_t-\mathcal A X= F\,& \text{in } L^2(0,T;\mathcal V')\,, \\
    X(0,\cdot) = X_0\, & \mbox{ in }  \mathcal H\,.
    \end{cases}
  \end{equation}
Additionally, $X'\in L^2(0,T; \mathcal V')$ and $X\in L^\infty(0,T;\mathcal H)$.
Furthermore, for any $0<t_0<T$ we have:
    \begin{equation}\label{eq:148abstracta}
  \|X \|_{L^\infty(t_0,T; \mathcal V)} \leq C(t_0)
  \big[\| F \|_{L^2(0,T;\mathcal H)}+\|X _0\|_{\mathcal H}\big]\,.
\end{equation}
Moreover,  if in addition $X_0\in \mathcal V$ ,
then  $X\in L^2(0,T;D(\mathcal A))  \cap H^1(0,T;\mathcal H)$ and  there
exists $C>0$, independent 
   of $T$, such that
  \begin{equation}
\label{eq:147abstract}
    \|X \|_{L^2(0,T;D(\mathcal A))}+ \|X'\|_{L^2(0,T;\mathcal H)}  + \|X\|_{L^\infty(0,T; \mathcal V)} \leq C 
    \big[\| F\|_{L^2(0,T;\mathcal H)}+\|X_0\|_{\mathcal V} \big]\,,
    \end{equation}
    Finally, if $F \in L^2(0,T;\mathcal V )$ and $X_0\in \mathcal H$,
    then, for any $0<t_0<T$ we have
\begin{equation}\label{eq:148abstract}
  \|X \|_{L^\infty(t_0,T; D(\mathcal A))} \leq C(t_0)
  \big[\| F \|_{L^2(0,T;\mathcal V)}+\|X _0\|_{\mathcal H}\big]\,.
\end{equation}
\end{theorem}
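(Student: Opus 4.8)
\textbf{Proof proposal for Theorem \ref{abstractproposition}.}
The plan is to establish the four assertions in the order they are stated, with each building on the preceding ones and on classical parabolic theory (as in Lions--Magenes or Evans), the only nonstandard input being the elliptic regularity estimate \eqref{eq:84} which converts control of $\mathcal A X$ in $\mathcal H$ into control of $X$ in $D(\mathcal A)$. First I would recall the standard Galerkin construction: using an orthonormal basis of $\mathcal H$ consisting of eigenfunctions of $\mathcal A$ (available since $\mathcal A$ is self-adjoint, strictly positive, with compact resolvent), one builds finite-dimensional approximations $X^{(m)}$, derives the basic energy identity
\begin{equation*}
\frac12\frac{d}{dt}\|X^{(m)}\|_{\mathcal H}^2 + a(X^{(m)},X^{(m)}) = \langle F, X^{(m)}\rangle_{\mathcal H}\,,
\end{equation*}
and, using coercivity $a(X,X)\ge \alpha\|X\|_{\mathcal V}^2$ together with Young's inequality, obtains the a priori bounds $X^{(m)}$ bounded in $L^2(0,T;\mathcal V)\cap L^\infty(0,T;\mathcal H)$ and $(X^{(m)})'$ bounded in $L^2(0,T;\mathcal V')$. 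Passing to the limit gives existence of a weak solution with the stated regularity; uniqueness follows by testing the difference of two solutions against itself and invoking Gronwall. This disposes of the preliminary (unnumbered) part of the statement.

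Next, for \eqref{eq:147abstract} (the case $X_0\in\mathcal V$), I would test the Galerkin equation against $\mathcal A X^{(m)}$, equivalently differentiate $a(X^{(m)},X^{(m)})$ in time, to get
\begin{equation*}
\|{X^{(m)}}'\|_{\mathcal H}^2 + \frac12\frac{d}{dt}\,a(X^{(m)},X^{(m)}) = \langle F, {X^{(m)}}'\rangle_{\mathcal H} \le \tfrac12\|F\|_{\mathcal H}^2 + \tfrac12\|{X^{(m)}}'\|_{\mathcal H}^2\,.
\end{equation*}
Integrating in time and using $a(X_0,X_0)\le C\|X_0\|_{\mathcal V}^2$ yields $X'\in L^2(0,T;\mathcal H)$ and $X\in L^\infty(0,T;\mathcal V)$ with bounds of the asserted form; then reading $\mathcal A X = X' - F$ in $\mathcal H$ and applying \eqref{eq:84} gives $X\in L^2(0,T;D(\mathcal A))$ with the same bound. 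Crucially the constant here is $T$-independent because the differential inequality above has no zeroth-order term forcing exponential growth; I would be careful to track this. For \eqref{eq:148abstracta} (only $X_0\in\mathcal H$) I would introduce the time weight: multiply the energy identity for $a(X,X)$ by $t$ (or by a smooth cutoff vanishing at $0$ and equal to $1$ on $[t_0,T]$), which absorbs the lack of $\mathcal V$-regularity of the data at $t=0$; the term $\frac{d}{dt}(t\,a(X,X))$ produces an extra $a(X,X)$ which is already controlled in $L^1(0,T)$ by the basic energy estimate. This is the standard smoothing-in-time argument and yields \eqref{eq:148abstracta}.

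The main obstacle, and the step requiring the most care, is the final estimate \eqref{eq:148abstract}, where the data are rough in time ($X_0\in\mathcal H$ only) but $F$ is one notch \emph{smoother} in space ($F\in L^2(0,T;\mathcal V)$) and we want $L^\infty$-in-time $D(\mathcal A)$ regularity. The idea is to combine the time-weight trick with a spatial bootstrap: apply \eqref{eq:148abstracta} first to get $X\in L^\infty(t_0/2,T;\mathcal V)$, then observe that on $[t_0/2,T]$ the pair $(X,F)$ now satisfies the hypotheses needed to run the energy argument for \eqref{eq:147abstract} \emph{one derivative higher}, i.e.\ test against $\mathcal A^2 X$ (legitimate at the Galerkin level), using that $a(\cdot,\cdot)$ restricted to $D(\mathcal A)$ controls the relevant quantity and that $\langle F,\mathcal A X\rangle_{\mathcal H}=a(F,X)\le C\|F\|_{\mathcal V}\|X\|_{\mathcal V}$ since $F\in\mathcal V$. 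With another time weight localizing away from $t_0/2$, this produces $X\in L^\infty(t_0,T;D(\mathcal A))$ with the bound $C(t_0)[\|F\|_{L^2(0,T;\mathcal V)}+\|X_0\|_{\mathcal H}]$. I would need to be attentive that all these manipulations are first carried out on the Galerkin approximants (where everything is smooth and finite-dimensional) and only then passed to the limit using lower semicontinuity of norms under weak convergence, and that the elliptic estimate \eqref{eq:84} is the sole ingredient translating the abstract $D(\mathcal A)$ bounds into the concrete $H^2$-type regularity invoked elsewhere in the paper. The remaining verifications — that the weak solution is unique, that $X\in C([0,T];\mathcal H)$ so the initial condition makes sense, and that the interpolation $X\in C([t_0,T];\mathcal V)$ holds — are routine consequences of the Lions--Magenes trace theorems and I would simply cite them.
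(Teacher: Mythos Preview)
Your approach is correct and follows the classical Lions--Magenes route (Galerkin approximation, energy identities, time-weight cutoffs for parabolic smoothing), and the bootstrap you outline for \eqref{eq:148abstract} does work once one is careful about which higher-order test function is used. One small expository slip: testing the equation against $\mathcal A^2 X$ produces $\langle F,\mathcal A^2 X\rangle_{\mathcal H}=\langle \mathcal A^{1/2}F,\mathcal A^{3/2}X\rangle_{\mathcal H}$, not $\langle F,\mathcal A X\rangle_{\mathcal H}=a(F,X)$ as you wrote; the latter corresponds to testing against $\mathcal A X$. Either choice leads to the result, but you should keep the identities consistent. Also note that your two-step bootstrap for \eqref{eq:148abstract} silently uses \eqref{eq:147abstract} on $[t_0/2,T]$ to get $\mathcal A X\in L^2(t_0/2,T;\mathcal H)$, which is what makes the extra $\|\mathcal A X\|_{\mathcal H}^2$ term produced by the second time weight integrable; it is worth making this dependency explicit.

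The paper takes a genuinely different and somewhat more direct route, exploiting fully that $\mathcal A$ is self-adjoint with compact resolvent. Rather than Galerkin energy estimates, it expands in the orthonormal eigenbasis $\{u_n\}$, reduces \eqref{eq:146abstract} to the scalar ODEs $a_n'+\lambda_n a_n=b_n$, and works with the explicit Duhamel solution $a_n(t)=a_n(0)e^{-\lambda_n t}+\int_0^t e^{-\lambda_n(t-\tau)}b_n(\tau)\,d\tau$. The smoothing estimates \eqref{eq:148abstracta} and \eqref{eq:148abstract} then follow in one stroke from the elementary pointwise bounds $\sup_n \lambda_n^k e^{-\lambda_n t}\le C_k t^{-k}$ applied to the initial-data term, together with Cauchy--Schwarz on the Duhamel integral; no time-weight cutoff or iterated bootstrap is needed. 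Your energy-method argument is more robust (it would survive, with modifications, for merely sectorial $\mathcal A$), whereas the paper's spectral computation is shorter and gives the constants $C(t_0)$ explicitly as $C_k/t_0^k$, but is tied to the self-adjoint compact-resolvent setting.
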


\begin{proof}~\\
  We begin by concluding the existence and uniqueness of a weak
  solution for \eqref{eq:146abstract} from Theorem 3.2.2 in
  \cite{he81}. The remainder
  is devoted to the proof of the above estimates.\\
  Suppose first that $F\in L^2(0,T; \mathcal H)$. Denote by
  $\{\lambda_n\}_{n=1}^\infty$ the increasing sequence of the eigenvalues of
  $A$ counted with multiplicity and let $\{u_n\}_{n=1}^\infty$  denote the corresponding
  orthonormal basis of eigenfunctions. Let further $a_n=\langle
  X\,,\,u_n\rangle_\mathcal H $ and $b_n=\langle F \,,\,u_n\rangle_\Hg$. Taking the
  product of \eqref{eq:146abstract} by $u_n$ in $\Hg$ yields
  \begin{equation}
\label{eq:136abstract}
    \frac{da_n}{dt}+\lambda_na_n = b_n \,.
  \end{equation}
Multiplying \eqref{eq:136abstract} by $\frac{da_n}{dt}$, using
Cauchy-Schwarz,  and integrating between $0$ 
and $t_1$ (with $t_1\in (0,T)$) yields
\begin{equation}\label{eq:136abstracta}
  \Big\|\frac{da_n}{dt}\Big\|^2_{L^2(0,t_1)} +\lambda_n\big[|a_n(t_1)|^2
  - |a_n(0)|^2\big] \leq
   \|b_n\|_{L^2(0,t_1)}^2 \, .
\end{equation}
Summing up over $n$ gives
\begin{equation}
\label{eq:134}
  \|X^\prime\|_{L^2(0,T;\Hg)}^2 +  \|X\|_{L^\infty(0,T;\Vg)}^2 \leq \| F\|_{L^2(0,T;\mathcal H)}+\|X_0\|_{\mathcal V} \,.
\end{equation}
By \eqref{eq:146abstract} and \eqref{eq:84} we have 
\begin{displaymath}
  \|X\|_{L^2(0,T;D(\A))} \leq C\|AX\|_{L^2(0,T;\Hg)} \leq  C\big[\|
  F\|_{L^2(0,T;\mathcal H)}+  \|X^\prime\|_{L^2(0,T;\Hg)} \big] \,. 
\end{displaymath}
Combining the above with \eqref{eq:134} yields \eqref{eq:147abstract}.

Suppose now that $F \in L^2(0,T; \mathcal V )$.  Multiplying by
$\lambda_n+1$ in \eqref{eq:146abstract} and summing up over $n$ 
yields for another constant $C>0$:
\begin{equation}
\label{eq:137abstract}
  \|X_t\|_{L^2(0,T;\mathcal V)}^2 + \|AX\|_{L^\infty(0,T;\Hg)}^2  \leq  C \, \big[\|F\|_{L^2(0,T;\mathcal V)}^2 +
  \|X_0\|_{D(\mathcal A)}^2  \big] \,.
\end{equation}
With the aid of \eqref{eq:84} we then obtain \eqref{eq:148abstract}.

We next return to \eqref{eq:136abstracta} to obtain
\begin{equation}
\label{eq:139}
 a_n(t) = a_n(0)e^{-\lambda_nt} + \int_0^t e^{-\lambda_n(t-\tau)}b_n(\tau)\,d\tau \,,
\end{equation}
from which we readily obtain that
\begin{equation}
\label{eq:140}
  |a_n(t)|^2 \leq 2 |a_n(0)|^2e^{-2\lambda_nt} + \frac{2}{\lambda_n}\|b_n\|_2^2 \,.
\end{equation}
Let
\begin{displaymath}
  C_0(t,\Omega) =\frac{1}{et} \geq  \sup_{n\in\N} \lambda_ne^{-\lambda_nt}
\end{displaymath}
and 
\begin{displaymath}
  C_1(t,\Omega) = \frac{4}{t^2 e^2} \geq \sup_{n\in\N} \lambda_n^2e^{-\lambda_nt} \,.
\end{displaymath}
Obviously, $C_i(t,\Omega)$ are decreasing functions of $t$, and
hence
\begin{displaymath}
  C_i(t,\Omega)\leq C_i(t_0,\Omega)\,, \mbox{ for } i=0,1\,,
\end{displaymath}
for all $t \geq t_0$. \\
Upon multiplying \eqref{eq:140} by $\lambda_n$, we
sum over $n$ to obtain, for $t\in [t_0,T)$, 
\begin{displaymath}
  \|  u(t,\cdot)\|_{\mathcal V} ^2 \leq C \, 
  \|F (t,\cdot)\|_{L^2(0,T;\mathcal H)}^2+ 2C_0(t_0)^2 \, \|u_0\|_\mathcal H^2 \,,
\end{displaymath}
yielding \eqref{eq:148abstracta}.
Multiplying \eqref{eq:140} by $\lambda_n^2$ and
summing over $n$ yields, for $t\in [t_0,T)$, 
\begin{displaymath}
  \|\mathcal A  u(t,\cdot)\|_2^2 \leq C \, 
  \|F (t,\cdot)\|_{L^2(0,T;\mathcal V)}^2+ 2C_1(t_0)^2 \, \|u_0\|_\mathcal H^2 \,,
\end{displaymath}
which together with \eqref{eq:84} yields \eqref{eq:148abstract}.
\end{proof}

\section{Global existence and uniqueness}
\label{appendixD}
We adopt here the gauge 
\begin{equation}\label{gaugefl}
  \phi + c\, \Div A = 0 \quad ; \quad A\cdot\nu|_{\partial\Omega}=0 \,,
\end{equation}
and assume that $B\in H^\frac 12(\pa \Omega)$ and that condition (R1) is satisfied.
The normal field $A_{n,d}$ is defined by 
\begin{equation}
\label{eq:72}
    \begin{cases}
     \curl^2A_{n,d} - \nabla \Div A_{n,d} = 0 & \text{in } \Omega\,, \\
     \curl A_{n,d} = B   & \text{on } \partial\Omega\,,\\
     A_{n,d} \cdot \nu = 0 & \text{on } \partial\Omega\,.
  \end{cases}
\end{equation}
Using Remark \ref{regnonhom}, we note that unless $B$ vanishes at the
corners (an atypical case) $A_{n,d}\not \in H^2(\Omega;\mathbb R^2)$. One
can guarantee, however, that
\begin{equation}\label{regand}
A_{n,d}\in \cup_{q<2} W^{2,q}(\Omega,\mathbb R^2)\,.
\end{equation}
\begin{remark}~\\
From \eqref{eq:8} and \eqref{eq:8a}, we get that $(A_n-A_{n,d})$ is in the  domain of $\mathcal L^{(1)}$ and satisfies 
$$
\mathcal L^{(1)} (A_n-A_{n,d}) = -\frac 1c  \nabla \phi_n\,.
$$
But $\nabla \phi_n\in L^p(\Omega,\mathbb R^2)$ for all $p\geq 2$ by
\eqref{regphin}, and hence, by Proposition  \ref{propoA4}, we obtain 
\begin{equation}\label{eq:D4}
A_n-A_{n,d} \in W^{2,p}(\Omega)\,,\, \forall p \geq 2\,.
\end{equation}
In particular, we have that
\begin{equation}\label{regdiv}
\Div A_{n,d} \in W^{1,p} (\Omega)\,,\, \forall p\geq 2\,.
\end{equation}
\end{remark}

Next we set, 
\begin{equation}
\label{eq:186}
\widehat A_1=A-hA_{n,d}
\end{equation}
 to obtain, in the
above gauge, the system
\begin{subequations}  
\label{eq:1v3}
\begin{alignat}{2}
& \frac{\partial \psi}{\partial t} - \Delta\psi = F_1 & \quad \text{ in } \R_+\times\Omega\, ,\\
 & \frac{\partial \widehat A_1}{\partial t}  + c\, \LL^{(1)}\widehat A_1  =  F_2 & \quad \text{ in }  \R_+\times\Omega\,,\\
  &\psi=0 \,,&\quad \text{ on }  \R_+\times\partial\Omega_c\,, \\
 &\frac{\partial\psi}{\partial\nu}=0 \,,& \quad \text{ on }  \R_+\times\partial\Omega_i\,,\\
 &\curl \widehat A_1(t,x)  = 0 \,, & \text{ on } \R_+\,, \\
&\psi(0,x)=\psi_0(x) \,, & \quad \text{ in } \Omega\,, \\ 
&\widehat A_1(0,x)=A_0(x)-h A_{n,d}(x)\,, & \quad \text{ in } \Omega \,.
\end{alignat}
\end{subequations}
In the above,
\begin{align}
  &F_1 = i\, \Big(\frac 1 c -2\Big) \psi\, \Div A  - 2 i A\cdot\nabla_A\, \psi  +|A|^2\, \psi+ \psi 
  \left( 1 - |\psi|^{2} \right) \\
& F_2 = \frac{1}{\sigma}\Im(\bar\psi\, \nabla_A\psi)  \,.
\end{align}

We can now make the following statement:
\begin{theorem}
\label{thm:A.1}~\\
Suppose that for some $1/2<\alpha<1$, $(\widehat \psi_0,\widehat A_0)\in W^{1+\alpha,2}(\Omega,\C)\times
W^{1+\alpha,2}(\Omega,\R^2)$ satisfy\break  $\widehat A_0\cdot \nu =0$ on $\pa \Omega$
and $\|\psi_0\|_\infty \leq 1$. \\
Then, there exists a weak solution $(\psi_d,A_d)$ of
\eqref{eq:1v3} and \eqref{gaugefl}, such that: $$\psi_d \in C([0,+\infty);W^{1+\alpha,2}(\Omega,\C))\cap
H_{loc}^1([0,+\infty);L^2(\Omega,\C))$$ and
 $$ A_d  \in C([0,+\infty);W^{1+\alpha,2}(\Omega,\C))\cap
H_{loc}^1([0,+\infty) ;L^2(\Omega,\R^2))\,.$$
Moreover $\psi$ and $A$ satisfy:
 $$
  \|\psi_d(t,\cdot )\|_\infty\leq1\,,\, \forall t >0\,,
  $$
  \begin{displaymath}
    \psi_d \in L^2_{loc} ([0,+\infty), H^2(\Omega,\C)) \,,
  \end{displaymath}
  $$
  \widehat A_1 \in L^2_{loc} ([0,+\infty), H^2(\Omega,\mathbb R^2))\,,
  $$
and 
$$\Div A_d  \in L^2_{loc}([0,+\infty);H^1(\Omega))\,.$$
\end{theorem}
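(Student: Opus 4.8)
The plan is to follow the scheme of \cite{feetal98} (see also \cite{chenetal93}), adapted to the corner geometry by replacing every use of smooth-domain elliptic regularity with the corresponding statement from Appendices \ref{sec:a} and \ref{sec:b}, and every use of parabolic smoothing with Theorem \ref{abstractproposition}. The guiding observation is that, by \eqref{regand} and in particular \eqref{regdiv}, the only object in the problem that fails to lie in $H^2$ is the time‑independent field $A_{n,d}$, and it enters the reformulated system \eqref{eq:1v3} only through $\Div A_{n,d}\in W^{1,p}(\Omega)$ for all $p\geq 2$ (inside $F_1$, via $\Div A=\Div\widehat A_1+h\,\Div A_{n,d}$) and through the fixed datum $A_0-hA_{n,d}$; the dynamical variable $\widehat A_1$ carries the homogeneous conditions $\widehat A_1\cdot\nu=0$, $\curl\widehat A_1=0$ on $\partial\Omega$, i.e.\ it belongs to the domain of $\mathcal L^{(1)}$, which is contained in $H^2(\Omega,\R^2)$ by Proposition \ref{propoA4}. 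Thus the natural framework is the Dirichlet--Neumann Laplacian $-\Delta^{DN}$ (case (3) of Theorem \ref{abstractproposition}) for the $\psi$‑equation and $c\,\mathcal L^{(1)}$ (case (4)) for the $\widehat A_1$‑equation, both self‑adjoint, positive, with compact resolvent, hence generating analytic semigroups on $L^2$ with the usual fractional smoothing bounds.

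First I would prove local existence and uniqueness by a contraction argument in $Y_T:=C([0,T];W^{1+\alpha,2}(\Omega,\C))\times C([0,T];W^{1+\alpha,2}(\Omega,\R^2))$. Writing \eqref{eq:1v3} in mild form with the two analytic semigroups, one needs $F_1$ and $F_2$ to be locally Lipschitz as maps into $L^2$ from the $W^{1+\alpha,2}$‑scale. Since $1+\alpha>1$, in dimension two one has $W^{1+\alpha,2}(\Omega)\hookrightarrow L^\infty(\Omega)$, while $\nabla\psi$ and $\Div\widehat A_1$ lie in $W^{\alpha,2}\hookrightarrow L^{2/(1-\alpha)}$; consequently each of $\psi\,\Div A$, $A\cdot\nabla_A\psi$, $|A|^2\psi$, $\psi(1-|\psi|^2)$ and $\Im(\bar\psi\,\nabla_A\psi)$ lies in $L^2(\Omega)$ with norm controlled polynomially by, and locally Lipschitz in, $\|(\psi,\widehat A_1)\|_{Y_T}$, the contribution of $\Div A_{n,d}$ being absorbed by \eqref{regdiv}. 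Combining this with the fractional smoothing $\|(-\Delta^{DN})^{(1+\alpha)/2}e^{t\Delta^{DN}}\|\leq Ct^{-(1+\alpha)/2}$ (and likewise for $c\,\mathcal L^{(1)}$), whose exponent $(1+\alpha)/2<1$ produces an integrable time singularity, yields a contraction on a short interval $[0,T_0]$ whose length depends only on the $W^{1+\alpha,2}$‑norm of the data; the data $\widehat\psi_0,\widehat A_0\in W^{1+\alpha,2}$, together with the compatible traces $\widehat\psi_0|_{\partial\Omega_c}=0$ and $\widehat A_0\cdot\nu|_{\partial\Omega}=0$, place them in the relevant fractional‑power/interpolation spaces (characterised through the elliptic regularity of Appendices \ref{sec:a} and \ref{sec:b}), so the solution belongs to $Y_{T_0}$, and uniqueness follows from the contraction and a Gronwall estimate.

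Next I would globalize and then bootstrap. The key a priori bound is $\|\psi(t,\cdot)\|_\infty\leq1$ for all $t>0$: using $\mathrm{Re}(\bar\psi\,\nabla_A^2\psi)=\tfrac12\Delta|\psi|^2-|\nabla_A\psi|^2$ and $\mathrm{Re}(i\phi|\psi|^2)=0$, one finds that $\rho=|\psi|^2$ is a subsolution of $\rho_t-\Delta\rho=2\rho(1-\rho)$ with Neumann data on $\partial\Omega_i$ and $\rho=0$ on $\partial\Omega_c$; since $\rho(0,\cdot)\leq1$ by \eqref{eq:2.2} and $y\mapsto2y(1-y)$ keeps $\{y\leq1\}$ invariant, the parabolic comparison principle — valid in the weak formulation on the corner domain — gives $\rho\leq1$. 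Together with the energy estimates already derived in Section \ref{section3} (for $\|\psi\|_{H^1}$, $\|\widehat A_1\|_2$, $\|\nabla\phi_1\|_2$), this bounds $\|(\psi,\widehat A_1)\|_{Y_T}$ uniformly on each finite interval; since the local existence time depends only on that norm, the solution extends to $[0,+\infty)$. Finally, with $\psi,\widehat A_1\in C([0,T];W^{1+\alpha,2})$ and $\|\psi\|_\infty\leq1$ we get $F_1,F_2\in L^2_{\mathrm{loc}}([0,+\infty);L^2(\Omega))$, so \eqref{eq:147abstract} and \eqref{eq:148abstracta} applied to $-\Delta^{DN}$ and to $c\,\mathcal L^{(1)}$ give $\psi,\widehat A_1\in L^2_{\mathrm{loc}}([0,+\infty);H^2(\Omega))$ and $\psi_t,\partial_t\widehat A_1\in L^2_{\mathrm{loc}}([0,+\infty);L^2(\Omega))$; and $\Div A_d=\Div\widehat A_1+h\,\Div A_{n,d}\in L^2_{\mathrm{loc}}([0,+\infty);H^1(\Omega))$ because $\Div\widehat A_1$ is a first spatial derivative of an $L^2_{\mathrm{loc}}(H^2)$‑field and $\Div A_{n,d}\in W^{1,p}(\Omega)\subset H^1(\Omega)$.

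The step I expect to be the main obstacle is the local well‑posedness in the $W^{1+\alpha,2}$‑scale on the non‑smooth domain: one must check that the fractional‑power domains $D((-\Delta^{DN})^{\beta})$ and $D((c\,\mathcal L^{(1)})^{\beta})$, for $\beta$ slightly below $(1+\alpha)/2$, genuinely contain the $W^{1+\alpha,2}$‑fields with the prescribed traces — this is where the mixed Dirichlet--Neumann structure and the right angles at the corners enter, and where a small controlled loss in $\beta$ is used to avoid imposing corner compatibility conditions that the data need not satisfy — and that the quadratic and cubic nonlinearities involving $\nabla\psi$ are Lipschitz on bounded sets of that scale. Once this functional‑analytic setup is secured (it rests entirely on Propositions \ref{propoA1}, \ref{propoA2}, \ref{propoA3}, \ref{propoA4} and Theorem \ref{abstractproposition}), the globalization via the $L^\infty$ bound and the regularity bootstrap are comparatively routine.
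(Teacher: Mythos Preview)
Your approach is essentially the same as the paper's: local existence via Henry's semilinear framework applied to the block operator $(-\Delta^{DN},\,c\,\mathcal L^{(1)})$, the $L^\infty$ bound on $\psi$ via the maximum principle, energy identities, and then a bootstrap using Theorem~\ref{abstractproposition}. The identification of the key structural point --- that the corner singularity is carried entirely by the time-independent $A_{n,d}$ while $\widehat A_1$ sits in $D(\mathcal L^{(1)})\subset H^2$ --- is exactly right.

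There is, however, a genuine gap in your globalization step. You write that the energy estimates of Section~\ref{section3} together with $\|\psi\|_\infty\le1$ ``bound $\|(\psi,\widehat A_1)\|_{Y_T}$ uniformly on each finite interval''. They do not: energy identities give at best $H^1$-control, not $W^{1+\alpha,2}$-control, of $(\psi,\widehat A_1)$. The missing ingredient is precisely the semigroup smoothing estimate $\|\A^{(1+\alpha)/2}e^{-t\A}\|\le C_\alpha t^{-(1+\alpha)/2}e^{-\delta t}$ applied to the \emph{mild} formulation, which the paper carries out explicitly. Moreover the argument must be ordered: first one bounds $\|F_2\|_{L^\infty(0,T;L^2)}$ using only $\|\psi\|_\infty\le1$ and the $\|\nabla_A\psi\|_2$ energy bound, then the smoothing gives $\widehat A_1\in L^\infty(0,T;W^{1+\alpha,2})\hookrightarrow L^\infty(0,T;L^\infty)$, and only then can one control the term $|A|^2\psi$ in $F_1$ and close the $W^{1+\alpha,2}$ bound on $\psi$. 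Without this two-step ordering the $|A|^2\psi$ term in $F_1$ is circular.

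A secondary point: the estimates in Section~\ref{section3} are derived in the Coulomb gauge, whereas system~\eqref{eq:1v3} is written in the gauge~\eqref{gaugefl}, where $\phi=-c\,\Div A$ and $\phi_1$ does not appear as an independent unknown. The paper therefore re-derives the needed energy identities (the analogue of \eqref{eq:22} and a new identity for $\|\nabla_A\psi\|_2^2+\|\psi_t\|_{L^2_tL^2_x}^2$) directly from \eqref{eq:1v3}. You should do the same rather than invoke Section~\ref{section3}.
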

\begin{proof}
  We follow the same steps as in the proof of Theorem 1 in
  \cite{feetal98}. \\
{\em Local existence and uniqueness:} \\
Set
\begin{displaymath}
  \A =
  \begin{bmatrix}
    -\Delta_{DN} \\
    c\, \LL^{(1)}
  \end{bmatrix}\,,
\end{displaymath}
where $\Delta_{DN}$ denotes the Dirichlet-Neumann Laplacian whose domain
is given by
\begin{displaymath}
  D(\Delta_{DN}) = \{ v\in H^2(\Omega,\C) \,|\, v|_{\partial\Omega_c}=0 \; ;\;
  \partial v/\partial\nu|_{\partial\Omega_i}=0\,\} \,. 
\end{displaymath}
Set further $u=(\psi,A_1)$ and $
  \FF =    ( F_1,F_2)$. \\
The system \eqref{eq:1v3} can the be represented in the form
\begin{displaymath}
  \frac{\partial u}{\partial t} + \A u = \FF \,.
\end{displaymath}
Since both $-\Delta_{DN}$, by Proposition \ref{propoA3}, and $\LL^{(1)}$,
by Proposition \ref{propoA4} are positive self-adjoint operators, and
hence sectorial, it follows that $\A:D(\Delta_{DN})\times D(\LL^{(1)})$ is
sectorial as well. Furthermore it can be easily verified (cf.
\cite{feetal98}) that $\FF: W^{1+\alpha,2}(\Omega,\C)\times W^{1+\alpha,2}(\Omega,\R^2)\to
L^2(\Omega,\C)\times L^2(\Omega,\R^2)$ is Lipschitz continuous. We can therefore
apply Theorem 3.3.3 in \cite{he81} to establish the stated existence 
in $(0,T)$ for some $T>0$.

{\em Global existence:}\\
 We begin by establishing \eqref{eq:2.2a} for
all $t>0$. We use here the fact that
\begin{displaymath}
  \Delta(|\psi|^2-1) - \frac{\partial(|\psi|^2-1)}{\partial
    t}-2|\psi|^2(|\psi|^2-1)=2|\nabla_A\psi|^2 \geq 0 \,,
\end{displaymath}
By (\ref{eq:1}c,d), \eqref{eq:2.2}, and the maximum principle (cf.
Theorem 3.7 in \cite{prwe67}) a non-negative maximum cannot exist for
all $t>0$.

Next we take the inner product of (\ref{eq:1v3}a) with $\psi$ to obtain
that
\begin{displaymath}
      \frac{1}{2}  \frac{d\, \|\psi(t,\cdot)\|_2^2}{dt}  + \|\nabla_A\psi (t,\cdot)\|_2^2 \leq  \|\psi (t,\cdot)\|_2^2 \,.
\end{displaymath}
With the aid of the above and \eqref{eq:2.2a} we then obtain 
\begin{displaymath}
  \|F_2\|_{L^2(0,T,L^2(\Omega,\R^2)}^2 \leq C \Big[\frac{1}{\kappa^2}
  \|\psi\|_{L^2(0,T;L^2(\Omega,\C))}^2 +
      \|\psi(0,\cdot )\|_2^2\Big]\,.
\end{displaymath}
We can now use Proposition \ref{abstractproposition} with $\A=-c\LL^{(1)}$,
on the interval $(0,T)$, to obtain, 
that
\begin{multline}
\label{eq:28}
\Big\|\frac{\partial \widehat A_1}{\partial t}\Big\|_{L^2(0,T; L^2(\Omega,\mathbb R^2))} +
\|\widehat A_1\|_{L^2(0,T; H^2(\Omega,\mathbb R^2))} + \|\widehat A_1\|_{L^\infty(0,T; H^1(\Omega,\R^2))}^2
        \\  \leq C   \Big[\frac{1}{\kappa^2}  \|\psi\|_{L^2(0,T;L^2(\Omega,\C))}^2+\|\widehat A_1(0,\cdot )\|_{1,2}^2 +
      \frac{1}{\kappa^2 } \|\psi(0,\cdot )\|_2^2\Big]\,.
\end{multline}
With the aid of \eqref{eq:2.2a} we then obtain that
\begin{displaymath}
  \Big\|\frac{\partial \widehat A_1}{\partial t}\Big\|_{L^2(0,T; L^2(\Omega))} + \|\widehat A_1\|_{L^\infty(0,T; H^1(\Omega,\R^2))}^2 \leq C(1+T) \,.
\end{displaymath}

It can be easily verified that
\begin{multline*}
   \frac{d}{dt}\|\nabla_A\psi(t,\cdot)\|_2^2
   + \Big\|\frac{\partial\psi}{\partial t}\Big\|_2^2 =
   \frac{d}{dt}\Big(\frac{1}{2}\|\psi(t,\cdot)\|_2^2-
 \frac{1}{4}\|\psi(t,\cdot)\|_4^4\Big)\\ - \Big\langle i\frac{\partial \widehat A_1}{\partial t}\psi,
  \nabla_A\psi \Big\rangle +\Big\langle \frac{\partial\psi }{\partial t},
  ic\psi \Div A\Big\rangle \,.
\end{multline*}
From the above we deduce that
\begin{equation}
\label{eq:6}
  \|\nabla_A\psi(t,\cdot)\|_2^2 + \Big\|\frac{\partial \psi}{\partial t}\Big\|_{L^2(0,T;
    L^2(\Omega))} \leq C(1+T) \,.
\end{equation}

We have thus obtained uniform boundedness in $[0,T]$ of the $H^1$ norm
of $\psi$ and $\widehat A_1$, and an $L^2(0,T)$-bound for
$\|\psi_t(t,\cdot)\|_2$ and $\|A_t(t,\cdot)\|_2$. \\
We need yet to obtain boundedness in $C([0,T];W^{1+\alpha,2}(\Omega))$ of both
objects. To this end we use the smoothing property of the semigroup
$e^{-t\A}$. Since $\sigma(\A)$ is discrete and lies on the positive real
axis, there exists $\delta>0$ such that for any $\lambda\in\sigma(\A)$ we have $
\lambda>\delta$.  We may thus use Theorem 1.4.3 in \cite{he81}, or more
straightforwardly the spectral theorem for self-adjoint operators, to
establish that for any $\alpha>0$ there exists $C_\alpha\in\R_+$ such that
\begin{displaymath}
  \|\A^\alpha e^{-t\A}\|\leq C_\alpha\,  t^{-\alpha} e^{-\delta t} \,,\quad \forall t>0\,.
\end{displaymath}
It follows that there exists $C(\Omega,\alpha)$ s.t. for all $0<\alpha<1$,
\begin{multline*}
  \|\widehat A_1(t,\cdot)\|_{1+\alpha,2}\leq C\|\A^{(1+\alpha)/2}\widehat A_1\|_2 \\ \leq
  C\, t^{-\alpha}e^{-\delta t}\|\widehat A_0-A_{n,d}\|_{1+\alpha,2} + C\, \Big\|\int_0^t
  \A^{(1+\alpha)/2}e^{-(t-\tau)\A}F_1(\tau,\cdot)\,d\tau\Big\|_2\,.
\end{multline*}
Here we observe from \eqref{regand} and the Sobolev injection theorem
that
\begin{equation}
A_{n,d} \in W^{1+\alpha,2}(\Omega,\mathbb R^2)\,,\, \forall \alpha \in (\frac 12,1)\,.
\end{equation}
For the last term on the right-hand-side we have
\begin{multline*}
  \Big\|\int_0^t
  \A^{(1+\alpha)/2}e^{-(t-\tau)\A}F_2(\tau,\cdot)\,d\tau\Big\|_2 =\Big\|\int_0^t
  \A^{(1+\alpha)/2}e^{-\tau\A}F_2(t-\tau)\,d\tau\Big\|\\\leq C\int_0^t
  \tau^{-(1+\alpha)/2}e^{-\delta\tau}  \|F_2(t-\tau,\cdot)\|_2\,d\tau\leq  
C\int_0^t \tau^{-(1+\alpha)/2} \|F_2(t-\tau,\cdot)\|_2\,d\tau\\\leq C\, t^{(1-\alpha)/2} \|F_2\|_{L^\infty(0,T;L^2(\Omega,\R^2))} \,,
\end{multline*}
where we have used the fact that $\alpha <1$.

By \eqref{eq:6} and \eqref{eq:2.2a} we have that
\begin{displaymath}
   \|F_2\|_{L^\infty(0,T;L^2(\Omega, \R^2))} \leq C(1+T)\,.
\end{displaymath}
Hence,
\begin{displaymath}
   \|\widehat A_1(T,\cdot)\|_{1+\alpha,2} \leq C\, (1+T^2) \,.
\end{displaymath}
In a similar manner we establish first that
\begin{displaymath}
  \|\psi(t,\cdot)\|_{1+\alpha,2}\leq C\big[t^{-\alpha}e^{-\delta t}\|\psi_0\|_{1+\alpha,2} + t^{(1-\alpha)/2} \|F_1\|_{L^\infty(0,T;L^2(\Omega,\C))}\big] \,.
\end{displaymath}
By the continuous embedding of $W^{1+\alpha,2}$ into $L^\infty$ and
\eqref{eq:6} we have that
\begin{displaymath}
   \|F_1\|_{L^\infty(0,T;L^2(\Omega,\C))} \leq C\,(1+T^5)\,,
\end{displaymath}
and hence 
\begin{displaymath}
   \|\psi(T,\cdot)\|_{1+\alpha,2} \leq C\, (1+T^6) \,.
\end{displaymath}
This completes the proof of global existence. Note that by
\eqref{eq:6} and \eqref{eq:28} we obtain that both $\psi$ and $\widehat
A_1$ are
in $H^1(0,T;L^2(\Omega))$ for any $T>0$.\\
Finally, as $\widehat A_1 \in L^2_{loc} ([0,+\infty), H^2(\Omega,\mathbb R^2))$
then $\Div \widehat{A}_1 \in L^2_{loc}([0,+\infty);H^1(\Omega))$. Furthermore,
by \eqref{regdiv} $\Div A_{n,d}$ is in $H^1(\Omega)$. Consequently, we
obtain that $\Div A\in L^2_{loc}([0,+\infty);H^1(\Omega))$. Furthermore, since
by \eqref{eq:2.2a}, we have for any $T>0$, 
\begin{multline*}
  \|F_1\|_{L^2((0,T)\times\Omega)} = C(\|\Div A\|_{L^2((0,T)\times\Omega)} + \|\nabla_A\,
  \psi\|_{L^\infty(0,T;L^2(\Omega))} \|A\|_{L^2(0,T,L^\infty(\Omega))}  \\+
  \|A\|_{L^2(0,T,L^\infty(\Omega))}^2 +1)  \,,
\end{multline*}
it follows by \eqref{eq:147abstract} that $\psi\in L^2_{loc} ([0,+\infty), H^2(\Omega,\C))$.
\end{proof}

\paragraph{\bf Acknowledgements}
Y. Almog was supported by NSF grant DMS-1109030 and by US-Israel BSF grant
no.~2010194. He also wishes to thank the Technion's department of
Mathematics and his host Professor Itai Shafrir for supporting his
Sabbatical stay there during the work on this manuscript. B. Helffer was supported by the ANR programme NOSEVOL and thanks
the Technion's department of Mathematics and Professor Itai Shafrir for
hosting his visit there. Both authors gratefully acknowledge Monique
Dauge for introducing to them some important aspects of the theory of
elliptic regularity in domains with corners. Professor X. B. Pan is
acknowledged for some preliminary discussions.


\end{document}